\documentclass{article}
\usepackage{my_style}

\begin{document}

\title{Classifying 2D topological phases: mapping ground states to string-nets}
\author[1]{Isaac H.~Kim}
\author[2]{Daniel Ranard}
\affil[1]{Department of Computer Science, University of California, Davis}
\affil[2]{Center for Theoretical Physics, Massachusetts Institute of Technology}

\maketitle
\begin{abstract}
            We prove the conjectured classification of topological phases in two spatial dimensions with gappable boundary, in a simplified setting.  Two gapped ground states of lattice Hamiltonians are in the same quantum phase of matter, or topological phase, if they can be connected by a constant-depth quantum circuit.  It is conjectured that the Levin-Wen string-net models exhaust all possible gapped phases with gappable boundary, and these phases are labeled by unitary modular tensor categories. We prove this under the assumption that every phase has a representative state with zero correlation length satisfying the entanglement bootstrap axioms, or a strict form of area law. Our main technical development is to transform these states into string-net states using constant-depth quantum circuits.  
\end{abstract}

\section{Introduction}
\label{sec:intro}

Ground states of quantum many-body systems exhibit diverse phenomena.  When two Hamiltonians exhibit fundamentally distinct behavior at zero temperature, we generally say these systems belong to distinct quantum phases of matter. A basic problem in condensed matter physics is to classify the quantum phases of gapped lattice Hamiltonians.

We focus on phases that are robust to perturbation by any local operator, i.e., genuine topological phases, rather than those protected by symmetry.  For the case of one spatial dimension, without imposing symmetries, there are no non-trivial topological phases~\cite{Chen2011,Schuch2011}. While conjectured classifications exist for two dimensions and higher, the justification in the literature usually relies on analogies to topological quantum field theory~\cite{Witten1998}, or assumptions about the form of fixed-point wavefunctions~\cite{levin2005string}.

More precisely, two gapped Hamiltonians are said to be in the same topological phase when one Hamiltonian can be deformed to the other without closing the gap, hence without inducing a phase transition.  A roughly equivalent characterization is that two ground states belong to the same phase when one ground state can be connected to the other by a constant-depth, quasi-local circuit~\cite{Hastings2005,Osborne2007,Bachmann2012}.  Thus gapped phases may be loosely defined as equivalence classes of ground states under constant-depth circuits \cite{chen2010local}. The equivalence class captures the long-range entanglement properties of the state, while disregarding local differences.  By reducing the classification of phases to a question of equivalence under circuits, it becomes subject to the tools of quantum information theory.

We focus on gapped 2D lattice systems of qudits with ``gappable boundary,'' or systems that can remain gapped when placed on a disk with boundary [Section \ref{sec:gapped-boundaries-conceptual}]. 
One famous example is the ``toric code,'' a 2D lattice Hamiltonian whose ground state exhibits long-range entanglement~\cite{kitaev2003fault}. The Levin-Wen Hamiltonians~\cite{levin2005string}, with their ``string-net'' ground states, furnish a zoo of further examples, with a different model associated to each unitary fusion category (UFC)~\cite{etingof2017fusion}. Do these constitute all possible 2D gapped phases with gappable boundary?  We answer yes, proving an exhaustive classification\footnote{Our goal is only to classify topological phases by proving the equivalence classes of ground states correspond to certain tensor categories, \textit{not} to then classify those tensor categories by enumerating them.  The classification of unitary fusion categories or unitary modular tensor categories themselves is believed to be an extremely difficult mathematical problem; it contains the famous classification of finite groups as a special case.} in a simplified setting.

One feature shared by all states in the same phase is the braiding and fusion properties of their anyonic excitations.  The anyons can be analyzed using the ground state alone, and they help identify the phase.  In fact, for gapped 2D qudit systems with gappable boundary, two ground states are conjectured to be in the same phase \textit{if and only if} their associated anyons behave identically.  The behavior of the anyons, or the ``anyon content,'' is described by a unitary modular tensor category (UMTC).  One direction of the above ``if and only if'' is  more conceptually straightforward: if two systems are in the same phase, then their anyons should behave isomorphically, because we can map anyon operators from one system to the other via the circuit connecting the ground states.  The other direction is less obvious: do the anyon contents provide a \textit{complete} set of invariants to determine the topological phase, or could there be some other unknown invariant?  For instance, if two systems both exhibit exactly four anyon types that braid and fuse like those of the toric code, is it obvious their ground states are connected by a constant-depth circuit, or could there be some obstruction?  

To make these questions more tractable, we work with gapped ground states satisfying the ``entanglement bootstrap axioms'' \cite{shi2020fusion,shi2021entanglement}. These are a class states satisfying certain conditions on their entanglement entropies. 
A sufficient condition is to assume  they satisfy a ``strict'' area law for the entanglement entropy $S(X)$ of region $X$, \cite{Kitaev2006,Levin2006}
\begin{align} \label{eq:strict-area}
S(X)=\alpha |\partial X| - \gamma
\end{align}
for constants $\alpha, \gamma$ independent of $X$.  Such states have zero correlation length (that is, their correlations are precisely zero beyond some radius), unlike generic gapped ground states, which are only expected to satisfy \eqref{eq:strict-area} up to some error that decreases for large regions. In that sense, the states we consider are ``fixed-point states'' that one obtains after sufficiently coarse-graining the lattice, and this simplification allows us to avoid arguments involving ``$\epsilon$'s and $\delta$'s.''  However, we emphasize that our assumptions only involve entanglement entropies, without invoking anyons, fusion categories, and so on.  We also do not require translation-invariance.

Loosely speaking, our main technical result states
\begin{mdframed}
\begin{theorem*}[Informal]\label{thm:mapping-informal}
     For any 2D state satisfying the entanglement bootstrap axioms (including the existence of a gapped boundary), there exists a constant-depth, geometrically local, unitary circuit mapping the state to a string-net, i.e., to the ground state of a Levin-Wen Hamiltonian.
\end{theorem*} 
\end{mdframed}
See Theorem \ref{thm:mapping-to-sn} for a more precise statement. The string-net we obtain lives on a hexagonal coarse-graining of the original lattice.  The original state need not have a gapped boundary present, but we assume that various disk-like subregions can be \textit{given} a gapped boundary.  That is, we assume there exists some pure state $|\psi_A\rangle$ on each disk-like region $A$ that matches the original ground state on the interior of $A$ and satisfies certain entropic axioms near the boundary $\partial A$, which then guarantees the existence of some gapped parent Hamiltonian for $|\psi_A\rangle$. 

The first step in mapping a ground state to a string-net state is to ask \textit{which} string-net to target, since there is a different string-net associated to every UFC.  It turns out an appropriate UFC is the one describing the anyons living on a gapped boundary of the state. Using techniques from \cite{shi2021entanglement}, we can define operations for creating and manipulating boundary anyons, and we extract the full UFC using ideas from \cite{kawagoe2020microscopic}. The bulk of our work then lies in constructing the circuit.

String-nets associated to distinct UFCs may actually occupy the same phase.  In particular, string-nets are in the same phase if and only if they have the same anyon content (labeled by a unitary modular fusion category, or UMTC), subject to caveats discussed in Section \ref{sec:setup-results}. We can then interpret Theorem \ref{thm:mapping-informal} as a classification of gapped phases with gappable boundary:
 \begin{mdframed}
\begin{corollary*}(Informal)
If every 2D gapped phase with gappable boundary has a representative satisfying the entanglement bootstrap axioms, then by Theorem \ref{thm:mapping-informal}, the Levin-Wen models exhaust all such topological phases, and these phases are labeled by UMTCs.
\end{corollary*}
\end{mdframed}
Of course, it may be difficult to prove every gapped phase with gappable boundary has a representative satisfying the entanglement bootstrap axioms. 
We leave this question to future work.

\section{Conceptual background}
\label{sec:background}

We review the notion of gapped systems, topological phases, and gapped boundaries.  These can be viewed from the perspective of both the Hamiltonian and the ground state.  Our technical assumptions will involve states, not Hamiltonians.  However, both perspectives are helpful to understand terminology and intuition.  This Section does not attempt to be rigorous, instead serving as background to our technical results.

\subsection{Gapped Hamiltonians}
For the purpose of this conceptual discussion, we consider translation-invariant local Hamiltonians on the lattice, so that we can consider the same Hamiltonian on systems of different size or topology. The notion of a gapped Hamiltonian formally applies to infinite systems, or families of Hamiltonians on systems of increasing size. We say
\begin{definition}[Gapped Hamiltonian] The gap of a Hamiltonian refers to the difference in energy between the ground state, or ground space, and the first excited state.  A family of Hamiltonians is called ``gapped'' when the gap is lower bounded by a positive constant, independent of system size. 
\end{definition}
Moreover, when a system is called gapped, the ground space degeneracy is usually understood to be independent of system size.

On the other hand, when the gap decreases with system size, the system is called gapless.  For instance, the toric code Hamiltonian is gapped, but under a sufficiently large perturbation, it might become gapless.  When this happens, the system may undergo a phase transition.

\subsection{Defining topological phases}
Given a gapped local Hamiltonian $H$, we can ask what properties of the system are preserved when perturbing the Hamiltonian.  Consider $H \mapsto H' = H + \epsilon \Delta H$, where $\Delta H$ is another local Hamiltonian (a sum of uniformly bounded terms on every site), and $\epsilon$ is a small but nonvanishing constant.  Then under mild conditions, $H'$ will remain gapped for all sufficiently small $\epsilon$.  

More generally, one considers a path through the space of Hamiltonians $H(s)$ for $s \in [0,1]$.  Imagine a physical system in the ground state, whose Hamiltonian is slowly (adiabatically) varied along the path $H(s)$, so that it remains in the ground state $|\psi(s)\rangle$. Perturbation theory suggests that as long as the system remains gapped, the long-range properties of the ground state are invariant, and the local properties vary smoothly. Therefore one adopts the following definition:
\begin{definition}[Phases of gapped Hamiltonians] \label{def:phase-Ham} Two gapped, local Hamiltonians $H_0$, $H_1$ are said to be in the same topological phase if and only if there exists a continuous path of gapped Hamiltonians $H(s)$ such that $H(0)=H_0$ and $H(1)=H_1$. 
\end{definition}
Thus topological phases are like connected components in the space of gapped local Hamiltonians.  The connected component including $H = \sum_i \sigma^z_i$, whose ground state is a product state, is called the ``trivial phase.''

We can ask precisely how the family of ground states $|\psi(s)\rangle$ of $H(s)$ varies with $s$.  It turns out this evolution is well-approximated by the action of some time-dependent local Hamiltonian $V(s)$, or $\partial_t |\psi(s)\rangle \approx V(s) |\psi(s)\rangle$~\cite{Osborne2007,Hastings2005,Bachmann2012}.  (Note this Hamiltonian $V(s)$ is not $H(s)$, though they are related.)  For two gapped Hamiltonians in the same phase, their ground states therefore are related by some local Hamiltonian evolution. In general, to make this statement exact, the terms of the local Hamiltonian $V(s)$ cannot be strictly local; instead, they are ``quasi-local,'' with spatially decaying tails. 

By Trotterizing the above time-evolution, we find the ground states of two Hamiltonians in the same phase are approximately related by a circuit of local unitaries.  Loosely speaking, we can use a circuit of constant depth (depth independent of system size), because the Hamiltonian evolution $V(s)$ runs for a constant time.  However, to obtain a good approximation, one requires a circuit of at least logarithmic depth, or alternatively a constant-depth but quasi-local circuit (composed of quasi-local unitary gates).

Therefore one can adopt an alternative definition of topological phase:
\begin{definition}[Phases of gapped ground states] \label{def:phase-state} Two states $|\psi_0\rangle$, $|\psi_1\rangle$ are said to be in the same topological phase if and only if there exists a constant-depth, quasi-local circuit $U$ such that $U|\psi_0\rangle = |\psi_1\rangle$.
\end{definition}
This definition is informal and does not fully specify ``quasi-local circuit.''
Note the definition does not refer directly to Hamiltonians, though we restrict our attention to the case that $|\psi_0\rangle$, $|\psi_1\rangle$ do have gapped parent Hamiltonians (i.e., they are the ground states of some gapped Hamiltonians).  Under the right notion of ``quasi-local circuit,'' Definitions \ref{def:phase-Ham} and \ref{def:phase-state} should be equivalent.  We have already discussed how a path of gapped Hamiltonian implies a circuit connecting the ground states, up to details involving quasi-locality.  To see the other direction, given a circuit $U_1$ connecting two ground states, consider a continuous path of circuits, $U(s)$, with $U(0)=I$ and $U(1)=U_1$.  (Imagine implementing each circuit gate continuously, layer by layer.)  Then if $|\psi_0\rangle$ has gapped parent Hamiltonian $H_0$, we obtain a path of gapped parent Hamiltonians $H(s) = U(s)^\dagger H_0 U(s)$. 

In general, one allows local ``unitary'' circuits that may both introduce and discard ancillas.  The ancillas are required to be in a pure state $|0\rangle$ both when introduced and discarded. Equivalently, the circuit may include local isometries, as well as projections that preserve the state. This effectively allows one to change the size of the local Hilbert space. Sometimes these operations are called generalized local unitaries~\cite{zeng2019quantum}.  When we discuss local unitary circuits, we actually refer to this more general notion. 

If one were further allowed to discard ancillas in a non-trivial state (entangled with each other), this produces a more coarse-grained equivalence; states which can be reached from the product state in this way are called invertible.  Here, we do not consider this more coarse-grained notion of topological phase.  

Apparently the precise definition of a topological phase in the sense of Definition \ref{def:phase-state} deserves further analysis. Regardless, we can already conclude:
\begin{proposition} \label{prop:same-phase}
    If two gapped ground states $|\psi_0\rangle$, $|\psi_1\rangle$ are connected by a constant-depth, strictly local circuit $U$, then they occupy the same topological phase in the sense of Definition \ref{def:phase-Ham}.
\end{proposition}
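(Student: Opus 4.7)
The plan is to build a continuous path of gapped local Hamiltonians from a gapped parent Hamiltonian $H_0$ of $|\psi_0\rangle$ to one $H_1$ of $|\psi_1\rangle$, in two stages that make precise the sketch already given after Definition~\ref{def:phase-state}. First, I continuously turn on $U$ and conjugate $H_0$ along the way. Write $U = U_D \cdots U_1$ as a product of $D = O(1)$ layers, each a tensor product of disjoint $O(1)$-supported gates. For each gate $g = e^{iA}$, pick a Hermitian generator $A$ and set $g(s) = e^{isA}$; applying this layer by layer yields a continuous family $U(s)$ of constant-depth strictly local circuits with $U(0) = I$ and $U(1) = U$. Define $H(s) := U(s)^\dagger H_0\, U(s)$. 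Because conjugation by a unitary preserves the spectrum, each $H(s)$ has exactly the same gap as $H_0$, and because $U(s)$ has bounded depth and strictly local gates, each local term of $H_0$ is mapped to a term whose support is enlarged by only an $O(1)$ factor. Each such conjugated term depends continuously on $s$, so $(H(s))_{s\in[0,1]}$ is a continuous path of gapped local Hamiltonians from $H_0$ to $U^\dagger H_0 U$.

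Second, I linearly interpolate between $U^\dagger H_0 U$ and $H_1$. Both contain $|\psi_1\rangle = U|\psi_0\rangle$ in their ground space. After shifting constants, assume both are positive semidefinite with ground-state energy zero and share the same ground-space projector $P$ with gap at least $\Delta > 0$; this holds, in particular, when the ground state is unique. Set $\tilde H(t) := (1-t)\, U^\dagger H_0 U + t\, H_1$ for $t \in [0,1]$, which is manifestly a continuous path of local Hamiltonians (its terms are convex combinations of the terms of the endpoints, with the same supports). For any $|\phi\rangle$ in the range of $I - P$, $\langle \phi | \tilde H(t) | \phi \rangle \geq (1-t)\Delta \|\phi\|^2 + t\Delta \|\phi\|^2 = \Delta \|\phi\|^2$, so the gap stays bounded below by $\Delta$ throughout. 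Concatenating the two stages produces the desired continuous path from $H_0$ to $H_1$.

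The main obstacle is the ground-space matching assumption used in the second stage: for topologically ordered systems on, e.g., a torus, the ground space is degenerate, and two parent Hamiltonians of $|\psi_1\rangle$ need not have identical ground spaces. One workaround is to restrict to topologies (such as a disk or sphere) where the ground space is unique; another is to replace $H_1$ by a Hamiltonian whose ground space coincides with that of $U^\dagger H_0 U$, after checking that the replacement itself stays in the same phase. Making ``continuous path of local Hamiltonians'' fully precise --- via continuity of each local term in operator norm together with a uniform bound on support sizes --- is also needed but is standard, matching the quasi-adiabatic evolution conventions cited in the excerpt.
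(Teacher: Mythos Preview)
Your first stage is exactly the argument the paper uses, and in fact it is \emph{all} the paper uses: the paper's justification for the proposition is the one-paragraph sketch preceding it, which builds $U(s)$ gate by gate and takes $H(s)=U(s)H_0U(s)^\dagger$ as the path. Note the conjugation direction: with $|\psi_1\rangle=U|\psi_0\rangle$ one needs $U(s)H_0U(s)^\dagger$ (ground state $U(s)|\psi_0\rangle$), not $U(s)^\dagger H_0 U(s)$ as you wrote; the paper makes the same slip, and it is a trivial fix.

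Your second stage, the linear interpolation to a \emph{given} $H_1$, is not in the paper and is not needed for the proposition as the paper reads it. The claim is that the \emph{states} lie in the same phase in the sense of Definition~\ref{def:phase-Ham}; the paper takes this to mean that they admit gapped parent Hamiltonians connected by a gapped path, and the endpoint $U H_0 U^\dagger$ already serves as a gapped local parent Hamiltonian for $|\psi_1\rangle$. So the path $H(s)$ by itself establishes the proposition, and the ground-space matching obstacle you identify simply does not arise. Your interpolation would be relevant for the stronger (and genuinely subtler) statement that \emph{any} two gapped parent Hamiltonians of the same state are path-connected, but the paper does not claim this and explicitly leaves such precision to ``further analysis.''
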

This notion will be the most relevant to our paper, because we will construct such circuits between states.

\subsection{Gapped boundaries}\label{sec:gapped-boundaries-conceptual}
We can also consider a system like the toric code on a manifold with boundary, like a finite disk.  Given the Hamiltonian $H$ on the plane, one might define the Hamiltonian $H_A$ on a disk-like sub-region $A$ by simply deleting terms outside the disk. In the case of the toric code, this $H_A$ generally has a ground state degeneracy that grows with the the size of the boundary,  $|\partial A|$.  Alternatively, we might define a new Hamiltonian 
\begin{align}
    H'_A = H_A + H_{\partial A}
\end{align}
for some choice of new or modified Hamiltonian terms $H_{\partial A}$ near the boundary $\partial A$.  This choice is called a boundary condition. For the toric code, various choices of boundary condition are well-understood, and they can lead to a Hamiltonian $H'_A$ that is gapped with non-degenerate ground state~\cite{kitaev2003fault,bravyi1998quantum}; see Figure~\ref{fig:toric_code}. 

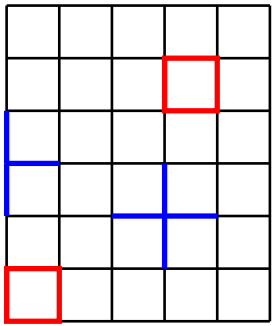
\begin{figure}[htbp]
    \centering
    \begin{tikzpicture}[line width=1pt, scale=0.7]
        \foreach \x in {0, ..., 5}
        {
        \draw[] (\x, 0) -- (\x, 6);
        }
        \foreach \y in {0, ..., 6}
        {
        \draw[] (0, \y) -- (5, \y);
        }
        \draw[red, line width=2pt] (0, 0) -- ++ (0, 1) -- ++ (1, 0) -- ++ (0, -1) -- cycle;
        \draw[blue, line width=2pt] (0,3) -- ++(1,0);
        \draw[blue, line width=2pt] (0,3) -- ++(0,1);
        \draw[blue, line width=2pt] (0,3) -- ++(0,-1);

        \draw[red, line width=2pt] (3,4) -- ++ (1,0) -- ++ (0,1) -- ++ (-1, 0) -- cycle;
        \draw[blue, line width=2pt] (3,2) -- ++ (1,0);
        \draw[blue, line width=2pt] (3,2) -- ++ (-1,0);
        \draw[blue, line width=2pt] (3,2) -- ++ (0,1);
        \draw[blue, line width=2pt] (3,2) -- ++ (0,-1);
    \end{tikzpicture}
    \caption{A toric code with a gapped boundary condition \cite{kitaev2012models}. The Hamiltonian consists of a tensor product of Pauli-$Z$s over edges surrounding a plaquette (red) and Pauli-$X$s over ``stars'' (blue), i.e., edges incident on a vertex. The boundary terms are incomplete stars.}
    \label{fig:toric_code}
\end{figure}

However, for other choices of boundary condition, $H'_A$ will be gapless.  That is, the gap may decrease with increasing $A$. In this case, often the low-energy eigenstates (those causing the system to be gapless) approximately match the reduced density matrix of the true ground state on the interior of $A$, with their excitation energy concentrated near the boundary.  These eigenstates constitute ``gapless edge modes.'' While $H'_A$ as a whole is gapless, often one emphasizes that the gaplessness is due to excitations living near the boundary. For this reason we say the system $H'_A$ has a ``gapless boundary,'' while in some sense it is still ``gapped in the bulk.'' For a more precise definition of bulk gap, see Ref.~\cite{nachtergaele2023stability}. 

Meanwhile, when $H_{\partial A}$ is chosen such that $H'_A$ has a unique, gapped ground state, then $H'_A$ is said to have a ``gapped boundary.''  Under the definitions offered above, it would be equivalent to simply say $H'_A$ is gapped.  However, when one says that a system with boundary is gapped, it may not be clear whether one means it is genuinely gapped, or simply gapped in the bulk.  Hence one often specifies ``gapped with gapped boundary.''

For systems like the toric code, where there does exist a choice of gapped boundary condition, we say the system has ``gappable boundary.''  Evidently this is a property of $H$, not of any particular boundary condition. Perhaps surprisingly, not all gapped systems have gappable boundary.  For instance, it is believed that the Kitaev honeycomb model~\cite{kitaev2006anyons} cannot be given boundary conditions such that it has a unique, gapped ground state on the disk.  Such a system has ``ungappable boundary''; see Ref.~\cite{Levin2013}  for a related discussion.

If a Hamiltonian has gappable boundary, one can argue that other Hamiltonians in the same topological phase also have gappable boundary. Thus the property of gappable or ungappable boundary is actually a property of the whole phase.  Chiral topological phases are expected to have protected gapless edge modes, hence ungappable boundary. By assuming gappable boundaries, this paper neglects phases with nonzero chiral central charge, and more generally those that have ungappable boundary~\cite{Levin2013}.  

It is conjectured that 2D topological phases are labeled by both a UMTC and a number $c_-$ called the chiral central charge.  When restricting attention to phases with gappable boundary, then $c_-=0$, and the topological phases are conjectured to be labeled only by the UMTC.

\subsection{Classifying Hamiltonians versus states}\label{sec:Hamiltonians_vs_states}
We already saw that the notion of a topological phase can be defined with an emphasis on either the space of gapped Hamiltonians (via Definition \ref{def:phase-Ham}) or gapped ground states (via Definition \ref{def:phase-state}).  

Can we discuss the space of gapped ground states, and specifically those with gappable boundary, without referring directly to parent Hamiltonians? 
The ground state of a gapped Hamiltonian has various special properties: it has exponentially decaying correlations~\cite{hastings2006spectral}, and its entanglement entropy is conjectured to satisfy an area law. Throughout the paper, we will work with states that satisfy certain strong versions of these properties: the entanglement bootstrap axioms~\cite{shi2020anyon,shi2020fusion,shi2021entanglement}. While these axioms do guarantee the state has a gapped parent Hamiltonian \cite{kim2024strict}, we will not generally make use of the Hamiltonian.

For a ground state $|\psi_A\rangle$ on a disk $A$, what does it mean for the state to have a gapped boundary?  Following Section \ref{sec:gapped-boundaries-conceptual}, roughly this should mean the state has a fully gapped parent Hamiltonian (without gapless edge modes).  Then $|\psi_A\rangle$ has exponentially decaying correlations, including for operators located near the boundary.  (In contrast, when gapless edge modes are present, one expects algebraically decaying correlations near the boundary.)  We therefore expect the area law for entanglement entropy to hold near the boundary as well.  When applying the heuristic area law [Eq.~\eqref{eq:strict-area}] for subregions $X \subset A$ overlapping the boundary of $A$, one does not count the boundary of $A$ in the calculation of the length $|\partial X|$.  That is, $|\partial X|$ is replaced with  $|\partial X \backslash \partial A|$.  This modification occurs because $|\psi_A\rangle$ is pure, so there are no local correlations between $X$ and its complement along the interval $|\partial X \cap \partial A|$.  These properties are codified in the entanglement bootstrap axioms for systems with gapped boundary~\cite{shi2021entanglement}, which we discuss in more detail in Section~\ref{sec:eb_review}. States satisfying these axioms do in fact have parent Hamiltonians with gapped boundary \cite{kim2024strict}, but again we need not make use of the Hamiltonian.

Finally, for some ground state $\rho$ on a system that may not have boundary, what does it mean to for $\rho$ to have gappable boundary?  In this paper, we mean that for every disk-like subregion $X$, there exists a state $|\psi_X\rangle$ with gapped boundary, such that $|\psi_X\rangle$ matches $\rho$ on the interior of $X$, i.e., $\rho_X := \Tr_{\bar{X}}{\rho} = \Tr_{\bar{X}} |\psi_X\rangle \langle \psi_X|.$  In this way, we can classify ground states with gappable boundary, without direct concern for the parent Hamiltonian.

\section{Prior work}
To what extent has it already been shown that 2D gapped phases with gappable boundary always have representatives given by string-net models?  The idea appears widely believed, and it was developed by Refs.~\cite{levin2005string, kitaev2012models, lin2014generalizations} among others, with an emphasis on gappable boundaries in Ref.~\cite{lin2014generalizations}.

To substantiate this idea, one approach is to simply define a topological phase as the algebraic data characterizing the anyons. Then one can concoct a string-net model with the same anyon data, concluding there is indeed a string-net in the same phase.  In contrast, here we use the term ``topological phase'' to refer to equivalence classes of ground states under constant-depth circuits, or of gapped Hamiltonians under continuous paths.  

Regardless of terminology, the point is that we want to study these equivalence classes, and their connection to anyon data is a priori unclear. First, while it is intuitive that states connected by circuits share the same anyon data, this can already be difficult to formalize: it requires extracting the anyon data from the state or Hamiltonian in a circuit-invariant way, i.e., defining rigorous invariants of the phase \cite{haah2016invariant, Kato2020entropic, ogata2021classification,ogata2023boundary}.  Meanwhile, the converse may seem less intuitive, and less progress has been made: why are states with the same anyon data connected by circuits?  For instance, how would one obtain the circuit?   The idea of ``mapping the anyons in one state to the anyons in the other state'' alone is too vague: the ground states are not excited, and they do not ``contain'' the anyons in a straightforward way, so it is unclear how to build a circuit that implements the mapping.

One variety of folk argument relies on analogies to topological quantum field theories (TQFTs).  While helpful, these analogies present a few issues.  First, when discussing abstract TQFTs, one does not generally consider microscopic Hamiltonians,  paths of gapped Hamiltonians, or circuits.   Therefore it is hard to build a concrete connection between (1) classifications of abstract TQFTs  and (2) the existence of quantum circuits, or continuous paths of gapped Hamiltonians.  Moreover, it appears TQFTs may simply fail to describe some quantum phases on the lattice, at least when the latter are understood as equivalence classes under paths of gapped Hamiltonians. For instance, in three dimensions, fracton phases appear to break this analogy, or at least question its predictive power.

Another variety of argument relies on intuitions from the renormalization group, viewed as a transformation or coarse-graining of the ground state. Levin and Wen \cite{levin2005string} introduced string-net wavefunctions partly as an ansatz for the fixed points of a renormalization group procedure.  However, the renormalization procedure itself was not concretely specified, so it was not clear how a general ground state might be mapped to a string-net.  There has been further work on renormalization group procedures involving string-nets \cite{gu2008tensor,konig2009exact,chen2010local}. However, it appears unclear whether these procedures can map a general class of ground states to string-nets.

Another obstacle to many of these renormalization or coarse-graining arguments is that they do not clearly invoke the assumption of a gappable boundary.  Yet this assumption should be crucial: there are phases with ungappable boundary, which are not captured by string-nets.  Meanwhile, the discussion in Ref.~\cite{swingle2016renormalization} does account for gapped boundaries, and their discussion of invertible states bears some relation to our ultimate strategy.

Finally, a rigorous classification of phases (in the sense of equivalence under constant-depth circuits) has been completed for a special class of 2D systems, namely stabilizer states with prime local dimension; see Refs.~\cite{haah2021classification}.  It turns out those phases are all equivalent to stacks of the toric code.

\section{Setup and results} \label{sec:setup-results}
We introduce the assumptions needed to state our results.  For more conceptual background, see Section \ref{sec:background}.  For more thorough exposition of entanglement bootstrap techniques, see Section \ref{sec:eb_review}.

We work with quantum states on a 2D spatial lattice.  We do not impose translation-invariance, and the microscopic details of the lattice will not be important; it may even be irregular.  Our results will apply to large but finite systems.\footnote{While some results could probably be formalized for infinite systems, we also view the nonnecessity the infinite-size limit as a benefit.} 

Consider a state $\sigma$ on a disk $X$ of arbitrary finite size.  Let $X^-$ denote the interior of $X$, consisting of all sites larger than some constant distance from the boundary. As a preview, after introducing some assumptions on $\sigma_X$, our main technical result states that there exists a constant-depth, geometrically local, unitary circuit $U$ such that 
\begin{align}
    \Tr_{X \backslash X^-}(U \sigma U^\dagger) = \Tr_{X \backslash X^-} (\rho_{SN})
\end{align}
where $\rho_{SN}$ is a string-net state, i.e., the ground state of a Levin-Wen Hamiltonian on $X$. The Levin-Wen models are reviewed in Section \ref{sec:sn-review}.

In the assumptions below, we are not precise about the $O(1)$ constants involved when specifying the sizes of various subregions.  However, at the price of some tedium, we expect these assumptions can be made more precise, following discussion in \cite{shi2020anyon, kim2024strict}.

For input state $\sigma$ on disk $X$, first we impose conditions on all $O(1)$-size subregions that are separated from the boundary. Following~\cite{shi2020fusion}, we refer to these as the entanglement bootstrap axioms, \textbf{A0} and \textbf{A1}, defined in Figure~\ref{fig:eb_axioms_bulk}. 
\begin{definition}[Bulk entanglement bootstrap axioms] \label{def:bulk-EB}
    We say that a state $\sigma_X$ on disk $X$ satisfies the entanglement bootstrap axioms ``in the bulk'' when axioms \textbf{A0} and \textbf{A1}, defined in Figure~\ref{fig:eb_axioms_bulk}, hold for all $O(1)$-size regions that are disjoint from the boundary and topologically equivalent to those in Figure~\ref{fig:eb_axioms_bulk}.
\end{definition}
In particular, we say that $\sigma_X$ satisfies the entanglement bootstrap axioms in the bulk because we have not yet imposed assumptions on regions touching the boundary.

 Both axioms are phrased in terms of a certain linear combination of entanglement entropies.  They can also be interpreted as a mutual information (for \textbf{A0}) or conditional mutual information (for \textbf{A1}), by introducing a purifying system. For \textbf{A0}, let $A$ be the purifying system. Then the axiom is equivalent to $I(A:C)_{\sigma}=0$, where $I(A:C)_{\sigma} := \left(S(A) + S(C) - S(AC) \right)_{\sigma}$ is the mutual information. Because mutual information is a measure of correlation~\cite{Wolf2008}, this tells us that there is no correlation between $A$ and any subsystems that is non-adjacent to $A$. Similarly, for \textbf{A1}, let $A$ be the purifying system. Then the axiom reduces to $I(A:C|B)_{\sigma}=0$, where $I(A:C|B)_{\sigma} := \left(S(AB) + S(BC) - S(B) - S(ABC) \right)$ is the conditional mutual information.

These axioms will not be exactly satisfied for gapped ground states with nonzero correlation length.  However, generically, we expect them to be satisfied approximately after sufficient coarse-graining of the lattice.  (Then whenever two coarse-grained sites are non-adjacent, they are actually far apart.) Extending our techniques to the approximate case remains the subject of future work.\footnote{Actually, the axioms themselves are not circuit-invariant: axiom $\textbf{A1}$ may be violated after applying special constant-depth circuits to a state already satisfying $\textbf{A0}$ and $\textbf{A1}$.  See discussion in Section~\ref{sec:discussion}.}  

It will be important to distinguish between (1) the boundary of a subregion of the bulk, as an imagined delimiter, and (2) a genuine physical boundary, at the physical edge of a material.  We refer to the latter as the ``physical boundary'' or often just the boundary.  To codify the notion of a \textit{gapped} physical boundary, the entanglement bootstrap axioms are slightly altered for regions touching the boundary. For more conceptual background, see Section \ref{sec:gapped-boundaries-conceptual}.
We refer to the corresponding boundary axioms again as $\textbf{A0}$ and $\textbf{A1}$, and they are described in Figure~\ref{fig:eb_axioms_boundary}~\cite{shi2021entanglement}.
\begin{figure}[htbp]
    \centering
    \begin{tikzpicture}[scale=0.7]
    \filldraw[blue!30!white] (-4, -2.5) --++ (18, 0) -- ++ (0, 5) -- ++ (-18, 0) -- cycle;
        \draw[very thick, dashed] (0,0) circle (1cm);
        \draw[very thick, dashed] (0,0) circle (2cm);
        \node[] () at (0,0) {$C$};
        \node[] () at (0, 1.5) {$B$};
        \node[] () at (0, -3) {\textbf{A0}: $\left( S(BC) + S(C) - S(B)\right)_{\sigma}=0 $};

        \begin{scope}[xshift=10cm]
        \draw[very thick, dashed] (0,0) circle (1cm);
        \draw[very thick, dashed] (0,0) circle (2cm);
        \draw[very thick, dashed] (0,1) -- ++ (0,1);
        \draw[very thick, dashed] (0,-1) -- ++ (0, -1);
        \node[] () at (-1.5, 0) {$B$};
        \node[] () at (0, 0) {$C$};
        \node[] () at (1.5, 0) {$D$};
        \node[] () at (0, -3) {\textbf{A1}: $\left( S(BC) + S(CD) - S(B) - S(D) \right)_{\sigma}=0$};
        \end{scope}
    \end{tikzpicture}
    \caption{Axioms of the entanglement bootstrap in the bulk. The blue region represents the bulk. Note $\textbf{A0}$ is equivalent to $I(A:C) = 0$ for any purification of $\sigma_{BC}$ onto $ABC$ with auxiliary space $A$.  Likewise $\textbf{A1}$ is equivalent to $I(A:C|B)_\sigma = 0$ for any purification of $\sigma_{BC}$ onto $ABCD$.}
    \label{fig:eb_axioms_bulk}
\end{figure}
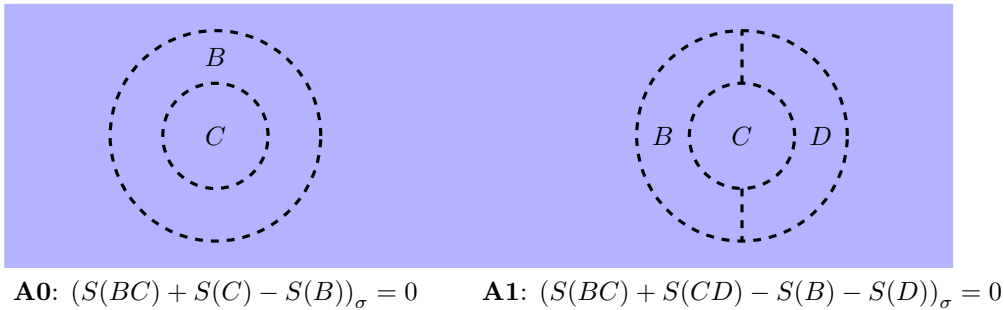
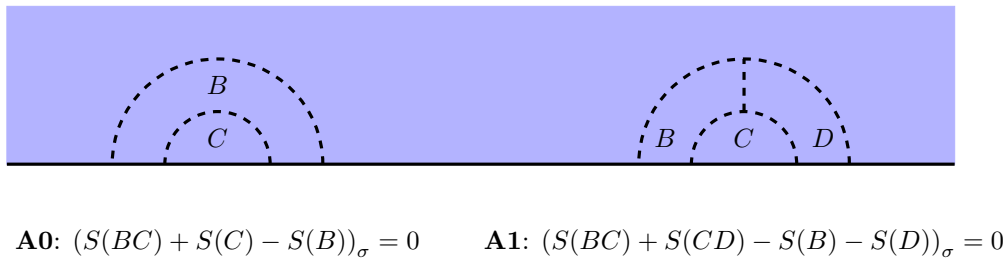
\begin{figure}[htbp]
    \centering
    \begin{tikzpicture}[scale=0.7]

    \filldraw[blue!30!white] (-4, 0) --++ (18, 0) -- ++ (0, 3) -- ++ (-18, 0) -- cycle;
    \draw[very thick] (-4, 0) -- ++ (18, 0);
    
    \begin{scope}
    \clip (-5,0) rectangle (5,5);
        \draw[very thick, dashed] (0,0) circle (1cm);
        \draw[very thick, dashed] (0,0) circle (2cm);
        \node[] () at (0,0.5) {$C$};
        \node[] () at (0, 1.5) {$B$};
    \end{scope}
        \node[] () at (0, -1.5) {\textbf{A0}: $\left( S(BC) + S(C) - S(B)\right)_{\sigma}=0$};

        \begin{scope}[xshift=10cm]
         \clip (-5,0) rectangle (5,5);
        \draw[very thick, dashed] (0,0) circle (1cm);
        \draw[very thick, dashed] (0,0) circle (2cm);
        \draw[very thick, dashed] (0,1) -- ++ (0,1);
        \draw[very thick, dashed] (0,-1) -- ++ (0, -1);
        \node[] () at (-1.5, 0.5) {$B$};
        \node[] () at (0, 0.5) {$C$};
        \node[] () at (1.5, 0.5) {$D$};
        \end{scope}
        \begin{scope}[xshift=10cm]
            \node[] () at (0, -1.5) {\textbf{A1}: $\left( S(BC) + S(CD) - S(B) - S(D) \right)_{\sigma}=0$};
        \end{scope}
    \end{tikzpicture}
    \caption{Axioms of the entanglement bootstrap in the boundary. The blue region represents the bulk and the solid line represents the physical boundary.}
    \label{fig:eb_axioms_boundary}
\end{figure}

\begin{definition}[Boundary entanglement bootstrap axioms] \label{def:boundary-EB}
    We say that a state $\sigma_Y$ on a disk $Y$ satisfies the entanglement bootstrap axioms on the boundary when 
    the axioms \textbf{A0} and \textbf{A1}, defined in Figure~\ref{fig:eb_axioms_bulk}, hold for all $O(1)$-size regions that overlap the boundary in a way topologically equivalent to those in Figure~\ref{fig:eb_axioms_boundary}.
\end{definition}

Note the bulk entanglement bootstrap axioms follow from the strict area law \eqref{eq:strict-area}. In fact, so do the boundary axioms, for the appropriate notion the area law. In particular, for subregions $A$ overlapping the physical boundary, the term $|\partial A|$ is modified to only count the length of $\partial A$ where it does not overlap the physical boundary $\partial X$, as motivated in Section \ref{sec:Hamiltonians_vs_states}.  Then the strict area law implies the boundary entanglement bootstrap axioms as well. 

So far, we have developed the notion of a state $\sigma$ on a disk $X$, satisfying the entanglement bootstraps axioms in the bulk and possibly the boundary.   Now we want to develop the notion that $\sigma$ has a \textit{gappable} boundary.  Under our terminology, in some sense this will be a stronger condition than $\sigma$ simply having a single gapped boundary at a fixed location. Actually, we will not require that $\sigma$ itself has any physical boundary.  Instead, we require that every $O(1)$-size contractible sub-region $A \subset X$ can be \textit{given} a gapped boundary.\footnote{For a slightly more parsimonious assumption, note our results will only require that \textit{certain} disks $O(1)$-size disks can be given gapped boundary. We explain this in Appendix~\ref{sec:elementary-fragments}.}

\begin{definition}[Gappable boundary]
\label{def:gappable-boundary-EB}
    We say a state $\sigma$ on a disk $X$ satisfies the entanglement bootstrap axioms with gappable boundary if (1) it satisfies the entanglement bootstrap axioms in the bulk (Definition \ref{def:bulk-EB}) and (2) for every $O(1)$-size contractible disk $A \subset X$, there exists a state $\sigma'_A$ on $A$ that satisfies the  boundary entanglement bootstrap axioms for $\partial A$ (Definition \ref{def:boundary-EB}) and also matches $\sigma$ on the interior, $\Tr_{A\backslash A^-} \sigma'_A = \Tr_{A\backslash A^-} \sigma$.  For overlapping disks $A$ and $B$, the associated states $\sigma'_A$ and $\sigma'_B$ are assumed to be consistent, in the sense of Figure~\ref{fig:fragment_assumption}. 
\end{definition}
We often call a state satisfying this assumption a ``reference state,'' almost always denoted by $\sigma$.  The states $\sigma'_A$ on disks $A$ with gappable boundary are often called ``fragments.''

Note that while string-net states do satisfy the above axioms, more generic examples will look nothing like string-nets. In fact, note we do not even require translation-invariance.  This may be surprising -- for instance, if the original state included a domain wall between two distinct phases, how could we map it to a single, homogeneous string-net with a constant-depth circuit?  It turns out that axiom $\textbf{A1}$ for the bulk prevents such domain walls and ensures homogeneity.\footnote{Because we do not assume translation-invariance, the assumption of a gappable boundary for multiple disks $A \subset X$ in Definition \ref{def:gappable-boundary-EB} is possibly non-trivial.  On the other hand, due to the homogeneity mentioned above already guaranteed by the bulk entanglement bootstrap axioms, it is possible Definition \ref{def:gappable-boundary-EB} could be weakened to only require that a gapped boundary exist for a single disk.}

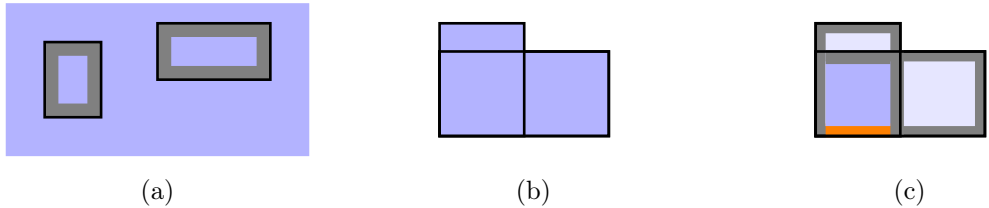
\begin{figure}
    \centering
    \begin{tikzpicture}[line width=1pt]
        \filldraw[blue!30!white] (-2cm, -1cm) -- ++ (4cm, 0) -- ++ (0, 2cm) -- ++ (-4cm, 0) -- cycle;
        \draw[fill=black!50!white] (-1.5cm, -0.5cm) -- ++ (0.75cm, 0) -- ++ (0, 1cm) -- ++(-0.75cm, 0) -- cycle;
        \filldraw[blue!30!white] (-1.3cm, -0.3cm) -- ++ (0.35cm, 0) -- ++ (0, 0.6cm) -- ++ (-0.35cm, 0) -- cycle;

        \draw[fill=black!50!white] (0, 0) -- ++ (1.5cm, 0) -- ++ (0, 0.75cm) -- ++ (-1.5cm, 0) -- cycle;
        \filldraw[blue!30!white] (0.2, 0.2) -- ++ (1.1cm, 0) -- ++ (0, 0.35cm) -- ++ (-1.1cm, 0) -- cycle;

        \node[] () at (0, -1.5cm) {(a)};

        \begin{scope}[xshift=5cm]
        \begin{scope}[xshift=1cm, scale=1.5]
        \draw[fill=blue!30!white] (-1.5cm, -0.5cm) -- ++ (1.5cm, 0) -- ++ (0, 0.75cm) -- ++ (-1.5cm, 0) -- cycle;
        \draw[fill=blue!30!white] (-1.5cm, -0.5cm) -- ++ (0.75cm, 0) -- ++ (0, 1cm) -- ++(-0.75cm, 0) -- cycle;
        \draw[] (-1.5cm, -0.5cm) -- ++ (1.5cm, 0) -- ++ (0, 0.75cm) -- ++ (-1.5cm, 0) -- cycle;
        \end{scope}
        \node[] () at (0, -1.5cm) {(b)};
        \end{scope}

        \begin{scope}[xshift=10cm]
        \begin{scope}[xshift=1cm, scale=1.5]
        \draw[fill=black!50!white] (-1.5cm, -0.5cm) -- ++ (1.5cm, 0) -- ++ (0, 0.75cm) -- ++ (-1.5cm, 0) -- cycle;
        \draw[fill=black!50!white] (-1.5cm, -0.5cm) -- ++ (0.75cm, 0) -- ++ (0, 1cm) -- ++(-0.75cm, 0) -- cycle;
        \filldraw[blue!10!white] (-1.4cm, -0.4cm) -- ++ (1.3cm, 0) -- ++ (0, 0.55cm) -- ++ (-1.3cm, 0) -- cycle;
        \filldraw[blue!10!white] (-1.4cm, -0.4cm) -- ++ (0.55cm, 0) -- ++ (0, 0.8cm) -- ++ (-0.55cm, 0) -- cycle;
         \filldraw[black!50!white] (-0.73cm, -0.5cm) -- ++ (-0.1cm, 0) -- ++ (0, 0.75cm) -- ++ (0.1cm, 0) -- cycle;

         \filldraw[orange] (-1.4cm, -0.5cm)  -- ++ (0.55cm, 0) -- ++ (0, 0.1cm) -- ++ (-0.55cm, 0) -- cycle;
         \filldraw[blue!30!white] (-1.4cm, -0.4cm) -- ++ (0.55cm, 0) -- ++ (0, 0.55cm) -- ++ (-0.55cm, 0) -- cycle;
          \filldraw[black!50!white] (-1.4cm, 0.15cm)  -- ++ (0.55cm, 0) -- ++ (0, 0.1cm) -- ++ (-0.55cm, 0) -- cycle;

        \draw[] (-1.5cm, -0.5cm) -- ++ (1.5cm, 0) -- ++ (0, 0.75cm) -- ++ (-1.5cm, 0) -- cycle;
        \draw[] (-1.5cm, -0.5cm) -- ++ (0.75cm, 0) -- ++ (0, 1cm) -- ++(-0.75cm, 0) -- cycle;
        
        \end{scope}
        \node[] () at (0, -1.5cm) {(c)};
        \end{scope}
    \end{tikzpicture}

    \caption{We assume the reference state has ``gappable boundary,'' including the existence of states on certain disk-like sub-regions (``fragments'') with gapped boundary.
    (\textbf{a}) Two examples of fragments outlined in black.  We assume there exist states on these fragments that match the reference state on their interiors (blue) and have gapped boundary (grey). \textbf{(b)} Two rectangular fragments on overlapping regions.  \textbf{(c)}  Fragments on overlapping regions are assumed to match on the interior of their overlap (blue) and also wherever they share a boundary (orange boundary). }
    \label{fig:fragment_assumption}
\end{figure}

Finally we are ready to state our main technical result.
\begin{theorem}[Mapping to string-nets]
\label{thm:mapping-to-sn}
    Assume a state $\sigma$ on a 2D disk $X$ satisfies the entanglement bootstrap axioms for a state with gappable boundary, as in Definition \ref{def:gappable-boundary-EB}. Then there exists a unitary fusion category $\mathcal{C}$ and geometrically local, unitary circuit $U$ of constant depth such that 
    \begin{align}
    \Tr_{X \backslash X^-}(U \sigma U^\dagger) = \Tr_{X \backslash X^-} (\rho_{SN})
\end{align}
where $\rho_{SN}$ is the canonical string-net state associated to category $\mathcal{C}$, i.e., the ground state of the associated Levin-Wen Hamiltonian.  Here $\rho_{SN}$ lives on a hexagonal lattice obtained from a constant-size coarse-graining of the original lattice.
\end{theorem}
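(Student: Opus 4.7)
The plan is to proceed in three stages: first extract the target UFC $\mathcal{C}$ from the boundary data of the fragments, then canonicalize the state locally on a hexagonal coarse-graining, and finally glue the local canonical forms into a global string-net.

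\textbf{Stage 1 (identifying $\mathcal{C}$).} I would use the fragments from Definition \ref{def:gappable-boundary-EB} to extract a UFC $\mathcal{C}$ intrinsic to $\sigma$. Applied to each fragment $\sigma'_A$, the boundary entanglement bootstrap axioms identify boundary-anyon superselection sectors as in \cite{shi2021entanglement} and show their fusion is governed by a unitary multi-fusion ring; following \cite{kawagoe2020microscopic}, I would extract the associator ($F$-symbols) by comparing microscopic states related by local moves along the boundary. The consistency condition on overlapping fragments (Figure \ref{fig:fragment_assumption}) ensures that these categorical data do not depend on which fragment is used, so $\mathcal{C}$ is globally well-defined.

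\textbf{Stage 2 (local canonicalization).} I would coarse-grain to a hexagonal lattice whose cells have constant linear size. For each hexagon $h$, the associated fragment $\sigma'_h$ satisfies the boundary axioms, and A0 and A1 applied near $\partial h$ force a very specific entanglement structure: on each boundary arc the reduced state decomposes into superselection sectors labeled by simple objects of $\mathcal{C}$ with weights determined by quantum dimensions. Using this decomposition together with Petz recovery applied repeatedly to axiom A1, I would build a local unitary $U_h$ supported in a neighborhood of $h$ that disentangles $\sigma'_h$ into a tensor product of a label register (carrying one simple object per hexagon edge together with a fusion-space token at each vertex) and a fixed pure state holding the internal degrees of freedom. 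Since distinct hexagons are disjoint, $U_1 = \prod_h U_h$ is a depth-one circuit.

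\textbf{Stage 3 (gluing) and main obstacle.} After $U_1$, each edge shared by two hexagons carries a pair of label registers, one from each side. A second constant-depth layer $U_2$, supported on neighborhoods of edges and vertices, identifies the paired labels (consistent by the fragment overlap assumption), projects onto the vertex fusion constraints, and implements the plaquette superposition dictated by the $F$-symbols, yielding $\rho_{SN}$ on $X^-$. I expect Stage 2 to be the hard part: the axioms give only vanishing (conditional) mutual informations rather than explicit Schmidt data, so producing genuine unitaries (not merely isometries, not merely channels) that canonicalize each fragment \emph{compatibly} with its neighbors requires careful Petz-type recovery together with a uniqueness argument ensuring that the label registers produced by different fragments agree on their overlaps, so that Stage 3 can be executed by purely local gates without any long-range correction.
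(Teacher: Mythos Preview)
Your three-stage outline has the right overall shape (extract $\mathcal{C}$, coarse-grain to a hexagonal lattice, identify with a string-net), but Stage~3 hides the whole difficulty, and Stage~2 has a geometric mismatch.

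\textbf{Stage~3 is the real gap.} You write that $U_2$ ``implements the plaquette superposition dictated by the $F$-symbols.'' But this is exactly what cannot be done by fiat with a constant-depth circuit: if after Stage~2 the label registers were uncorrelated (or only locally correlated), then imposing the Levin--Wen plaquette constraint would \emph{create} long-range entanglement, which is impossible in constant depth. The only way Stage~3 can work is if the plaquette superposition is \emph{already present} in the correlations inherited from $\sigma$, in which case your job is to \emph{prove} it is there, not implement it. This is the technical heart of the paper: one defines a physical plaquette operator $\tilde B_p^s$ (built from movement and splitting operators for boundary anyons), shows it stabilizes the transformed state with eigenvalue $d_s$, and then verifies by a lengthy diagrammatic computation that its matrix elements in the embedded basis coincide with those of the Levin--Wen $B_p^s$. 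Your proposal does not indicate any mechanism for this step.

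\textbf{Stage~2 has a mismatch between the fragment and the global state.} The unitary $U_h$ you design acts on $\sigma'_h$, which has a gapped physical boundary on $\partial h$; but when you apply $U_h$ to the global $\sigma$, the state on $h$ is $\sigma|_h$, which agrees with $\sigma'_h$ only on the interior $h^-$ and differs near $\partial h$. So the canonicalization you describe is not guaranteed to act correctly on the global state. The paper avoids this by inverting the geometry: rather than canonicalizing hexagons, it \emph{punches holes} at the hexagon centers (using the gappable-boundary assumption to create each gapped inner boundary by a unitary supported on an annulus), so the state lives on the fattened honeycomb lattice between the holes. Edge regions then decompose into anyon sectors and are disentangled conditionally; vertex regions carry the fusion-space data; and the plaquette constraint is established as above. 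A further subtlety you do not address is the phase consistency of all the sector-dependent unitaries and basis states across the lattice --- the paper handles this by defining all string operators on a single spanning-tree region and transporting them, which is what makes the plaquette matrix-element computation go through.
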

The proof of Theorem \ref{thm:mapping-to-sn} appears in Section \ref{sec:proof_main}. Though we refer to the circuit as unitary, it may actually involve local isometries (to increase the local Hilbert space dimension) or local projections that preserve the state (to decrease the local Hilbert space dimension).

Given Theorem \ref{thm:mapping-to-sn} and Proposition \ref{prop:same-phase} regarding the notion of a topological phase, the following corollary is essentially a tautology.  We spell it out for emphasis.

\begin{corollary}[Informal]
    Suppose the conjecture that every 2D gapped phase with gappable boundary has a representative state satisfying the entanglement bootstrap axioms in the sense of Definition \ref{def:gappable-boundary-EB}.  Then the Levin-Wen Hamiltonians exhaust all possible 2D gapped phases with gappable boundary.
\end{corollary}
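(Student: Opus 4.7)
The plan is to chain the paper's main theorem with the earlier Proposition \ref{prop:same-phase}; the corollary is billed as essentially a tautology, so I would focus on making the composition clean rather than introducing new ingredients. Fix an arbitrary 2D gapped phase $P$ with gappable boundary. By the hypothesized conjecture that is the assumption of the corollary, $P$ contains some representative ground state $\sigma$ on a disk $X$ satisfying the entanglement bootstrap axioms with gappable boundary in the sense of Definition \ref{def:gappable-boundary-EB}. Theorem \ref{thm:mapping-to-sn} then produces a unitary fusion category $\mathcal{C}$, the associated Levin--Wen string-net ground state $\rho_{SN}$ on a constant-size coarse-graining of $X$, and a constant-depth, geometrically local unitary circuit $U$ with
\[
    \mathrm{Tr}_{X \backslash X^-}(U \sigma U^\dagger) = \mathrm{Tr}_{X \backslash X^-}(\rho_{SN}).
\]

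Next I would invoke Proposition \ref{prop:same-phase} to conclude that $\sigma$ and $\rho_{SN}$ occupy the same topological phase. The circuit supplied by Theorem \ref{thm:mapping-to-sn} is already constant-depth and strictly local, which is exactly the hypothesis of Proposition \ref{prop:same-phase}. There is one small piece of plumbing: Theorem \ref{thm:mapping-to-sn} only guarantees matching on the interior $X^-$, whereas phase equivalence is nominally a statement about the whole state. This is harmless because topological phase equivalence is, by its definition in Section \ref{sec:background}, insensitive to local differences; any mismatch in the annular boundary strip $X \backslash X^-$ can be absorbed into an additional strictly local, constant-depth dressing that acts within an $O(1)$-neighborhood of $\partial X$, leaving the bulk phase invariant. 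Hence $\rho_{SN}$ lies in $P$, and since $P$ was arbitrary, every phase with gappable boundary contains a Levin--Wen representative.

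The only genuinely substantive content is the hypothesis itself, namely the existence of a zero-correlation-length, bootstrap-satisfying representative in every phase; the authors explicitly disclaim this and leave it to future work. The main obstacle in turning this corollary into an unconditional classification is therefore not the proof sketched above but rather that assumption. A secondary, purely cosmetic issue is the choice of formal framework for ``phase'' when the state lives on a disk with a boundary rather than on a closed manifold; I would handle this by sticking with Definition \ref{def:phase-state} throughout, which is exactly how Proposition \ref{prop:same-phase} is formulated, so that the composition goes through verbatim.
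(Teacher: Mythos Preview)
Your proposal is correct and matches the paper's own treatment: the paper explicitly states that the corollary ``is essentially a tautology'' given Theorem \ref{thm:mapping-to-sn} and Proposition \ref{prop:same-phase}, and offers no further argument. Your write-up is in fact more careful than the paper's, since you address the interior-versus-full-state mismatch and the choice of phase definition, both of which the paper sidesteps by labeling the corollary informal.
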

The conjecture entering the above corollary could only be made precise by specifying the exact meaning of ``gapped phase with gappable boundary,'' but the underlying ideas are addressed in Section \ref{sec:background}. The conclusion would be that when we restrict to gapped phases with gappable boundary, there are no more exotic phases to be found, beyond those given by the Levin-Wen models.

While Levin-Wen models exhaust all possible phases in our setting, note not all string-nets are actually in the same phase.  For a string-net specified by unitary fusion category $\mathcal{C}$, it is believed that the bulk anyon contents are labeled by a UMTC denoted $Z(\mathcal{C})$, called the center (or Drinfeld center) of $\mathcal{C}$ \cite{kitaev2012models}.  Two fusion categories $\mathcal{C}, \mathcal{D}$ can have isomorphic centers, $Z(\mathcal{C}) \cong Z(\mathcal{D})$, in which case they are called Morita-equivalent; we call the associated string-nets Morita-equivalent as well.  When two string-nets are Morita-equivalent, they are connected by a constant-depth circuit \cite{lootens2022mapping}, thus occupy the same phase.  Conversely, when two string-nets are are connected by a constant-depth circuit, they have the same bulk anyon content, and thus they are Morita-equivalent.  

To make this discussion rigorous, we need two missing ingredients: (1) a precise notion of the ``bulk anyon contents'' that is isomorphic for any two states connected by a constant-depth circuit, and (2) a proof that for string-nets built on fusion category $\mathcal{C}$, the bulk anyon contents are actually given by $Z(\mathcal{C})$. As it stands, we are able to provide (1) but not (2). The latter is posited in Conjecture \ref{conjecture:SN-bulk-anyons}.  Assuming this conjecture, then Theorem \ref{thm:mapping-to-sn} can ultimately be used to show:
\begin{corollary}
\label{cor:classification-main}
    Suppose Conjecture \ref{conjecture:SN-bulk-anyons} holds, regarding certain calculations using the Levin-Wen model.
    Then two gapped ground states with gappable boundary satisfying the axioms of Definition \ref{def:gappable-boundary-EB} are connected by a constant-depth circuit if and only if they have the same bulk anyon contents (as defined in Theorem \ref{thm:bulk-anyons}). The corresponding topological phases are in bijection with doubled UMTCs, i.e.\ those of the form $Z(\mathcal{C})$ for UFC $\mathcal{C}$.
\end{corollary}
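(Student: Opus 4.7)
The plan is to combine four inputs: Theorem \ref{thm:mapping-to-sn} (any state satisfying our axioms maps to a Levin-Wen ground state via a constant-depth circuit), Theorem \ref{thm:bulk-anyons} (the bulk anyon contents are invariants under constant-depth local circuit equivalence), Conjecture \ref{conjecture:SN-bulk-anyons} (the bulk anyon contents of the string-net built on UFC $\mathcal{C}$ are $Z(\mathcal{C})$), and the Morita-equivalence result of \cite{lootens2022mapping} (string-nets whose defining UFCs have isomorphic Drinfeld centers are related by constant-depth local circuits). The ``only if'' direction is immediate from Theorem \ref{thm:bulk-anyons}: if $\sigma_0$ and $\sigma_1$ are connected by a constant-depth local circuit, then by invariance they yield the same bulk anyon data.

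For the ``if'' direction, given $\sigma_0$ and $\sigma_1$ with matching bulk anyon contents, apply Theorem \ref{thm:mapping-to-sn} to each to obtain constant-depth unitaries $U_0, U_1$ mapping $\sigma_i$ to the canonical string-net $\rho_{SN,i}$ associated with some UFC $\mathcal{C}_i$. Circuit-invariance of the anyon contents under $U_0, U_1$ implies that $\rho_{SN,0}$ and $\rho_{SN,1}$ have the same anyon data, which by Conjecture \ref{conjecture:SN-bulk-anyons} is $Z(\mathcal{C}_0) \cong Z(\mathcal{C}_1)$; hence $\mathcal{C}_0$ and $\mathcal{C}_1$ are Morita-equivalent. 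Invoking \cite{lootens2022mapping} supplies a constant-depth circuit $V$ with $V \rho_{SN,0} V^\dagger = \rho_{SN,1}$, so $U_1^\dagger V U_0$ is a constant-depth geometrically local circuit connecting $\sigma_0$ to $\sigma_1$. The bijection claim then follows: the equivalence just established shows that the map sending each phase to its bulk anyon content is an injection into doubled UMTCs, and surjectivity is realized by the Levin-Wen ground state for each UFC $\mathcal{C}$, which lies in our axiom class and carries bulk anyon content $Z(\mathcal{C})$.

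The main obstacle is the interior-only nature of Theorem \ref{thm:mapping-to-sn}: $U_i \sigma_i U_i^\dagger$ agrees with $\rho_{SN,i}$ only on the reduced region $X^-$, not on all of $X$. To promote the composition $U_1^\dagger V U_0$ into a genuine phase equivalence on the full disk, one should either apply Theorem \ref{thm:mapping-to-sn} on a slightly enlarged disk and then restrict to the sub-disk where both sides match exactly, or absorb the discrepancy in the boundary collar $X \backslash X^-$ into an additional constant-depth unitary supported there (permissible because the gappable-boundary assumption gives us freedom to adjust a constant-thickness region near $\partial X$). A secondary technical step is checking that the circuit of \cite{lootens2022mapping} can be applied to the particular string-net realization produced by Theorem \ref{thm:mapping-to-sn}; this amounts to matching lattice conventions and coarse-graining levels, but requires no new conceptual ideas.
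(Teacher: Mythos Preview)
Your proposal is correct and follows essentially the same approach as the paper: the ``only if'' direction via Theorem~\ref{thm:bulk-anyons}, and the ``if'' direction by mapping both states to string-nets (Theorem~\ref{thm:mapping-to-sn}), transferring the bulk-anyon equality along those circuits, invoking Conjecture~\ref{conjecture:SN-bulk-anyons} to obtain $Z(\mathcal{C}_0)\cong Z(\mathcal{C}_1)$, and then citing \cite{lootens2022mapping} for the Morita-equivalent case. Your explicit treatment of the bijection and of the interior-only caveat in Theorem~\ref{thm:mapping-to-sn} is in fact slightly more careful than the paper's own proof, which leaves the boundary-collar issue to a remark after the argument.
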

 The proof is given in Section \ref{sec:classification}, after developing the notion of bulk anyon contents in Theorem \ref{thm:bulk-anyons}.  If one could further show that every gapped phase with gappable boundary had a representative satisfying the entanglement bootstrap axioms, then the classification project for these phases would be complete.  

As an aside, we also consider explicitly ``doubled'' states, obtained by stacking two copies of the 2D system. The stacking is analogous to two stacked sheets of paper, but with one copy spatially reflected. Physically, one expects the doubled system has gappable boundary.  Accordingly, we demonstrate that for doubled states, the explicit assumption about gappable boundary in Theorem~\ref{thm:mapping-to-sn} may be dropped.  This makes for a particularly simple theorem statement, discussed in Section~\ref{sec:double}.

 \section{Proof summary}
 \label{sec:proof_summary}

 We offer a high-level summary of the proof technique, before developing the main technical tools.  We intend this as a loosely readable standalone summary.
 
 In Theorem \ref{thm:mapping-to-sn} we map  ground states satisfying certain assumptions to string-net states.  The mapping is a constant-depth, geometrically local unitary circuit.  This circuit uses only three layers, acting on coarse-grained regions.  (In fact every 2D local circuit can be implemented with three layers on suitably coarse-grained regions, though our construction does naturally occur in three steps.)   We call the original state $\sigma$, with
 \begin{align} \label{eq:circ-sequence}
     \sigma \underset{U_1}{\mapsto} \sigma^{(1)} \underset{U_2}{\mapsto} \sigma^{(2)} \underset{U_3}{\mapsto} \sigma^{(3)}
 \end{align}
 producing state $\sigma^{(3)}$ that can be viewed as a canonical string-net state living on a hexagonal coarse-graining of the original lattice.
 These steps are illustrated in Figure~\ref{fig:circuit-summary}.

 \begin{figure}[tbhp]
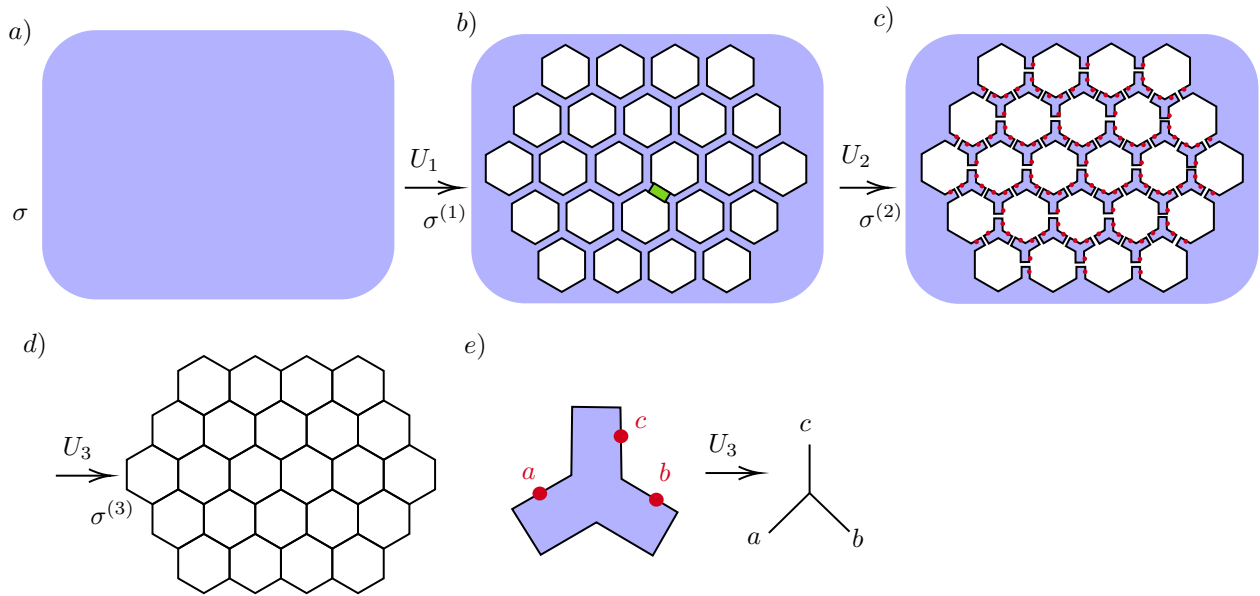

        \centering
        \include{figs/circuit-summary}
    \caption{Parts (a-d) illustrate the transformations in Eq.~\eqref{eq:circ-sequence}, using unitary circuit layers $U_1, U_2, U_3$.}
    \label{fig:circuit-summary}
\end{figure}

Each of the steps $U_1$, $U_2$, $U_3$ is a single circuit layer: a product of unitaries on disjoint regions.  These are illustrated in Figure \ref{fig:circuit-summary}.

First $U_1$ punches holes with gapped boundary. These holes are large but $O(1)$-size from the perspective of the microscopic lattice. We choose hexagonal holes, in order to ultimately to produce a string-net on a hexagonal lattice.  
These holes are \textit{not} produced by simply tracing out degrees of freedom. Each hole is formed by a local unitary operation, and the interior of the hole a pure state.  We then disregard the interiors.  (Recall that although we call the circuit unitary, we actually allow local isometries and also local projections that preserve the state, corresponding to enlarging or shrinking the local Hilbert space.)

The resulting state $\sigma^{(1)}$ lives on a ``fattened'' hexagonal lattice; the edges are thick 1D strips, which we also call ``edge regions.''  See Figure~\ref{fig:circuit-summary}(b), where such an edge region is illustrated in green.  The reduced density matrix of the edge region turns out to be a mixture of orthogonal sectors, and each sector can be identified with a type of boundary anyon (the anyon excitations can live along the gapped boundary).  This identification uses the tools in Section \ref{sec:eb_review}.  The boundary anyons are described by objects in a unitary fusion category (UFC). A UFC is \textit{also} precisely the data required to specify a string-net model, where the edges are also associated to objects in a UFC.  This analogy guides our construction.  

In the second step, $U_2$ disentangles the edge regions, leaving the vertex regions in a product state [Figure \ref{fig:circuit-summary}(c)]. More precisely, the edge regions can only be disentangled conditional on their sector, and they generally occupy a mixture of sectors.  Then the result of $U_2$ is a superposition of product states over vertex regions.  Each vertex region has three anyon excitations, labeled by red dots in the figure, one associated to each incident edge.

While string-net states are often defined on a hexagonal lattice with degrees of freedom living on both the edges and vertices, one can also define them with only vertex degrees of freedom.  The local Hilbert space on each vertex then has a basis states are labeled by a choice of three anyons, $(a,b,c)$, as well as an element of their ``fusion space.''  

The state $\sigma^{(2)}$ in Figure \ref{fig:circuit-summary}(c) therefore looks analogous to a string-net state.  In fact, the only remaining step is essentially a local change of basis, implemented by $U_3$.  
A subspace of the physical Hilbert space on the vertex region (the subspace that supports $\sigma^{(2)}$) is identified with canonical string net vertex degrees of freedom.  This identification is emphasized in Figure~\ref{fig:circuit-summary}(e).  The final state $\sigma^{(3)}$ lives in a string-net Hilbert space imposed over a hexagonal coarse-graining of the original lattice.

What remains is to show $\sigma^{(3)}$ is actually the  the ground state of the Levin-Wen Hamiltonian living on this embedded string-net Hilbert space.  To this end, we must develop a dictionary between the (1) the abstract Levin-Wen Hamiltonian, with its fusion category diagrammatics, and (2) the anyon operations on the physical Hilbert space of $\sigma$.  Then we show $\sigma^{(3)}$ is stabilized by the Levin-Wen Hamiltonian.  These steps occupy much of our technical work. 

\section{Review of entanglement bootstrap}
\label{sec:eb_review}

In this Section, we briefly review the essentials of entanglement bootstrap~\cite{shi2020fusion,shi2021entanglement} that are pertinent to this work. The entanglement bootstrap is a set of tools for analyzing states satsisfying certain axioms concerning their entanglement entropy.  These axioms have already been introduced in Section \ref{sec:setup-results}.  The axioms for bulk regions are summarized in Figure~\ref{fig:eb_axioms_bulk}, and those for regions touching the boundary are summarized in Figure~\ref{fig:eb_axioms_boundary}.

Why are these axioms useful? To answer this question, we state the following two facts. First, while the axioms are imposed only on balls of constant sizes, they \emph{imply} axioms at an arbitrarily larger scale~\cite{shi2020fusion,shi2021entanglement}. Second, for any density matrix $\rho_{ABCD}$, 
\begin{equation}
    I(A:C|B)_{\rho} \leq \left(S(BC) + S(CD) - S(B) - S(D) \right)_{\sigma}, \label{eq:cmi_upper_bound}
\end{equation}
which follows straightforwardly from SSA. The importance of Eq.~\eqref{eq:cmi_upper_bound} is that one can bound the conditional mutual information of a potentially large system ($ABC$) in terms of the linear combination of a smaller subsystem ($BCD$); note that Eq.~\eqref{eq:cmi_upper_bound}  holds for any $A$ which is a subsystem in the complement of $BCD$. It can even be the entire complement of the region $BCD$ shown in Figure~\ref{fig:eb_axioms_bulk}. Observations like this constrain the space of states locally indistinugishable from the reference state in significantly, often leading to surprisingly powerful implications. In the rest of this section, we will review such implications, focusing on the ones pertinent to this paper.

\subsection{Information convex set}
\label{subsec:information_convex_set}

A crucial concept used in the entanglement bootstrap program is the notion of \emph{information convex set}. Let $\Lambda$ be a set of sites on which the reference state is defined. Without loss of generality, let $\Omega\subset \Lambda$ be a subsystem. We can define the information convex set of $\Omega$ as follows. First, enlarge $\Omega$ to include neighboring sites. Second, identify the set of reduced density matrices on this enlarged subsystem that are each locally indistinguishable from the reference state. Third, trace out these density matrices over the neighbors used for the enlargement. The set we obtain this way is the information convex set of $\Omega$, denoted as $\Sigma(\Omega)$ [Figure~\ref{fig:info_convex_def}].

\begin{figure}[htbp]
    \centering
    \includegraphics[width=0.3\textwidth]{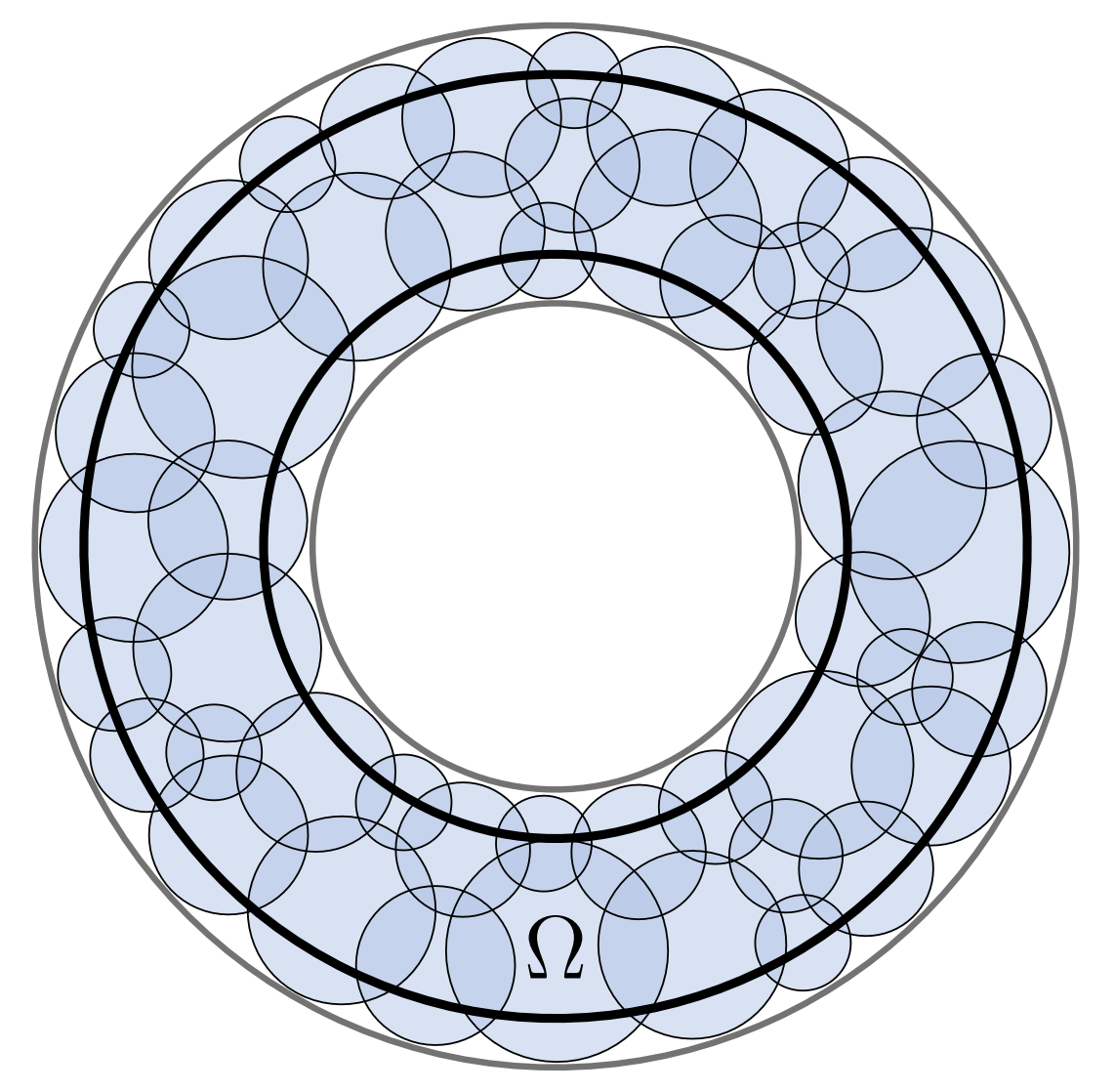}
    \caption{Subsystems used in the definition of the information convex set. (Figure obtained from Ref.~\cite{shi2020fusion} under authors' permission.)} 
    \label{fig:info_convex_def}
\end{figure}

To help intuition, we can use the fact that the entanglement bootstrap state has some commuting parent Hamiltonians  \cite{kim2024strict}.  Then one has the following equivalent formulation. For region $\Omega$, we can also define the information convex set $\Sigma(\Omega)$ as
\begin{align}
\Sigma(\Omega) = \{\Tr_{\Omega^+ \backslash \Omega} (\rho_{\Omega^+}) : \Tr(\rho_{\Omega^+} H_{\Omega^+}) = 0\}
\end{align}
where $\Omega^+$ is the enlargement of region $\Omega$, and $H_{\Omega^+}$ consists of the Hamiltonian terms on that region. In other words, $\Sigma(\Omega)$ just consists of zero-energy states $\rho_{\Omega^+}$ on $\Omega^+$ that have been reduced to $\Omega$.

An important fact is that the information convex sets, up to an isomorphism, only depend on the topology of the underlying subsystem. Let $X$ and $X'$ be two subsystems that can be smoothly deformed into each other. Then there is an isomorphism between $\Sigma(X)$ and $\Sigma(X')$ realized by a quantum channel. This is called as the \emph{isomorphism theorem}.
\begin{theorem}[Isomorphism theorem, Ref.~\cite{shi2020fusion,shi2021entanglement}]
     \label{thm:isomorphism_theorem}
     If $\Omega^0$ and $\Omega^1$ are connected by a path $\{\Omega^t\}_{t\in [0, 1]}$, there is an isomorphism $\Phi$ between $\Sigma(\Omega^0)$ and $\Sigma(\Omega^1)$ uniquely determined by the path. Moreover, the isomorphism preserves the distance and entropy difference between two elements of the information convex sets: for any $\rho, \lambda \in \Sigma(\Omega^0)$,
     \begin{equation}
         \begin{aligned}
             D(\rho, \lambda) &= D(\Phi(\rho), \Phi(\lambda)), \\
             S(\rho) - S(\lambda) &= S(\Phi(\rho)) - S(\Phi(\lambda)),
         \end{aligned}
     \end{equation}
     where $D(\cdot, \cdot)$ is any distance measure that is non-increasing under completely-positive trace-preserving map. 
\end{theorem}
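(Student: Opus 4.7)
The plan is to build the isomorphism by discretizing the path into elementary local moves, construct each local isomorphism from a Petz recovery map, and then show the composite depends only on the path. First I would discretize the continuous family $\{\Omega^t\}_{t \in [0,1]}$ into a finite chain $\Omega^0 = \Omega_0, \Omega_1, \ldots, \Omega_N = \Omega^1$ such that each consecutive pair $(\Omega_i, \Omega_{i+1})$ differs by the addition or removal of a small ball-shaped region adjacent to the boundary, small enough that the axioms $\textbf{A0}$--$\textbf{A1}$ apply on a suitable enlargement. Continuity of the path guarantees such a discretization exists for some finite $N$.

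For an elementary extension $\Omega \to \Omega^+ := \Omega \cup b$, I would decompose $\Omega = A \cup B$ with $B$ a thin collar of $\Omega$ adjacent to $b$. Axiom $\textbf{A1}$ read on $AB \cup b$ gives $I(A : b \mid B)_\sigma = 0$, so the reference state $\sigma_{ABb}$ is an $A$--$B$--$b$ quantum Markov chain; equivalently, the Petz recovery channel $\mathcal{R}_{B \to Bb}^\sigma$ satisfies $\mathcal{R}^\sigma(\sigma_{AB}) = \sigma_{ABb}$. I would define the elementary map $\Phi_{\Omega \to \Omega^+}(\rho) := \mathcal{R}^\sigma(\rho)$. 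Using the scaling-up of $\textbf{A1}$ to arbitrary sizes (a consequence of the axioms combined with \eqref{eq:cmi_upper_bound}), every $\rho \in \Sigma(\Omega)$ lifts to a state in $\Sigma(\Omega^+)$; conversely, any $\rho' \in \Sigma(\Omega^+)$ itself satisfies $I(A : b \mid B)_{\rho'} = 0$ and hence equals the Petz extension of $\mathrm{Tr}_b\,\rho'$. Thus $\Phi$ and partial trace $\mathrm{Tr}_b$ are mutually inverse CPTP maps between $\Sigma(\Omega)$ and $\Sigma(\Omega^+)$.

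Preservation of distance follows immediately: data processing applied to both $\Phi$ and $\Phi^{-1}$ yields $D(\rho, \lambda) = D(\Phi(\rho), \Phi(\lambda))$ for any $D$ monotone under CPTP maps. For entropy, the Markov structure of $\Phi(\rho)$ yields the additive decomposition $S(\Phi(\rho)) = S(\rho) + S(\sigma_{Bb}) - S(\sigma_B)$, so the ``reference constant'' cancels in $S(\Phi(\rho)) - S(\Phi(\lambda))$. Composing elementary steps along the discretization then yields the required global isomorphism. The main obstacle is well-definedness: the composite must depend only on the path, not on the discretization or small perturbations of it. I would handle this by first verifying two local invariances --- that elementary moves on disjoint balls commute as channels (their Petz maps act on disjoint factors) and that an extension followed immediately by its retraction is the identity on $\Sigma(\Omega)$ --- and then upgrading to global independence by a refinement argument: any two discretizations of the same continuous path admit a common refinement reachable from each by the local moves above, so both induce the same isomorphism.
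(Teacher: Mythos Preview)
The paper does not prove this theorem; it is stated with explicit attribution to Refs.~\cite{shi2020fusion,shi2021entanglement} and used as a black-box input. Your sketch is essentially the proof found in those references: discretize the deformation into elementary add/remove moves, realize each extension by the Petz recovery map using the Markov condition $I(A:b\mid B)=0$ supplied by axiom~\textbf{A1}, take partial trace as the inverse, and get distance preservation from two-way data processing and entropy-difference preservation from the additive Markov decomposition $S(ABb)=S(AB)+S(Bb)-S(B)$ with the $Bb$ marginal fixed by local indistinguishability.

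Two places where your sketch is thinner than the actual argument. First, you need $I(A:b\mid B)_{\rho}=0$ for \emph{every} $\rho\in\Sigma(\Omega^+)$, not just for $\sigma$; this holds because such $\rho$ agrees with $\sigma$ on $Bb$ by the definition of the information convex set, so the entropic combination bounding the conditional mutual information (the right-hand side of~\eqref{eq:cmi_upper_bound}) is the same as for $\sigma$. You allude to this but should make it explicit. Second, and more substantively, you need that the Petz-extended state actually lies in $\Sigma(\Omega^+)$ rather than merely being some density matrix on $\Omega^+$. The information convex set is defined via local indistinguishability on an \emph{enlargement}, so this is not automatic from the Markov recovery alone; it is the content of the merging theorem (stated here as Theorem~\ref{thm:merging_theorem}), which requires a buffer condition and an auxiliary Markov condition that you have not invoked. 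Your refinement argument for path-independence is fine at the level of a plan.
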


Since we will be dealing with a physical system with boundaries, let us make a remark on what it means for two subsystems to be topologically equivalent. Given a subsystem $A, B\subset \Lambda$, we say $A$ and $B$ are topologically equivalent if $A$ and $B$ \emph{as well as their restrictions to the physical boundary} can be smoothly deformed into each other. For instance, a ball in the bulk is topologically inequivalent to a ball anchored on a boundary, even though they are both balls. This is because their restriction on the boundary is inequivalent; the former is an empty set whereas the latter is nonempty.

In this paper, we shall primarily deal with two types of topologies. In what follows, we specify these and review the facts that are relevant to this paper. Throughout this paper, we shall use the following diagrammatic convention [Figure~\ref{fig:figure_conventions}]. The blue color will be reserved for regions that satisfy the bulk EB axioms [Figure~\ref{fig:eb_axioms_bulk}].   Solid lines represent the physical boundary and the dashed lines represent boundaries between subsystems. We shall often specify a localized region, e.g., a region enclosed in the red rectangle in Figure~\ref{fig:figure_conventions}(a), without specifying the global system. For those localized regions, the absence of lines on certain boundaries mean that the shown region is connected to a larger system through those boundaries [Figure~\ref{fig:figure_conventions}(b)].

\begin{figure}[htbp]
    \centering
    \begin{tikzpicture}
    \draw [fill=blue!30!white] plot [smooth cycle] coordinates {(0,0) (1,1) (3,1) (4.5,0) (5, -0.5) (2,-1)};
        \begin{scope}
            \clip plot [smooth cycle] coordinates {(0,0) (1,1) (3,1) (4.5,0) (5, -0.5) (2,-1)};
            \draw[dashed, fill=green!30!white, dash pattern=on 1pt off 1.25pt] (3,-0.9) circle (0.25cm);
            \draw[dashed, fill=blue!30!white, dash pattern=on 1pt off 1.25pt] (3,-0.9) circle (0.125cm);
        \end{scope}
        \draw[fill=white, thick] (2, 0) circle (0.35cm);
        \draw[thick] plot [smooth cycle] coordinates {(0,0) (1,1) (3,1) (4.5,0) (5, -0.5) (2,-1)};
        \draw[red] (3-0.35, -0.9-0.35) -- ++ (0.7, 0) -- ++ (0, 0.7) -- ++ (-0.7, 0) -- cycle;
    \node[] () at (2.5, -1.5) {(a)};
    \node[] () at (9, -1.5) {(b)};

    \begin{scope}[xshift=0cm, yshift=2.25cm, scale=3]
    \clip (3-0.35, -0.9-0.35) -- ++ (0.7, 0) -- ++ (0, 0.7) -- ++ (-0.7, 0) -- cycle;
    \draw [fill=blue!30!white] plot [smooth cycle] coordinates {(0,0) (1,1) (3,1) (4.5,0) (5, -0.5) (2,-1)};
        \begin{scope}
            \clip plot [smooth cycle] coordinates {(0,0) (1,1) (3,1) (4.5,0) (5, -0.5) (2,-1)};
            \draw[dashed, fill=green!30!white, very thick] (3,-0.9) circle (0.25cm);
            \draw[dashed, fill=blue!30!white, very thick] (3,-0.9) circle (0.125cm);
        \end{scope}
        \draw[very thick] plot [smooth cycle] coordinates {(0,0) (1,1) (3,1) (4.5,0) (5, -0.5) (2,-1)};
    \end{scope}

    \end{tikzpicture}
    \caption{Diagrammatic convention of this paper. The solid and the dashed lines are the physical boundary and the boundary between subsystems, respectively. The blue region is topologically ordered and the green region is a subsystem being specified. The subsystem enclosed in the red rectangle in (a) is shown in (b).}
    \label{fig:figure_conventions}
\end{figure}
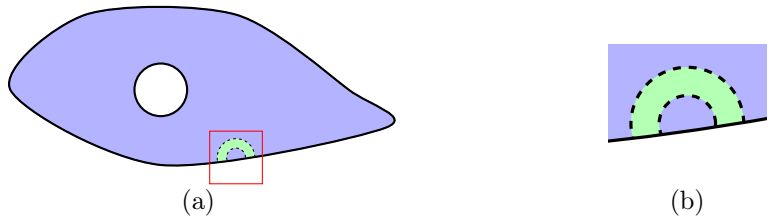

The first topology type is a half-annulus (denoted $N$, and shaped roughly like the letter $n$) [Figure~\ref{fig:IC_types_original}(a)]. Its information convex set is  a simplex~\cite{shi2021entanglement}
\begin{equation}
    \Sigma(N) = \left\{\bigoplus_{i} p_i \rho_i: p_i\geq 0, \sum_i p_i=1 \right\},
\end{equation}
where $\{\rho_i\}$ is a set of extreme points of the convex set $\Sigma(N)$ which are furthermore orthogonal to each other, e.g., $\rho_i \rho_j \propto \delta_{ij}$., and $\{ p_i\}$ is a probability distribution.
By the isomorphism theorem, there is a one-to-one map between the extreme points of \emph{any} two half-annulus. Up to this map, we can denote the extreme points in terms of a label set $\mathcal{C}=\{1, a, b, c, \ldots \}$. Here the sector $1$ corresponds to the reduced density matrix of the reference state on $N$. Physically, the label set $\mathcal{C}$ should be viewed as the label set of the boundary anyons.  (In the category-theoretic language, $\mathcal{C}$ will be the category describing the boundary anyon theory, and the labels $\{1,a,b,\ldots\}$ denote simple objects.)

The second topology we discuss is a $M$-shaped subsystem anchored at the boundary (denoted as $M$) [Figure~\ref{fig:IC_types_original}(b)]. (While the $M$-shaped regions are literally shaped like the letter $M$, the $N$-shaped regions are shaped more like the lowercase letter $n$.) This subsystem includes three half-annuli $N_1, N_2,$ and $N_3$  [Figure~\ref{fig:IC_types_original}(c)], and as such, the information convex set $\Sigma(M)$ can be further decomposed in terms of the extreme points of $\Sigma(N_1), \Sigma(N_2),$ and $\Sigma(N_3)$. Without loss of generality, let $a, b,$ and $c$ be the labels for these extreme points. Following Ref.~\cite{shi2021entanglement}, we shall denote the set of density matrices in $\Sigma(M)$ which are locally indistinguishable from the corresponding extreme points as $\Sigma_{ab}^c(M)$. It was shown in Ref.~\cite{shi2021entanglement} that $\Sigma_{ab}^c(M)$ is isomorphic to the convex hull of the state space of some finite-dimensional Hilbert space. We denote the underlying Hilbert space as $\mathbb{V}_{ab}^c$ and let $\dim(\mathbb{V}_{ab}^c) = N_{ab}^c$, known as the \emph{fusion multiplicity}. Physically, one should view $\mathbb{V}_{ab}^c$ as the fusion space of the boundary anyon $a$ and $b$ fusing into $c$ (or $\Hom{a \otimes b}{c}$ in the category-theoretic language).

\begin{figure}[htbp]
    \centering
    \begin{tikzpicture}[scale=0.25]
        \filldraw[blue!30!white] (-10, 0) -- (10, 0) -- ++ (0, 8) -- ++ (-20, 0) -- cycle;
        \filldraw[green!30!white] (-5, 0) arc (180:0:5) -- cycle;
        \filldraw[blue!30!white] (-2, 0) arc (180:0:2) -- cycle;
        \draw[very thick] (-10, 0) -- (10, 0);
        \draw[dashed, very thick] (-2, 0) arc (180:0:2);
        \draw[dashed, very thick] (-5, 0) arc (180:0:5);
        \node[] () at (0, 3.5) {$N$};
        \node[] () at (0, -1.5) {(a)};
        
        \begin{scope}[xshift=22.5cm]
        \filldraw[blue!30!white] (-10, 0) -- (10, 0) -- ++ (0, 8) -- ++ (-20, 0) -- cycle;
        \filldraw[green!30!white] (-6.5, 0) arc (180:0:6.5) -- cycle;
        \filldraw[blue!30!white] (-5, 0) arc (180:0:2) -- cycle;
        \filldraw[blue!30!white] (1, 0) arc (180:0:2) -- cycle;
        \draw[very thick] (-10, 0) -- (10, 0);
        \draw[dashed, very thick] (-5, 0) arc (180:0:2);
        \draw[dashed, very thick] (1, 0) arc (180:0:2);
        \draw[dashed, very thick] (-6.5, 0) arc (180:0:6.5);
        \node[] () at (0, 3.5) {$M$};
        \node[] () at (0, -1.5) {(b)};
        \end{scope}
        \begin{scope}[xshift=45cm]
        \filldraw[blue!30!white] (-10, 0) -- (10, 0) -- ++ (0, 8) -- ++ (-20, 0) -- cycle;
        \filldraw[orange] (-6.5, 0) arc (180:0:6.5) -- cycle;
        \filldraw[green!30!white] (-6, 0) arc (180:0:6) -- cycle;
        \filldraw[orange] (-5.5, 0) arc (180:0:2.5) -- cycle;
        \filldraw[blue!30!white] (-5, 0) arc (180:0:2) -- cycle;
        \filldraw[orange] (0.5, 0) arc (180:0:2.5) -- cycle;
        \filldraw[blue!30!white] (1, 0) arc (180:0:2) -- cycle;
        \draw[very thick] (-10, 0) -- (10, 0);
        \draw[dashed, very thick] (-5, 0) arc (180:0:2);
        \draw[dashed, very thick] (1, 0) arc (180:0:2);
        \draw[dashed, very thick] (-6.5, 0) arc (180:0:6.5);

        \node[inner sep = 0pt] (n1) at (-2, 4) {$N_1$};
        \node[inner sep = 0pt] (n2) at (2, 4) {$N_2$};
        \node[inner sep = 0pt] (n3) at (8, 4) {$N_3$};
        \draw[->] (n1) -- (-3, 2.5);
        \draw[->] (n2) -- (3, 2.5);
        \draw[->] (n3) --++ (-3, 0);
        
        \node[] () at (0, -1.5) {(c)};
        \end{scope}
    \end{tikzpicture}
    \caption{(a) Half-annulus $N$ (b) An $M$-shaped region, obtained by puncturing a boundary-anchored disk with two (smaller) boundary-anchored disks. (c) The $M$-shaped region contains three disjoint half-annuli $N_1, N_2,$ and $N_3$ (orange).}
    \label{fig:IC_types_original}
\end{figure}
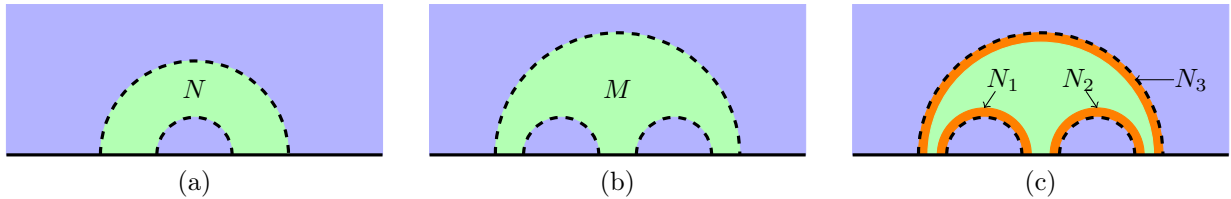

We remark that the information convex sets $\Sigma(N)$ and $\Sigma(M)$ are isomorphic to information convex sets $\Sigma(N')$ and $\Sigma(M')$ shown in Figure~\ref{fig:IC_types_this_paper}(a) and (b), respectively. For instance, the following argument establishes $\Sigma(N') \cong \Sigma(N)$. First, by the definition of the information convex set, it is insensitive to the change in the density matrix sufficiently far away from the given region. Therefore, even though Figure~\ref{fig:equivalence_ICs}(a) and (b) have different boundary conditions, the information convex set $\Sigma(N')$ in both diagrams are identical. Then we can use the isomorphism theorem [Theorem~\ref{thm:isomorphism_theorem}] to conclude that $\Sigma(N') \cong \Sigma(N)$ [Figure~\ref{fig:equivalence_ICs}(b-c)]. A similar argument can be used to prove $\Sigma(M') \cong \Sigma(M)$ [Figure~\ref{fig:equivalence_ICs}(d-f)].

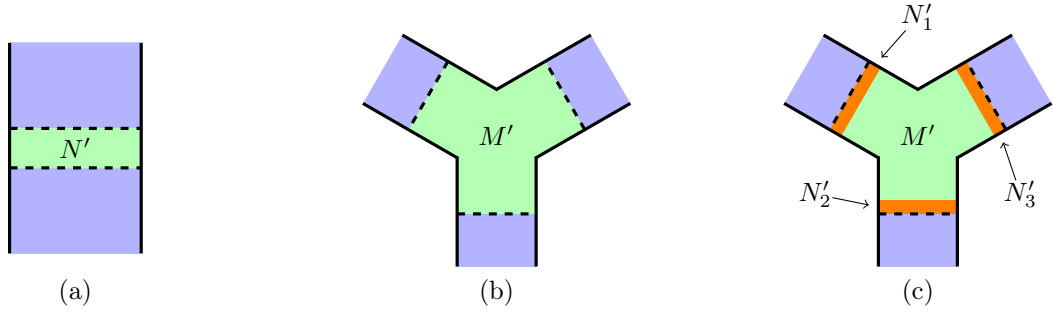
\begin{figure}[htbp]
    \centering
    \begin{tikzpicture}[scale=0.35]
        \filldraw[blue!30!white] (-2.5, 0) -- (2.5, 0) -- ++ (0, 8)  -- ++ (-5, 0)-- cycle;
        \filldraw[green!30!white] (-2.5, 3.25) -- ++ (5,0) -- ++ (0, 1.5) -- ++ (-5, 0)-- cycle;
        \draw[very thick] (-2.5, 0) -- ++ (0,8);
        \draw[very thick] (2.5, 0) -- ++ (0,8);

        \draw[dashed, very thick] (-2.5, 4-0.75) -- ++ (5,0);
        \draw[dashed, very thick] (-2.5, 5.5-0.75) -- ++ (5,0);

        \node[] () at (0, 4) {$N'$};
        
        \node[] () at (0, -1.5) {(a)};

        \begin{scope}[xshift=16cm, yshift=4.5cm]
            \filldraw[fill=blue!30!white, draw=none, rotate=60] (-1.5cm,0) --++ (3cm, 0)  -- ++ (0, 5cm) -- ++ (-3cm, 0) -- cycle;
            \filldraw[fill=blue!30!white, draw=none, rotate=-60] (-1.5cm,0) --++ (3cm, 0)  -- ++ (0, 5cm) -- ++ (-3cm, 0) -- cycle;
            \filldraw[fill=blue!30!white, draw=none, rotate=180] (-1.5cm,0) --++ (3cm, 0)  -- ++ (0, 5cm) -- ++ (-3cm, 0) -- cycle;

            \filldraw[fill=green!30!white, draw=none, rotate=60] (-1.5cm,0) --++ (3cm, 0)  -- ++ (0, 3cm) -- ++ (-3cm, 0) -- cycle;
            \filldraw[fill=green!30!white, draw=none, rotate=-60] (-1.5cm,0) --++ (3cm, 0)  -- ++ (0, 3cm) -- ++ (-3cm, 0) -- cycle;
            \filldraw[fill=green!30!white, draw=none, rotate=180] (-1.5cm,0) --++ (3cm, 0)  -- ++ (0, 3cm) -- ++ (-3cm, 0) -- cycle;

            \draw[rotate=60, very thick, dashed] (1.5cm, 3cm) -- ++ (-3cm, 0);
            \draw[rotate=-60, very thick, dashed] (1.5cm, 3cm) -- ++ (-3cm, 0);
            \draw[rotate=180, very thick, dashed] (1.5cm, 3cm) -- ++ (-3cm, 0);

            \draw[rotate=60, very thick] (-1.5cm, 0.825cm) -- (-1.5cm, 5cm);
            \draw[rotate=60, very thick] (1.5cm, 0.825cm) -- (1.5cm, 5cm);
            \draw[rotate=-60, very thick] (-1.5cm, 0.825cm) -- (-1.5cm, 5cm);
            \draw[rotate=-60, very thick] (1.5cm, 0.825cm) -- (1.5cm, 5cm);
            \draw[rotate=180, very thick] (-1.5cm, 0.825cm) -- (-1.5cm, 5cm);
            \draw[rotate=180, very thick] (1.5cm, 0.825cm) -- (1.5cm, 5cm);

            \node[] () at (0,0) {$M'$};

            \node[] () at (0, -6) {(b)};
        \end{scope}
        \begin{scope}[xshift=32cm, yshift=4.5cm]
            \filldraw[fill=blue!30!white, draw=none, rotate=60] (-1.5cm,0) --++ (3cm, 0)  -- ++ (0, 5cm) -- ++ (-3cm, 0) -- cycle;
            \filldraw[fill=blue!30!white, draw=none, rotate=-60] (-1.5cm,0) --++ (3cm, 0)  -- ++ (0, 5cm) -- ++ (-3cm, 0) -- cycle;
            \filldraw[fill=blue!30!white, draw=none, rotate=180] (-1.5cm,0) --++ (3cm, 0)  -- ++ (0, 5cm) -- ++ (-3cm, 0) -- cycle;

            \filldraw[fill=green!30!white, draw=none, rotate=60] (-1.5cm,0) --++ (3cm, 0)  -- ++ (0, 3cm) -- ++ (-3cm, 0) -- cycle;
            \filldraw[fill=green!30!white, draw=none, rotate=-60] (-1.5cm,0) --++ (3cm, 0)  -- ++ (0, 3cm) -- ++ (-3cm, 0) -- cycle;
            \filldraw[fill=green!30!white, draw=none, rotate=180] (-1.5cm,0) --++ (3cm, 0)  -- ++ (0, 3cm) -- ++ (-3cm, 0) -- cycle;

            \filldraw[fill=orange, draw=none, rotate=60] (-1.5cm,2.5cm) --++ (3cm, 0)  -- ++ (0, 0.5cm) -- ++ (-3cm, 0) -- cycle;
            \filldraw[fill=orange, draw=none, rotate=-60] (-1.5cm,2.5cm) --++ (3cm, 0)  -- ++ (0, 0.5cm) -- ++ (-3cm, 0) -- cycle;
            \filldraw[fill=orange, draw=none, rotate=180] (-1.5cm,2.5cm) --++ (3cm, 0)  -- ++ (0, 0.5cm) -- ++ (-3cm, 0) -- cycle;

            \draw[rotate=60, very thick, dashed] (1.5cm, 3cm) -- ++ (-3cm, 0);
            \draw[rotate=-60, very thick, dashed] (1.5cm, 3cm) -- ++ (-3cm, 0);
            \draw[rotate=180, very thick, dashed] (1.5cm, 3cm) -- ++ (-3cm, 0);

            \draw[rotate=60, very thick] (-1.5cm, 0.825cm) -- (-1.5cm, 5cm);
            \draw[rotate=60, very thick] (1.5cm, 0.825cm) -- (1.5cm, 5cm);
            \draw[rotate=-60, very thick] (-1.5cm, 0.825cm) -- (-1.5cm, 5cm);
            \draw[rotate=-60, very thick] (1.5cm, 0.825cm) -- (1.5cm, 5cm);
            \draw[rotate=180, very thick] (-1.5cm, 0.825cm) -- (-1.5cm, 5cm);
            \draw[rotate=180, very thick] (1.5cm, 0.825cm) -- (1.5cm, 5cm);

            \node[] () at (0,0) {$M'$};

            \node[inner sep=0pt] (n1p) at (90:4.5cm) {$N_1'$};
            \node[inner sep=0pt] (n2p) at (210:4.5cm) {$N_2'$};
            \node[inner sep=0pt] (n3p) at (330:4.5cm) {$N_3'$};

            \draw[->] (n1p) -- (115:3.25cm);
            \draw[->] (n2p) -- (235:3.25cm);
            \draw[->] (n3p) -- (355:3.25cm);

            \node[] () at (0, -6) {(c)};
        \end{scope}
    \end{tikzpicture}
    \caption{Frequently used subsystems in this paper. (a) The information convex set of $N'$ is isomorphic to $\Sigma(N)$ in Figure~\ref{fig:IC_types_original}(a). (b) The information convex set of $M'$ is isomorphic to $\Sigma(M)$ in Figure~\ref{fig:IC_types_original}(b). (c) The subsystem $M'$ includes $N_1', N_2',$ and $N_3'$.}
    \label{fig:IC_types_this_paper}
\end{figure}

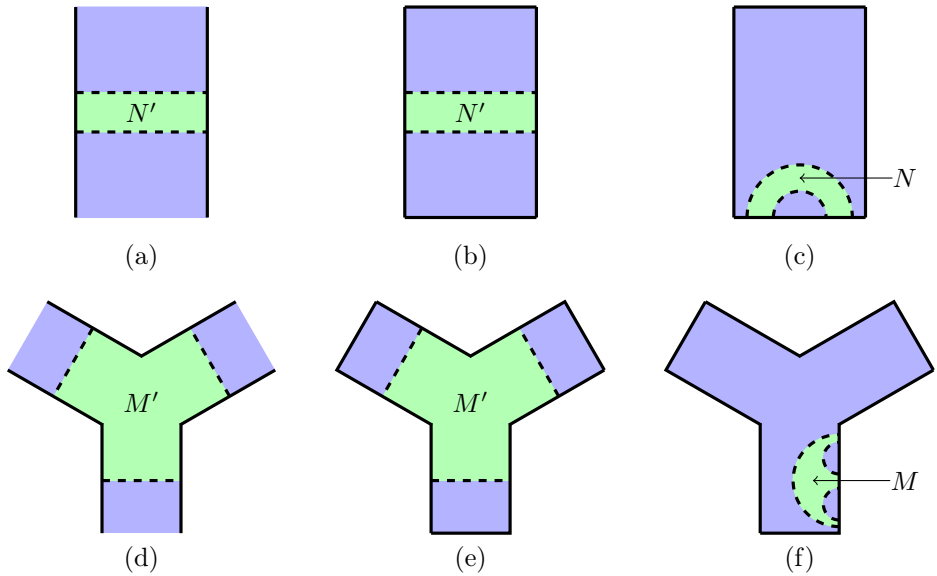
\begin{figure}[htbp]
    \centering
    \begin{tikzpicture}[scale=0.35]
        \filldraw[blue!30!white] (-2.5, 0) -- (2.5, 0) -- ++ (0, 8)  -- ++ (-5, 0)-- cycle;
        \filldraw[green!30!white] (-2.5, 3.25) -- ++ (5,0) -- ++ (0, 1.5) -- ++ (-5, 0)-- cycle;
        \draw[very thick] (-2.5, 0) -- ++ (0,8);
        \draw[very thick] (2.5, 0) -- ++ (0,8);

        \draw[dashed, very thick] (-2.5, 4-0.75) -- ++ (5,0);
        \draw[dashed, very thick] (-2.5, 5.5-0.75) -- ++ (5,0);

        \node[] () at (0, 4) {$N'$};
        
        \node[] () at (0, -1.5) {(a)};

        \begin{scope}[xshift=12.5cm]
        \filldraw[blue!30!white] (-2.5, 0) -- (2.5, 0) -- ++ (0, 8)  -- ++ (-5, 0)-- cycle;
        \filldraw[green!30!white] (-2.5, 3.25) -- ++ (5,0) -- ++ (0, 1.5) -- ++ (-5, 0)-- cycle;
        \draw[very thick] (-2.5, 0) -- ++ (0,8);
        \draw[very thick] (2.5, 0) -- ++ (0,8);
        \draw[very thick] (-2.5, 0)  -- ++ (5,0);
        \draw[very thick] (-2.5, 8)  -- ++ (5,0);

        \draw[dashed, very thick] (-2.5, 4-0.75) -- ++ (5,0);
        \draw[dashed, very thick] (-2.5, 5.5-0.75) -- ++ (5,0);

        \node[] () at (0, 4) {$N'$};
        
        \node[] () at (0, -1.5) {(b)};
            
        \end{scope}

        \begin{scope}[xshift=25cm]
        \filldraw[blue!30!white] (-2.5, 0) -- (2.5, 0) -- ++ (0, 8)  -- ++ (-5, 0)-- cycle;

        \filldraw[green!30!white] (-2, 0) arc (180:0:2cm);
        \filldraw[blue!30!white] (-1, 0) arc (180:0:1cm);
        \draw[very thick, dashed] (-2, 0) arc (180:0:2cm);
        \draw[very thick, dashed] (-1, 0) arc (180:0:1cm);

        \draw[very thick] (-2.5, 0) -- ++ (0,8);
        \draw[very thick] (2.5, 0) -- ++ (0,8);
        \draw[very thick] (-2.5, 0)  -- ++ (5,0);
        \draw[very thick] (-2.5, 8)  -- ++ (5,0);

        \node[inner sep=0pt] (N) at (4cm, 1.5cm) {$N$};
        \draw[->] (N) -- (0, 1.5cm);
        \node[] () at (0, -1.5) {(c)};
        \end{scope}

        \begin{scope}[yshift=-7cm]
            \filldraw[fill=blue!30!white, draw=none, rotate=60] (-1.5cm,0) --++ (3cm, 0)  -- ++ (0, 5cm) -- ++ (-3cm, 0) -- cycle;
            \filldraw[fill=blue!30!white, draw=none, rotate=-60] (-1.5cm,0) --++ (3cm, 0)  -- ++ (0, 5cm) -- ++ (-3cm, 0) -- cycle;
            \filldraw[fill=blue!30!white, draw=none, rotate=180] (-1.5cm,0) --++ (3cm, 0)  -- ++ (0, 5cm) -- ++ (-3cm, 0) -- cycle;

            \filldraw[fill=green!30!white, draw=none, rotate=60] (-1.5cm,0) --++ (3cm, 0)  -- ++ (0, 3cm) -- ++ (-3cm, 0) -- cycle;
            \filldraw[fill=green!30!white, draw=none, rotate=-60] (-1.5cm,0) --++ (3cm, 0)  -- ++ (0, 3cm) -- ++ (-3cm, 0) -- cycle;
            \filldraw[fill=green!30!white, draw=none, rotate=180] (-1.5cm,0) --++ (3cm, 0)  -- ++ (0, 3cm) -- ++ (-3cm, 0) -- cycle;

            \draw[rotate=60, very thick, dashed] (1.5cm, 3cm) -- ++ (-3cm, 0);
            \draw[rotate=-60, very thick, dashed] (1.5cm, 3cm) -- ++ (-3cm, 0);
            \draw[rotate=180, very thick, dashed] (1.5cm, 3cm) -- ++ (-3cm, 0);

            \draw[rotate=60, very thick] (-1.5cm, 0.825cm) -- (-1.5cm, 5cm);
            \draw[rotate=60, very thick] (1.5cm, 0.825cm) -- (1.5cm, 5cm);
            \draw[rotate=-60, very thick] (-1.5cm, 0.825cm) -- (-1.5cm, 5cm);
            \draw[rotate=-60, very thick] (1.5cm, 0.825cm) -- (1.5cm, 5cm);
            \draw[rotate=180, very thick] (-1.5cm, 0.825cm) -- (-1.5cm, 5cm);
            \draw[rotate=180, very thick] (1.5cm, 0.825cm) -- (1.5cm, 5cm);

            \node[] () at (0,0) {$M'$};

            \node[] () at (0, -6) {(d)};
        \end{scope}
        \begin{scope}[yshift=-7cm, xshift=12.5cm]
            \filldraw[fill=blue!30!white, draw=none, rotate=60] (-1.5cm,0) --++ (3cm, 0)  -- ++ (0, 5cm) -- ++ (-3cm, 0) -- cycle;
            \filldraw[fill=blue!30!white, draw=none, rotate=-60] (-1.5cm,0) --++ (3cm, 0)  -- ++ (0, 5cm) -- ++ (-3cm, 0) -- cycle;
            \filldraw[fill=blue!30!white, draw=none, rotate=180] (-1.5cm,0) --++ (3cm, 0)  -- ++ (0, 5cm) -- ++ (-3cm, 0) -- cycle;

            \filldraw[fill=green!30!white, draw=none, rotate=60] (-1.5cm,0) --++ (3cm, 0)  -- ++ (0, 3cm) -- ++ (-3cm, 0) -- cycle;
            \filldraw[fill=green!30!white, draw=none, rotate=-60] (-1.5cm,0) --++ (3cm, 0)  -- ++ (0, 3cm) -- ++ (-3cm, 0) -- cycle;
            \filldraw[fill=green!30!white, draw=none, rotate=180] (-1.5cm,0) --++ (3cm, 0)  -- ++ (0, 3cm) -- ++ (-3cm, 0) -- cycle;

            \draw[rotate=60, very thick, dashed] (1.5cm, 3cm) -- ++ (-3cm, 0);
            \draw[rotate=-60, very thick, dashed] (1.5cm, 3cm) -- ++ (-3cm, 0);
            \draw[rotate=180, very thick, dashed] (1.5cm, 3cm) -- ++ (-3cm, 0);

            \draw[rotate=60, very thick] (-1.5cm, 0.825cm) -- (-1.5cm, 5cm) -- ++ (3cm, 0);
            \draw[rotate=60, very thick] (1.5cm, 0.825cm) -- (1.5cm, 5cm);
            \draw[rotate=-60, very thick] (-1.5cm, 0.825cm) -- (-1.5cm, 5cm)-- ++ (3cm, 0);
            \draw[rotate=-60, very thick] (1.5cm, 0.825cm) -- (1.5cm, 5cm);
            \draw[rotate=180, very thick] (-1.5cm, 0.825cm) -- (-1.5cm, 5cm)-- ++ (3cm, 0);
            \draw[rotate=180, very thick] (1.5cm, 0.825cm) -- (1.5cm, 5cm);

            \node[] () at (0,0) {$M'$};

            \node[] () at (0, -6) {(e)};
        \end{scope}
        \begin{scope}[yshift=-7cm, xshift=25cm]
            \filldraw[fill=blue!30!white, draw=none, rotate=60] (-1.5cm,0) --++ (3cm, 0)  -- ++ (0, 5cm) -- ++ (-3cm, 0) -- cycle;
            \filldraw[fill=blue!30!white, draw=none, rotate=-60] (-1.5cm,0) --++ (3cm, 0)  -- ++ (0, 5cm) -- ++ (-3cm, 0) -- cycle;
            \filldraw[fill=blue!30!white, draw=none, rotate=180] (-1.5cm,0) --++ (3cm, 0)  -- ++ (0, 5cm) -- ++ (-3cm, 0) -- cycle;

            \filldraw[rotate=180, green!30!white] (-1.5cm, 4.75cm) arc (90:-90:1.75cm);
            \filldraw[rotate=180, blue!30!white] (-1.5cm, 4.5cm) arc (90:-90:0.6cm);
            \filldraw[rotate=180, blue!30!white] (-1.5cm, 2.75cm) arc (90:-90:0.6cm);

            \draw[rotate=180, very thick, dashed] (-1.5cm, 4.75cm) arc (90:-90:1.75cm);
            \draw[rotate=180, very thick, dashed] (-1.5cm, 4.5cm) arc (90:-90:0.6cm);
            \draw[rotate=180, very thick, dashed] (-1.5cm, 2.75cm) arc (90:-90:0.6cm);

            \draw[rotate=60, very thick] (-1.5cm, 0.825cm) -- (-1.5cm, 5cm) -- ++ (3cm, 0);
            \draw[rotate=60, very thick] (1.5cm, 0.825cm) -- (1.5cm, 5cm);
            \draw[rotate=-60, very thick] (-1.5cm, 0.825cm) -- (-1.5cm, 5cm)-- ++ (3cm, 0);
            \draw[rotate=-60, very thick] (1.5cm, 0.825cm) -- (1.5cm, 5cm);
            \draw[rotate=180, very thick] (-1.5cm, 0.825cm) -- (-1.5cm, 5cm)-- ++ (3cm, 0);
            \draw[rotate=180, very thick] (1.5cm, 0.825cm) -- (1.5cm, 5cm);

            \node[inner sep=0pt] (M) at (4, -3) {$M$};
            \draw[->] (M) -- (0.5, -3);

            \node[] () at (0, -6) {(f)};
        \end{scope}
    \end{tikzpicture}
    \caption{To establish $\Sigma(N')\cong \Sigma(N)$ and $\Sigma(M')\cong \Sigma(M)$, one can change the boundary conditions as in (b) and (e), and then use the isomorphism theorem [Theorem~\ref{thm:isomorphism_theorem}]. Note that $N'$ and $N$ can be smoothly deformed into each other in (b-c). Similarly, $M'$ and $M$ can be smoothly deformed into each other in (e-f).}
    \label{fig:equivalence_ICs}
\end{figure}

By the isomorphism, we can label the extreme points of $\Sigma(N')$ in terms of the label set of the boundary anyons $\mathcal{C}$. Moreover, we can decompose $\Sigma(M')$ further into $\Sigma_{ab}^c(M')$, associated with the extreme points corresponding to $a, b,$ and $c$ in $\Sigma(N_1')$, $\Sigma(N_2')$, and $\Sigma(N_3')$, respectively. The Hilbert space underlying $\Sigma_{ab}^c(M')$ is then $\mathbb{V}_{ab}^c$, whose dimension is $N_{ab}^c$.

Finally, we will need to know the information convex for one more topology: a disk with boundary.  In this case, the information convex is trivial.

\subsection{Extreme points}
\label{subsec:extreme_points}

As briefly reviewed in Section~\ref{subsec:information_convex_set}, the physical meaning of the extreme points of the information convex set depend on the topology of the underlying subsystem. For the $N$-type subsystems, each extreme point corresponds to a boundary anyon sector, which is an element of the set $\mathcal{C}$. For the $M$-type subsystems, upon fixing the label $a,b,c\in \mathcal{C}$, the extreme point of $\Sigma_{ab}^c(M)$ corresponds to a state in the Hilbert space $\mathbb{V}_{ab}^c$. Later we will associate each extreme point to a \emph{multiplicity label} [Section~\ref{sec:sn-review}].

All these extreme points enjoy a \emph{factorization} property~\cite[Section III.D]{shi2021entanglement}, which will play an important role in our analysis. As an example, consider a half-annulus $N$ and let $a\in \mathcal{C}$ be an anyon label corresponding to one of the extreme points of $\Sigma(N)$. The factorization property in this setup says that the ``inner'' part of the half-annulus is decoupled from any other non-adjacent (separated) subsystem. More precisely,
\begin{equation}
\left( S(D) + S(CD) - S(C) \right)_{\rho_a} =0, \label{eq:factorization_extreme_point}
\end{equation}
where $D$ is a half-annulus and the $C$ is the union of two half-annuli surrounding $D$  [Figure~\ref{fig:factorization}]. This implies that $D$ is decoupled from the purification of $CD$. A simple consequence of Eq.~\eqref{eq:factorization_extreme_point} is that $I(B:D|C)_{\rho_a}=0$, where $D$ is the complement of $BC$. This is due to the Eq.~\eqref{eq:cmi_upper_bound}. A similar conclusion applies to the extreme points of $\Sigma_{ab}^c(M)$, as we discuss in more detail in Section~\ref{subsec:basic_operations}. For a more general discussion on the factorization property, see~\cite[Section III.D]{shi2021entanglement}.

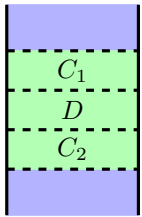
\begin{figure}[t]
    \centering
    \begin{tikzpicture}[scale=0.35]
        \begin{scope}[xshift=-20cm]
        \filldraw[blue!30!white] (-2.5, 0) -- (2.5, 0) -- ++ (0, 8)  -- ++ (-5, 0)-- cycle;
        \filldraw[green!30!white] (-2.5, 1.75) -- ++ (5,0) -- ++ (0, 4.5) -- ++ (-5, 0)-- cycle;
        \draw[very thick] (-2.5, 0) -- ++ (0,8);
        \draw[very thick] (2.5, 0) -- ++ (0,8);

        \draw[dashed, very thick] (-2.5, 2.5-0.75) -- ++ (5,0);
        \draw[dashed, very thick] (-2.5, 4-0.75) -- ++ (5,0);
        \draw[dashed, very thick] (-2.5, 5.5-0.75) -- ++ (5,0);
        \draw[dashed, very thick] (-2.5, 7-0.75) -- ++ (5,0);

        \node[] () at (0, 5.5) {$C_1$};
        \node[] () at (0, 4) {$D$};
        \node[] () at (0, 2.5) {$C_2$};
        
        \end{scope}
    \end{tikzpicture}
    \caption{ Half-annulus (green) anchored at the boundary is partitioned into $C_1, C_2,$ and $D$. Every extreme point $\rho_a$ satisfies $\left( S(D) + S(CD) - S(C) \right)_{\rho_a}=0$, where $C= C_1C_2$.}
    \label{fig:factorization}
\end{figure}

\subsection{Merging}
\label{sec:merging}

A useful technique frequently used in entanglement bootstrap is \emph{merging}. This entails combining two density matrices with overlapping supports to a density matrix supported on the union of their supports. The resulting global density matrix is the maximum-entropy state consistent with the two density matrices, meaning that its marginals (on different subsystems) are equal to the respective density matrices we started with. While  such a global density matrix may not exist in general,\footnote{As a simple example, consider bipartite density matrices over $AB$ and $BC$, both of which are EPR pairs. By the monogamy of entanglement, there does not exist a quantum state on $ABC$ consistent with these density matrices.} the axioms of the entanglement bootstrap [Section~\ref{sec:eb_review}] can guarantee this. 

To that end, let us first introduce the \emph{merging lemma}~\cite{Kato2016}.
\begin{lemma}~\cite{Kato2016}
    Let $\rho_{ABC}$ and $\sigma_{BCD}$ be density matrices such that $\rho_{BC} = \sigma_{BC}$ and $I(A:C|B)_{\rho} = I(B:D|C)_{\sigma}=0$. Then there exists a density matrix $\lambda_{ABCD}$ such that
    \begin{enumerate}
        \item $\lambda_{ABC} = \rho_{ABC}, \lambda_{BCD} = \sigma_{BCD}$ and 
        \item $I(A:CD|B)_{\lambda} = I(AB:D|C)_{\lambda}=0$.
    \end{enumerate}
    \label{lemma:merging_lemma}
\end{lemma}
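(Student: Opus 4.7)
The plan is to construct $\lambda_{ABCD}$ explicitly via the Petz recovery map and then verify the four required properties one at a time. Concretely, I would set
\[
\lambda_{ABCD} := \rho_{AB}^{1/2}\,\rho_B^{-1/2}\,\sigma_{BCD}\,\rho_B^{-1/2}\,\rho_{AB}^{1/2},
\]
that is, $\lambda = \mathcal{R}^{\rho}_{B\to AB}(\sigma_{BCD})$, where $\mathcal{R}^{\rho}_{B\to AB}$ is the Petz recovery channel associated to $\rho_{AB}$, acting only on the $B$ tensor factor (with all operators embedded trivially on the other factors). Positivity $\lambda \geq 0$ is immediate, as $\lambda$ is a conjugation of $\sigma_{BCD}\geq 0$.

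The easy marginal check is $\lambda_{ABC} = \rho_{ABC}$: since $\mathcal{R}^{\rho}_{B\to AB}$ acts only on $B$, it commutes with $\Tr_D$, so $\lambda_{ABC} = \mathcal{R}^{\rho}_{B\to AB}(\sigma_{BC}) = \mathcal{R}^{\rho}_{B\to AB}(\rho_{BC}) = \rho_{ABC}$. The last equality is the Hayden--Jozsa--Petz--Winter characterization of vanishing conditional mutual information: $I(A{:}C|B)_\rho = 0$ iff the Petz map recovers $\rho_{ABC}$ from $\rho_{BC}$.

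The main obstacle is the other marginal, $\lambda_{BCD} = \sigma_{BCD}$. For this I would invoke the structure theorem for the second Markov chain, $I(B{:}D|C)_\sigma = 0$, which yields a decomposition $\mathcal{H}_C = \bigoplus_j \mathcal{H}_{C_j^L}\otimes\mathcal{H}_{C_j^R}$ with $\sigma_{BCD} = \bigoplus_j q_j\,\sigma^{(j)}_{BC_j^L}\otimes\sigma^{(j)}_{C_j^R D}$. The CPTP map $\mathcal{M}_B := \Tr_A\circ\mathcal{R}^{\rho}_{B\to AB}$ acts only on $B$, hence preserves the $C$-block decomposition, giving $\lambda_{BCD} = \bigoplus_j q_j\,\mathcal{M}_B(\sigma^{(j)}_{BC_j^L})\otimes\sigma^{(j)}_{C_j^R D}$. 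Tracing out $D$ yields $\lambda_{BC}=\sigma_{BC}=\rho_{BC}$ from the previous paragraph; matching this with the analogous decomposition $\sigma_{BC} = \bigoplus_j q_j\,\sigma^{(j)}_{BC_j^L}\otimes\sigma^{(j)}_{C_j^R}$ and using that the $\sigma^{(j)}_{C_j^R}$ have mutually orthogonal supports in $C$, I can equate the decompositions term by term to conclude $\mathcal{M}_B(\sigma^{(j)}_{BC_j^L}) = \sigma^{(j)}_{BC_j^L}$ for every $j$, hence $\lambda_{BCD} = \sigma_{BCD}$. This is the step that makes essential use of both Markov hypotheses and the marginal consistency simultaneously.

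For the two conditional mutual informations, I would combine the two structure theorems. The first gives $\mathcal{H}_B = \bigoplus_i \mathcal{H}_{B_i^L}\otimes\mathcal{H}_{B_i^R}$ with $\rho_{ABC} = \bigoplus_i p_i\,\rho^{(i)}_{AB_i^L}\otimes\rho^{(i)}_{B_i^R C}$. Using $\sigma_{BC} = \rho_{BC}$ (block-diagonal in the projectors $P_i$) together with the $C$-decomposition of $\sigma_{BCD}$, a short argument shows $\sigma_{BCD}$ is itself block-diagonal in the $P_i$ and moreover factors, within each $B$-block, as $\rho^{(i)}_{B_i^L}\otimes\tilde\sigma^{(i)}_{B_i^R CD}$. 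Because $\mathcal{R}^{\rho}_{B\to AB}$ decomposes block by block into Petz recoveries $\mathcal{R}^{(i)}$ acting on $B_i^L$, with $\mathcal{R}^{(i)}(\rho^{(i)}_{B_i^L}) = \rho^{(i)}_{AB_i^L}$, I obtain
\[
\lambda_{ABCD} = \bigoplus_i p_i\,\rho^{(i)}_{AB_i^L}\otimes\tilde\sigma^{(i)}_{B_i^R CD},
\]
which is exactly the structural form encoding $I(A{:}CD|B)_\lambda = 0$. Dually, the expression $\lambda_{ABCD} = \bigoplus_j q_j\,\mathcal{R}^{\rho}_{B\to AB}(\sigma^{(j)}_{BC_j^L})\otimes\sigma^{(j)}_{C_j^R D}$ leaves the $C$-Markov structure of $\sigma$ untouched and yields $I(AB{:}D|C)_\lambda = 0$, completing the proof.
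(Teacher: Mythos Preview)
The paper does not prove this lemma; it is stated with a citation to Ref.~\cite{Kato2016} and used as a black box. Your construction via the Petz recovery map $\lambda_{ABCD}=\mathcal{R}^{\rho}_{B\to AB}(\sigma_{BCD})$ is correct and is in fact the standard argument (essentially the one in the cited reference): the Hayden--Jozsa--Petz--Winter characterization gives $\lambda_{ABC}=\rho_{ABC}$ directly, and the two structure-theorem decompositions of $\mathcal{H}_B$ and $\mathcal{H}_C$ are exactly what one needs to verify $\lambda_{BCD}=\sigma_{BCD}$ and the two output Markov conditions. One small remark: in step~3 you assert that $\sigma_{BCD}$ is block-diagonal in the $P_i$ and factors as $\rho^{(i)}_{B_i^L}\otimes\tilde\sigma^{(i)}_{B_i^R CD}$; the cleanest way to justify this (which you allude to but do not spell out) is to write $\sigma_{BCD}=\mathcal{R}^{\sigma}_{C\to CD}(\sigma_{BC})=\mathcal{R}^{\sigma}_{C\to CD}(\rho_{BC})$ and observe that the $C$-recovery map acts trivially on the $B$-block structure of $\rho_{BC}$.
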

\noindent 
We will later use the merging lemma to combine density matrices over some regions to build up a density matrix over a larger region. Crucially, the existence of the density matrix on the larger region need not be assumed; it follows simply from the properties of the density matrices over the smaller regions.

Here are some canonical examples of the applications of the merging lemma~\cite{shi2020fusion}. In Figure~\ref{fig:merging_lemma_examples}(a), from the axiom \textbf{A1}, one can show that $I(A:C|B)_{\sigma} = I(B:D|C)_{\sigma}=0$. Thus by merging the two states, we obtain state on $ABCD$ which is again a quantum Markov chain [Lemma~\ref{lemma:merging_lemma}]. This state is consistent with $\sigma_{ABCD}$ on both $ABC$ and $BCD$. Since $\sigma_{ABCD}$ is a Markov chain satisfying $I(A:CD|B)_{\sigma}=0$, by the uniqueness of the Markov chain, we can conlcude that the merged state is in fact $\sigma_{ABCD}$. 

It is important to note that the merged state is not always a reduced density matrix of the reference state. For instance, consider the subsystems shown in Figure~\ref{fig:merging_lemma_examples}(b). Using the axiom $\textbf{A0}$ and $\textbf{A1}$, it is again possible to show that $I(A:C|B)_{\sigma} = I(B:D|C)_{\sigma}=0$. Thus we can again merge these two states. However, the resulting state (denoted as $\tau$ below) is not necessarily $\sigma_{ABCD}$. Rather, it is the maximum-entropy state of the following form:
\begin{equation}
    \tau_{ABCD} = \sum_a \frac{d_a^2}{\mathcal{D}^2} \rho_a,
\end{equation}
where $d_a$ is the \emph{quantum dimension} of the sector $a$ and $\mathcal{D} = \sqrt{\sum_a d_a^2}$ is the total quantum dimension~\cite{shi2020fusion}. This is different from the reference state, which would be simply $\sigma = \rho_1$.

\begin{figure}[htbp]
    \centering
    \begin{tikzpicture}
    \filldraw[blue!30!white] (-1cm, -1cm) -- ++ (6cm, 0) -- ++ (0, 3cm) -- ++ (-6cm, 0) -- cycle;
        \draw[very thick, dashed] (0, 0) -- ++ (1,0) -- ++ (0, 1) -- ++ (-1, 0) -- cycle;
        \draw[very thick, dashed] (1, 0) -- ++ (1,0) -- ++ (0, 1) -- ++ (-1, 0) -- cycle;
        \draw[very thick, dashed] (2, 0) -- ++ (1,0) -- ++ (0, 1) -- ++ (-1, 0) -- cycle;
        \draw[very thick, dashed] (3, 0) -- ++ (1,0) -- ++ (0, 1) -- ++ (-1, 0) -- cycle;
        \node[] () at (0.5, 0.5) {$A$};
        \node[] () at (1.5, 0.5) {$B$};
        \node[] () at (2.5, 0.5) {$C$};
        \node[] () at (3.5, 0.5) {$D$};

        \node[] () at (2cm, -1.5cm) {(a)};

        \begin{scope}[xshift=9cm, yshift=0.5cm, scale= 0.6]
            \filldraw[blue!30!white] (-4cm, -2.5cm) -- ++ (8cm, 0) -- ++ (0, 5cm) -- ++ (-8cm, 0) -- cycle;
            \draw[very thick, dashed] (0,0) circle (2cm); 
            \draw[very thick, dashed] (0,0) circle (1cm);
            \draw[very thick, dashed] (90:1cm) -- (90:2cm);
            \draw[very thick, dashed] (270:1cm) -- (270:2cm);
            \draw[very thick, dashed] (120:1cm) -- (120:2cm);
            \draw[very thick, dashed] (240:1cm) -- (240:2cm);
            \draw[very thick, dashed] (60:1cm) -- (60:2cm);
            \draw[very thick, dashed] (300:1cm) -- (300:2cm);
            \node[] (A) at (180:1.5cm) {$A$};
            \node[] (B1) at (105:1.5cm) {$B$};
            \node[] (B2) at (255:1.5cm) {$B$};
            \node[] (C1) at (75:1.5cm) {$C$};
            \node[] (C2) at (285:1.5cm) {$C$};
            \node[] (D) at (0:1.5cm) {$D$};
            
            \node[] () at (0, -3.5cm) {(b)};
        \end{scope}
    \end{tikzpicture}
    \caption{Examples being merging. (a) Merging $\sigma_{ABC}$ and $\sigma_{BCD}$, we obtain $\sigma_{ABCD}$. (b) Merging $\sigma_{ABC}$ and $\sigma_{BCD}$, we obtain a maximum-entropy state consistent with the marginals, which is generally different from $\sigma_{ABCD}$.}
    \label{fig:merging_lemma_examples}
\end{figure}
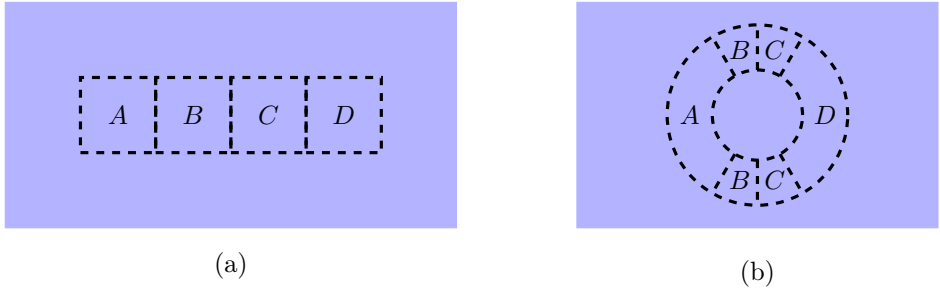

We remark that one can in fact make a stronger statement. Namely, if the density matrices $\rho$ and $\sigma$ are elements of the information convex set (say $\Sigma(A)$ and $\Sigma(B)$), the merged state $\lambda$ is an element of the information convex set $\Sigma(A\cup B)$~\cite{shi2020fusion,shi2021entanglement}. This is known as the merging theorem~\cite{shi2020fusion,shi2021entanglement}.

\begin{theorem}
\label{thm:merging_theorem}
    Consider two density matrices $\rho_{ABC}\in \Sigma(ABC)$ and $\lambda_{BCD} \in \Sigma(BCD)$. Consider the following three conditions.
    \begin{enumerate}
        \item $\rho_{BC} = \lambda_{BC}$ and $I(A:C|B)_{\rho} = I(B:D|C)_{\lambda}=0$.
        \item There exists a partition $B'C' = BC$ such that no disk of radius $r$ overlaps with both $AB'$ and $CD.$ (Here $r$ is the radius of the disk on which the axioms are imposed.)
        \item $I(A:C'|B')_{\rho} = I(B':D|C')_{\lambda}=0$.
    \end{enumerate}
\end{theorem}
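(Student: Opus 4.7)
My plan is to first handle existence via the merging lemma (Lemma \ref{lemma:merging_lemma}) and then promote the merged state to an element of $\Sigma(ABCD)$ using the geometric condition about small disks. The conclusion I expect to prove is that under any of the listed conditions there exists a unique density matrix $\tau_{ABCD} \in \Sigma(ABCD)$ with $\tau_{ABC} = \rho_{ABC}$ and $\tau_{BCD} = \lambda_{BCD}$, and that this $\tau$ additionally satisfies $I(A:CD|B)_\tau = I(AB:D|C)_\tau = 0$.

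First I would establish the chain of implications between conditions 1, 2, and 3. Condition 2 together with the fact that $\rho_{ABC}, \lambda_{BCD} \in \Sigma$ should imply condition 3: by the definition of the information convex set, the reduced density matrices on any disk of radius $r$ inside $AB'$ (resp.\ $C'D$) match the reference state, so axioms \textbf{A0}/\textbf{A1} applied to a disk straddling $B'$ and $C'$ (which then lies entirely inside $AB'$ or $BCD$) force the relevant CMIs to vanish, via the SSA upper bound in Eq.~\eqref{eq:cmi_upper_bound}. Condition 3 clearly implies condition 1 once we rename $B \to B'$, $C \to C'$ in the merging lemma hypothesis (and note $\rho_{B'C'} = \lambda_{B'C'}$ since both equal the restriction of the common marginal on $BC$). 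So it suffices to work under condition 1.

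Next, applying Lemma \ref{lemma:merging_lemma} directly produces a density matrix $\tau_{ABCD}$ with $\tau_{ABC} = \rho_{ABC}$, $\tau_{BCD} = \lambda_{BCD}$, and $I(A:CD|B)_\tau = I(AB:D|C)_\tau = 0$. Uniqueness of $\tau$ follows from the uniqueness of quantum Markov chain recovery: given the Markov condition, $\tau_{ABCD}$ is determined by $\tau_{ABC}$ together with the Petz recovery map from $B$ to $BCD$ reconstructed from $\tau_{BCD} = \lambda_{BCD}$.

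The main obstacle is the remaining step: showing $\tau_{ABCD} \in \Sigma(ABCD)$, i.e., that every small disk inside (a small enlargement of) $ABCD$ sees the reference state. For a small disk $d$ lying entirely within $ABC$, we have $\tau_d = \rho_d$, which is reference-like because $\rho_{ABC} \in \Sigma(ABC)$; symmetrically for disks within $BCD$. The delicate case is disks that straddle the region where $A$ meets $D$ through the interface $B/C$. Here condition 2 is crucial: by choosing the partition $B'C' = BC$ such that no $r$-disk overlaps both $AB'$ and $C'D$, every small disk is contained either in $AB'C'$ or in $B'C'D$, and hence in $ABC$ or $BCD$. So locally the merged state looks like one of the inputs, both of which are reference-like, giving $\tau \in \Sigma(ABCD)$. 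The only subtlety to check carefully is that the merging lemma's output $\tau$ indeed restricts correctly on slight enlargements needed in the definition of the information convex set — this will follow by iterating the merging argument with an enlarged collar and invoking the isomorphism theorem (Theorem \ref{thm:isomorphism_theorem}) to move the boundary slightly outward without changing the information convex set.
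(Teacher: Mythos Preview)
The paper does not actually prove this theorem; it is stated (and in fact stated incompletely --- the three conditions are listed but no conclusion is written) and attributed to Refs.~\cite{shi2020fusion,shi2021entanglement}. Your inferred conclusion --- existence of a unique $\tau_{ABCD} \in \Sigma(ABCD)$ agreeing with $\rho$ on $ABC$ and $\lambda$ on $BCD$, with $I(A:CD|B)_\tau = I(AB:D|C)_\tau = 0$ --- is the standard one from those references, and your outline matches the structure of the argument there: apply Lemma~\ref{lemma:merging_lemma} for existence and the Markov conditions, then use the geometric separation (condition 2) to verify that every $r$-disk in (an enlargement of) $ABCD$ lies entirely inside either $ABC$ or $BCD$ and hence sees the reference state, giving membership in $\Sigma(ABCD)$.

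Two small points of care. First, your phrase ``condition 3 clearly implies condition 1'' is not literally correct, since the two conditions refer to different partitions of $BC$; what you actually use, and what is correct, is that condition 3 supplies the hypotheses of Lemma~\ref{lemma:merging_lemma} for the quadruple $(A,B',C',D)$ rather than $(A,B,C,D)$. Second, your sketch of condition 2 $\Rightarrow$ condition 3 has the right endpoint but skips the mechanism: one does not directly apply the axioms to a single straddling disk to conclude $I(A:C'|B')_\rho = 0$. Rather, the separation in condition 2 lets you find a thin collar inside $B'$ along the $B'$--$C'$ interface on which $\rho$ (being in $\Sigma(ABC)$) agrees with the reference; axiom \textbf{A1} on that collar, fed into the SSA bound Eq.~\eqref{eq:cmi_upper_bound}, then gives the vanishing CMI. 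The enlargement subtlety you flag at the end is likewise handled in the cited references by working with a slightly thickened region from the start.
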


\section{Unitary fusion category for boundary anyons}
\label{sec:boundary_anyons}

As reviewed in Section~\ref{subsec:information_convex_set}, the boundary anyon labels and their fusion spaces can be defined in terms of the information convex sets of half-annuli and $M$-shaped regions [Figure~\ref{fig:IC_types_original}]. We remark that the fusion multiplicities $N_{ab}^c$ satisfies the following set of identities~\cite[Section VI-A]{shi2021entanglement}.
\begin{equation}
    \begin{aligned}
        N_{1a}^b &= N_{a1}^b = \delta_{a,b},\\
        \forall a\in \mathcal{C}, \exists \overline{a}\in\mathcal{C} \text{ s.t. } N_{ab}^1&= \delta_{a,\bar{b}} =\delta_{\overline{a}, b},\\
        N_{ab}^{c} &= N_{\bar{b} \overline{a}}^{\bar{c}}, \\
        \sum_{e\in \mathcal{C}} N_{ab}^e N_{ec}^d &= \sum_{f} N_{af}^dN_{bc}^f.
    \end{aligned}
    \label{eq:ufc_almost}
\end{equation}
The first line means there exists a vacuum sector, and the second line means that there exists an antiparticle for every anyon. The last identity tells us that the composition of fusion multiplicities is associative.

 Eq.~\eqref{eq:ufc_almost} provides part of the data needed to define a unitary fusion category (UFC), which is the mathematical formalism that describes boundary anyons~\cite{kong2014some}. (See Appenix~\ref{sec:tensor_category} for a terse summary of tensor categories.) However, there is still missing data, given by the $F$-symbol. We shall define this operation in Section~\ref{sec:f-symbols}, thus deriving a UFC from the the entanglement properties of the ground state at the physical boundary. 

Our construction is based on a formalism developed in Ref.~\cite{kawagoe2020microscopic}, which assumes the existence of certain operators for manipulating anyons. Therefore, our first goal is to construct these basic unitary operators from the entanglement bootstrap. We discuss these operators in Section~\ref{subsec:basic_operations}.

\subsection{Basic operations}
\label{subsec:basic_operations}

Consider a disk-like region enclosed by a physical boundary . We shall restrict our attention to operations performed on a fixed interval-like subregion of the boundary. We choose a finite set of locations $\{x_i\}$ along the boundary, where we will define some anyons and operators. These locations must be some sufficiently far apart and we want to consider a boundary region of some sufficiently large total length. We must then choose a sufficiently large number of these locations along the boundary.  For each boundary point $x_i$ we define a disk-like sub-region $A_{[x_i]}$ along the boundary, containing location $x_i$, and disjoint for $i \neq j$.  Moreover for $i>j$ we define disk-like regions $A_{[x_i,x_{i+1},\ldots,x_j]}$ containing locations  $[x_i,\ldots,x_j]$, with disjoint disks for disjoint sub-intervals, and inclusions of disks for inclusions of intervals [Figure~\ref{fig:anyon-positions}].

\begin{figure}[htbp]
    \centering
    \begin{tikzpicture}[scale=0.325]
        \filldraw[blue!30!white] (-10, 0) -- (10, 0) -- ++ (0, 8) -- ++ (-20, 0) -- cycle;
        
    \draw[very thick, fill=green!30!white, dashed] (-8, 0) -- ++ (6, 0) -- ++ (0, 2) -- ++ (-6, 0) -- cycle;

    \draw[very thick, fill=green!30!white, dashed] (0, 0) -- ++ (2, 0) -- ++ (0, 2) -- ++ (-2, 0) -- cycle;
    
        \draw[very thick] (-10, 0) -- (10, 0);

\foreach \x in {-9,-7,...,9}
{
\draw[thick] (\x, -0.25) -- ++ (0, 0.5);
\node[below] () at (\x, -0.25) {$x_{\the\numexpr (\x + 9)/2\relax}$};

}
\node[above] () at (-5, 2) {$A_{[x_1, x_2, x_3]}$};
\node[above] () at (1, 2) {$A_{[x_5]}$};

    \end{tikzpicture}
    \caption{Boundary anyon positions and regions.}
    \label{fig:anyon-positions}
\end{figure}
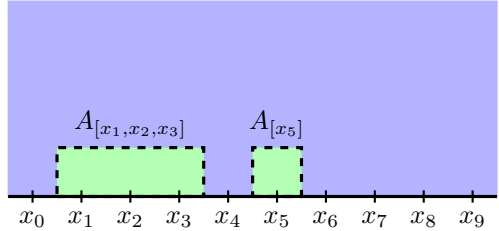

We will be defining three types of operations, called movement, splitting, and fusion. Roughly speaking, the movement operators move a boundary anyon at one location to another. The splitting and fusion operation splits and fuses anyons, respectively. We will prove the existence of these operations from the entanglement bootstrap.

Before we construct these operators, let us clarify what we mean by ``a boundary anyon at location $x_i$.'' Recall that we are considering a disk-like region surrounded by a physical boundary and that we have placed a finite set of points along the boundary. Without loss of generality, consider a state in $\Sigma_{a\overline{a}}^1(M)$ for some $M$-shaped region [Figure~\ref{fig:IC_types_original}(b)], where $a, \overline{a},$ and $1$ are the extreme points associated with $N_1, N_2, N_3\subset M$ [Figure~\ref{fig:IC_types_original}(c)]. Because $N_{a\overline{a}}^1=1$, such a state exists and is unique. In particular, this state is the extreme point of  $\Sigma_{a\overline{a}}^1(M)$. Using \cite[Proposition D.4]{shi2021entanglement}, we can conclude that this state (denoted as $\rho_M$) has the following structure. Let $N_i = N_{i, \text{in}}\cup N_{i, \text{out}}$, where `in' and `out' means the inner and the outer part of the annulus of $N_{i}$. There exists a decomposition of each Hilbert space
\begin{equation}
    \mathcal{H}_{N_{i, \text{in}}} = (\mathcal{H}_{N_{i, \text{in}}^{(1)}} \otimes \mathcal{H}_{N_{i, \text{in}}^{(2)}}) \oplus \mathcal{H}'_{N_i,\text{in}}
\end{equation}
\begin{equation}
    \rho_M = \left(\bigotimes_{i=1}^3\rho_{N_{i,\text{out}} \cup N_{i, \text{in}}^{(1)}}\right) \bigotimes \rho_{M' \cup_{i} N_{i,\text{in}}^{(2)}},
    \label{eq:rdm_structure}
\end{equation}
where $M' = M\setminus (\cup_i N_i)$. From Eq.~\eqref{eq:rdm_structure}, we can see there is a purification of of $\rho_M$ local to $N_1, N_2,$ and $N_3$. First note the reduced density matrix of $\rho_M$ over $N_3$ is equal to that of the reference state. Therefore, for each $N_1, N_2,$ and $N_3$, there is a canonical choice of purification we can take. This will be our canonical purification of $\rho_M$, the state in which the anyon $a$ and $\overline{a}$ are placed in regions enclosed by $N_1$ and $N_2$. 

Based on this purification, we can define the movement operators. Without loss of generality, consider two neighboring points $x_i$ and $x_{i+1}$. These will be the locations at which the boundary anyon $a$ will be placed. Its antiparticle $\overline{a}$ will be placed somewhere else. (Its precise location will not matter for our construction as long as it is neither $x_i$ nor $x_{i+1}$.) Then we can compare the two states, each corresponding to canonical purifications of the state in which $a$ is placed at $x_i$ and $x_{i+1}$, respectively. The two states are indistinguishable in a complement of $A_{[x_1, x_2]}$. By Uhlmann's theorem~\cite{Uhlmann1976}, there exists an isometry localized within $A_{[x_1, x_2]}$ that maps the former to the latter. This is the movement operator, denoted as $M_{x_i\to x_{i+1}}^a$ [Figure~\ref{fig:movement}]. Longer-range movement operators can be decomposed into the elementary movement operators. For instance, for $j>i$,
\begin{equation}
    M_{x_i\to x_j}^a = M_{x_{j-1} \to x_j}^a \cdots M^a_{x_{i+1} \to x_{i+2}} \cdot M_{x_i \to x_{i+1}}^a
\end{equation}
and for $j<i,$
\begin{equation}
    M^a_{x_j\to x_i} = (M_{x_i\to x_j}^a)^{\dagger}.
\end{equation}

\begin{figure}[t]
    \centering
    \begin{tikzpicture}[scale=0.65]
    \filldraw[blue!30!white] (-10, 0) -- (0, 0) -- ++ (0, 4) -- ++ (-10, 0) -- cycle;    
    \draw[fill=green!30!white, very thick, dashed] (-7.75, 0) arc (180:0:0.75);
    \draw[fill=blue!30!white, very thick, dashed] (-7.45, 0) arc (180:0:0.45);
    \draw[very thick] (-10, 0) -- (0, 0);
    \foreach \x in {-9,-7,...,-1}
    {
        \draw[thick] (\x, -0.25) -- ++ (0, 0.5);
        \node[below] () at (\x, -0.25) {$x_{\the\numexpr (\x + 9)/2\relax}$};
    }
    \node[above] () at (-7, 0.75) {$a$};
    \draw[dotted, thick] (-8, 0) -- ++ (0, 2) -- ++ (4, 0) -- ++ (0, -2);
    \draw[->] (0.5, 2) -- (4.5, 2) node [midway, above] {$M_{x_1 \to x_2}^a$};
    \begin{scope}[xshift= 15cm]
    \filldraw[blue!30!white] (-10, 0) -- (0, 0) -- ++ (0, 4) -- ++ (-10, 0) -- cycle;    
    \draw[fill=green!30!white, very thick, dashed] (-5.75, 0) arc (180:0:0.75);
    \draw[fill=blue!30!white, very thick, dashed] (-5.45, 0) arc (180:0:0.45);
    \draw[very thick] (-10, 0) -- (0, 0);
    \foreach \x in {-9,-7,...,-1}
    {
        \draw[thick] (\x, -0.25) -- ++ (0, 0.5);
        \node[below] () at (\x, -0.25) {$x_{\the\numexpr (\x + 9)/2\relax}$};
    }   
    \node[above] () at (-5, 0.75) {$a$};
    \draw[dotted, thick] (-8, 0) -- ++ (0, 2) -- ++ (4, 0) -- ++ (0, -2);
    \end{scope}
    \end{tikzpicture}
    \caption{Movement operator. The reduced density matrices outside the dotted lines are the same. The movement operator is localized in the dotted region.}
    \label{fig:movement}
\end{figure}
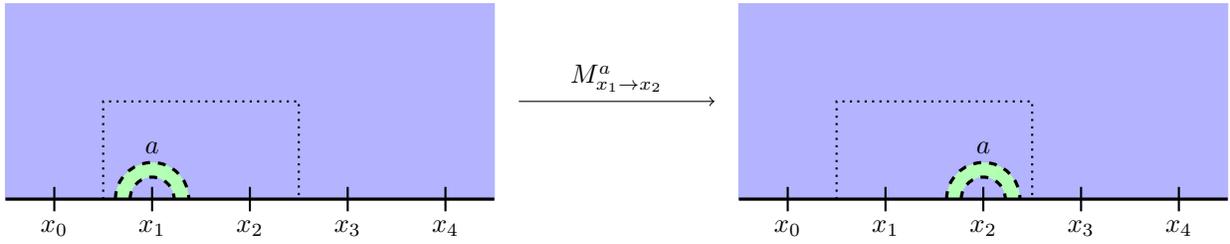

Now we can define the splitting operators. These are operators that splits a single anyon into two anyons. We remark that there can be a nontrivial multiplicity. For instance, if $c$ splits into $a$ and $b$, we end up with the information convex set $\Sigma_{ab}^c(M)$ of some $M$-shaped subsystem $M$ [Figure~\ref{fig:splitting}]. This information convex set contains more than one element if $N_{ab}^c >1$. In particular, one must specify which element of the underlying Hilbert space $\mathbb{V}_{ab}^c$ we get after splitting the anyons. We will specify this state with multiplicity label $\alpha$. 

The element of $\Sigma_{ab}^c(M)$ corresponding to $\alpha$ is an extreme point, and as such, it obeys the structure in Eq.~\eqref{eq:rdm_structure}. Therefore, again we can define a canonical purification that locally purifies $N_1, N_2,$ and $N_3$. Now, similar to how we defined the movement operator, we can consider two different states, the one which is a canonical purification of a state that has an anyon $c$ at location $x_i$ and the canonical purification which has anyon $a$ and $b$ located at $x_i$ and $x_{i+1}$ whose joint charge and fusion multiplicity is $c$ and $\alpha$, respectively [Figure~\ref{fig:splitting}]. (The antiparticle of $c$ should not be placed at $x_{i}$ or $x_{i+1}$ but otherwise its precise location does not matter.) These two states are indistinguishable in the complement of $A_{[x_i, x_{i+1}]}$. Therefore, by Uhlmann's theorem~\cite{Uhlmann1976}, there exists an isometry that maps the former to the latter. This is the splitting operator, denoted as $S_{c\to a,b;\alpha}^{x_i}$. The adjoint of the splitting operator  is denoted as 
\begin{align}
    S^{x_i}_{a,b; \alpha \to c}=  (S^{x_i}_{c \to a,b; \alpha})^\dagger
\end{align}
and it may be thought of as fusing anyons $a$ and $b$ and then projecting onto fusion outcome $c$ with multiplicity label $\alpha$, labeling a state in multiplicity space $\mathbb{V}_{ab}^c$.  We generally take $\alpha$ to be a discrete label running over an orthonormal basis of $\mathbb{V}_{ab}^c$.
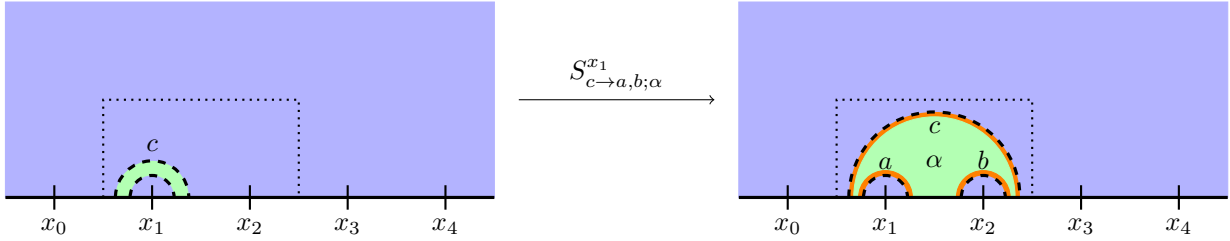
\begin{figure}[t]
    \centering
    
    \begin{tikzpicture}[scale=0.65]
    \filldraw[blue!30!white] (-10, 0) -- (0, 0) -- ++ (0, 4) -- ++ (-10, 0) -- cycle;    
    \draw[fill=green!30!white, very thick, dashed] (-7.75, 0) arc (180:0:0.75);
    \draw[fill=blue!30!white, very thick, dashed] (-7.45, 0) arc (180:0:0.45);
    \draw[very thick] (-10, 0) -- (0, 0);
    \foreach \x in {-9,-7,...,-1}
    {
        \draw[thick] (\x, -0.25) -- ++ (0, 0.5);
        \node[below] () at (\x, -0.25) {$x_{\the\numexpr (\x + 9)/2\relax}$};
    }
    \node[above] () at (-7, 0.75) {$c$};
    \draw[dotted, thick] (-8, 0) -- ++ (0, 2) -- ++ (4, 0) -- ++ (0, -2);
    \draw[->] (0.5, 2) -- (4.5, 2) node [midway, above] {$S_{c \to a, b;\alpha}^{x_1}$};
    \begin{scope}[xshift= 15cm]
    \filldraw[blue!30!white] (-10, 0) -- (0, 0) -- ++ (0, 4) -- ++ (-10, 0) -- cycle;    
    
    \draw[fill=orange, very thick, dashed] (-7.75, 0) arc (180:0:1.75);
    \filldraw[green!30!white] (-7.65, 0) arc (180:0:1.65);
    \filldraw[orange] (-5.55, 0) arc (180:0:0.55);
    \filldraw[orange] (-7.55, 0) arc (180:0:0.55);
    
    \draw[fill=blue!30!white, very thick, dashed] (-5.45, 0) arc (180:0:0.45);
    \draw[fill=blue!30!white, very thick, dashed] (-7.45, 0) arc (180:0:0.45);
    \draw[very thick] (-10, 0) -- (0, 0);
    \foreach \x in {-9,-7,...,-1}
    {
        \draw[thick] (\x, -0.25) -- ++ (0, 0.5);
        \node[below] () at (\x, -0.25) {$x_{\the\numexpr (\x + 9)/2\relax}$};
    }  
    \node[above] () at (-7, 0.4) {$a$};
    \node[above] () at (-5, 0.4) {$b$};
    \node[below] () at (-6, 1.75) {$c$};
    \node[] () at (-6, 0.75) {$\alpha$};
    \draw[dotted, thick] (-8, 0) -- ++ (0, 2) -- ++ (4, 0) -- ++ (0, -2);
    \end{scope}
    \end{tikzpicture}
    \caption{Splitting operator. The reduced density matrices outside the dotted lines are the same. The splitting operator is localized in the dotted region.}
    \label{fig:splitting}
\end{figure}

The splitting operators $ S^{x_i}_{c \to a,b; \alpha}$ are uniquely defined up to a complex phase (separately at each location $x_i$). We now discuss how to fix these phases. For the first location $x_1$, we choose the phases used to define $S^{x_1}_{c \to a,b; \alpha}$ arbitrarily.  For the remaining locations $x_i$ for $i>1$, we choose the phases such that, for region $D_{[x_1,x_{i+1}]}$ with a single anyon $c$ at $x_i$, we have
\begin{align}\label{eq:splitting-op-def}
   S^{x_i}_{c \to a,b; \alpha}=M^a_{x_1 \to x_i} M^b_{x_2 \to x_{i+1}} S^{x_1}_{c \to a,b; \alpha}M^c_{x_i \to x_1}.
\end{align}
This is illustrated diagramatically by Figure \ref{fig:splitting-op-diagram-new}, wherein the movement operators are represented by horizontal lines and the splitting operators (up to a normalization we discuss below) are represented by trivalent vertices and their incident edges.

\begin{figure}[htbp] 
    \centering 
    \begin{tikzpicture}
    \draw[thick] (0, 0) -- (5,0);
    \draw[thick] (1, -0.125cm) -- ++ (0, 0.25cm);
    \draw[thick] (4, -0.125cm) -- ++ (0, 0.25cm);
    \node[below] () at (1, -0.125cm) {$x_1$};
    \node[below] () at (4, -0.125cm) {$x_i$};
    \splitting{4cm}{0.5cm}{$c$}{$a$}{$b$}{$\alpha$}{}{}{};

    \node[] () at (6cm, 0.875cm) {$:=$};
    \begin{scope}[xshift=7cm]
        \draw[thick] (0, 0) -- (5,0);
        \draw[thick] (1, -0.125cm) -- ++ (0, 0.25cm);
        \draw[thick] (4, -0.125cm) -- ++ (0, 0.25cm);
        \node[below] () at (1, -0.125cm) {$x_1$};
        \node[below] () at (4, -0.125cm) {$x_i$};
        \splitting{1cm}{0.5cm}{}{}{}{$\alpha$}{}{}{};
        \movementtwo{4cm}{0.25cm}{-3cm}{0.25cm};
        \movementtwo{1.5cm}{1.5cm}{3cm}{0.25cm};
        \movementtwo{1cm}{1.5cm}{3cm}{0.5cm};
        \node[right] () at (4cm, 0.25cm) {$c$};
        \node[above] () at (4cm, 2cm) {$a$};
        \node[above] () at (4.5cm, 1.75cm) {$b$};
    \end{scope}
    
    \end{tikzpicture}
    \caption{Illustration of Eq.~\eqref{eq:splitting-op-def} in diagrammatic calculus. The trivalent vertex and its incident edges represent a splitting operator. The splitting operator at location $x_i$ is defined in terms of the splitting operator at location $x_1$ and the movement operators (horizontal lines). }
    \label{fig:splitting-op-diagram-new} 
\end{figure}

\subsection{Diagrammatics}
\label{subsec:diagrammatics}

Now we are in a position to define the diagrammatic calculus involving the movement, splitting, and fusion operators. We will use the following convention:
\begin{equation} \label{eq:diagram_conventions}
\begin{tikzpicture}[baseline={([yshift=-0.5ex]current bounding box.center)}]
\movement{0}{0}{1}{1};
\node[above] () at (0.5, 0.25) {$a$};
\end{tikzpicture}
:= M^a_{x_i\to x_j},
\quad 
    \begin{tikzpicture}[baseline={([yshift=-0.5ex]current bounding box.center)}]
        \splitting{0}{0}{$c$}{$a$}{$b$}{$\alpha$}{}{}{};
    \end{tikzpicture}:= \left(\frac{d_a d_b}{d_c} \right)^{\frac{1}{4}}S^{x_i}_{c\to a, b; \alpha},\quad  
    \begin{tikzpicture}[baseline={([yshift=-0.5ex]current bounding box.center)}]
        \merging{0}{0}{$c$}{$a$}{$b$}{$\alpha$}{}{}{};
    \end{tikzpicture}:=
    \left(\frac{d_a d_b}{d_c} \right)^{\frac{1}{4}} \left(S^{x_i}_{c\to a, b; \alpha} \right)^{\dagger},
\end{equation}
where the data about the locations are suppressed in the diagrams. (Unless specified otherwise, this data will be obvious from the given contexts.) In these diagrams ``time'' flows upward, i.e.\ they represent operators composed from bottom to top.
Here $d_a$ is a non-negative number called \emph{quantum dimension}; one way to define it is to ensure the following identity:
\begin{equation}
\label{eq:frobenius_schur}
    \begin{tikzpicture}[baseline={([yshift=-0.5ex]current bounding box.center)}]
        \movement{-1.75cm}{0}{0}{4cm};
        \node[right] () at (-1.75cm, 0) {$a$};
        \node[right] () at (-1.75cm, 4cm) {$a$};
        \draw[very thick] (-1.75cm, 2cm) circle (0.125cm);

        \node[] () at (-0.75cm, 2cm) {$:=$};
        
        \splitting{0}{0}{$a$}{}{}{}{}{dotted}{};
        \movement{0.5cm}{1cm}{0.5cm}{1cm};
        \splitting{0}{1cm}{}{}{}{}{dotted}{}{};
        \merging{0.5cm}{3cm}{}{$\overline{a}$}{}{}{dotted}{}{};
        \movement{0}{2cm}{0}{1cm};
        \merging{0}{4cm}{$a$}{}{}{}{}{}{dotted}{dotted}{}; 

        \node[] () at (1.75cm, 2cm) {$=\quad \varkappa_a$};

        \movement{2.5cm}{0}{0}{4cm};
        \node[right] () at (2.5cm, 0) {$a$};
        \node[right] () at (2.5cm, 4cm) {$a$};
    \end{tikzpicture}
\end{equation}
where the dotted line represents the vacuum sector and $\varkappa_a$ is the \emph{Frobenius-Schur indicator}, a unit complex number.  (Note: The middle of Eq.~\eqref{eq:frobenius_schur} can be decomposed into two splitting and merging operators and a movement operator.) Requiring condition~\eqref{eq:frobenius_schur} uniquely defines the quantities $d_a$ in definition~\eqref{eq:diagram_conventions}. As it stands, the Frobenius-Schur indicator may be a complex number (though by definition here its norm is $1$). We will later choose a gauge  in which this becomes $\pm 1$ [Section~\ref{sec:f-symbols}].  

Let us make a few side remarks. First, the information about the locations will be often immaterial to our analysis; note (i) that two topologically equivalent diagrams with the same endpoints represent the same operation when acting the space of states we discussed so far (potentially up to the Frobenius-Schur indicator) and (ii) that diagrams with different end points can be related to each other by applying the movement operators judiciously. Second, our convention follows that of Ref.~\cite{bonderson2008interferometry} and especially \cite[Section 2.1-2.2]{bonderson2012non}. This leads to diagrammatic rules that produce factors of quantum dimension $d_a$ when popping bubbles or using fusion resolutions of the identity. The benefit of this choice is that the diagrammatic calculus obtains (partial) isotopy invariance: bending lines does not produce factors of $d_a$ (though factors of $\varkappa_a = \pm 1$, the Frobenius-Schur indicator, do still appear).  See Eq.~2.22 of \cite{bonderson2012non} for an illustration. Lastly, due to the (partial) isotopy invariance, it will be often convenient to move the trivalent vertices. All that needs to be done during this process is to keep track of the bendings [Eq.~\eqref{eq:frobenius_schur}]. Therefore, in the rest of the paper, we will not strictly adhere to the precise diagrammatic form in Eq.~\eqref{eq:diagram_conventions}. For instance, the following re-drawing of the splitting and fusion operators should be understood as the elementary splitting and fusion operator in Eq.~\eqref{eq:diagram_conventions}, composed with the movement operators. 
\begin{equation}
     \begin{tikzpicture}[baseline={([yshift=-0.5ex]current bounding box.center)}, line width=1pt, scale=0.8]
     \draw[] (0,0) -- (270:1cm);
     \draw[] (0,0) --  (150:1cm);
     \draw[] (0,0) --  (30:1cm);
     \node[above] () at (0,0) {$\alpha$};
     \node[above] () at (150:1cm) {$a$};
     \node[above] () at (30:1cm) {$b$};
     \node[below] () at (270:1cm) {$c$};
    \end{tikzpicture},
    \begin{tikzpicture}[baseline={([yshift=-0.5ex]current bounding box.center)}, line width=1pt, scale=0.8]
    \draw[] (0,0) -- (90:1cm);
    \draw[] (0,0) -- (210:1cm);
    \draw[] (0,0) -- (-30:1cm);
    \node[below] () at (0,0) {$\alpha$};
    \node[below] () at (210:1cm) {$a$};
    \node[below] () at (-30:1cm) {$b$};
    \node[above] () at (90:1cm) {$c$};
    \end{tikzpicture}.
\end{equation}

Using our diagrammatic rule, it is straightforward to derive the following set of identities. The first is the \emph{completeness relation}.
\begin{equation}
\label{eq:completeness}
    \begin{tikzpicture}[baseline={([yshift=-0.5ex]current bounding box.center)},line width=1pt, scale=0.6]
        \draw[] (-1, -2) -- ++ (0,4);
        \draw[] (1, -2) -- ++ (0,4);
        \node[below] () at (-1, -2) {$a$};
        \node[below] () at (1, -2) {$b$};
        \node[above] () at (-1, 2) {$a$};
        \node[above] () at (1, 2) {$b$};
    \end{tikzpicture}
    =
    \sum_{c, \alpha} \sqrt{\frac{d_c}{d_a d_b}}
    \begin{tikzpicture}[baseline={([yshift=-0.5ex]current bounding box.center)},line width=1pt, scale=0.6]
        \draw[] (-1, -2) -- ++ (0, 1) -- ++ (1, 0.5) -- ++ (1, -0.5) -- ++ (0, -1);
        \draw[] (-1, 2) -- ++ (0, -1) -- ++ (1, -0.5) -- ++ (1, 0.5) -- ++ (0, 1);
        \draw[] (0, -0.5) -- (0, 0.5);
        \node[above] () at (0, 0.5) {$\alpha$};
        \node[below] () at (0, -0.5) {$\alpha$};
        \node[] () at (0.25, 0) {$c$};
        \node[below] () at (-1, -2) {$a$};
        \node[below] () at (1, -2) {$b$};
        \node[above] () at (-1, 2) {$a$};
        \node[above] () at (1, 2) {$b$};
    \end{tikzpicture}
\end{equation}
Secondly, we will choose the multiplicity label for each $(a,b,c)$ to run over an orthonormal bases of $\mathbb{V}_{ab}^c$, so that we have the following orthogonality relation. 
\begin{equation}
\label{eq:orthogonality}
    \begin{tikzpicture}[baseline={([yshift=-0.5ex]current bounding box.center)},line width=1pt, scale=0.6]
    \draw[] (-1, -1) -- ++ (0, 2) -- ++ (1, 0.5) -- ++ (1, -0.5) -- ++(0, -2) -- ++ (-1, -0.5) -- ++ (-1, 0.5);
    \draw[] (0, -1.5) -- ++ (0, -0.5);
    \draw[] (0, 1.5) -- ++ (0, 0.5);

    \node[left] () at (-1, 0) {$a$};
    \node[left] () at (1, 0) {$b$};
    \node[left] () at (0, 2) {$c$};
    \node[left] () at (0, -2) {$c'$};
    \node[below] () at (0, 1.5) {$\alpha$};
    \node[above] () at (0, -1.5) {$\beta$};
    \end{tikzpicture}
    = \begin{tikzpicture}[baseline={([yshift=-0.5ex]current bounding box.center)},line width=1pt, scale=0.6]
    \draw[] (0, -2) -- ++ (0, 4);
    \node[left] () at (0, 0) {$c$};
    \end{tikzpicture} \quad
    \sqrt{\frac{d_a d_b}{d_c}}  \delta_{\alpha, \beta} \delta_{c, c'}
\end{equation}
Lastly, we note the following identity, which we refer to as the \emph{vacuum identity}:
\begin{equation}
    \begin{tikzpicture}[baseline={([yshift=-0.5ex]current bounding box.center)}, line width=1pt, scale=0.5]
        \draw[dashed] (-2, -1) -- ++ (4,0) -- ++ (0, 2) -- ++ (-4, 0) -- cycle;
        \draw[] (-1, 1) -- ++ (0, 2);
        \draw[] (1, 1) -- ++ (0, 2);
        \node[left] () at (-1, 2) {$a$};
        \node[left] () at (1, 2) {$b$};
    \end{tikzpicture}
    =  \begin{tikzpicture}[baseline={([yshift=-0.5ex]current bounding box.center)}, line width=1pt, scale=0.5]
        \draw[dashed] (-2, -1) -- ++ (4,0) -- ++ (0, 2) -- ++ (-4, 0) -- cycle;
        \draw[] (-1, 1) -- ++ (0, 2);
        \draw[] (1, 1) -- ++ (0, 2);
        \node[left] () at (-1, 2) {$a$};
        \node[left] () at (1, 2) {$b$};
    \end{tikzpicture} \quad \delta_{a, \bar{b}},
    \label{eq:vacuum_identity}
\end{equation}
where the dashed region is an arbitrary diagram with no open edges (aside from the ones connected to $a$ and $b$). A similar identity holds for diagrams with the dashed region placed on the top (as opposed to the bottom).

\subsection{$F$-symbols} \label{sec:f-symbols}
We can now define the $F$-symbols in terms of the movement, splitting, and fusion operators. Following our diagrammatic convention [Section~\ref{subsec:diagrammatics}], we define the $F$-symbol as follows:
\begin{equation}
    \begin{tikzpicture}[baseline={([yshift=-0.5ex]current bounding box.center)}]
        \splitting{0}{0}{$d$}{$e$}{$c$}{$\mu$}{}{}{};
        \splitting{0}{1cm}{}{$a$}{$b$}{$\nu$}{}{}{};
        \movement{0.5cm}{1cm}{0.5cm}{1cm};
    \end{tikzpicture}
=
\sum_{f,\alpha,\beta}
F_{def;\mu \nu}^{abc;\alpha\beta}
    \begin{tikzpicture}[baseline={([yshift=-0.5ex]current bounding box.center)}]
        \splitting{0}{0}{$d$}{$a$}{$f$}{$\alpha$}{}{}{};
        \movement{0}{1cm}{0}{1cm};
        \splitting{0.5cm}{1cm}{}{$b$}{$c$}{$\beta$}{}{}{};
    \end{tikzpicture}.
\end{equation}
Alternatively, one may write this (perhaps more sensibly) as $\left(F_d^{abc}\right)_{(e,\mu, \nu), (f, \alpha, \beta)}= F_{def;\mu \nu}^{abc;\alpha\beta}$, which are unitary matrices for each $(a,b,c,d)$ when considered with two joint indices $(e,\mu,\nu), (f,\alpha,\beta)$. These two expressions are just two ways of denoting the same tensor; the first is more compact, while the second emphasizes the unitary matrix. If we have trivial fusion multiplicities, we will simply omit them.

The choice of orthonormal basis for multiplicty Hilbert space $\mathbb{V}_{ab}^c$ is arbitrary.  Under a change of basis for for the multiplicity spaces, the $F$-symbols transform accordingly, with unitary matrices acting on their multiplicity labels.  These are referred to as a gauge transformation of the $F$-symbols.

 Different choices of phases on the movement operators can also be absorbed and expressed as as gauge transformations of $F$.  Ultimately, no matter what choices we make when defining movement/splitting/fusion operators.  An argument appears in essentially the same setting in \cite{kawagoe2020microscopic}. 

A few basic properties of the $F$-symbol will follow from our definitions.  We obtain unitarity because $F$-symbols are inner products of two different orthonorml bases.  We obtain the property $F^{abc;\alpha \beta}_{def;\mu \nu}=1$ when $a$, $b$, or $c$ is 1 from the definition and the topological invariance.  (For abstract $F$-symbols, one can always gauge-fix to obtain this property, but we obtain it automatically here.) We will additionally gauge-fix the Frobenius-Schur indicator $\varkappa_a$ so that $F^{a,\overline{a},a}_a =\varkappa_a = \pm 1$, with $\varkappa_a=1$ if $a \neq \overline{a}$ [Eq.~\eqref{eq:frobenius_schur}]. After showing the existence of suitable movement and splitting operators, our construction of the $F$-symbol is consistent with the one introduced in Ref.~\cite{kawagoe2020microscopic}. Therefore, the conclusions they made about the $F$-symbols apply to ours as well. In particular, one can also show that the $F$-symbols satisfy the pentagon equation.  
~\cite{maclane1963natural, kitaev2006anyons}(The multiplicity labels are omitted for brevity.): 
\begin{equation}
\begin{tikzpicture}[baseline={([yshift=-0.5ex]current bounding box.center)}]
\begin{scope}
    \splitting{0}{0}{}{}{}{}{}{}{};
    \splitting{0}{1cm}{}{}{}{}{}{}{};
    \movement{0.5cm}{1cm}{1cm}{2cm};
    \splitting{0}{2cm}{}{}{}{}{}{}{};
    \movement{0.5cm}{2cm}{0.5cm}{1cm};
    \node[above] (a) at (0, 3cm) {$a$};
    \node[above] (b) at (0.5, 3cm) {$b$};
    \node[above] (c) at (1, 3cm) {$c$};
    \node[above] (d) at (1.5, 3cm) {$d$};
    \node[below] (e) at (0, 0cm) {$u$};
    \node[left] (e) at (0, 2cm) {$e$};
    \node[left] (e) at (0, 1cm) {$f$};
\end{scope}

\begin{scope}[xshift=6cm, yshift=2cm]
    \splitting{0}{0}{}{}{}{}{}{}{};
    \splitting{0}{1cm}{}{}{}{}{}{}{};
    \movement{0.5cm}{1cm}{0.5cm}{1cm};
    \movement{0}{2cm}{0}{1cm};
    \movement{0.5cm}{2cm}{0}{1cm};
    \splitting{1cm}{2cm}{}{}{}{}{}{}{};
    \node[above] (a) at (0, 3cm) {$a$};
    \node[above] (b) at (0.5, 3cm) {$b$};
    \node[above] (c) at (1, 3cm) {$c$};
    \node[above] (d) at (1.5, 3cm) {$d$};
    \node[below] (e) at (0, 0cm) {$u$};
    \node[left] (e) at (0, 1cm) {$e$};
    \node[right] (g) at (1cm, 2cm) {$g$};
\end{scope}

\begin{scope}[xshift=12cm, yshift=0cm]
    \splitting{0}{0}{}{}{}{}{}{}{};
    \movement{0}{1cm}{0}{2cm};
    \splitting{0.5cm}{1cm}{}{}{}{}{}{}{};
    \movement{0.5cm}{2cm}{0}{1cm};
    \splitting{1cm}{2cm}{}{}{}{}{}{}{};
    \node[above] (a) at (0, 3cm) {$a$};
    \node[above] (b) at (0.5, 3cm) {$b$};
    \node[above] (c) at (1, 3cm) {$c$};
    \node[above] (d) at (1.5, 3cm) {$d$};
    \node[below] (e) at (0, 0cm) {$u$};
    \node[right] (g) at (1cm, 2cm) {$g$};
    \node[right] (h) at (0.5cm, 1cm) {$h$};
\end{scope}

\begin{scope}[xshift=4cm, yshift=-3cm]
    \splitting{0}{0}{}{}{}{}{}{}{};
    \splitting{0}{1cm}{}{}{}{}{}{}{};
    \splitting{0.5cm}{2cm}{}{}{}{}{}{}{};
    \movement{0}{2cm}{0}{1cm};
    \movement{0.5cm}{1cm}{1cm}{2cm};
    \node[above] (a) at (0, 3cm) {$a$};
    \node[above] (b) at (0.5, 3cm) {$b$};
    \node[above] (c) at (1, 3cm) {$c$};
    \node[above] (d) at (1.5, 3cm) {$d$};
    \node[below] (e) at (0, 0cm) {$u$};
    \node[left] (e) at (0, 1cm) {$f$};
    \node[right] (i) at (0.5cm, 2cm) {$i$};
\end{scope}

\begin{scope}[xshift=8cm, yshift=-3cm]
    \splitting{0}{0}{}{}{}{}{}{}{};
    \movement{0}{1cm}{0}{2cm};
    \splitting{0.5cm}{1cm}{}{}{}{}{}{}{};
    \splitting{0.5cm}{2cm}{}{}{}{}{}{}{};
    \movement{1cm}{2cm}{0.5cm}{1cm};
    \node[above] (a) at (0, 3cm) {$a$};
    \node[above] (b) at (0.5, 3cm) {$b$};
    \node[above] (c) at (1, 3cm) {$c$};
    \node[above] (d) at (1.5, 3cm) {$d$};
    \node[below] (e) at (0, 0cm) {$u$};
    \node[right] (i) at (0.5cm, 2cm) {$i$};
    \node[right] (h) at (0.5cm, 1cm) {$h$};
\end{scope}

\draw[thick, ->] (2.5cm, 2cm) -- ++ (2.5cm, 1.5cm) node [pos=0.45, above=0.25cm] {$F_{ufg}^{ecd}$};
\draw[thick, ->] (8.5cm, 3.5cm) -- ++ (2.5cm, -1.5cm) node [pos=0.55, above=0.25cm] {$F_{ueh}^{abg}$};
\draw[thick, ->] (2cm, 1cm) -- ++ (1.5cm, -2cm) node [pos=0.35, right] {$F_{fei}^{abc}$};
\draw[thick, ->] (10cm, -1cm) -- ++ (1.5cm, 2cm) node [pos=0.65, left] {$F_{hig}^{bcd}$};
\draw[thick, ->] (6cm, -1cm) -- ++ (1.5cm, 0) node [midway, above] {$F_{ufh}^{aid}$};
\end{tikzpicture}.
\label{eq:pentagon_equation}
\end{equation}
We have omitted summations above; see Ref.~\cite{kawagoe2020microscopic}.

One difference in our development compared to  Ref.~\cite{kawagoe2020microscopic} is that we have developed a slightly more flexible diagrammatics, where we allow splitting and fusion operations to occur at any point in space along the boundary, defined in a consistent way. Therefore, as discussed above, topologically equivalent diagrams with fixed endpoints represent identical operators. With this in mind, it is slightly simpler to prove the pentagon equations than in the careful explanation of Ref.~\cite{kawagoe2020microscopic}.  Regardless, we have deliberately arranged the proof of Theorem~\ref{thm:mapping-to-sn} so that the $F$-symbols do not directly appear; instead, we make use of the topological invariance of the diagrammatics.

To summarize, we have arrived at a finite set of unitary matrices $\{F_{def; \rho \sigma}^{abc;\mu \nu} \}$ that satisfy the pentagon equation. This is the defining data of the unitary fusion category (UFC). As such, starting from the axioms of the entanglement bootstrap [Figure~\ref{fig:eb_axioms_bulk},~\ref{fig:eb_axioms_boundary}], we deduced the existence of a UFC.

\section{The Levin-Wen string-net model} \label{sec:sn-review}

Above, we described how to extract a unitary fusion category (UFC) from a ground state with gapped boundary.  Meanwhile, a UFC is also precisely the data used to define the Levin-Wen Hamiltonian and the associated ground states, also known as ``string-nets.''  Levin and Wen \cite{levin2005string} originally defined string-nets built on a subset of unitary fusion categories satisfying an extra constraint (``tetrahedral symmetry''), and later the construction was slightly generalized to arbitrary unitary fusion categories \cite{kitaev2012models,lin2021generalized,Hahn2020}.  When we refer to string-nets and the Levin-Wen model, we refer to this generalization.  In a sense this generalization is important for us: when we transform ground states to string-nets, we  naturally find the``generalized string-nets.''

In this Section, we provide a brief introduction to the Levin-Wen model, focusing on facts that are pertinent to our work. The Hilbert space of the Levin-Wen model can be defined over any trivalent graph on any 2D manifold, though  we restrict our attention to the honeycomb lattice on the 2D plane.\footnote{Note this choice of lattice for the definition of string-nets does not constrain the underlying lattice of our reference state.  Ultimately we will transform any suitable reference state, regardless of the underlying ``microscopic'' lattice, into a string-net on a coarse-grained honeycomb lattice, whose coarse-grained sites are still $O(1)$-size.} In the literature, the Hilbert space is often defined with degrees of freedom associated to both edges and vertices of the graph.  (For categories without fusion multiplicity, the vertex degrees of freedom are not necessary.)  In that convention, edges are assigned Hilbert space $\textrm{span}\{|a\rangle\}_a$ for anyon types (simple objects) $a \in \mathcal{C}$, and vertices are assigned Hilbert space $\Hom{1}{a \otimes b \otimes c}$, conditional on the states $|a\rangle, |b\rangle,|c\rangle$ of the incoming edges.

Here we will adopt a slightly different convention in which degrees of freedom are associated only to vertices.  Each vertex has associated Hilbert space
\begin{align}
\label{eq:hilb_vertex}
\mathcal{H}_v = \bigoplus_{a,b,c} \Hom{1}{a \otimes b \otimes c} 
\end{align}
with direct sum over simple objects $a,b,c$. The anyons $a,b,c$ are associated with the three edges attached to the vertex, arranged e.g.\ in counterclockwise fashion. At each vertex, a \emph{branching rule} must be satisfied. Specifically, the following configurations are allowed
\begin{equation}
    \begin{tikzpicture}[baseline={([yshift=-0.5ex]current bounding box.center)}]
        \draw[thick] (0,0) -- (270:1cm) node [midway, right] {$c$};
        \draw[thick] (0,0) -- (30:1cm) node [midway, above] {$b$};
        \draw[thick] (0,0) -- (150:1cm) node [midway, above] {$a$};
        \node[above] () at (0,0) {$\alpha$};

        \node[] () at (2cm, -0.5cm) {and};
        
        \begin{scope}[xshift=4cm, yshift=-0.5cm]
        \draw[thick] (0,0) -- (90:1cm) node [midway, right] {$c$};
        \draw[thick] (0,0) -- (-30:1cm) node [midway, above] {$b$};
        \draw[thick] (0,0) -- (210:1cm) node [midway, above] {$a$};
        \node[below] () at (0,0) {$\alpha$};
        \end{scope}
    \end{tikzpicture},
\end{equation}
only for certain choices of $a, b,$ and $c$. (Here $\alpha\in \Hom{1}{a\times b\times c}$.) We will denote this rule in terms of $\delta_{ab}^c$, which is $1$ if the configuration is allowed and $0$ otherwise.

Our convention can be related to the more standard convention by a simple reorganization of degrees of freedom, using a depth-1 unitary circuit.  Essentially, each edge may be split into two half-edges, and then each vertex degree of freedom can absorb its adjacent half-edges; for each edge, one can apply an isometry that implements $|a\rangle \to |a\rangle |a\rangle$ for any $a\in \mathcal{C}$, and then the two copies of $|a\rangle$ can be absorbed into each vertex. At this point, the entire string-net Hilbert space is the tensor product product over vertices,
\begin{align}
\mathcal{H} = \bigotimes_v \mathcal{H}_v = \bigotimes_v \bigoplus_{a,b,c} \Hom{1}{a \otimes b \otimes c} 
\end{align}
At times we may restrict to the subspace of the Hilbert space with ``stable edge labelings,'' i.e.\ when distributing the above tensor product of direct sums, we restrict to terms where every pair of adjacent vertices associates same anyon type to their shared leg (when viewed incoming to one vertex and outgoing from the other).  Sometimes this subspace is referred to as the ``string-net Hilbert space,'' and it can be organized as 
 \begin{align} \label{eq:sn_space}
\mathcal{H}_0 = \bigoplus_{\{a_i\}} \bigotimes_v \Hom{1}{a_{v_1} \otimes a_{v_2} \otimes a_{v_1}} 
\end{align}
where the sum is over all assignments of anyon types $a_i$ to edges $e_i$, and $a_{v_1},a_{v_2},a_{v_3}$ are the anyon types assigned to the three incoming edges of $v$.

\subsection{Ground state}
\label{sec:ground_state}

The string-net ground space is a subspace of the string-net Hilbert space $\mathcal{H}_0$, whose input data are the $F$-symbols of a UFC [Section~\ref{sec:f-symbols}]. Let us again choose the manifold to be the plane, in which case the ground space is one-dimensional. Then the ground state wavefunction of the string-net model is defined in terms of the amplitudes assigned to the basis states in Eq.~\eqref{eq:sn_space}. The graphical rule that assigns these amplitudes are well-known; see Ref.~\cite{lin2021generalized} for example. 

Fortunately, the diagrammatic rules defining string-nets are identical to the diagrammatic rules we developed in Section \ref{subsec:diagrammatics}.  (In some references, such as Ref.~\cite{lin2021generalized}, there is a quantity denoted $Y_c^{a,b}$ that allows additional gauge freedom. The freedom can be fixed by choosing $Y_c^{a,b}=\sqrt{(d_ad_b)/d_c}$.  We make this choice for consistency with our Eq.~\eqref{eq:diagram_conventions}.)

The ground state is defined by the following constraints:
\begin{equation}
    Q_I |\psi\rangle = |\psi\rangle, \quad B_p^s|\psi\rangle = d_s|\psi\rangle,
\end{equation}
where $Q_I$ and $B_p^s$ are the vertex and the plaquette operators whose supports are shown in Figure~\ref{fig:string_net}.  The factor of $d_s$ is associated with the equation
\begin{equation}
    d_s  = 
    \begin{tikzpicture}[baseline={([yshift=-0.5ex]current bounding box.center)}, scale=0.5]
        \draw[thick] (0,0) -- ++ (-1, 1) -- ++ (1, 1) node [midway, above] {$s$};
        \draw[thick] (0,0) -- ++ (1, 1) -- ++ (-1, 1) node [midway, above] {$\bar{s}$};
    \end{tikzpicture}.
    \label{eq:bubble-pop}
\end{equation}
Note that bivalent vertices are trivalent vertices in disguise, wherein the edge corresponding to the vacuum sector is suppressed; hence Eqe.~\eqref{eq:bubble-pop} is consistent with Eq.~\eqref{eq:diagram_conventions}.

The operators $Q_I$ and $B_p^s$ can be described graphically as follows:
\begin{equation}
    Q_I\,
\begin{tikzpicture}[baseline={([yshift=-0.5ex]current bounding box.center)},scale=0.7]
        \draw[thick] (-1, -1.125) -- ++ (0,2.25);
        \draw[thick] (0,0) -- (270:1cm) node [midway, right] {$c$};
        \draw[thick] (0,0) -- (30:1cm) node [midway, above] {$b$};
        \draw[thick] (0,0) -- (150:1cm) node [midway, above] {$a$};
        \node[above] () at (0,0) {$\alpha$};
        \draw[thick] (1,-1) -- ++ (0.25, 1) -- ++ (-0.25, 1);
        \end{tikzpicture} = \delta_{ab}^c \,
        \begin{tikzpicture}[baseline={([yshift=-0.5ex]current bounding box.center)},scale=0.7]
        \draw[thick] (-1, -1.125) -- ++ (0,2.25);
        \draw[thick] (0,0) -- (270:1cm) node [midway, right] {$c$};
        \draw[thick] (0,0) -- (30:1cm) node [midway, above] {$b$};
        \draw[thick] (0,0) -- (150:1cm) node [midway, above] {$a$};
        \node[above] () at (0,0) {$\alpha$};
        \draw[thick] (1,-1) -- ++ (0.25, 1) -- ++ (-0.25, 1);
        \end{tikzpicture}, \quad \quad 
        Q_I\,
\begin{tikzpicture}[baseline={([yshift=-0.5ex]current bounding box.center)},scale=0.7]
        \draw[thick] (-1, -1.125) -- ++ (0,2.25);
        \draw[thick] (0,0) -- (90:1cm) node [midway, right] {$c$};
        \draw[thick] (0,0) -- (-30:1cm) node [midway, above] {$b$};
        \draw[thick] (0,0) -- (210:1cm) node [midway, above] {$a$};
        \node[below] () at (0,0) {$\alpha$};
        \draw[thick] (1,-1) -- ++ (0.25, 1) -- ++ (-0.25, 1);
        \end{tikzpicture} = \delta_{ab}^c\,
        \begin{tikzpicture}[baseline={([yshift=-0.5ex]current bounding box.center)},scale=0.7]
        \draw[thick] (-1, -1.125) -- ++ (0,2.25);
        \draw[thick] (0,0) -- (90:1cm) node [midway, right] {$c$};
        \draw[thick] (0,0) -- (-30:1cm) node [midway, above] {$b$};
        \draw[thick] (0,0) -- (210:1cm) node [midway, above] {$a$};
        \node[below] () at (0,0) {$\alpha$};
        \draw[thick] (1,-1) -- ++ (0.25, 1) -- ++ (-0.25, 1);
        \end{tikzpicture},
        \label{eq:vertex}
\end{equation}
\begin{equation}
      B_p^s \, \begin{tikzpicture}[baseline={([yshift=-0.5ex]current bounding box.center)},scale=0.7]
    \draw[thick] (2, -1.75)  -- ++ (0.5, 1.75) -- ++ (-0.5, 1.75);
    \draw[thick] (-2.125, -1.75) -- ++ (0, 3.5);
    \draw[thick] (30:1cm) -- (90:1cm) node[midway, above] {$i_1$};
    \draw[thick] (90:1cm) -- (150:1cm) node[midway, above] {$i_6$};
    \draw[thick] (150:1cm) -- (210:1cm) node[midway, left] {$i_5$};
    \draw[thick] (210:1cm) -- (270:1cm) node[midway, below] {$i_4$};
    \draw[thick] (270:1cm) -- (330:1cm) node[midway, below] {$i_3$};
    \draw[thick] (330:1cm) -- (30:1cm) node[midway, right] {$i_2$};

    \draw[thick] (30:1cm) -- (30:1.5cm) node[right] {$e_2$};
    \draw[thick] (90:1cm) -- (90:1.5cm) node[above] {$e_1$};
    \draw[thick] (150:1cm) -- (150:1.5cm) node[left] {$e_6$};
    \draw[thick] (210:1cm) -- (210:1.5cm) node[left] {$e_5$};
    \draw[thick] (270:1cm) -- (270:1.5cm) node[below] {$e_4$};
    \draw[thick] (330:1cm) -- (330:1.5cm) node[right] {$e_3$};

    \end{tikzpicture} \, =\, 
    \begin{tikzpicture}[baseline={([yshift=-0.5ex]current bounding box.center)},scale=0.7]
    \draw[thick] (2, -1.75)  -- ++ (0.5, 1.75) -- ++ (-0.5, 1.75);
    \draw[thick] (-2.125, -1.75) -- ++ (0, 3.5);
    \draw[thick] (30:1cm) -- (90:1cm) node[midway, above] {$i_1$};
    \draw[thick] (90:1cm) -- (150:1cm) node[midway, above] {$i_6$};
    \draw[thick] (150:1cm) -- (210:1cm) node[midway, left] {$i_5$};
    \draw[thick] (210:1cm) -- (270:1cm) node[midway, below] {$i_4$};
    \draw[thick] (270:1cm) -- (330:1cm) node[midway, below] {$i_3$};
    \draw[thick] (330:1cm) -- (30:1cm) node[midway, right] {$i_2$};

    \draw[thick] (30:1cm) -- (30:1.5cm) node[right] {$e_2$};
    \draw[thick] (90:1cm) -- (90:1.5cm) node[above] {$e_1$};
    \draw[thick] (150:1cm) -- (150:1.5cm) node[left] {$e_6$};
    \draw[thick] (210:1cm) -- (210:1.5cm) node[left] {$e_5$};
    \draw[thick] (270:1cm) -- (270:1.5cm) node[below] {$e_4$};
    \draw[thick] (330:1cm) -- (330:1.5cm) node[right] {$e_3$};

    \draw[red!70!white, thick] (30:0.8cm) -- (90:0.8cm) -- (150:0.8cm) -- (210:0.8cm) -- (270:0.8cm) -- (330:0.8cm) -- cycle;
    \node[red] () at (-0.5cm, 0)  {$s$};
    \node[red] () at (0.5cm, 0)  {$\bar{s}$};
    \end{tikzpicture},
    \label{eq:plaquette}
\end{equation}
where the effect of the insertion of the red line can be computed using the $F$-symbols; see ~\cite[Eq.(18)]{Hahn2020} and~\cite[Eq.(33)]{lin2021generalized}. (For the plaquette operator, we suppress the multiplicity labels for brevity.) We will discuss the explicit matrix element of the plaquette operator in Section~\ref{subsec:sn_matrix_element}. 

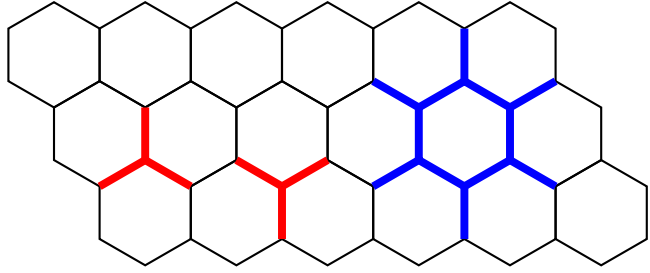
\begin{figure}[t]
    \centering
    \begin{tikzpicture}[scale=0.7]
    \foreach \x in {1, ..., 6} {
    \begin{scope}[xshift= \x* 1.732cm]
    \draw[thick] (30:1cm) -- (90:1cm) -- (150:1cm) -- (210:1cm) -- (270:1cm) -- (330:1cm) -- cycle;
    \end{scope}
    }
    \foreach \x in {1, ..., 6} {
    \begin{scope}[xshift= \x* 1.732cm - 0.866cm, yshift = 1.5cm]
    \draw[thick] (30:1cm) -- (90:1cm) -- (150:1cm) -- (210:1cm) -- (270:1cm) -- (330:1cm) -- cycle;
    \end{scope}
    }
    \foreach \x in {1, ..., 6} {
    \begin{scope}[xshift= (\x-1)* 1.732cm, yshift=3cm]
    \draw[thick] (30:1cm) -- (90:1cm) -- (150:1cm) -- (210:1cm) -- (270:1cm) -- (330:1cm) -- cycle;
    \end{scope}
    }
    \begin{scope}[xshift=1.732cm]
        \draw[red, line width=3pt] (30:1cm) -- (90:1cm);
        \draw[red, line width=3pt] (90:1cm) -- (150:1cm);
        \draw[red, line width=3pt] (90:1cm) --++ (0, 1cm);
    \end{scope}
    \begin{scope}[xshift=4.33cm, yshift=1.5cm]
        \draw[red, line width=3pt] (210:1cm) -- (270:1cm);
        \draw[red, line width=3pt] (270:1cm) -- (330:1cm);
        \draw[red, line width=3pt] (270:1cm) --++ (0, -1cm);
    \end{scope}
    \begin{scope}[xshift=7.794cm, yshift=1.5cm]
        \draw[blue, line width=3pt] (30:1cm) -- (90:1cm) -- (150:1cm) -- (210:1cm) -- (270:1cm) -- (330:1cm) -- cycle;
        \draw[blue, line width=3pt] (30:1cm) --++ (30:1cm);
        \draw[blue, line width=3pt] (90:1cm) --++ (90:1cm);
        \draw[blue, line width=3pt] (150:1cm) --++ (150:1cm);
        \draw[blue, line width=3pt] (210:1cm) --++ (210:1cm);
        \draw[blue, line width=3pt] (270:1cm) --++ (270:1cm);
        \draw[blue, line width=3pt] (330:1cm) --++ (330:1cm);
    \end{scope}
    \end{tikzpicture}
    \caption{Support of $Q_I$ (red) and $B_p^s$ (blue).}
    \label{fig:string_net}
\end{figure}

Note that $B_p^s$ is generally not Hermitian. However, one can define a Hermitian projection operator $B_p := \mathcal{D}^{-2} \sum_s d_s B_p^s $. Then the the following Hamiltonian is a parent Hamiltonian of the string-net wavefunction:
\begin{equation}
    H = -\sum_I Q_I - \sum_p B_p.
    \label{eq:string_net_ham}
\end{equation}
It should be clear that $\mathcal{H}_0$ already satisfies the vertex constraints $Q_I$ [Eq.~\eqref{eq:vertex}].

\subsection{Plaquette matrix element}
\label{subsec:sn_matrix_element}
Here we discuss the plaquette matrix element of the string-net model. The main goal of this Section is to recast Eq.~\eqref{eq:plaquette} in a form that can be identified with the matrix elements we derive later in Section~\ref{subsec:plaquette}. The string-net plaquette operator matrix element is  
\begin{equation}
\begin{aligned}
    B_p^s 
    \begin{tikzpicture}[baseline={([yshift=-0.5ex]current bounding box.center)},scale=0.8]
    \draw[thick] (-2.125, -1.75) -- ++ (0,3.5);

    \draw[thick] (30:1cm) -- (90:1cm) node[midway, above] {\footnotesize $i_1$};
    \draw[thick] (90:1cm) -- (150:1cm) node[midway, above] {\footnotesize $i_6$};
    \draw[thick] (150:1cm) -- (210:1cm) node[midway, left] {\footnotesize $i_5$};
    \draw[thick] (210:1cm) -- (270:1cm) node[midway, below] {\footnotesize $i_4$};
    \draw[thick] (270:1cm) -- (330:1cm) node[midway, below] {\footnotesize $i_3$};
    \draw[thick] (330:1cm) -- (30:1cm) node[midway, right] {\footnotesize $i_2$};

    \node[] () at (30:0.7cm) {\footnotesize $\alpha_2$};
    \node[] () at (-30:0.7cm) {\footnotesize $\alpha_3$};
    \node[] () at (-90:0.7cm) {\footnotesize $\alpha_4$};
    \node[] () at (-150:0.7cm) {\footnotesize $\alpha_5$};
    \node[] () at (-210:0.7cm) {\footnotesize $\alpha_6$};
    \node[] () at (-270:0.7cm) {\footnotesize $\alpha_1$};

    \draw[thick] (30:1cm) -- (30:1.5cm) node[right] {\footnotesize $e_2$};
    \draw[thick] (90:1cm) -- (90:1.5cm) node[above] {\footnotesize $e_1$};
    \draw[thick] (150:1cm) -- (150:1.5cm) node[left] {\footnotesize $e_6$};
    \draw[thick] (210:1cm) -- (210:1.5cm) node[left] {\footnotesize $e_5$};
    \draw[thick] (270:1cm) -- (270:1.5cm) node[below] {\footnotesize $e_4$};
    \draw[thick] (330:1cm) -- (330:1.5cm) node[right] {\footnotesize $e_3$};

    \draw[thick] (2, -1.75) -- ++ (0.5, 1.75) -- ++ (-0.5, 1.75);
    \end{tikzpicture} &=  
    \begin{tikzpicture}[baseline={([yshift=-0.5ex]current bounding box.center)},scale=0.8]
    \draw[thick] (-2.125, -1.75) -- ++ (0,3.5);
 \draw[thick] (30:1cm) -- (90:1cm) node[midway, above] {\footnotesize $i_1$};
    \draw[thick] (90:1cm) -- (150:1cm) node[midway, above] {\footnotesize $i_6$};
    \draw[thick] (150:1cm) -- (210:1cm) node[midway, left] {\footnotesize $i_5$};
    \draw[thick] (210:1cm) -- (270:1cm) node[midway, below] {\footnotesize $i_4$};
    \draw[thick] (270:1cm) -- (330:1cm) node[midway, below] {\footnotesize $i_3$};
    \draw[thick] (330:1cm) -- (30:1cm) node[midway, right] {\footnotesize $i_2$};

    \node[] () at (30:0.7cm) {\footnotesize $\alpha_2$};
    \node[] () at (-30:0.7cm) {\footnotesize $\alpha_3$};
    \node[] () at (-90:0.7cm) {\footnotesize $\alpha_4$};
    \node[] () at (-150:0.7cm) {\footnotesize $\alpha_5$};
    \node[] () at (-210:0.7cm) {\footnotesize $\alpha_6$};
    \node[] () at (-270:0.7cm) {\footnotesize $\alpha_1$};
    
    \draw[thick] (30:1cm) -- (30:1.5cm) node[right] {\footnotesize $e_2$};
    \draw[thick] (90:1cm) -- (90:1.5cm) node[above] {\footnotesize $e_1$};
    \draw[thick] (150:1cm) -- (150:1.5cm) node[left] {\footnotesize $e_6$};
    \draw[thick] (210:1cm) -- (210:1.5cm) node[left] {\footnotesize $e_5$};
    \draw[thick] (270:1cm) -- (270:1.5cm) node[below] {\footnotesize $e_4$};
    \draw[thick] (330:1cm) -- (330:1.5cm) node[right] {\footnotesize $e_3$};

    \draw[red!70!white, thick] (30:0.45cm) -- (90:0.45cm) -- (150:0.45cm) -- (210:0.45cm) -- (270:0.45cm) -- (330:0.45cm) -- cycle;
    \node[red] () at (-0.25cm, 0)  {\footnotesize $s$};
    \node[red] () at (0.25cm, 0)  {\footnotesize $\bar{s}$};

    \draw[thick] (2, -1.75) -- ++ (0.5, 1.75) -- ++ (-0.5, 1.75);
    \end{tikzpicture} \\
    &= 
    \sum_{\substack{\gamma_1,\ldots, \gamma_6 \\ i_1', \ldots, i_6'}} \left(\prod_{k=1}^6 \sqrt{\frac{d_{i_k}}{d_s d_{i_k'}}} \right)
        \begin{tikzpicture}[baseline={([yshift=-0.5ex]current bounding box.center)},scale=0.8]
   
    \draw[thick] (-2.125, -1.75) -- ++ (0,3.5);

    \draw[thick] (30:1cm) -- (90:1cm) node[midway, above] {\footnotesize $i_1'$};
    \draw[thick] (90:1cm) -- (150:1cm) node[midway, above] {\footnotesize $i_6'$};
    \draw[thick] (150:1cm) -- (210:1cm) node[midway, left] {\footnotesize $i_5'$};
    \draw[thick] (210:1cm) -- (270:1cm) node[midway, below] {\footnotesize $i_4'$};
    \draw[thick] (270:1cm) -- (330:1cm) node[midway, below] {\footnotesize $i_3'$};
    \draw[thick] (330:1cm) -- (30:1cm) node[midway, right] {\footnotesize $i_2'$};

    \draw[thick] (30:1cm) -- (30:1.5cm) node[right] {\footnotesize $e_2$};
    \draw[thick] (90:1cm) -- (90:1.5cm) node[above] {\footnotesize $e_1$};
    \draw[thick] (150:1cm) -- (150:1.5cm) node[left] {\footnotesize $e_6$};
    \draw[thick] (210:1cm) -- (210:1.5cm) node[left] {\footnotesize $e_5$};
    \draw[thick] (270:1cm) -- (270:1.5cm) node[below] {\footnotesize $e_4$};
    \draw[thick] (330:1cm) -- (330:1.5cm) node[right] {\footnotesize $e_3$};

    \draw[red, thick] (50:0.875cm) -- (10:0.875cm);
    \draw[red, thick] (110:0.875cm) -- (90:0.9cm) -- (70:0.875cm);
    \draw[red, thick] (170:0.875cm) -- (130:0.875cm);
    \draw[red, thick] (230:0.875cm) -- (190:0.875cm);
    \draw[red, thick] (290:0.875cm) -- (270:0.9cm) --(250:0.875cm);
    \draw[red, thick] (350:0.875cm) -- (310:0.875cm);

    \node[] () at (70:0.7cm) {\footnotesize \color{red} $\bar{s}$};
    \node[] () at (30:0.7cm) {\footnotesize \color{red} $\bar{s}$};    
    \node[] () at (-30:0.7cm) {\footnotesize \color{red} $\bar{s}$};
    \node[] () at (-70:0.7cm) {\footnotesize \color{red} $\bar{s}$};

    \node[] () at (110:0.7cm) {\footnotesize \color{red} $s$};
    \node[] () at (150:0.7cm) {\footnotesize \color{red} $s$};    
    \node[] () at (210:0.7cm) {\footnotesize \color{red} $s$};
    \node[] () at (-110:0.7cm) {\footnotesize \color{red} $s$};

    \draw[thick] (2, -1.75) -- ++ (0.5, 1.75) -- ++ (-0.5, 1.75);
    \end{tikzpicture} 
\end{aligned}
\label{eq:sn_plaquette_action}
\end{equation}
wherein the last line we suppressed the indices $i_1,\ldots, i_6$ and $\alpha_1, \ldots, \alpha_6$. Here $\gamma_1, \ldots ,\gamma_6$ are the multiplicity labels of the vertices touching the red edges, i.e., the vertices incident on the edges $i_1',\ldots i_6'$ and $s$ (or $\bar{s}$) string~\cite{Hahn2020,lin2021generalized}. 

It remains to explain what the second line of Eq.~\eqref{eq:sn_plaquette_action} means. Using the standard string-net identities, the diagrams with the additional red lines can be converted to diagrams without the red lines, by applying a sequence of moves related to the $F$-symbols [Section~\ref{sec:f-symbols}]. Such diagrammatic moves are well-studied; see Ref.~\cite{Hahn2020,lin2021generalized} for the detailed exposition. We shall simply refer the result of these diagrammatic moves as $b_1, \ldots, b_6$. They act separately on each vertex, in the following way:
\begin{equation}
\begin{aligned}
    &\begin{tikzpicture}[baseline={([yshift=-0.5ex]current bounding box.center)},line width=1pt, scale=1.2, yscale=-1]
        \draw[] (0,0) -- (30:1cm);
        \draw[] (0,0) -- (270:1cm);
        \draw[] (0,0) -- (150:1cm);
        \draw[red] (150:0.9cm) -- (0, 0.25cm) -- (30:0.9cm);
        \node[above] () at (150:0.9cm) {\footnotesize $\gamma_6$};
        \node[above] () at (30:0.9cm) {\footnotesize $\gamma_1$};
        \node[] () at (0.25,-0.125) {\footnotesize $\alpha_1$};
        \node[] () at (-0.25cm, 0.5cm) {\footnotesize \color{red} $s$};
        \node[] () at (0.25cm, 0.5cm) {\footnotesize \color{red} $\bar{s}$};
        \node[] () at (-0.5cm, 0.1cm) {\footnotesize $i_6$};
        \node[] () at (0.5cm, 0.1cm) {\footnotesize $i_1$};
        \node[above] () at (0, -1cm) {\footnotesize $e_1$};
        \node[below] () at (30:1cm) {\footnotesize $i_1'$};
        \node[below] () at (150:1cm) {\footnotesize $i_6'$};
    \end{tikzpicture}
    = \sum_{\alpha_1'} (b_{1})_{\alpha_1 \alpha_1'}
    \begin{tikzpicture}[baseline={([yshift=-0.5ex]current bounding box.center)},line width=1pt, scale=1.2, yscale=-1]
        \draw[] (0,0) -- (30:1cm);
        \draw[] (0,0) -- (270:1cm);
        \draw[] (0,0) -- (150:1cm);
        \node[] () at (0.25,-0.125) {\footnotesize $\alpha_1'$};
        \node[above] () at (0, -1cm) {\footnotesize $e_1$};
        \node[below] () at (30:1cm) {\footnotesize $i_1'$};
        \node[below] () at (150:1cm) {\footnotesize $i_6'$};
    \end{tikzpicture}, \quad 
    &\begin{tikzpicture}[baseline={([yshift=-0.5ex]current bounding box.center)},line width=1pt, scale=1.2, yscale=-1]
        \draw[] (0,0) -- (90:1cm);
        \draw[] (0,0) -- (-30:1cm);
        \draw[] (0,0) -- (210:1cm);
        \draw[red] (90:0.9cm) -- (210:0.9cm);
        \node[] () at (150:0.6cm) {\footnotesize {\color{red} $\bar{s}$}};
        \node[left] () at (0, 0.9cm) {\footnotesize $\gamma_2$};
        \node[] () at (190:0.8cm) {\footnotesize $\gamma_1$};
        \node[right] () at (90:0.5cm) {\footnotesize $i_2$};
        \node[above] () at (210:0.5cm) {\footnotesize $i_5$};
        \node[above] () at  (-30:1cm) {\footnotesize $e_2$};
        \node[above] () at (210:1cm) {\footnotesize $i_5'$};
        \node[below] () at (90:1cm) {\footnotesize $i_2'$};
        \node[above] () at (0,0) {\footnotesize $\alpha_2$};
    \end{tikzpicture}
    =\sum_{\alpha_2'} (b_2)_{\alpha_2 \alpha_2'}
    \begin{tikzpicture}[baseline={([yshift=-0.5ex]current bounding box.center)},line width=1pt, scale=1.2, yscale=-1]
        \draw[] (0,0) -- (90:1cm);
        \draw[] (0,0) -- (-30:1cm);
        \draw[] (0,0) -- (210:1cm);
        \node[above] () at  (-30:1cm) {\footnotesize $e_2$};
        \node[above] () at (210:1cm) {\footnotesize $i_5'$};
        \node[below] () at (90:1cm) {\footnotesize $i_2'$};
        \node[above] () at (0,0) {\footnotesize $\alpha_2'$};
    \end{tikzpicture}, \\
     &\begin{tikzpicture}[baseline={([yshift=-0.5ex]current bounding box.center)},line width=1pt, scale=1.2, yscale=-1]
        \draw[] (0,0) -- (30:1cm);
        \draw[] (0,0) -- (270:1cm);
        \draw[] (0,0) -- (150:1cm);
        \draw[red] (270:0.9cm) -- (150:0.9cm);
        \node[] () at (210:0.7cm) {\footnotesize {\color{red} $\bar{s}$}};
        \node[] () at (0.25,-0.125) {\footnotesize $\alpha_3$};
        \node[] () at (-0.3cm, 0cm) {\footnotesize $i_3$};
        \node[right] () at (0, -0.5cm) {\footnotesize  $i_2$};
        \node[] () at (170:0.8cm) {\footnotesize $\gamma_3$};
        \node[left] () at (270:0.9cm) {\footnotesize $\gamma_2$};
        \node[above] () at (270:1cm) {\footnotesize $i_2'$};
        \node[below] () at (30:1cm) {\footnotesize $e_3$};
        \node[below] () at (150:1cm) {\footnotesize $i_3'$};
    \end{tikzpicture}
    = \sum_{\alpha_3'} (b_3)_{\alpha_3 \alpha_3'}
    \begin{tikzpicture}[baseline={([yshift=-0.5ex]current bounding box.center)},line width=1pt, scale=1.2, yscale=-1]
        \draw[] (0,0) -- (30:1cm);
        \draw[] (0,0) -- (270:1cm);
        \draw[] (0,0) -- (150:1cm);
        \node[] () at (0.25,-0.125) {\footnotesize $\alpha_3'$};
        \node[above] () at (270:1cm) {\footnotesize $i_2'$};
        \node[below] () at (30:1cm) {\footnotesize $e_3$};
        \node[below] () at (150:1cm) {\footnotesize $i_3'$};
    \end{tikzpicture}, \quad
    &
    \begin{tikzpicture}[baseline={([yshift=-0.5ex]current bounding box.center)},line width=1pt, scale=1.2, yscale=-1]
        \draw[] (0,0) -- (90:1cm);
        \draw[] (0,0) -- (-30:1cm);
        \draw[] (0,0) -- (210:1cm);
        
        \draw[red] (210:0.9cm) -- (270:0.25cm) -- (330:0.9cm);
        \node[below] () at (210:0.9cm) {\footnotesize $\gamma_4$};
        \node[below] () at (330:0.9cm) {\footnotesize $\gamma_3$};
        \node[] () at (-0.25cm, -0.5cm) {\footnotesize \color{red} $s$};
        \node[] () at (0.25cm, -0.5cm) {\footnotesize \color{red} $\bar{s}$};
        \node[above] () at (-30:1cm) {\footnotesize $i_3'$};
        \node[above] () at (210:1cm) {\footnotesize $i_4'$};
        \node[below] () at (90:1cm) {\footnotesize $e_4$};
        \node[] () at  (-0.2cm, 0.125cm) {\footnotesize $\alpha_4$};
        \node[below] () at (-30:0.5cm) {\footnotesize $i_3$};
        \node[below] () at (210:0.5cm) {\footnotesize $i_4$};
    \end{tikzpicture}
    =
    \sum_{\alpha_4'} (b_4)_{\alpha_4 \alpha_4'}
    \begin{tikzpicture}[baseline={([yshift=-0.5ex]current bounding box.center)},line width=1pt, scale=1.2, yscale=-1]
        \draw[] (0,0) -- (90:1cm);
        \draw[] (0,0) -- (-30:1cm);
        \draw[] (0,0) -- (210:1cm);
        \node[above] () at (-30:1cm) {\footnotesize $i_3'$};
        \node[above] () at (210:1cm) {\footnotesize $i_4'$};
        \node[below] () at (90:1cm) {\footnotesize $e_4$};
        \node[above] () at  (0, 0) {\footnotesize $\alpha_4'$};
    \end{tikzpicture},\\
    &\begin{tikzpicture}[baseline={([yshift=-0.5ex]current bounding box.center)},line width=1pt, scale=1.2, yscale=-1]
        \draw[] (0,0) -- (30:1cm);
        \draw[] (0,0) -- (270:1cm);
        \draw[] (0,0) -- (150:1cm);
        \node[] () at (0.25,-0.125) {\footnotesize $\alpha_5$};
        \node[below] () at (150:1cm) {\footnotesize $e_5$};
        \node[] () at (0.3cm, 0.35cm) {\footnotesize $i_4$};
        \node[left] () at (0, -0.5cm) {\footnotesize $i_5$};
        \node[above] () at (0, -1cm) {\footnotesize $i_5'$};
        \draw[red] (270:0.9cm) -- (30:0.9cm);
        \node[] () at (330:0.6cm) {\footnotesize {\color{red} $s$}};

        \node[right] () at (270:0.9cm) {\footnotesize $\gamma_5$};
        \node[] () at (10:0.85cm) {\footnotesize $\gamma_4$};
        \node[below] () at (30:1cm) {\footnotesize $i_4'$};
    \end{tikzpicture}
    =\sum_{\alpha_5'} (b_5)_{\alpha_5 \alpha_5'}
    \begin{tikzpicture}[baseline={([yshift=-0.5ex]current bounding box.center)},line width=1pt, scale=1.2, yscale=-1]
        \draw[] (0,0) -- (30:1cm);
        \draw[] (0,0) -- (270:1cm);
        \draw[] (0,0) -- (150:1cm);
        \node[] () at (0.25,-0.125) {\footnotesize $\alpha_5'$};
        \node[below] () at (150:1cm) {\footnotesize $e_5$};
        \node[above] () at (0, -1cm) {\footnotesize $i_5'$};
        \node[below] () at (30:1cm) {\footnotesize $i_4'$};
    \end{tikzpicture},
    &\quad \begin{tikzpicture}[baseline={([yshift=-0.5ex]current bounding box.center)},line width=1pt, scale=1.2, yscale=-1]
        \node[] () at (0.4cm, -0.4cm) {\footnotesize $i_6$};
        \draw[] (0,0) -- (90:1cm);
        \draw[] (0,0) -- (-30:1cm);
        \draw[] (0,0) -- (210:1cm);
        \node[left] () at (90:0.5cm) {\footnotesize $i_5$};
        \draw[red] (-30:0.9cm) -- (90:0.9cm);
        \node[] () at (30:0.6cm) {\footnotesize {\color{red} $s$}};
        \node[left] () at (90:0.9cm) {\footnotesize $\gamma_5$};
        \node[] () at (-10:0.8cm) {\footnotesize $\gamma_6$};
        \node[below] () at (90:1cm) {\footnotesize $i_5'$};
        \node[above] () at (-30:1cm) {\footnotesize $i_6'$};
        \node[above] () at (210:1cm) {\footnotesize $e_6$};
        \node[above] () at (0,0) {\footnotesize $\alpha_6$};
    \end{tikzpicture}
    =\sum_{\alpha_6'} (b_6)_{\alpha_6 \alpha_6'}
    \begin{tikzpicture}[baseline={([yshift=-0.5ex]current bounding box.center)},line width=1pt, scale=1.2, yscale=-1]
        \draw[] (0,0) -- (90:1cm);
        \draw[] (0,0) -- (-30:1cm);
        \draw[] (0,0) -- (210:1cm);
        \node[below] () at (90:1cm) {\footnotesize $i_5'$};
        \node[above] () at (-30:1cm) {\footnotesize $i_6'$};
        \node[above] () at (210:1cm) {\footnotesize $e_6$};
        \node[above] () at (0,0) {\footnotesize $\alpha_6'$};
    \end{tikzpicture}.
\end{aligned}
\label{eq:plaquette_bs}
\end{equation}

To summarize, the action of $B_p^s$ on the string-net basis state can be written as follows:
\begin{equation}
     B_p^s 
    \begin{tikzpicture}[baseline={([yshift=-0.5ex]current bounding box.center)},scale=0.8]
    \draw[thick] (-2.125, -1.75) -- ++ (0,3.5);

    \draw[thick] (30:1cm) -- (90:1cm) node[midway, above] {\footnotesize $i_1$};
    \draw[thick] (90:1cm) -- (150:1cm) node[midway, above] {\footnotesize $i_6$};
    \draw[thick] (150:1cm) -- (210:1cm) node[midway, left] {\footnotesize $i_5$};
    \draw[thick] (210:1cm) -- (270:1cm) node[midway, below] {\footnotesize $i_4$};
    \draw[thick] (270:1cm) -- (330:1cm) node[midway, below] {\footnotesize $i_3$};
    \draw[thick] (330:1cm) -- (30:1cm) node[midway, right] {\footnotesize $i_2$};

    \node[] () at (30:0.7cm) {\footnotesize $\alpha_2$};
    \node[] () at (-30:0.7cm) {\footnotesize $\alpha_3$};
    \node[] () at (-90:0.7cm) {\footnotesize $\alpha_4$};
    \node[] () at (-150:0.7cm) {\footnotesize $\alpha_5$};
    \node[] () at (-210:0.7cm) {\footnotesize $\alpha_6$};
    \node[] () at (-270:0.7cm) {\footnotesize $\alpha_1$};

    \draw[thick] (30:1cm) -- (30:1.5cm) node[right] {\footnotesize $e_2$};
    \draw[thick] (90:1cm) -- (90:1.5cm) node[above] {\footnotesize $e_1$};
    \draw[thick] (150:1cm) -- (150:1.5cm) node[left] {\footnotesize $e_6$};
    \draw[thick] (210:1cm) -- (210:1.5cm) node[left] {\footnotesize $e_5$};
    \draw[thick] (270:1cm) -- (270:1.5cm) node[below] {\footnotesize $e_4$};
    \draw[thick] (330:1cm) -- (330:1.5cm) node[right] {\footnotesize $e_3$};

    \draw[thick] (2, -1.75) -- ++ (0.5, 1.75) -- ++ (-0.5, 1.75);
    \end{tikzpicture} = 
    \sum_{\substack{i_1', \ldots , i_6' \\ \alpha_1', \ldots \alpha_6'}} B_{p, \{ i_k\}, \{\alpha_k \}}^{\{i_k'\}, \{\alpha_k'\}}\left(\{e_k\}\right)
    \begin{tikzpicture}[baseline={([yshift=-0.5ex]current bounding box.center)},scale=0.8]
    \draw[thick] (-2.125, -1.75) -- ++ (0,3.5);

    \draw[thick] (30:1cm) -- (90:1cm) node[midway, above] {\footnotesize $i_1'$};
    \draw[thick] (90:1cm) -- (150:1cm) node[midway, above] {\footnotesize $i_6'$};
    \draw[thick] (150:1cm) -- (210:1cm) node[midway, left] {\footnotesize $i_5'$};
    \draw[thick] (210:1cm) -- (270:1cm) node[midway, below] {\footnotesize $i_4'$};
    \draw[thick] (270:1cm) -- (330:1cm) node[midway, below] {\footnotesize $i_3'$};
    \draw[thick] (330:1cm) -- (30:1cm) node[midway, right] {\footnotesize $i_2'$};

    \node[] () at (30:0.7cm) {\footnotesize $\alpha_2'$};
    \node[] () at (-30:0.7cm) {\footnotesize $\alpha_3'$};
    \node[] () at (-90:0.7cm) {\footnotesize $\alpha_4'$};
    \node[] () at (-150:0.7cm) {\footnotesize $\alpha_5'$};
    \node[] () at (-210:0.7cm) {\footnotesize $\alpha_6'$};
    \node[] () at (-270:0.7cm) {\footnotesize $\alpha_1'$};

    \draw[thick] (30:1cm) -- (30:1.5cm) node[right] {\footnotesize $e_2$};
    \draw[thick] (90:1cm) -- (90:1.5cm) node[above] {\footnotesize $e_1$};
    \draw[thick] (150:1cm) -- (150:1.5cm) node[left] {\footnotesize $e_6$};
    \draw[thick] (210:1cm) -- (210:1.5cm) node[left] {\footnotesize $e_5$};
    \draw[thick] (270:1cm) -- (270:1.5cm) node[below] {\footnotesize $e_4$};
    \draw[thick] (330:1cm) -- (330:1.5cm) node[right] {\footnotesize $e_3$};

    \draw[thick] (2, -1.75) -- ++ (0.5, 1.75) -- ++ (-0.5, 1.75);
    \end{tikzpicture},
\end{equation}
where
\begin{equation}
    B_{p, \{ i_k\}, \{\alpha_k \}}^{\{i_k'\}, \{\alpha_k'\}}\left(\{e_k\}\right) = \left(\prod_{k=1}^6 \sqrt{\frac{d_{i_k}}{d_s d_{i_k'}}} \right) \sum_{\gamma_1, \ldots, \gamma_6} \prod_{k=1}^6 (b_k)_{\alpha_k \alpha_k'}.
    \label{eq:b_definition}
\end{equation}
Note that $\{b_1, \ldots, b_6\}$ depends on $\{\gamma_1, \ldots, \gamma_6\}$ [Eq.~\eqref{eq:plaquette_bs}], though we suppressed the dependence for the sake of brevity.

\subsection{Local topological quantum order}
\label{subsec:ltqo}
A useful property of the string-net Hamiltonian is that it satisfies a \emph{local topological quantum order} condition. This is a condition introduced in Ref.~\cite{Michalakis2013}, which ensures gap stability.
\begin{definition}[Local topological quantum order]
\label{definition:ltqo}
    Let $A$ be a ball of radius $r$ and $A(\ell)$ be a set of sites that are distance at most $\ell$ away from $A$. Let 
    \begin{equation}
        c_{\ell}(O_A) := \frac{\text{Tr}(P_{A(\ell)} O_A)}{\text{Tr}(P_{A(\ell)})}
    \end{equation}
    for any bounded operator $O_A$ acting on $\mathcal{H}_A$, for any $r\leq L^*$. The local topological quantum order condition is satisfied if
    \begin{equation}
      P_{A(\ell)} O_A P_{A(\ell)} =  c_{\ell}(O_A) P_{A(\ell)}  \label{eq:ltqo_equation}
    \end{equation}
    for any $O_A$ for every $\ell >0$. 
\end{definition}
\noindent
Here $L^{*}$ is the cutoff parameter\footnote{For the string-net model on a plane, this can be taken to be infinity.} that depends on the details of the model and $P_{X}$ is the ground state projector of a Hamiltonian on region $X\subset \Lambda$. Specifically, we can consider a restricted Hamiltonian $H_{X}$ as a sum of local terms in $H$ that are supported strictly on $X$. Note that we modified the original definition in Ref.~\cite{Michalakis2013} slightly. In that version, Eq.~\eqref{eq:ltqo_equation} need not be satisfied exactly. 

What matters to us is that the string-net Hamiltonian satisfies the local topological quantum order condition [Definition~\ref{definition:ltqo}], verified in \cite[Theorem A]{jones2023local} and \cite{qiu2020ground}. We thus obtain the following corollary.

\begin{corollary}
    Let $A\subset \Lambda$ be a disk. For the string-net Hamiltonian Eq.~\eqref{eq:string_net_ham}, for any ground state $|\psi\rangle$ of $H_{A(1)}$, 
    \begin{equation}
        \text{Tr}_{A(1)\setminus A}\left( |\psi{\rangle}{\langle}\psi| \right) = \sigma_A,
    \end{equation}
    where $\sigma_A$ is the ground state reduced density matrix of the string-net Hamiltonian over $A$.
\end{corollary}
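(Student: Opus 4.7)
The plan is to apply the LTQO condition stated in Definition~\ref{definition:ltqo} with $\ell = 1$, combined with the observation that the global string-net ground state is itself a ground state of the restricted Hamiltonian $H_{A(1)}$.

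First, I would sandwich the LTQO identity $P_{A(1)} O_A P_{A(1)} = c_1(O_A)\, P_{A(1)}$ between $|\psi\rangle$ and $\langle\psi|$, for any operator $O_A$ supported on $A$. Since $|\psi\rangle$ is a ground state of $H_{A(1)}$, we have $P_{A(1)}|\psi\rangle = |\psi\rangle$, so
\begin{equation}
\langle\psi|O_A|\psi\rangle \;=\; \langle\psi|P_{A(1)} O_A P_{A(1)}|\psi\rangle \;=\; c_1(O_A).
\end{equation}
The key point is that $c_1(O_A)$ depends only on $O_A$ and on the projector $P_{A(1)}$, not on the choice of ground state $|\psi\rangle$. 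Since this holds for every bounded $O_A$ on $A$, the reduced density matrix $\tau_A := \Tr_{A(1)\setminus A}(|\psi\rangle\langle\psi|)$ is uniquely determined by LTQO; explicitly, $\tau_A = \Tr_{A(1)\setminus A}\bigl(P_{A(1)}/\Tr(P_{A(1)})\bigr)$, independent of $|\psi\rangle$.

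Next, I would identify this unique reduction with $\sigma_A$ by exhibiting a specific ground state of $H_{A(1)}$ whose reduction on $A$ is $\sigma_A$ by construction. The string-net Hamiltonian Eq.~\eqref{eq:string_net_ham} is a sum of commuting projectors (once $B_p$ is used in place of $B_p^s$), so the global ground state $|\text{GS}\rangle$ on the plane minimizes every local term, and in particular minimizes $H_{A(1)}$. Its reduction $\rho_{A(1)} := \Tr_{\overline{A(1)}}|\text{GS}\rangle\langle\text{GS}|$ is supported in the ground space of $H_{A(1)}$, and by linearity of the LTQO argument above (applied to any pure-state decomposition of $\rho_{A(1)}$) we obtain
\begin{equation}
\Tr_{A(1)\setminus A}(\rho_{A(1)}) \;=\; \tau_A.
\end{equation}
On the other hand, tracing $\rho_{A(1)}$ down to $A$ is the same as tracing the full global ground state down to $A$, which by definition is $\sigma_A$. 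Therefore $\tau_A = \sigma_A$, which is the desired conclusion.

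There is essentially no hard step here; the only minor subtlety is that $\rho_{A(1)}$ is generally mixed, so one must apply the LTQO consequence to each pure state in its spectral decomposition and use linearity, rather than applying it to $|\text{GS}\rangle$ directly. Everything else is a direct manipulation of Definition~\ref{definition:ltqo} with $\ell=1$, using the fact cited from \cite{jones2023local,qiu2020ground} that the string-net Hamiltonian satisfies LTQO.
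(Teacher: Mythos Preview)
Your proof is correct and is exactly the intended argument: the paper states the corollary without proof, as an immediate consequence of the LTQO property established for string-nets in \cite{jones2023local,qiu2020ground}, and your derivation is the standard way to extract it from Definition~\ref{definition:ltqo}. The only cosmetic point is that the spectral decomposition of $\rho_{A(1)}$ is not even necessary: since $P_{A(1)}\rho_{A(1)}P_{A(1)}=\rho_{A(1)}$, one can apply the LTQO identity directly to the mixed state to get $\Tr(O_A\rho_{A(1)})=c_1(O_A)$.
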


\section{Proof}
\label{sec:proof_main}

We prove Theorem \ref{thm:mapping-to-sn}, stating that states satisfying Definition \ref{def:gappable-boundary-EB} can be mapped to string-net ground states  [Section~\ref{sec:ground_state}]. More precisely, we establish a procedure that, for any disk, one can convert the interior of the disk to a string-net. 

For a high-level summary, please begin with Section \ref{sec:proof_summary} and Figure~\ref{fig:circuit-summary} especially. Again, our proof consists of three important steps. First, we apply a depth-$1$ circuit $U_1$ to the reference state $\sigma$, creating a new state $\sigma^{(1)}$ with holes that have gapped boundary, forming a fattened lattice. The local Hilbert space on the thick edges of the fat lattice can be decomposed into sectors corresponding to boundary anyon types. A depth-$1$ circuit $U_2$ applies disentanglers to these edge regions, conditional on their sector.  Then the state is a superposition of products over disentangled vertex states, mirroring the string-net Hilbert space.  Finally, a depth-$1$ circuit $U_3$ identifies a subspace of the physical Hilbert space (where the current state is supported) with the string-net Hilbert space $\mathcal{H}_0$ [Section~\ref{sec:sn-review}].  The vertex constraints of the string-net ground state are satisfied by construction.  The plaquette constraints are shown by developing a diagrammatic calculus for anyon operations in the physical Hilbert space that mirrors the string-net diagrammatics.

\subsection{Punching holes}
\label{subsec:punching_holes}

In this Section, we aim to convert the reference state $\sigma$ into a state with many holes; see Figure~\ref{fig:punching-holes} for an illustration. Each hole can be created by applying a unitary localized in the vicinity of the hole. If the holes are sufficiently far apart, these unitaries are supported on disjoint regions. As such, they can be applied in parallel.

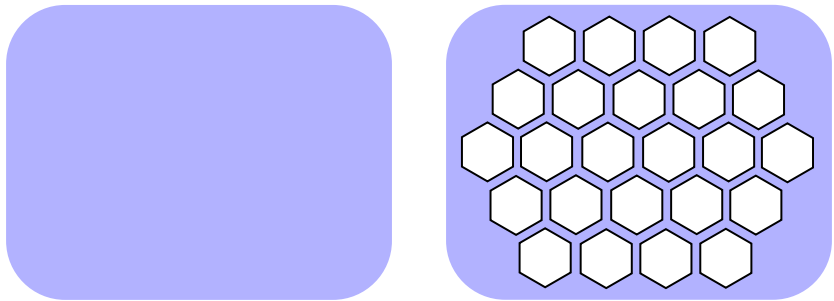
\begin{figure}[t]
        \centering
        \tikzset{every picture/.style={line width=0.75pt}} 

\begin{tikzpicture}[x=0.75pt,y=0.75pt,yscale=-1,xscale=1, scale=0.7]

\draw  [fill={rgb, 255:red, 178; green, 178; blue, 255 }  ,fill opacity=1 ] (353.32,767.09) .. controls (353.32,744.26) and (371.83,725.75) .. (394.66,725.75) -- (584.04,725.75) .. controls (606.87,725.75) and (625.39,744.26) .. (625.39,767.09) -- (625.39,891.13) .. controls (625.39,913.97) and (606.87,932.48) .. (584.04,932.48) -- (394.66,932.48) .. controls (371.83,932.48) and (353.32,913.97) .. (353.32,891.13) (556.33,773.7) -- (574.74,763.07) -- (574.74,741.81) -- (556.33,731.18) -- (537.92,741.81) -- (537.92,763.07) -- (556.33,773.7) -- cycle (512.76,773.19) -- (531.17,762.56) -- (531.17,741.3) -- (512.76,730.67) -- (494.35,741.3) -- (494.35,762.56) -- (512.76,773.19) -- cycle (469.57,773.44) -- (487.98,762.81) -- (487.98,741.55) -- (469.57,730.92) -- (451.16,741.55) -- (451.16,762.81) -- (469.57,773.44) -- cycle (426.26,773.51) -- (444.67,762.88) -- (444.67,741.62) -- (426.26,730.99) -- (407.85,741.62) -- (407.85,762.88) -- (426.26,773.51) -- cycle (403.92,811.72) -- (422.33,801.09) -- (422.33,779.83) -- (403.92,769.2) -- (385.51,779.83) -- (385.51,801.09) -- (403.92,811.72) -- cycle (381.77,849.75) -- (400.18,839.12) -- (400.18,817.86) -- (381.77,807.23) -- (363.36,817.86) -- (363.36,839.12) -- (381.77,849.75) -- cycle (402.37,888.34) -- (420.78,877.71) -- (420.78,856.45) -- (402.37,845.82) -- (383.96,856.45) -- (383.96,877.71) -- (402.37,888.34) -- cycle (423.32,926.1) -- (441.73,915.47) -- (441.73,894.21) -- (423.32,883.58) -- (404.91,894.21) -- (404.91,915.47) -- (423.32,926.1) -- cycle (510.32,926.72) -- (528.73,916.09) -- (528.73,894.83) -- (510.32,884.2) -- (491.9,894.83) -- (491.9,916.09) -- (510.32,926.72) -- cycle (553.5,926.46) -- (571.91,915.83) -- (571.91,894.57) -- (553.5,883.94) -- (535.09,894.57) -- (535.09,915.83) -- (553.5,926.46) -- cycle (447.17,811.85) -- (465.58,801.22) -- (465.58,779.96) -- (447.17,769.33) -- (428.76,779.96) -- (428.76,801.22) -- (447.17,811.85) -- cycle (490.98,812.05) -- (509.39,801.42) -- (509.39,780.16) -- (490.98,769.53) -- (472.57,780.16) -- (472.57,801.42) -- (490.98,812.05) -- cycle (534.17,811.8) -- (552.58,801.17) -- (552.58,779.91) -- (534.17,769.28) -- (515.76,779.91) -- (515.76,801.17) -- (534.17,811.8) -- cycle (468.51,849.85) -- (486.92,839.22) -- (486.92,817.96) -- (468.51,807.33) -- (450.09,817.96) -- (450.09,839.22) -- (468.51,849.85) -- cycle (424.34,849.67) -- (442.75,839.04) -- (442.75,817.78) -- (424.34,807.15) -- (405.93,817.78) -- (405.93,839.04) -- (424.34,849.67) -- cycle (467.32,926.72) -- (485.73,916.09) -- (485.73,894.83) -- (467.32,884.2) -- (448.9,894.83) -- (448.9,916.09) -- (467.32,926.72) -- cycle (575.11,888.14) -- (593.52,877.51) -- (593.52,856.25) -- (575.11,845.62) -- (556.7,856.25) -- (556.7,877.51) -- (575.11,888.14) -- cycle (577.03,812.23) -- (595.44,801.6) -- (595.44,780.34) -- (577.03,769.71) -- (558.62,780.34) -- (558.62,801.6) -- (577.03,812.23) -- cycle (597.9,850.41) -- (616.31,839.78) -- (616.31,818.52) -- (597.9,807.89) -- (579.49,818.52) -- (579.49,839.78) -- (597.9,850.41) -- cycle (445.51,887.85) -- (463.92,877.22) -- (463.92,855.96) -- (445.51,845.33) -- (427.09,855.96) -- (427.09,877.22) -- (445.51,887.85) -- cycle (489.32,888.05) -- (507.73,877.42) -- (507.73,856.16) -- (489.32,845.53) -- (470.9,856.16) -- (470.9,877.42) -- (489.32,888.05) -- cycle (532.5,887.8) -- (550.91,877.17) -- (550.91,855.91) -- (532.5,845.28) -- (514.09,855.91) -- (514.09,877.17) -- (532.5,887.8) -- cycle (512.32,850.05) -- (530.73,839.42) -- (530.73,818.16) -- (512.32,807.53) -- (493.9,818.16) -- (493.9,839.42) -- (512.32,850.05) -- cycle (555.5,849.8) -- (573.91,839.17) -- (573.91,817.91) -- (555.5,807.28) -- (537.09,817.91) -- (537.09,839.17) -- (555.5,849.8) -- cycle ;
\draw  [draw opacity=0][fill={rgb, 255:red, 178; green, 178; blue, 255 }  ,fill opacity=1 ] (394.47,722.45) -- (587.4,722.45) .. controls (610.89,722.45) and (629.93,741.5) .. (629.93,764.99) -- (629.93,892.59) .. controls (629.93,916.08) and (610.89,935.12) .. (587.4,935.12) -- (394.47,935.12) .. controls (370.98,935.12) and (351.93,916.08) .. (351.93,892.59) -- (351.93,764.99) .. controls (351.93,741.5) and (370.98,722.45) .. (394.47,722.45) -- cycle (361.93,767.79) -- (361.93,887.79) .. controls (361.93,909.88) and (379.84,927.79) .. (401.93,927.79) -- (583.27,927.79) .. controls (605.36,927.79) and (623.27,909.88) .. (623.27,887.79) -- (623.27,767.79) .. controls (623.27,745.7) and (605.36,727.79) .. (583.27,727.79) -- (401.93,727.79) .. controls (379.84,727.79) and (361.93,745.7) .. (361.93,767.79) -- cycle ;
\draw  [draw opacity=0][fill={rgb, 255:red, 178; green, 178; blue, 255 }  ,fill opacity=1 ] (77.47,722.54) -- (270.4,722.54) .. controls (293.89,722.54) and (312.93,741.58) .. (312.93,765.08) -- (312.93,892.68) .. controls (312.93,916.17) and (293.89,935.21) .. (270.4,935.21) -- (77.47,935.21) .. controls (53.98,935.21) and (34.93,916.17) .. (34.93,892.68) -- (34.93,765.08) .. controls (34.93,741.58) and (53.98,722.54) .. (77.47,722.54) -- cycle (44.93,767.88) -- (44.93,887.88) .. controls (44.93,909.97) and (62.84,927.88) .. (84.93,927.88) -- (266.27,927.88) .. controls (288.36,927.88) and (306.27,909.97) .. (306.27,887.88) -- (306.27,767.88) .. controls (306.27,745.78) and (288.36,727.88) .. (266.27,727.88) -- (84.93,727.88) .. controls (62.84,727.88) and (44.93,745.78) .. (44.93,767.88) -- cycle ;
\draw  [draw opacity=0][fill={rgb, 255:red, 178; green, 178; blue, 255 }  ,fill opacity=1 ] (40,766.58) .. controls (40,743.85) and (58.42,725.42) .. (81.15,725.42) -- (268.05,725.42) .. controls (290.78,725.42) and (309.2,743.85) .. (309.2,766.58) -- (309.2,890.03) .. controls (309.2,912.76) and (290.78,931.19) .. (268.05,931.19) -- (81.15,931.19) .. controls (58.42,931.19) and (40,912.76) .. (40,890.03) -- cycle ;

\end{tikzpicture}
    \caption{By applying a depth-$1$ circuit, we convert the reference state $\sigma$ (left) to a state $\sigma^{(1)}$ consisting of many holes (right).}
    \label{fig:punching-holes}
\end{figure}

We now discuss why such unitaries exist. To that end, let us first note the following fact.
\begin{lemma}
\label{lemma:unitary_map_decoupled}
    Suppose $\rho_{ABC}$ and $\sigma_{ABC}$ satisfy $\rho_{AB} = \sigma_{AB}$ and $\left(S(BC) + S(C) - S(B)\right)_{\rho} = \left(S(BC) + S(C) - S(B)\right)_{\sigma}=0$. There is a unitary $U_{BC}$ acting on $BC$ such that
    \begin{equation}
        \rho_{ABC} = U_{BC}\sigma U_{BC}^{\dagger}.
    \end{equation}
\end{lemma}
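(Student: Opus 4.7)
\textbf{Proof proposal for Lemma~\ref{lemma:unitary_map_decoupled}.}  My plan is to reduce the problem to a structure theorem on $\rho_{BC}$ and then construct $U_{BC}$ by composing two ``standardizing'' unitaries that rotate $C$ to a reference pure state.

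I would first observe that the identity $(S(BC)+S(C)-S(B))_\rho=0$ is a property of $\rho_{BC}$ alone, and by purifying $\rho_{BC}$ on an auxiliary system $R$ it becomes equivalent (using $S(B)=S(CR)$, $S(C)=S(BR)$, $S(BC)=S(R)$) to saturation of subadditivity, $S(CR)=S(C)+S(R)$, hence to the product decomposition $\rho_{CR}=\rho_C\otimes\rho_R$.  By Uhlmann's theorem, the purification $|\psi\rangle_{BCR}$ of $\rho_{CR}$ and the ``split'' purification $|\rho_C\rangle_{CB_R}\otimes|\rho_R\rangle_{RB_L}$ of the product $\rho_C\otimes\rho_R$ are related by a unitary on their shared purifying system $B$ (possibly after enlarging $B$ by an ancilla).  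This yields the structure theorem: there exist a decomposition $B=B_L\otimes B_R$ and a unitary $V^\rho_B$ on $B$ with
\begin{align*}
    (I_C\otimes V^\rho_B)\,\rho_{BC}\,(I_C\otimes V^\rho_B)^\dagger \;=\; |\phi^\rho\rangle\langle\phi^\rho|_{CB_R}\otimes \tau^\rho_{B_L},
\end{align*}
where $|\phi^\rho\rangle$ is a purification of $\rho_C$ on $CB_R$.

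Next I would rotate $|\phi^\rho\rangle$ to a fixed product $|0\rangle_C\otimes|0\rangle_{B_R}$ using a unitary on $CB_R\subset BC$.  The composition, call it $U^\rho_{BC}$, satisfies $U^\rho_{BC}\,\rho_{BC}\,(U^\rho_{BC})^\dagger=\tilde\rho_B\otimes |0\rangle\langle 0|_C$, so the reduced state on $C$ is pure.  Consequently, the full conjugated state $(I_A\otimes U^\rho_{BC})\rho_{ABC}(I_A\otimes U^\rho_{BC})^\dagger$ automatically factorizes globally as $\rho_{AB}\otimes |0\rangle\langle 0|_C$.  Building $U^\sigma_{BC}$ analogously for $\sigma$ and using $\rho_{AB}=\sigma_{AB}$ yields
\begin{align*}
    U^\rho_{BC}\,\rho_{ABC}\,(U^\rho_{BC})^\dagger \;=\; \rho_{AB}\otimes|0\rangle\langle 0|_C \;=\; U^\sigma_{BC}\,\sigma_{ABC}\,(U^\sigma_{BC})^\dagger,
\end{align*}
so $U_{BC}:=(U^\rho_{BC})^\dagger U^\sigma_{BC}$ is the desired unitary.

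The main obstacle is the structure theorem: carefully unpacking $\rho_{CR}=\rho_C\otimes\rho_R$ to extract a $B$-local purification of $\rho_C$, and verifying that $\dim B$ is large enough to support simultaneous purifications of both $\rho_C$ and $\rho_R$; if not, one must first enlarge $B$ with an ancilla, which is consistent with the paper's convention of permitting isometric extensions within its ``unitary'' circuits.  A secondary subtlety is that $\rho_C$ and $\sigma_C$ may differ (only $\rho_A=\sigma_A$ is forced by $\rho_{AB}=\sigma_{AB}$), but this causes no problem because each standardizing unitary rotates its respective $C$-marginal onto the same reference pure state $|0\rangle_C$.
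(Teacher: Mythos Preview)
Your argument has a genuine gap at the step ``Consequently, the full conjugated state $(I_A\otimes U^\rho_{BC})\rho_{ABC}(I_A\otimes U^\rho_{BC})^\dagger$ automatically factorizes globally as $\rho_{AB}\otimes|0\rangle\langle 0|_C$.'' The factorization into \emph{some} $\omega^\rho_{AB}\otimes|0\rangle\langle 0|_C$ is correct (purity of the $C$-marginal forces it), but the identification $\omega^\rho_{AB}=\rho_{AB}$ is not: $U^\rho_{BC}$ acts nontrivially on $B$, so tracing out $C$ after conjugation does not reproduce $\rho_{AB}$. Concretely, take $A,B,C$ qubits and $\rho_{ABC}=|0\rangle\!\langle 0|_A\otimes|\Phi^+\rangle\!\langle\Phi^+|_{BC}$ with $|\Phi^+\rangle=\tfrac{1}{\sqrt 2}(|00\rangle+|11\rangle)$. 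The entropy condition holds ($0+1-1=0$), your structure theorem gives $B_R=B$, $|\phi^\rho\rangle=|\Phi^+\rangle_{BC}$, and your standardizing unitary rotates $|\Phi^+\rangle_{BC}\mapsto|0\rangle_B|0\rangle_C$. The resulting $AB$-marginal is $|0\rangle\!\langle 0|_A\otimes|0\rangle\!\langle 0|_B$, while $\rho_{AB}=|0\rangle\!\langle 0|_A\otimes\tfrac12 I_B$.

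This is not a cosmetic slip: pair the $\rho$ above with $\sigma_{ABC}=|0\rangle\!\langle 0|_A\otimes\tfrac12 I_B\otimes|0\rangle\!\langle 0|_C$. One checks $\rho_{AB}=\sigma_{AB}$ and $(S(BC)+S(C)-S(B))_\sigma=1+0-1=0$, so all hypotheses of the lemma hold; yet $\rho_{BC}$ has rank $1$ and $\sigma_{BC}$ has rank $2$, so no unitary on $BC$ (even allowing the paper's ancilla conventions) can map one to the other. The paper's own proof glosses over exactly this point: it asserts a \emph{single} decomposition $B=B_LB_R$ in which both purifications $|\psi_\rho\rangle$ and $|\psi_\sigma\rangle$ factor as $(AB_LD)\otimes(B_RC)$, but in the example above $\rho$ forces $B_R=B$ while $\sigma$ forces $B_L=B$, so no common split exists. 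The lemma as stated thus appears to require an additional hypothesis (present implicitly in the application in Section~\ref{subsec:punching_holes}, where $\rho$ is built from $\sigma$ by a construction that pins down more than just $\rho_{AB}=\sigma_{AB}$), and neither your argument nor the paper's can be completed without it.
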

\begin{proof}
    Let $D$ be the purification and denote the purified states of $\rho$ and $\sigma$ as $|\psi_{\rho}\rangle$ and $|\psi_{\sigma}\rangle$, respectively. It follows that $I(A:CD)_{\rho} = I(A:CD)_{\sigma}=0$. By Uhlmann's theorem~\cite{Uhlmann1976}, there is a decomposition of  $B=B_LB_R$ such that
    \begin{equation}
        \begin{aligned}
            |\psi_{\rho}\rangle &= |\psi_{\rho,1}\rangle_{AB_LD} \otimes |\psi_{\rho,2}\rangle_{B_RC}, \\
            |\psi_{\sigma}\rangle &= |\psi_{\sigma,1}\rangle_{AB_LD} \otimes |\psi_{\sigma,2}\rangle_{B_RC}.
        \end{aligned}
    \end{equation}
    Because $\rho$ and $\sigma$ are consistent on $BC,$
    \begin{equation}
        \begin{aligned}
            \rho_{ABC} &=  \rho_{AB_L} \otimes |\psi_{\rho,1}\rangle_{B_R C}\langle \psi_{\rho, 1}|,\\
            \sigma_{ABC} &= \rho_{AB_L} \otimes |\psi_{\sigma,1}\rangle_{B_R C}\langle \psi_{\sigma, 1}|,
        \end{aligned}
    \end{equation}
    from which the claim follows immediately.
\end{proof}

At this step of the proof, the main idea is to identify two density matrices that satisfy the requisite conditions in Lemma~\ref{lemma:unitary_map_decoupled}. Specifically, we choose $\sigma$ as the reference state and $\rho$ as a state with a gapped boundary. We will focus on a partition $ABC$, where $B$ is an annulus in the bulk that surrounds a hole, $C$ is a region enclosed by $B$, and $A$ is the region lying outside of $B$.

Because of \textbf{A0} [Figure~\ref{fig:eb_axioms_bulk}], it should be obvious that $\sigma$ satisfies $\left(S(BC) + S(B) - S(C)\right)_{\sigma}=0$. However, the existence of $\rho$ that satisfies $\rho_{AB} = \sigma_{AB}$ is less obvious, because we did not necessarily assume the existence of such a state in Section~\ref{sec:setup-results}. 

Nevertheless, such a state does exist for the following reason. The main idea is to start with $\sigma$, take a partial trace on a disk in the interior, and build up the gapped boundary by a sequence of Petz maps. We now explain this procedure in more detail. Starting with $\sigma$, let us first trace out all of $C$ except an annulus adjacent to $B$. The resulting state is defined over $ABC'$ where $C'\subset C$, which we represent diagrammatically as follows 
\begin{equation}
    \sigma_{ABC'}=
    \begin{tikzpicture}[baseline={([yshift=-0.5ex]current bounding box.center)}, scale=1]
    \filldraw[blue!30!white] (30:1cm)-- (90:1cm) -- (150:1cm) -- (210:1cm) -- (270:1cm) -- (330:1cm) -- cycle;
    \draw[dashed, thick] (30:0.85cm)-- (90:0.85cm) -- (150:0.85cm) -- (210:0.85cm) -- (270:0.85cm) -- (330:0.85cm) -- cycle;
    \draw[dashed, thick] (30:0.7cm)-- (90:0.7cm) -- (150:0.7cm) -- (210:0.7cm) -- (270:0.7cm) -- (330:0.7cm) -- cycle;
    \draw[draw=none, fill=white] (30:0.6cm)-- (90:0.6cm) -- (150:0.6cm) -- (210:0.6cm) -- (270:0.6cm) -- (330:0.6cm) -- cycle;
    \end{tikzpicture}.
\end{equation}

Now we will introduce a procedure local to the inner boundary. Here we will briefly use notation $\sigma_{\partial}$ for the reference fragments with gapped boundary. Consider a local region in the vicinity of such boundary [Figure~\ref{fig:local_extension_gapped_boundary}]. Note that $I(C_1':ABC'\setminus (C_1'C_2')|C_2')_{\sigma}=0$ and $I(\delta_C:C_2'|C_1')_{\sigma_{\partial}}=0$ because of axiom \textbf{A1} and that $\sigma$ and $\sigma_{\partial}$ are consistent with each other over $C_1'C_2'$ [Section~\ref{sec:setup-results}]. Therefore, by the merging theorem [Theorem~\ref{thm:merging_theorem}], we can obtain state on $ABC'\delta C$ by applying a Petz map. Moreover, the resulting state is in the information convex set $\Sigma(ABC' \delta C)$.

\begin{figure}[ht]
    \centering
    \begin{tikzpicture}
        \filldraw[blue!30!white] (0,0) --++ (5, 0) --++ (0, 2) -- ++ (-5, 0) -- ++ (0, -2);
        \filldraw[blue!30!white] (2, 2) -- ++ (1, 0) -- ++ (0, 1) -- ++ (-1, 0) -- cycle;
        \draw[very thick] (2, 3) -- ++ (1,0);
        \draw[very thick, dashed] (2,2) -- ++ (1,0);
        \begin{scope}[xshift=2.5cm, yshift=2cm]
            \draw[very thick, dashed] (180:1cm) arc (180:360:1cm);
            \draw[very thick, dashed] (180:1.75cm) arc (180:360:1.75cm);
        \end{scope}
        \node[] () at (2.5, 2.5) {$\delta C$};
        \node[] () at (2.5, 1.5) {$C_1'$};
        \node[] () at (2.5, 0.625) {$C_2'$};
    \end{tikzpicture}
    \caption{Local extension of bulk reduced density matrix to the gapped boundary.}
    \label{fig:local_extension_gapped_boundary}
\end{figure}
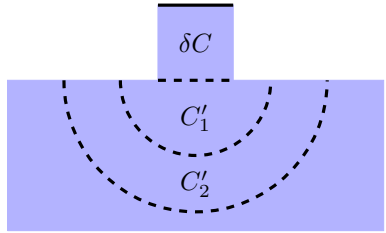

We can repeat this procedure in parallel by choosing disjoint sets of $C_1'C_2' \delta C$. Finally, using boundary \textbf{A0} [Figure~\ref{fig:eb_axioms_boundary}], one can deduce the existence of a unitary that completes the boundary. We will call the resulting state as $\rho_{ABC}$. By construction, $\rho_{BC} = \sigma_{BC}$. Moreover, $\rho_{ABC}$ is in the information convex set of the annulus region surrounding the gapped boundary. Moreover, because $\rho_{ABC}$ was obtained from $\sigma_{ABC'}$ --- an extreme point of $\Sigma(ABC')$ --- $\rho_{ABC}$ is also an extreme point [Theorem~\ref{thm:isomorphism_theorem}].\footnote{The isomorphism theorem guarantees that this map is bijective, and bijective linear maps preserve extreme points.} The factorization of extreme point [Eq.~\eqref{eq:factorization_extreme_point}] then implies $\left(S(BC) + S(B) - S(C)\right)_{\rho}=0$. Therefore, the requisite conditions in Lemma~\ref{lemma:unitary_map_decoupled} are satisfied.

We thus conclude that there is a unitary on $BC$ that converts $\sigma_{ABC}$ to $\rho_{ABC}$. Clearly, we can apply such unitaries to a set of disks  that are sufficiently far apart. Upon applying such unitaries, we are able to punch holes, yielding a state $\sigma^{(1)}$ with many gapped boundaries [Figure~\ref{fig:punching-holes}].

\subsection{Identifying the string-net Hilbert space}
\label{subsec:identifying_string_net}
While $\sigma$ may not be a priori pure, we can disentangle this state into two regions, the disk on which the disentangling unitaries in Section~\ref{subsec:punching_holes} were applied and the rest. Because of axiom \textbf{A0}, the disk can be disentangled from the rest by applying a unitary along the boundary of the disk. Using an argument akin to Lemma~\ref{lemma:unitary_map_decoupled}, the interior state is pure. Thus, without loss of generality, we can write $\sigma$ as a tensor product of a pure state that describes the disk and some other state which will not play a role in our analysis. We will refer to this state as $|\sigma\rangle$. Because $\sigma$ and $\sigma^{(1)}$ are related to each other by a unitary acting in the interior of the disk, $\sigma^{(1)}$ can be also disentangled in the same way. We will refer to the corresponding pure state on the disk as $|\sigma^{(1)}\rangle$. 

The state vector $|\sigma^{(1)}\rangle$ is locally indistinguishable from the reference state $\sigma$ in the bulk, and  $|\sigma^{(1)}\rangle$ is also locally indistinguishable, along its boundaries, from the reference fragments with gapped boundary. The set of states that satisfy the local indistinguishability condition can  be labeled in terms of the information convex set of the $N$-shaped subsystems [Figure~\ref{fig:IC_types_this_paper} (a)] , which will be chosen as the ``bridges'' between neighboring hexagons;  see Figure~\ref{fig:bridge_single}(a) for an example. Recall that for such subsystem $N$, there is a finite set of extreme points, each labeled by the simple objects in a UFC $\mathcal{C}$. Because these extreme points are orthogonal to each other, there is a set of orthogonal projections $\{P_a: a\in \mathcal{C} \}$ such that $P_a \rho_b = \delta_{ab} \rho_b$ for any extreme point of $\Sigma(N)$. Because the label set $\mathcal{C}$ includes \emph{all} extreme points of the information convex set, we deduce $\left(\sum_{a\in \mathcal{C}} P_a\right) |\sigma^{(1)}{\rangle}{\langle} \sigma^{(1)}| =  |\sigma^{(1)}{\rangle}{\langle} \sigma^{(1)}|$. Therefore, without loss of generality, we can write $|\sigma^{(1)}\rangle$ as a linear combination of states, each of which are labeled by the labels $a\in \mathcal{C}$.
\begin{equation}
    |\sigma^{(1)}\rangle = \sum_{a\in\mathcal{C}} |\sigma_a'\rangle,
\end{equation}
where $|\sigma_a'\rangle = P_a |\sigma^{(1)}\rangle$ and $P_a$ is associated to a single bridge.  Applying this decomposition on every bridge between neighboring hexagons, we obtain 
\begin{align}
    |\sigma^{(1)}\rangle = \sum_{\{a_e\}_e} \prod_e P_e |\sigma^{(1)}\rangle
\end{align}
where the sum runs over all assignments of sector labels $a_e \in \mathcal{C}$ to each edge $e$. So each term in the superposition is labeled by an assignment of (simple) objects in $\mathcal{C}$ to each edge.

\begin{figure}[ht]
    \centering
    \begin{tikzpicture}[scale=1.25]
    \begin{scope}[xshift=-2cm]
        \begin{scope}[yshift=1.5cm]
        \filldraw[blue!30!white] (30:1cm) -- (90:1cm) -- (150:1cm) -- (210:1cm) -- (270:1cm) -- (330:1cm) -- cycle;
        \begin{scope}[xshift=1.732cm]
        \filldraw[blue!30!white] (30:1cm) -- (90:1cm) -- (150:1cm) -- (210:1cm) -- (270:1cm) -- (330:1cm) -- cycle;
        \end{scope}
        \begin{scope}[xshift=3.464cm]
        \filldraw[blue!30!white] (30:1cm) -- (90:1cm) -- (150:1cm) -- (210:1cm) -- (270:1cm) -- (330:1cm) -- cycle;
        \draw[fill=white, thick] (30:0.8cm) -- (90:0.8cm) -- (150:0.8cm) -- (210:0.8cm) -- (270:0.8cm) -- (330:0.8cm) -- cycle;
        \end{scope}
        \begin{scope}[xshift=0.866cm, yshift=1.5cm]
        \filldraw[blue!30!white] (30:1cm) -- (90:1cm) -- (150:1cm) -- (210:1cm) -- (270:1cm) -- (330:1cm) -- cycle;
        \draw[fill=white, thick] (30:0.8cm) -- (90:0.8cm) -- (150:0.8cm) -- (210:0.8cm) -- (270:0.8cm) -- (330:0.8cm) -- cycle;
        \end{scope}
        \begin{scope}[xshift=2.598cm, yshift=1.5cm]
        \filldraw[blue!30!white] (30:1cm) -- (90:1cm) -- (150:1cm) -- (210:1cm) -- (270:1cm) -- (330:1cm) -- cycle;
        \draw[fill=white, thick] (30:0.8cm) -- (90:0.8cm) -- (150:0.8cm) -- (210:0.8cm) -- (270:0.8cm) -- (330:0.8cm) -- cycle;
        \end{scope}
        \begin{scope}[xshift=0.866cm, yshift=-1.5cm]
        \filldraw[blue!30!white] (30:1cm) -- (90:1cm) -- (150:1cm) -- (210:1cm) -- (270:1cm) -- (330:1cm) -- cycle;
        \draw[fill=white, thick] (30:0.8cm) -- (90:0.8cm) -- (150:0.8cm) -- (210:0.8cm) -- (270:0.8cm) -- (330:0.8cm) -- cycle;
        \end{scope}
        \begin{scope}[xshift=2.598cm, yshift=-1.5cm]
        \filldraw[blue!30!white] (30:1cm) -- (90:1cm) -- (150:1cm) -- (210:1cm) -- (270:1cm) -- (330:1cm) -- cycle;
        \draw[fill=white, thick] (30:0.8cm) -- (90:0.8cm) -- (150:0.8cm) -- (210:0.8cm) -- (270:0.8cm) -- (330:0.8cm) -- cycle;
        \end{scope}

        \filldraw[green!50!white] (0.6928, -0.25) -- ++ (0.3464, 0) -- ++ (0, 0.5) -- ++ (-0.3464, 0) -- cycle;
        \draw[dashed, thick] (0.6928, -0.25) -- ++ (0.3464, 0) -- ++ (0, 0.5) -- ++ (-0.3464, 0) -- cycle;
        \begin{scope}[xshift=1.732cm]
        \filldraw[green!50!white] (0.6928, -0.25) -- ++ (0.3464, 0) -- ++ (0, 0.5) -- ++ (-0.3464, 0) -- cycle;
        \draw[dashed, thick] (0.6928, -0.25) -- ++ (0.3464, 0) -- ++ (0, 0.5) -- ++ (-0.3464, 0) -- cycle;
        \end{scope}

        \begin{scope}[xshift=1.732cm]
        \draw[fill=yellow!50!white,rotate=60,shift={(0cm, 0)}, draw=none] (0.6928, -0.25) -- ++ (0.3464, 0) -- ++ (0, 0.5) -- ++ (-0.3464, 0) -- cycle;           
        \draw[dashed, thick, rotate=60,shift={(0cm, 0)}] (0.6928, -0.25) -- ++ (0.3464, 0) -- ++ (0, 0.5) -- ++ (-0.3464, 0) -- cycle;
        \draw[fill=yellow!50!white,rotate=120,shift={(0cm, 0)}, draw=none] (0.6928, -0.25) -- ++ (0.3464, 0) -- ++ (0, 0.5) -- ++ (-0.3464, 0) -- cycle;           
        \draw[dashed, thick, rotate=120,shift={(0cm, 0)}] (0.6928, -0.25) -- ++ (0.3464, 0) -- ++ (0, 0.5) -- ++ (-0.3464, 0) -- cycle;
        \draw[fill=yellow!50!white,rotate=240,shift={(0cm, 0)}, draw=none] (0.6928, -0.25) -- ++ (0.3464, 0) -- ++ (0, 0.5) -- ++ (-0.3464, 0) -- cycle;           
        \draw[dashed, thick, rotate=240,shift={(0cm, 0)}] (0.6928, -0.25) -- ++ (0.3464, 0) -- ++ (0, 0.5) -- ++ (-0.3464, 0) -- cycle;
        \draw[fill=yellow!50!white,rotate=300,shift={(0cm, 0)}, draw=none] (0.6928, -0.25) -- ++ (0.3464, 0) -- ++ (0, 0.5) -- ++ (-0.3464, 0) -- cycle;           
        \draw[dashed, thick, rotate=300,shift={(0cm, 0)}] (0.6928, -0.25) -- ++ (0.3464, 0) -- ++ (0, 0.5) -- ++ (-0.3464, 0) -- cycle;
        \end{scope}

        \draw[fill=white, thick] (30:0.8cm) -- (90:0.8cm) -- (150:0.8cm) -- (210:0.8cm) -- (270:0.8cm) -- (330:0.8cm) -- cycle;
        \begin{scope}[xshift=1.732cm]
        \draw[fill=white, thick] (30:0.8cm) -- (90:0.8cm) -- (150:0.8cm) -- (210:0.8cm) -- (270:0.8cm) -- (330:0.8cm) -- cycle;
        \end{scope}
        \begin{scope}[xshift=3.464cm]
        \draw[fill=white, thick] (30:0.8cm) -- (90:0.8cm) -- (150:0.8cm) -- (210:0.8cm) -- (270:0.8cm) -- (330:0.8cm) -- cycle;
        \end{scope}
        \begin{scope}[xshift=0.866cm, yshift=1.5cm]
        \draw[fill=white, thick] (30:0.8cm) -- (90:0.8cm) -- (150:0.8cm) -- (210:0.8cm) -- (270:0.8cm) -- (330:0.8cm) -- cycle;
        \end{scope}
        \begin{scope}[xshift=2.598cm, yshift=1.5cm]
        \draw[fill=white, thick] (30:0.8cm) -- (90:0.8cm) -- (150:0.8cm) -- (210:0.8cm) -- (270:0.8cm) -- (330:0.8cm) -- cycle;
        \end{scope}
        \begin{scope}[xshift=0.866cm, yshift=-1.5cm]
        \draw[fill=white, thick] (30:0.8cm) -- (90:0.8cm) -- (150:0.8cm) -- (210:0.8cm) -- (270:0.8cm) -- (330:0.8cm) -- cycle;
        \end{scope}
        \begin{scope}[xshift=2.598cm, yshift=-1.5cm]        \draw[fill=white, thick] (30:0.8cm) -- (90:0.8cm) -- (150:0.8cm) -- (210:0.8cm) -- (270:0.8cm) -- (330:0.8cm) -- cycle;
        \end{scope}

        \end{scope}
        \node[] () at (1.732, -1.5) {(a)};
    \end{scope}

        \begin{scope}[xshift=5cm]
        \filldraw[blue!30!white] (30:1cm) -- (90:1cm) -- (150:1cm) -- (210:1cm) -- (270:1cm) -- (330:1cm) -- cycle;
        \begin{scope}[xshift=1.732cm]
        \filldraw[blue!30!white] (30:1cm) -- (90:1cm) -- (150:1cm) -- (210:1cm) -- (270:1cm) -- (330:1cm) -- cycle;
        \end{scope}
        \begin{scope}[xshift=0.866cm, yshift=1.5cm]
        \filldraw[blue!30!white] (30:1cm) -- (90:1cm) -- (150:1cm) -- (210:1cm) -- (270:1cm) -- (330:1cm) -- cycle;
        \end{scope}

        \draw[fill=green!50!white, thick, dashed] (0,0) -- ++ (1.732, 0) -- ++ (-0.866, 1.5) -- cycle;

        \draw[fill=white, thick] (30:0.8cm) -- (90:0.8cm) -- (150:0.8cm) -- (210:0.8cm) -- (270:0.8cm) -- (330:0.8cm) -- cycle;
        \begin{scope}[xshift=1.732cm]
        \draw[fill=white, thick] (30:0.8cm) -- (90:0.8cm) -- (150:0.8cm) -- (210:0.8cm) -- (270:0.8cm) -- (330:0.8cm) -- cycle;
        \end{scope}
        \begin{scope}[xshift=0.866cm, yshift=1.5cm]
        \draw[fill=white, thick] (30:0.8cm) -- (90:0.8cm) -- (150:0.8cm) -- (210:0.8cm) -- (270:0.8cm) -- (330:0.8cm) -- cycle;
        \end{scope}
        \node[] () at (0.866, -1.5) {(b)};
        \end{scope}
    \end{tikzpicture}
    \caption{(a) There is a $N$-type subsystem (green, yellow) between two neighboring hexagons. Later we will distinguish vertical (green) and non-vertical (yellow) edge regions.
    (b) There is a $M$-type subsystem (green) surrounded by three hexagons.}
    \label{fig:bridge_single}
\end{figure}
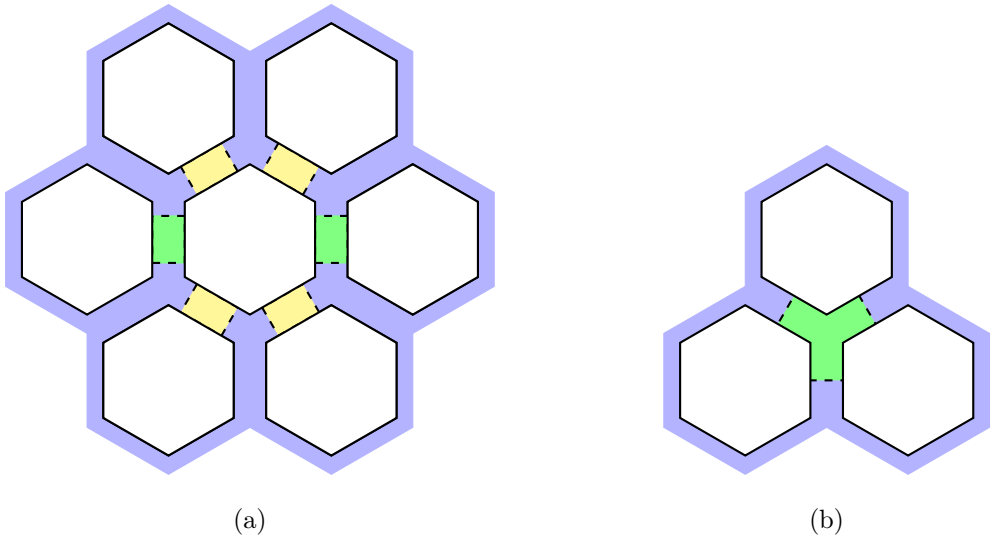

Every vector appearing in such linear combination can be decomposed further, by noting that there can be nontrivial multiplicites for the $M$-shaped subsystems [Figure~\ref{fig:bridge_single}(b)]. Without loss of generality, let the $N$-type sectors to be $a, b,$ and $c$. The remaining degrees of freedom are labeled by the multiplicities [Section~\ref{subsec:information_convex_set}].

Therefore, we can expand $|\sigma^{(1)}\rangle$ in terms of the basis vectors labeled by (i) sector defined over every $N$-type subsystems and (ii) the multiplicities on the $M$-type subsystems. Moreover, the multiplicities must lie in the fusion space, fixed by the adjacent $N$-type sectors. Note this basis set is tantalizingly similar to the string-net Hilbert space $\mathcal{H}_0$ [Section~\ref{sec:sn-review}]!

One important difference is that the string-net Hilbert space has a tensor product decomposition over the vertex degrees of freedom.  (Recall our convention for string-nets uses only vertex degrees of freedom, without additional edge degrees of freedom). On the other hand, the state $|\sigma^{(1)}\rangle$ is not yet a superposition of product states with respect to vertex degrees of freedom, at least not in a natural way.  We can achieve this by disentangling the edge regions, conditional on their sector type. Without loss of generality, consider an edge region labeled by $e$ and suppose the $N$-type sector is $a$. The region $e$ is shown in Figure~\ref{fig:edge-disentanglers-simple}(a).  We want to transform it to look like Figure~\ref{fig:edge-disentanglers-simple}(c).  By the factorization of the extreme points and Lemma~\ref{lemma:unitary_map_decoupled},  there is a unitary $V_a^{(e)}$ that achieves this while acting only in the region $C$.  The action of $V_a^{(e)}$ on (a purification of) the state in Figure~\ref{fig:edge-disentanglers-simple} is then determined determined up to a complex phase.  The precise choice of phase is important for ultimately producing a string-net and it will be specified in Section~\ref{subsec:plaquette} near Figure~\ref{fig:edge-disentanglers_action}.

 More generally,  we can apply the following unitary acting on an edge region $e$,
\begin{equation}
    V^{(e)} = \left(\sum_{a\in \mathcal{C}} V_a^{(e)} P_a^{(e)}\right) + (I - \sum_{a\in \mathcal{C}} P_a^{(e)}),
\end{equation}
which disentangles the edge region using $V_a^{(e)}$, conditional on the sector $a$. (The second term only serves to make it unitary.) 

Implementing this disentangling action at every edge,
\begin{align}
U_2 = \prod_e  V^{(e)}
\end{align}
we call the resulting unitary $U_2$ the \emph{edge disentangler}.  (Again, the precise phases desired to define $V_a^{(e)}$ and hence $U_2$ will be specified later in Section~\ref{subsec:plaquette}.)

\begin{figure}[htbp]
\label{fig:edge-disentanglers-simple}      
    \centering 
\begin{tikzpicture}
[x=0.75pt,y=0.75pt,yscale=-1,xscale=1]

\draw  [draw opacity=0][fill={rgb, 255:red, 178; green, 178; blue, 255 }  ,fill opacity=1 ] (194.89,85.59) -- (209.48,62.51) -- (238.77,79.57) -- (225.56,101.21) -- (223.43,103.95) -- (194.89,85.59) -- cycle ;
\draw  [draw opacity=0][fill={rgb, 255:red, 178; green, 178; blue, 255 }  ,fill opacity=1 ] (88.98,75.01) -- (118.57,91.73) -- (104.01,116.71) -- (75.6,100.92) -- (39.06,79.22) -- (52.81,55.64) -- (88.98,75.01) -- cycle ;
\draw    (39.06,79.22) -- (104.01,116.71) ;
\draw    (52.81,55.64) -- (118.57,91.73) ;
\draw  [draw opacity=0][fill={rgb, 255:red, 178; green, 178; blue, 255 }  ,fill opacity=1 ] (159.45,65.04) -- (173.94,41.9) -- (202.47,58.36) -- (186.43,82.23) -- (159.45,65.04) -- cycle ;
\draw    (159.45,65.04) -- (187.07,81.92) ;
\draw    (173.94,41.9) -- (202.81,58.91) ;
\draw    (187.07,81.92) -- (202.81,58.91) ;
\draw    (194.48,86.01) -- (223.43,103.95) ;
\draw    (209.48,62.51) -- (238.7,80.22) ;
\draw    (194.48,86.01) -- (209.48,62.51) ;
\draw  [draw opacity=0][fill={rgb, 255:red, 208; green, 2; blue, 27 }  ,fill opacity=1 ] (189.8,54.15) .. controls (189.8,56.23) and (191.57,57.92) .. (193.76,57.92) .. controls (195.95,57.92) and (197.73,56.23) .. (197.73,54.15) .. controls (197.73,52.06) and (195.95,50.37) .. (193.76,50.37) .. controls (191.57,50.37) and (189.8,52.06) .. (189.8,54.15) -- cycle ;
\draw  [draw opacity=0][fill={rgb, 255:red, 208; green, 2; blue, 27 }  ,fill opacity=1 ] (215.61,64.89) .. controls (213.76,65.85) and (213.08,68.2) .. (214.09,70.14) .. controls (215.1,72.08) and (217.42,72.88) .. (219.27,71.92) .. controls (221.11,70.95) and (221.79,68.6) .. (220.78,66.66) .. controls (219.77,64.72) and (217.46,63.92) .. (215.61,64.89) -- cycle ;
\draw  [dash pattern={on 1.5pt off 1.5pt}]  (69.9,63.9) -- (55.2,88.43) ;
\draw  [dash pattern={on 1.5pt off 1.5pt}]  (101.7,82) -- (87.06,106.86) ;
\draw  [dash pattern={on 1.5pt off 1.5pt}]  (187.4,49.4) -- (172.7,73.93) ;
\draw  [dash pattern={on 1.5pt off 1.5pt}]  (226.7,72) -- (212.06,96.86) ;
\draw   (174.2,77.93) .. controls (171.59,81.79) and (172.21,85.03) .. (176.07,87.65) -- (178.25,89.12) .. controls (183.77,92.85) and (185.22,96.65) .. (182.6,100.52) .. controls (185.22,96.65) and (189.29,96.59) .. (194.81,100.33)(192.32,98.64) -- (196.98,101.8) .. controls (200.85,104.41) and (204.09,103.79) .. (206.7,99.92) ;

\draw (191.67,34.03) node [anchor=north west][inner sep=0.75pt]    {$a$};
\draw (220.67,49.03) node [anchor=north west][inner sep=0.75pt]    {$\overline{a}$};
\draw (19.67,14.26) node [anchor=north west][inner sep=0.75pt]    {$a)$};
\draw (132.67,11.2) node [anchor=north west][inner sep=0.75pt]    {$b)$};
\draw (33.33,87.54) node [anchor=north west][inner sep=0.75pt]  [font=\small]  {$B_{1}$};
\draw (83.93,113.84) node [anchor=north west][inner sep=0.75pt]  [font=\small]  {$B_{2}$};
\draw (59.93,100.34) node [anchor=north west][inner sep=0.75pt]  [font=\small]  {$C$};
\draw (152.33,75.04) node [anchor=north west][inner sep=0.75pt]  [font=\small]  {$B_{1}$};
\draw (205.93,105.34) node [anchor=north west][inner sep=0.75pt]  [font=\small]  {$B_{2}$};
\draw (169.93,102.84) node [anchor=north west][inner sep=0.75pt]  [font=\small]  {$C$};
\end{tikzpicture}
\caption{We illustrate the action of the edge disentangler $U_2$ for a state with the $N$-type sector $a$. The state in (a) is in sector $a$ of its information convex set.  The state in (b) is a product between the upper-left and lower-right pieces, hence we say the edge region is ``factorized'' or ``disentangled.'' The disentangling unitary maps state (a) to state (c), essentially by replacing region $C$ of (a) with region $C$ of (b).}
\end{figure}
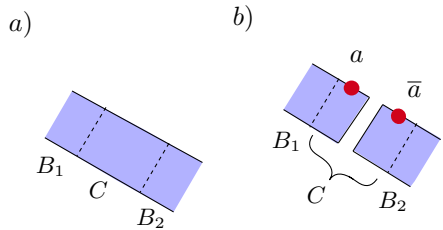

After applying $U_2$, we obtain a new state $\sigma^{(2)}$. Again using the same purification, we get
\begin{equation}
    |\sigma^{(2)}\rangle = U_2 |\sigma^{(1)}\rangle.
\end{equation}
The state $|\sigma^{(2)}\rangle$ can be written as a superposition of a particular basis set. This basis set consists of a tensor product of pure states that represent a disk-like fragment with a gapped boundary, each hosting at most three boundary anyons (corresponding to the $N$-type sector) and their multiplicities (corresponding to the $M$-type sector, once the boundary anyon types are fixed) [Figure~\ref{fig:vertex-states_explanation} and~\ref{fig:bridges}]. Once we fix the anyon sectors as $a, b,$ and $c$, we obtain a Hilbert space $S_v^{a,b,c}\cong \mathbb{V}_{ab}^c$ associated with the vertex $v$. The resulting global Hilbert space is 
\begin{align}
    \mathcal{H}_0' = \bigoplus_{\{a_v\}, \{b_v \}, \{c_v \}} \bigotimes_v S^{a_v,b_v,c_v}_v, \label{eq:string_net_basis_from_eb}
\end{align} 
with an additional constraint that two neighboring sectors connected by an edge must be identical. This is isomorphic to the ``string-net Hilbert space'' $\mathcal{H}_0$ of Eq.~\eqref{eq:sn_space}, described in Section \ref{sec:sn-review}. Therefore, there is an isometry from the string-net Hilbert space  to $\mathcal{H}_0'$, which is clearly a depth-$1$ circuit. After introducing extra ancillary degrees of freedom to turn the isometry into a unitary, we shall refer to the inverse of this unitary as $U_3$. This unitary implements 
\begin{equation}
    U_3|\sigma^{(2)}\rangle = |\sigma^{(3)}\rangle |\phi\rangle, 
    \label{eq:sigma2_to_sigma3}
\end{equation}
where $|\sigma^{(3)}\rangle$ is a state living in the string-net Hilbert space $\mathcal{H}_0$ and $|\phi\rangle$ is a product state.

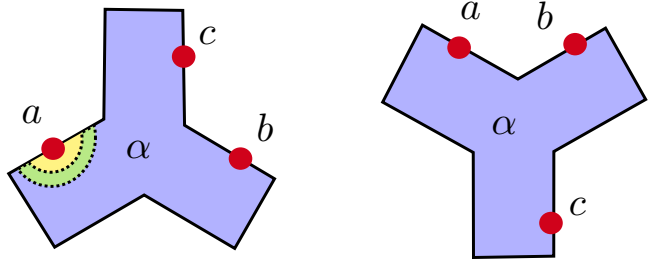
\begin{figure}[htbp]
    \centering
    \scalebox{1.5}{
     \tikzset{every picture/.style={line width=0.75pt}} 

\begin{tikzpicture}[x=0.75pt,y=0.75pt,yscale=-1,xscale=1]

\draw  [fill={rgb, 255:red, 178; green, 178; blue, 255 }  ,fill opacity=1 ] (58.25,14.61) -- (84.36,14.98) -- (84.95,53.63) -- (115.21,71.64) -- (101.82,95.07) -- (71.38,77.1) -- (41.28,94.65) -- (25.83,69.91) -- (57.81,51.58) -- cycle ;
\draw  [draw opacity=0][fill={rgb, 255:red, 255; green, 243; blue, 131 }  ,fill opacity=1 ] (37.63,71.5) .. controls (34.41,71.5) and (31.41,70.38) .. (28.87,68.45) -- (54,54.05) .. controls (52.96,63.89) and (46.03,71.5) .. (37.63,71.5) -- cycle ;
\draw  [fill={rgb, 255:red, 178; green, 178; blue, 255 }  ,fill opacity=1 ] (226.5,20.95) -- (240,44.95) -- (209.12,62.46) -- (209.09,97.68) -- (182.1,98.13) -- (182.02,62.78) -- (151.56,45.85) -- (164.94,19.94) -- (197.04,38.08) -- cycle ;
\draw  [draw opacity=0][fill={rgb, 255:red, 208; green, 2; blue, 27 }  ,fill opacity=1 ] (36.8,61.51) .. controls (36.8,63.6) and (38.57,65.29) .. (40.76,65.29) .. controls (42.95,65.29) and (44.73,63.6) .. (44.73,61.51) .. controls (44.73,59.43) and (42.95,57.74) .. (40.76,57.74) .. controls (38.57,57.74) and (36.8,59.43) .. (36.8,61.51) -- cycle ;
\draw  [draw opacity=0][fill={rgb, 255:red, 208; green, 2; blue, 27 }  ,fill opacity=1 ] (80.69,30.53) .. controls (80.69,32.61) and (82.47,34.3) .. (84.66,34.3) .. controls (86.84,34.3) and (88.62,32.61) .. (88.62,30.53) .. controls (88.62,28.44) and (86.84,26.75) .. (84.66,26.75) .. controls (82.47,26.75) and (80.69,28.44) .. (80.69,30.53) -- cycle ;
\draw  [draw opacity=0][fill={rgb, 255:red, 208; green, 2; blue, 27 }  ,fill opacity=1 ] (99.8,64.85) .. controls (99.8,66.93) and (101.57,68.62) .. (103.76,68.62) .. controls (105.95,68.62) and (107.73,66.93) .. (107.73,64.85) .. controls (107.73,62.76) and (105.95,61.07) .. (103.76,61.07) .. controls (101.57,61.07) and (99.8,62.76) .. (99.8,64.85) -- cycle ;
\draw  [draw opacity=0][fill={rgb, 255:red, 208; green, 2; blue, 27 }  ,fill opacity=1 ] (173.3,27.51) .. controls (173.3,29.6) and (175.07,31.29) .. (177.26,31.29) .. controls (179.45,31.29) and (181.23,29.6) .. (181.23,27.51) .. controls (181.23,25.43) and (179.45,23.74) .. (177.26,23.74) .. controls (175.07,23.74) and (173.3,25.43) .. (173.3,27.51) -- cycle ;
\draw  [draw opacity=0][fill={rgb, 255:red, 208; green, 2; blue, 27 }  ,fill opacity=1 ] (212.3,26.35) .. controls (212.3,28.43) and (214.07,30.12) .. (216.26,30.12) .. controls (218.45,30.12) and (220.23,28.43) .. (220.23,26.35) .. controls (220.23,24.26) and (218.45,22.57) .. (216.26,22.57) .. controls (214.07,22.57) and (212.3,24.26) .. (212.3,26.35) -- cycle ;
\draw  [draw opacity=0][fill={rgb, 255:red, 208; green, 2; blue, 27 }  ,fill opacity=1 ] (204.19,86.53) .. controls (204.19,88.61) and (205.97,90.3) .. (208.16,90.3) .. controls (210.34,90.3) and (212.12,88.61) .. (212.12,86.53) .. controls (212.12,84.44) and (210.34,82.75) .. (208.16,82.75) .. controls (205.97,82.75) and (204.19,84.44) .. (204.19,86.53) -- cycle ;
\draw  [fill={rgb, 255:red, 184; green, 233; blue, 134 }  ,fill opacity=1 ][dash pattern={on 0.75pt off 0.75pt}] (55,57.58) .. controls (55,66.76) and (48.34,74.2) .. (40.13,74.2) .. controls (35.63,74.2) and (31.6,71.98) .. (28.87,68.45) -- (32.64,66.3) .. controls (34.53,68.25) and (37.13,69.45) .. (40,69.45) .. controls (45.8,69.45) and (50.5,64.53) .. (50.5,58.45) .. controls (50.5,57.67) and (50.42,56.92) .. (50.28,56.18) -- (54.59,53.71) .. controls (54.86,54.95) and (55,56.25) .. (55,57.58) -- cycle ;

\draw (24.67+4,39.67+6) node [anchor=north west][inner sep=0.75pt]    {$a$};
\draw (107.67,44.33+4) node [anchor=north west][inner sep=0.75pt]    {$b$};
\draw (88.17,15+4) node [anchor=north west][inner sep=0.75pt]    {$c$};
\draw (63.67,49.33+8) node [anchor=north west][inner sep=0.75pt]    {$\alpha $};
\draw (176.17,2.5+8) node [anchor=north west][inner sep=0.75pt]    {$a$};
\draw (201.67,4.5+6) node [anchor=north west][inner sep=0.75pt]    {$b$};
\draw (212.67,72.5+4) node [anchor=north west][inner sep=0.75pt]    {$c$};
\draw (186.67,41.33+8) node [anchor=north west][inner sep=0.75pt]    {$\alpha $};

\end{tikzpicture}}
    \caption{For each vertex $v$, for each choice of anyons $a,b,c$ with $N_{ab}^c>0$, and each multiplicity label $\alpha$ labeling a basis of $\mathbb{V}^c_{ab}$, we define a corresponding pure state $|\psi_v^{a,b,c;\alpha}\rangle_v$ on the associated vertex region. The vertex regions come in two shapes (pictured left and right). The state $|\psi_v^{a,b,c;\alpha}\rangle_v$ has anyons $a,b,c$ located at the red dots, meaning e.g.\ the green half-annulus surrounding $a$ is in the sector of the information convex corresponding to $a$. The yellow region with the anyon $a$ at location $x$ must match the canonical purification associated with $a$, discussed in Section~\ref{subsec:basic_operations}.}
    
    \label{fig:vertex-states_explanation}
\end{figure}

\begin{figure}[htbp]
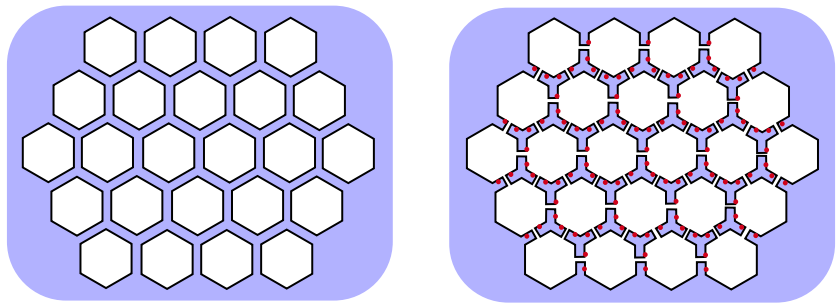

        \centering
        \include{figs/disentangle_step2}
    \caption{By applying a depth-$1$ circuit, we can disentangle $\sigma^{(1)}$ (a) to a state $\sigma^{(2)}$ which is a superposition of vertex fragments (b) with boundary anyons (red).}
    \label{fig:bridges}
\end{figure}

To summarize, we constructed three unitaries, $U_1,$ $U_2$, and $U_3$ such that if we apply $U_3U_2U_1$ on the reference state, we obtain a state $|\sigma^{(3)}\rangle$ in the string-net Hilbert space $\mathcal{H}_0$ and some ancillary degrees of freedom in a product state. By construction, the resulting state $|\sigma^{(3)}\rangle$ satisfies the vertex constraint $Q_I$; see Section~\ref{sec:ground_state}. Now the remaining question is whether the plaquette constraint is satisfied on the state we obtained. 

\subsection{Choosing string operators and basis states}
\label{subsec:choosing-phases}

To check the state constructed in the previous Section is indeed a string-net, we will verify that it is the ground state of the Levin-Wen Hamiltonian.  
Before we delve into these details, let us discuss an important subtlety we will need to resolve first. The subtlety comes from the fact that we currently do not have any single ``reference frame'' for the string operators. For instance, consider the two string operators that act on two disjoint gapped boundaries. These two operators are a priori unrelated to each other because the basic operations [Section~\ref{subsec:basic_operations}] that make up these two operators are independent. In particular, the $F$-symbols defined on these two disjoint boundaries will be generally different from one another. In contrast, in the string-net Hamiltonian, the choice of $F$-symbol at different locations are defined consistently. To derive a string-net Hamiltonian, we must devise a convention that defines all the string operators in a consistent manner. 

Our approach to this problem is to choose a single disk from which all the string operators can be constructed in a consistent way. To that end, consider a sprawling disk-like region with boundary in Fig.~\ref{fig:merge-spanning-tree}(b).  A state on this region can be built from elementary fragments in Figure~\ref{fig:merge-spanning-tree}(a) and the edge fragments using the process described in Section~\ref{subsec:punching_holes}. One can view this region as a \emph{spanning tree} of a finite disk-like subset of the hexagonal lattice [Figure~\ref{fig:string_net}], which is then fattened.\footnote{A spanning tree of a graph is a subgraph containing every vertex and no loops.} Because the tree has no loops, the fattened tree is a disk-like 2D region. We therefore refer to this region as the (fattened) tree. If we consider the boundary of the fattened tree and excise a point (black `x' in Fig.~\ref{fig:merge-spanning-tree}(b)), we are left with a boundary interval. Now we can define the boundary anyon operations (movement, splitting, and fusion operations) just as described in Section \ref{sec:boundary_anyons}.

\begin{figure}[htbp]
    \centering
    \scalebox{1.3}{
     \input{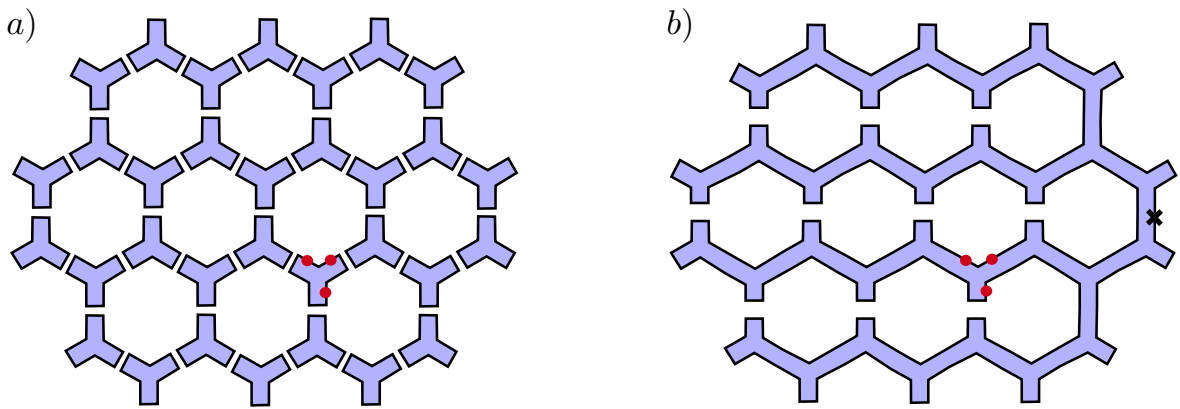}}
    \caption{ The states on the vertex fragments in (a) are merged to create the sprawling disk-like region in (b), using the method from Section~\ref{subsec:punching_holes}. The disk-like region in (b) is called the ``(fat) tree.'' It is used to define boundary anyon operations. The black `x' marks an excised point on the boundary.}
    \label{fig:merge-spanning-tree}
\end{figure}

\begin{figure}[htbp]
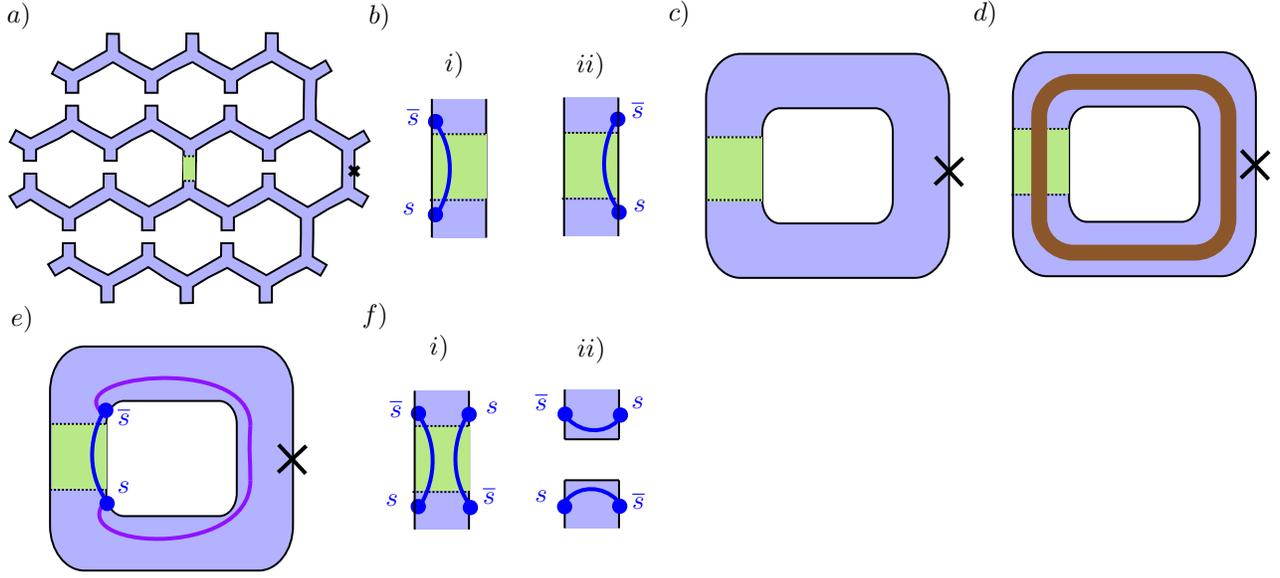

    \centering
    \include{figs/spanning_tree_with_a_junction}
    \caption{Subsystems used in the definition of $S_L$ and $S_R$, the blue string operators in (b)(i) and (b)(ii), respectively.
    \textbf{(a)} A union of the spanning tree and the junction (green). \textbf{(c)} The resulting system is topologically an annulus. \textbf{(d)} We consider the state consistent with the vacuum sector on the brown annulus. \textbf{(e)} The operator $S_R$ is chosen such that its action on the annulus is equal to the action of the splitting + movement operator (purple). \textbf{(f)} The action of $S_L$ is defined by relating the states in (f)(ii) and (f)(ii); see main text.}
\label{fig:spanning_tree_with_a_junction}
\end{figure}

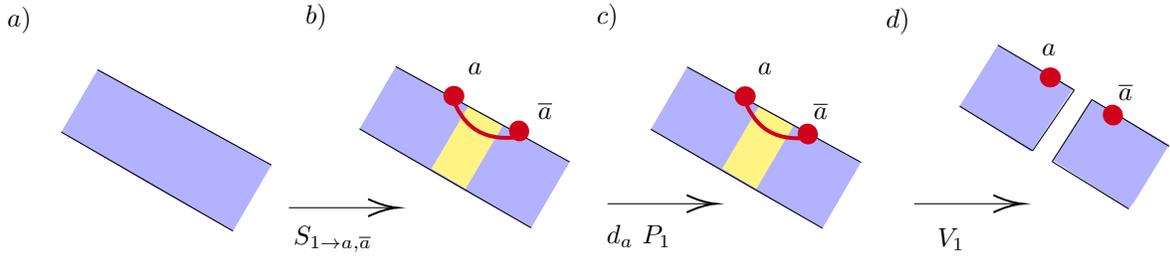
\begin{figure}[htbp] 
    \centering 
\begin{tikzpicture}[x=1pt,y=1pt,yscale=-1,xscale=1]

\draw  [draw opacity=0][fill={rgb, 255:red, 178; green, 178; blue, 255 }  ,fill opacity=1 ] (408.62,229.19) -- (423.21,206.11) -- (452.5,223.17) -- (439.3,244.81) -- (437.17,247.55) -- (408.62,229.19) -- cycle ;
\draw  [draw opacity=0][fill={rgb, 255:red, 178; green, 178; blue, 255 }  ,fill opacity=1 ] (83.31,214.21) -- (112.9,230.93) -- (98.34,255.91) -- (69.93,240.12) -- (33.39,218.42) -- (47.14,194.84) -- (83.31,214.21) -- cycle ;
\draw    (33.39,218.42) -- (98.34,255.91) ;
\draw    (47.14,194.84) -- (112.9,230.93) ;
\draw  [draw opacity=0][fill={rgb, 255:red, 178; green, 178; blue, 255 }  ,fill opacity=1 ] (373.19,208.64) -- (387.68,185.5) -- (416.2,201.96) -- (400.16,225.83) -- (373.19,208.64) -- cycle ;
\draw    (373.19,208.64) -- (400.81,225.52) ;
\draw    (387.68,185.5) -- (416.55,202.51) ;
\draw    (400.81,225.52) -- (416.55,202.51) ;
\draw    (408.21,229.61) -- (437.17,247.55) ;
\draw    (423.21,206.11) -- (452.43,223.82) ;
\draw    (408.21,229.61) -- (423.21,206.11) ;
\draw  [draw opacity=0][fill={rgb, 255:red, 208; green, 2; blue, 27 }  ,fill opacity=1 ] (403.53,197.75) .. controls (403.53,199.83) and (405.31,201.52) .. (407.5,201.52) .. controls (409.68,201.52) and (411.46,199.83) .. (411.46,197.75) .. controls (411.46,195.66) and (409.68,193.97) .. (407.5,193.97) .. controls (405.31,193.97) and (403.53,195.66) .. (403.53,197.75) -- cycle ;
\draw  [draw opacity=0][fill={rgb, 255:red, 208; green, 2; blue, 27 }  ,fill opacity=1 ] (429.34,208.49) .. controls (427.49,209.45) and (426.81,211.8) .. (427.82,213.74) .. controls (428.83,215.68) and (431.15,216.48) .. (433,215.52) .. controls (434.85,214.55) and (435.53,212.2) .. (434.52,210.26) .. controls (433.51,208.32) and (431.19,207.52) .. (429.34,208.49) -- cycle ;
\draw  [draw opacity=0][fill={rgb, 255:red, 178; green, 178; blue, 255 }  ,fill opacity=1 ] (196.11,213.01) -- (225.7,229.73) -- (211.14,254.71) -- (182.73,238.92) -- (146.19,217.22) -- (159.94,193.64) -- (196.11,213.01) -- cycle ;
\draw    (146.19,217.22) -- (211.14,254.71) ;
\draw  [draw opacity=0][fill={rgb, 255:red, 178; green, 178; blue, 255 }  ,fill opacity=1 ] (306.31,213.16) -- (335.9,229.88) -- (321.34,254.86) -- (292.93,239.07) -- (256.39,217.38) -- (270.14,193.79) -- (306.31,213.16) -- cycle ;
\draw    (256.39,217.38) -- (321.34,254.86) ;
\draw  [draw opacity=0][fill={rgb, 255:red, 255; green, 243; blue, 131 }  ,fill opacity=1 ] (297.41,208.53) -- (310.33,216.14) -- (296.32,239.93) -- (283.39,232.31) -- cycle ;
\draw    (270.14,193.79) -- (335.9,229.88) ;
\draw  [draw opacity=0][fill={rgb, 255:red, 208; green, 2; blue, 27 }  ,fill opacity=1 ] (288.13,204.9) .. controls (288.13,206.98) and (289.91,208.67) .. (292.1,208.67) .. controls (294.28,208.67) and (296.06,206.98) .. (296.06,204.9) .. controls (296.06,202.81) and (294.28,201.12) .. (292.1,201.12) .. controls (289.91,201.12) and (288.13,202.81) .. (288.13,204.9) -- cycle ;
\draw [color={rgb, 255:red, 208; green, 2; blue, 27 }  ,draw opacity=1 ][line width=1.5]    (316.44,220.47) .. controls (309.16,221.38) and (303.67,220.17) .. (299.48,217.07) .. controls (295.93,214.44) and (293.31,210.45) .. (291.33,205.24) ;
\draw  [draw opacity=0][fill={rgb, 255:red, 208; green, 2; blue, 27 }  ,fill opacity=1 ] (313.94,215.64) .. controls (312.09,216.6) and (311.41,218.95) .. (312.42,220.9) .. controls (313.43,222.84) and (315.75,223.63) .. (317.6,222.67) .. controls (319.45,221.71) and (320.13,219.35) .. (319.12,217.41) .. controls (318.11,215.47) and (315.79,214.68) .. (313.94,215.64) -- cycle ;
\draw    (119.5,247.07) -- (158.4,247.07) ;
\draw [shift={(160.4,247.07)}, rotate = 180] [color={rgb, 255:red, 0; green, 0; blue, 0 }  ][line width=0.75]    (10.93,-3.29) .. controls (6.95,-1.4) and (3.31,-0.3) .. (0,0) .. controls (3.31,0.3) and (6.95,1.4) .. (10.93,3.29)   ;
\draw    (240,245.57) -- (278.9,245.57) ;
\draw [shift={(280.9,245.57)}, rotate = 180] [color={rgb, 255:red, 0; green, 0; blue, 0 }  ][line width=0.75]    (10.93,-3.29) .. controls (6.95,-1.4) and (3.31,-0.3) .. (0,0) .. controls (3.31,0.3) and (6.95,1.4) .. (10.93,3.29)   ;
\draw    (356,245.57) -- (394.9,245.57) ;
\draw [shift={(396.9,245.57)}, rotate = 180] [color={rgb, 255:red, 0; green, 0; blue, 0 }  ][line width=0.75]    (10.93,-3.29) .. controls (6.95,-1.4) and (3.31,-0.3) .. (0,0) .. controls (3.31,0.3) and (6.95,1.4) .. (10.93,3.29)   ;
\draw  [draw opacity=0][fill={rgb, 255:red, 255; green, 243; blue, 131 }  ,fill opacity=1 ] (187.41,208.53) -- (200.33,216.14) -- (186.32,239.93) -- (173.39,232.31) -- cycle ;
\draw [color={rgb, 255:red, 208; green, 2; blue, 27 }  ,draw opacity=1 ][line width=1.5]    (206.24,220.31) .. controls (198.96,221.23) and (193.47,220.02) .. (189.28,216.91) .. controls (185.73,214.28) and (183.11,210.3) .. (181.13,205.09) ;
\draw    (159.94,193.64) -- (183.18,206.39) -- (225.7,229.73) ;
\draw  [draw opacity=0][fill={rgb, 255:red, 208; green, 2; blue, 27 }  ,fill opacity=1 ] (177.93,204.75) .. controls (177.93,206.83) and (179.71,208.52) .. (181.9,208.52) .. controls (184.08,208.52) and (185.86,206.83) .. (185.86,204.75) .. controls (185.86,202.66) and (184.08,200.97) .. (181.9,200.97) .. controls (179.71,200.97) and (177.93,202.66) .. (177.93,204.75) -- cycle ;
\draw  [draw opacity=0][fill={rgb, 255:red, 208; green, 2; blue, 27 }  ,fill opacity=1 ] (204.74,214.49) .. controls (202.89,215.45) and (202.21,217.8) .. (203.22,219.74) .. controls (204.23,221.68) and (206.55,222.48) .. (208.4,221.52) .. controls (210.25,220.55) and (210.93,218.2) .. (209.92,216.26) .. controls (208.91,214.32) and (206.59,213.52) .. (204.74,214.49) -- cycle ;


\draw (403.4,174.23+10) node [anchor=north west][inner sep=0.75pt]    {$a$};


\draw (432.4,189.23+10) node [anchor=north west][inner sep=0.75pt]    {$\overline{a}$};


\draw (12,150.06+20) node [anchor=north west][inner sep=0.75pt]    {$a)$};


\draw (344.4,150.4+20) node [anchor=north west][inner sep=0.75pt]    {$d)$};


\draw (124.8,148.86+20) node [anchor=north west][inner sep=0.75pt]    {$b)$};


\draw (178.3+8,181.23+10) node [anchor=north west][inner sep=0.75pt]    {$a$};


\draw (206.8+6,196.23+10) node [anchor=north west][inner sep=0.75pt]    {$\overline{a}$};


\draw (235,149.01+20) node [anchor=north west][inner sep=0.75pt]    {$c)$};


\draw (288+8,181.39+10) node [anchor=north west][inner sep=0.75pt]    {$a$};
\draw (317,196.39+10) node [anchor=north west][inner sep=0.75pt]    {$\overline{a}$};
\draw (112.3+8,251.23+2) node [anchor=north west][inner sep=0.75pt]    {$S_{1\rightarrow a,\overline{a}}$};
\draw (238.8,252.73) node [anchor=north west][inner sep=0.75pt]    {$d_{a} \ P_{1}$};
\draw (363.8,253.73) node [anchor=north west][inner sep=0.75pt]    {$V_{1}$};

\end{tikzpicture}
    \caption{ We illustrate the edge disentangler $V_a^{(e)}$ for an edge region with the $N$-type sector $a$. \textbf{(a)} The state begin in sector $a$ of its information convex set. \textbf{(b)} We apply splitting operator $S_{1 \to a, \bar{a}}$ . The state in the yellow region is then in the maximum entropy state of its information convex.  \textbf{(c)} We apply the operator $P_1$ on the yellow region to project it to its vacuum sector, with factor $d_a$ to preserve the norm. \textbf{(d)} We apply the vacuum disentangler $V_1^{(e)}$ to the yellow region. By Uhlmann's theorem, there is an isometry that maps (a) to (d). This is our definition of $V_a^{(e)}$.}
    The state in now a product between the upper-left and lower-right pieces, hence we say the edge region is ``factorized'' or ``disentangled.''
    \label{fig:edge-disentanglers_action} 
\end{figure}

Note the edge regions have three possible alignments, as in Figure~\ref{fig:bridge_single}(a).  We call these vertical edges (shown there in green) and non-vertical edges (yellow).  Note the non-vertical edge regions also appear on the spanning tree.  We can then define anyon operations on the non-vertical edge regions by using those defined for the spanning tree.  
Meanwhile, the vertical edge regions do not appear on the spanning tree. Later we will need string operators defined on these vertical edge regions too, and we will want them to be somehow consistent with the anyon operations on the remaining regions, allowing consistent diagrammatic calculations in Section \ref{subsec:plaquette}. 

To define the string operators on vertical edge regions, we follow Figure~\ref{fig:spanning_tree_with_a_junction}. We are concerned with a region such as the green region in Figure~\ref{fig:spanning_tree_with_a_junction}(a); we call the green region a \textit{junction}, because it does not appear on the spanning tree, but it does appear on the underlying geometry of the state $|\sigma^{(1)}\rangle$.  Our goal is to define splitting operators  $S_{1 \to \bar{s},s}$ that straddle the junction. We define two versions, $S_L$ and $S_R$.  These are the blue splitting operators shown in Figure~\ref{fig:spanning_tree_with_a_junction}(b)(i) and (ii). The action of $S_L, S_R$ is already determined up to complex phases, and our goal is to specify these phases.

We begin by defining $S_R$.  If we consider the junction region merged into the spanning tree, as in Figure~\ref{fig:spanning_tree_with_a_junction}(a), the resulting state has the geometry of an annulus, re-drawn in Figure~\ref{fig:spanning_tree_with_a_junction}(c) for convenience.  It will be helpful to consider a particular state with this geometry, namely where the brown annulus in Figure~\ref{fig:spanning_tree_with_a_junction}(a) is in its vacuum sector. This state is pure because the vacuum sector is an extreme point [Section~\ref{subsec:extreme_points}]; we will denote this state as $|\phi_e\rangle$. Then define the operator $S_R$ such that $S_R|\phi_e\rangle = M^s|\phi_e\rangle$, where $M^s$ is the purple splitting + movement operator shown in Figure~\ref{fig:spanning_tree_with_a_junction}(e) .
 Note $M^s$ is already defined because it lives on the spanning tree. The existence of such  a $S_R$ follows from Uhlmann's theorem \cite{Uhlmann1976}. Note that the complement of the blue string is a union of two disjoint disks. The two states $M^s|\phi_e\rangle$ and $|\phi_e\rangle$ are indistinguishable over each disk. Moreover, because of the boundary \textbf{A0}, the reduced state over the union of the two are in a product form. Therefore, $M^s|\phi_e\rangle$ and $|\phi_e\rangle$ are indistinguishable over the union of the two disks and the existence of the $S_R$ with the requisite property follows. 

Finally, we define the operator $S_L$ [Figure~\ref{fig:spanning_tree_with_a_junction}(b)(i)]  by referring to Figure~\ref{fig:spanning_tree_with_a_junction}(f).  We require that if we act with $S_L S_R$ in Figure~\ref{fig:spanning_tree_with_a_junction}(f),
then project the green junction to its vacuum sector, and then apply the vacuum disentangler to the green region, we obtain the same state (up to a real factor $d_s$) as when we act with the two blue string operators in Figure~\ref{fig:spanning_tree_with_a_junction}(f)(ii); the latter are already defined on the spanning tree.

The above choices uniquely define the complex phases associated to the splitting operators  $S_L, S_R$  at the vertical junctions. In particular, their action on $|\sigma^{(2)}\rangle$ can be reduced to the action of the movement and splitting operators acting on the spanning tree. The fact that this is possible may not be obvious because the operators $S_L$ and $S_R$ are defined with respect to some auxiliary states (over the union of the spanning tree and a junction). To that end, let us note the following fact.
\begin{lemma}
\label{lemma:action_unitary_different_states}
    Let $|\psi\rangle_{AB}$ and $|\phi\rangle_{AB'}$ be two states that are indistinguishable over $A$. For any operator $O$ acting on $A$, $(O\otimes I_B)|\psi\rangle_{AB} = |\psi\rangle_{AB}$ if and only if $(O\otimes I_{B'})|\psi\rangle_{AB'} = |\psi\rangle_{AB'}$.
\end{lemma}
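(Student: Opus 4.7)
\bigskip

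\noindent\textbf{Proof plan for Lemma~\ref{lemma:action_unitary_different_states}.} The statement is really just a transport argument powered by Uhlmann's theorem, which the paper has invoked repeatedly (e.g.\ in the construction of the movement, splitting, and edge-disentangling operators). The plan is to use Uhlmann to produce an isometry on the purifying factor that maps one state to the other, and then to observe that this isometry commutes with the action of $O$, which lives on the other factor. First I note that the hypothesis is that $|\psi\rangle_{AB}$ and $|\phi\rangle_{AB'}$ have identical reduced states on $A$:
\begin{equation*}
\operatorname{Tr}_B |\psi\rangle\langle\psi|_{AB} \;=\; \operatorname{Tr}_{B'} |\phi\rangle\langle\phi|_{AB'} \;=:\; \rho_A.
\end{equation*}

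\smallskip

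\noindent Second, I would apply Uhlmann's theorem to this common reduced state. Treating $B$ and $B'$ as two purifying systems (enlarging the smaller one with ancillas in a fixed pure state if the two Hilbert space dimensions differ, which costs nothing), the theorem produces an isometry $V:\mathcal{H}_B \to \mathcal{H}_{B'}$ such that
\begin{equation*}
(I_A \otimes V)\,|\psi\rangle_{AB} \;=\; |\phi\rangle_{AB'}.
\end{equation*}
(I note a small typo in the statement as written: the ``if and only if'' should read $(O\otimes I_{B'})|\phi\rangle_{AB'} = |\phi\rangle_{AB'}$, since it is $|\phi\rangle$ that lives on $AB'$. The proof is symmetric regardless.)

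\smallskip

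\noindent Third, I would simply compute. Because $O$ acts on $A$ while $V$ acts on the purifying factor, they commute on the joint space: $(I_A \otimes V)(O \otimes I_B) = (O \otimes I_{B'})(I_A \otimes V)$. Assuming $(O \otimes I_B)|\psi\rangle_{AB} = |\psi\rangle_{AB}$, applying $I_A \otimes V$ to both sides and using this commutation yields
\begin{equation*}
(O \otimes I_{B'})|\phi\rangle_{AB'} \;=\; (O \otimes I_{B'})(I_A\otimes V)|\psi\rangle_{AB} \;=\; (I_A\otimes V)(O\otimes I_B)|\psi\rangle_{AB} \;=\; (I_A\otimes V)|\psi\rangle_{AB} \;=\; |\phi\rangle_{AB'}.
\end{equation*}
The converse direction follows identically, using the partial isometry $V^\dagger$ on the image of $V$ (and the fact that $|\phi\rangle_{AB'}$ lies in that image, since $V$ is a purification-to-purification isometry).

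\smallskip

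\noindent There is essentially no obstacle. The only subtlety worth mentioning is that $V$ need only be an isometry, not a unitary, when $\dim \mathcal{H}_B \neq \dim \mathcal{H}_{B'}$; but the argument goes through unchanged since we only ever need $V^\dagger V = I_B$ restricted to the support of $\operatorname{Tr}_A|\psi\rangle\langle\psi|$, and symmetrically for the reverse direction. This lemma will then be used to justify the consistency of the operators $S_L$ and $S_R$ defined in Section~\ref{subsec:choosing-phases}, where their action on the physical state $|\sigma^{(2)}\rangle$ is inferred from their defining action on the auxiliary states built over the spanning tree.
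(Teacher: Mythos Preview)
Your proposal is correct and matches the paper's approach exactly; the paper's proof is a single sentence, ``The proof follows straightforwardly from Uhlmann's theorem,'' so you have supplied precisely the details the authors left implicit. You also correctly caught the typo in the statement (the second state should read $|\phi\rangle_{AB'}$, not $|\psi\rangle_{AB'}$).
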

\begin{proof}
    The proof follows straightforwardly from Uhlmann's theorem~\cite{Uhlmann1976}.
\end{proof}

With these string operations defined, we can return to an omission from Section \ref{subsec:identifying_string_net}. There we noted the existence of a unitary $V_a^{(e)}$, the ``edge disentangler,'' shown in Figure \ref{fig:edge-disentanglers-simple}, disentangling an edge region $e$ in sector $a$. Its action is determined up to an ambiguous complex phase.  We specify this phase by defining the action of  $V_a^{(e)}$ via Figure~\ref{fig:edge-disentanglers_action}.  By Lemma~\ref{lemma:action_unitary_different_states}, the action of  $V_a^{(e)}$ is then specified on any state whose reduced density matrix over edge region $e$ is in the sector $a$ of the information convex.

\begin{figure}[htbp]
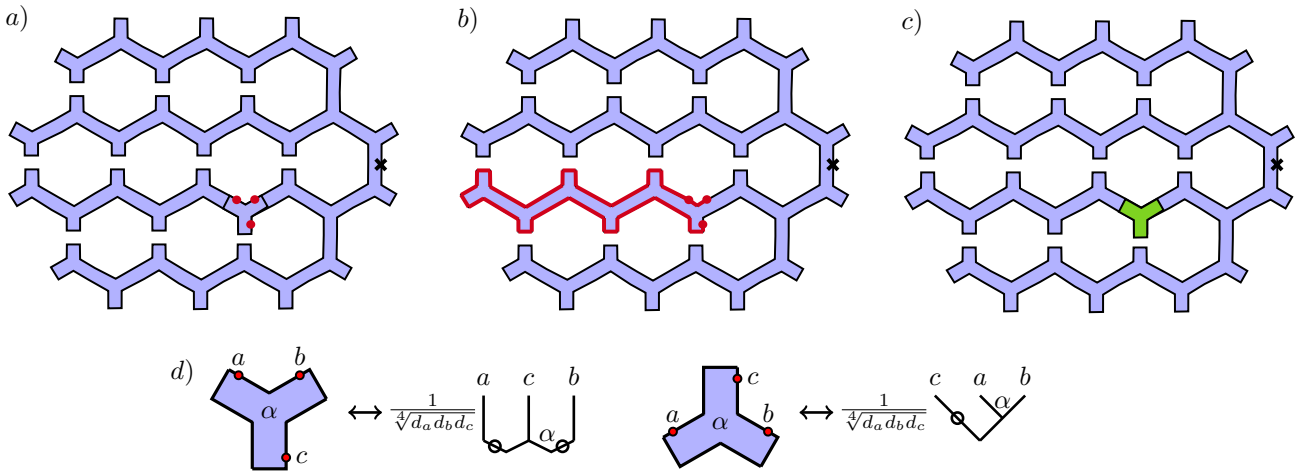

    \centering
    \include{figs/defining_basis}
    \caption{\textbf{(a)} A vertex fragment $v$, viewed as a subset of the spanning tree. \textbf{(b)} Application of the movement/splitting operators (red) on the spanning tree , yielding $|\psi_{\text{tree}}^{v;a,b,c;\alpha}\rangle$. \textbf{(c)} By Uhlmann's theorem, there is a unitary $U_{v}^{a,b,c;\alpha}$ acting on the vertex fragment (green) such that Eq.~\eqref{eq:vertex_uhlmann} is true. \textbf{(d)}  The anyon operations schematically shown in (b) are specified by these diagrams, for the two vertex types.}
    \label{fig:defining_basis}
\end{figure}

We also want to define canonical basis states, including properly chosen phases, for the embedded string-net Hilbert space identified in Section \ref{subsec:identifying_string_net}.
 Let us first recall the structure of the Hilbert space:
\begin{equation}
    \mathcal{H}_0' = \bigoplus_{\{a_v\}, \{b_v \}, \{c_v\}} \bigotimes_v S_v^{a_v, b_v, c_v},
\end{equation}
where $a_v, b_v,$ and $c_v$ are the boundary anyon labels for a vertex $v$. We define an explicit set of basis states for the space $S_v^{a_v, b_v, c_v}$. First, since the vacuum sector (corresponding to $a_v=b_v=c_v=1$) has no multiplicity, there is a single state $|\psi_v^{1,1,1}\rangle_v \in S_v^{1,1,1}$ up to a phase. We choose these phases arbitrarily and hence fix states $|\psi_v^{1,1,1}\rangle_v$ for all $v$.

While the phase chosen for $|\psi_v^{1,1,1}\rangle$ was arbitrary for every $v$, the phase chosen for other basis states will not be. They will be defined in terms of a set of operators that are nontrivially related to each other. More precisely, these states are defined in terms of the movement and the splitting operators [Section~\ref{subsec:basic_operations}], on the spanning tree [Figure~\ref{fig:merge-spanning-tree}(b)]. Consider a vacuum state of the spanning tree, denoted as $|\psi_{\text{tree}}\rangle$. Let us focus on a subsystem of the spanning tree, which corresponds to the vertex fragment $v$ [Figure~\ref{fig:defining_basis}(a)]. There are anyon operations on the spanning tree that create the boundary anyon $a_v, b_v, c_v$, with a multiplicity $\alpha$ [Figure~\ref{fig:defining_basis}(b)]; the resulting state is denoted as $|\psi_{\text{tree}}^{v;a,b,c;\alpha}\rangle$. These anyon operations together are schematically denoted with the red string.  More specifically, we choose the anyon operations on the spanning tree specified by the diagram in Figure~\ref{fig:defining_basis}(d).

Note that $|\psi_{\text{tree}}^{v;a,b,c;\alpha}\rangle$ and $|\psi_{\text{tree}}\rangle$ are indistinguishable over the complement of the vertex $v$ over the spanning tree. Thus by Uhlmann's theorem, there is a unitary $U_v^{a,b,c;\alpha}$ acting on $v$ such that the following equation holds true~[Figure~\ref{fig:defining_basis}(c)]:
\begin{equation}
    |\psi_{\text{tree}}^{v;a,b,c;\alpha}\rangle = U_v^{a,b,c;\alpha}|\psi_{\text{tree}}\rangle. \label{eq:vertex_uhlmann}
\end{equation}
We define the basis vector for the fragment $|\psi_v^{a,b,c;\alpha}\rangle_v\in S_v^{a,b,c}$ as 
\begin{equation}
\label{eq:vertex_basis}
    |\psi_v^{a,b,c;\alpha}\rangle_v = U_v^{a,b,c;\alpha}|\psi_v^{1,1,1}\rangle_v.
\end{equation}

\subsection{Plaquette operators}
\label{subsec:plaquette}

In this Section, we will define a plaquette operator and study its action on $|\sigma^{(3)}\rangle$. We denote this plaquette operator $\tilde{B}_p^{s}$, and it is designed to mirror the behavior of the Levin-Wen plaquette operator $B_p$ [Eq.~\eqref{eq:plaquette}].  However, $\tilde{B}_p^{s}$ is defined independently, acting on the physical Hilbert space associated to $\sigma$. 
We also compute the matrix elements of $\tilde{B}_p^{s}$ with respect to the string-net basis identified in [Eq.~\eqref{eq:string_net_basis_from_eb}], which can be viewed as an embedding of the canonical string-net basis states into the physical Hilbert space. 

The main result of this are that (1) $\tilde{B}_p^{s}$ stabilizes $|\sigma^{(3)}\rangle$, and (2) the matrix elements of $\tilde{B}_p^{s}$ with respect to the string-net basis states identified within the physical Hilbert space coincide exactly with those of the Levin-Wen plaquette operator $B_p$.   This will allow us to conclude $|\sigma^{(3)}\rangle$ matches the string-net ground state defined with respect to the string-net basis states embedded on the physical lattice.

It will be useful to consider how $\tilde{B}_p$ acts before and after the circuits $U_2, U_3$.  We therefore adopt the notation
\begin{align}
    (\tilde{B}_p^s)^{(2)} &= U_3^\dagger \tilde{B}_p^s U_3 \\
     (\tilde{B}_p^s)^{(1)} &= U_2^\dagger U_3^\dagger \tilde{B}_p^s U_3 U_2 
\end{align}
so that $(\tilde{B}_p^s)^{(2)}$ naturally acts on $|\sigma^{(2)}\rangle$ and  $(\tilde{B}_p^s)^{(1)}$ naturally acts on $|\sigma^{(1)}\rangle$.

First define how $(\tilde{B}_p^s)^{(1)}$ acts on $|\sigma^{(1)}\rangle$, where the edge regions are still entangled.  (Then the action of $\tilde{B}_p^s$ is defined via inverse conjugation.) We define
\begin{equation}
\label{eq:plaquette_op_eb}
    (\tilde{B}_p^s)^{(1)} \,
    \begin{tikzpicture}[baseline={([yshift=-0.5ex]current bounding box.center)}, scale=1]
    \draw[thick] (-1, -1) -- ++ (0, 2);
    \draw[thick] (1, -0.9) -- (1.25, 0) -- (1, 0.9);
        \filldraw[blue!30!white] (30:1cm) -- (90:1cm) -- (150:1cm) -- (210:1cm) -- (270:1cm) -- (330:1cm) -- cycle;
        
        \draw[fill=white, thick] (30:0.7cm) -- (90:0.7cm) -- (150:0.7cm) -- (210:0.7cm) -- (270:0.7cm) -- (330:0.7cm) -- cycle;
    \end{tikzpicture}
  = 
    \varkappa_s \,
    \begin{tikzpicture}[baseline={([yshift=-0.5ex]current bounding box.center)}, scale=1]
    
    \draw[thick] (-1.2, -1) -- ++ (0, 2);
    \draw[thick] (1, -0.9) -- (1.25, 0) -- (1, 0.9);
        \filldraw[blue!30!white] (30:1cm) -- (90:1cm) -- (150:1cm) -- (210:1cm) -- (270:1cm) -- (330:1cm) -- cycle;
        
        \draw[fill=white, thick] (30:0.7cm) -- (90:0.7cm) -- (150:0.7cm) -- (210:0.7cm) -- (270:0.7cm) -- (330:0.7cm) -- cycle;

        \draw [red, thick] plot [smooth, tension=0.5] coordinates { (-0.606, 0.2)  (-0.706, 0.4) (0, 0.8) (0.706, 0.4) (0.606, 0.2)};

        \node[] () at (120:1cm) {\color{red} $s$};
        \node[] () at (60:1cm) {\color{red} $\bar{s}$};

        \node[] () at (240:1cm) {\color{red} $\bar{s}$};
        \node[] () at (300:1cm) {\color{red} $s$};

        \node[] () at (0.966, 0.2) {\color{blue} $s$};
        \node[] () at (0.966, -0.2) {\color{blue} $\bar{s}$};

        \node[] () at (-0.966, 0.2) {\color{blue} $\bar{s}$};
        \node[] () at (-0.966, -0.2) {\color{blue} $s$};

        \draw [red, thick] plot [smooth, tension=0.5] coordinates { (-0.606, -0.2)  (-0.706, -0.4) (0, -0.8) (0.706, -0.4) (0.606, -0.2)};

        \draw[blue, thick] plot [smooth, tension=0.5] coordinates {(-0.606, -0.2) (-0.75, 0) (-0.606, 0.2)};

        \draw[blue, thick] plot [smooth, tension=0.5] coordinates {(0.606, -0.2) (0.75, 0) (0.606, 0.2)};
    \end{tikzpicture},
\end{equation}
where $\varkappa_s$ is the Frobenius-Schur indicator of $s$. The red strings are the splitting operators $S_{1\to s, \bar{s}}$  defined with respect to the spanning tree, while the blue strings are given by $S_{s, \bar{s} \to 1}$ using the splitting operators defined at the vertical junctions [Section \ref{subsec:choosing-phases}].

Below we will show the plaquette operator $(\tilde{B}_p^s)^{(1)}$ defined by Eq.~\eqref{eq:plaquette_op_eb} satisfies
\begin{equation}
\label{eq:plaquette_stabilized}
\begin{aligned}
   (\tilde{B}_p^s)^{(1)} \,
    \begin{tikzpicture}[baseline={([yshift=-0.5ex]current bounding box.center)}, scale=1]
    \draw[thick] (-1, -1) -- ++ (0, 2);
    \draw[thick] (1, -0.9) -- (1.25, 0) -- (1, 0.9);
        \filldraw[blue!30!white] (30:1cm) -- (90:1cm) -- (150:1cm) -- (210:1cm) -- (270:1cm) -- (330:1cm) -- cycle;
        
        \draw[fill=white, thick] (30:0.7cm) -- (90:0.7cm) -- (150:0.7cm) -- (210:0.7cm) -- (270:0.7cm) -- (330:0.7cm) -- cycle;
    \end{tikzpicture}
  &= \begin{tikzpicture}[baseline={([yshift=-0.5ex]current bounding box.center)}, scale=0.5]
        \draw[thick] (0,0) -- ++ (1, -1) node [midway, left] {$s$} -- ++ (1, 1) node [midway, left] {$\bar{s}$} -- ++ (1, -1) node [midway, right] {$s$} -- ++ (1,1) node [midway, right] {$\bar{s}$} -- ++ (-2, 2) -- ++ (-2, -2);
    \end{tikzpicture} \cdot \varkappa_s 
    \begin{tikzpicture}[baseline={([yshift=-0.5ex]current bounding box.center)}, scale=1]
    \draw[thick] (-1, -1) -- ++ (0, 2);
    \draw[thick] (1, -0.9) -- (1.25, 0) -- (1, 0.9);
        \filldraw[blue!30!white] (30:1cm) -- (90:1cm) -- (150:1cm) -- (210:1cm) -- (270:1cm) -- (330:1cm) -- cycle;
        
        \draw[fill=white, thick] (30:0.7cm) -- (90:0.7cm) -- (150:0.7cm) -- (210:0.7cm) -- (270:0.7cm) -- (330:0.7cm) -- cycle;
    \end{tikzpicture}
    \\
&= 
  d_s 
  \begin{tikzpicture}[baseline={([yshift=-0.5ex]current bounding box.center)}, scale=1]
    \draw[thick] (-1, -1) -- ++ (0, 2);
    \draw[thick] (1, -0.9) -- (1.25, 0) -- (1, 0.9);
        \filldraw[blue!30!white] (30:1cm) -- (90:1cm) -- (150:1cm) -- (210:1cm) -- (270:1cm) -- (330:1cm) -- cycle;
        
        \draw[fill=white, thick] (30:0.7cm) -- (90:0.7cm) -- (150:0.7cm) -- (210:0.7cm) -- (270:0.7cm) -- (330:0.7cm) -- cycle;
    \end{tikzpicture}.
\end{aligned}
\end{equation}

Thus the state $|\sigma^{(1)}\rangle$ is stabilized by the plaquette operator $(\tilde{B}_p^s)^{(1)}$, up to a factor $d_s$. Likewise, $(\tilde{B}_p^s)^{(2)}$ stabilizes $|\sigma^{(2)}\rangle$, and $\tilde{B}_p^s$ stabilizes $|\sigma^{(3)}\rangle$.

To show Eq.~\eqref{eq:plaquette_stabilized}, first we focus on the claim that $(\tilde{B}_p^s)^{(1)}$ preserves $|\sigma^{(1)}\rangle$ up to some scalar factor, and then we will determine that factor as $d_s$. To see $(\tilde{B}_p^s)^{(1)}|\sigma^{(2)}\rangle \propto |\sigma^{(2)}\rangle $, the key facts about the plaquette operator ~\eqref{eq:plaquette_op_eb} are that (1) it only acts on a neighborhood of the boundary of the hole, (2) while it creates anyon excitations at intermediate steps, afterward there are no anyons, and (3) the annular region surrounding the hole on the left-hand side of Eq.~\eqref{eq:plaquette_op_eb} has zero correlation between its inner and outer boundary regions.  Fact (3) follows because the original state $\sigma$ satisfies $\textbf{A0}$ and the hole was created with a local unitary.  

As a re-statement of fact (3) above, in the left-hand side of the below diagram~\eqref{eq:compactify}, the pink shaded region has zero correlation with any separated regions, including the black inner boundary. We identify the pink region as a single point, so that the resulting topology is a disk with boundary, topologically re-drawn on the right-hand side of diagram~\eqref{eq:compactify}.  

\begin{equation}
\label{eq:compactify}
    \begin{tikzpicture}[baseline={([yshift=-0.5ex]current bounding box.center)}, scale=1]
        \filldraw[blue!30!white] (30:1cm) -- (90:1cm) -- (150:1cm) -- (210:1cm) -- (270:1cm) -- (330:1cm) -- cycle;
        
        \draw[fill=white, thick] (30:0.7cm) -- (90:0.7cm) -- (150:0.7cm) -- (210:0.7cm) -- (270:0.7cm) -- (330:0.7cm) -- cycle;

         \draw[very thick, red!30!white] (30:1cm) -- (90:1cm) -- (150:1cm) -- (210:1cm) -- (270:1cm) -- (330:1cm) -- cycle;
    \end{tikzpicture}
 \leftrightarrow
    \begin{tikzpicture}[baseline={([yshift=-0.5ex]current bounding box.center)}, scale=1]
        \filldraw[blue!30!white] (30:1cm) -- (90:1cm) -- (150:1cm) -- (210:1cm) -- (270:1cm) -- (330:1cm) -- cycle;
        \draw[fill=blue!30!white, thick] (30:1cm) -- (90:1cm) -- (150:1cm) -- (210:1cm) -- (270:1cm) -- (330:1cm) -- cycle;
        \filldraw[red!30!white] (90:0cm) circle (0.1cm);
    \end{tikzpicture}
\end{equation}
Then the disk on the right-hand side above satisfies axioms $\textbf{A0}$ and $\textbf{A1}$ for a disk with boundary, including for regions containing the pink point, using the above fact.  Therefore its information convex is trivial [Section \ref{subsec:information_convex_set}]. Returning to the plaquette operator, it is localized away from the pink region, and it leaves no anyons.  Thus it produces the unique state in the information convex, up to a scalar.  We conclude that $(\tilde{B}_p^s)^{(1)}|\sigma^{(2)}\rangle \propto |\sigma^{(2)}\rangle$.

Now we check the scalar value in the above proportionality.  The key point is that the reduced density matrices over the union of the blue and the purple string in Figure~\ref{fig:spanning_tree_with_a_junction}(e) are indistinguishable from the same region in $|\sigma^{(2)}\rangle$. This is because the annulus formed by the two strings is in the vacuum sector by construction, which uniquely determines the element within the information convex set of this annulus. Therefore, the action of the blue strings in \eqref{eq:plaquette_op_eb} on $|\sigma^{(2)}\rangle$ are equal to that strings in Figure~\ref{fig:spanning_tree_with_a_junction}(e) and (f)(ii) on the same state, yielding the diagram involving a jagged $s$-loop in Eq.~\eqref{eq:plaquette_stabilized}.

We now study the matrix element of the plaquette operator (specifically $(\tilde{B}_p^s)^{(2)}$) in terms of a basis of the Hilbert space in Eq.~\eqref{eq:string_net_basis_from_eb}.  We use the canonical basis states $|\psi_v^{a,b,c;\alpha}\rangle_v\in S_v^{a,b,c}$ defined in Eq.~\eqref{eq:vertex_basis} via Figure~\ref{fig:merge-spanning-tree}.

The basic setup for this computation is described in Figure~\ref{fig:plaquette_setup}. For each vertex fragment, we choose the operators $U_v^{a,b,c;\alpha}$ corresponding to the procedure in Figure~\ref{fig:plaquette_setup}(a). This choice of $U_v^{a,b,c;\alpha}$ will be useful for later diagrammatic manipulations.  The $d_i$ factors ensure the state is normalized. A basis set for a plaquette and the corresponding diagram (specifying a physical process applied on the spanning tree) are described in Figure~\ref{fig:plaquette_setup}(b) and (c), respectively. In order to compute the matrix element, we apply entanglers over every junction, apply the plaquette operator, and disentangle. 

Part of this process can be described in terms of the set of movement, splitting, and fusion operators defined on the spanning tree. Contributions from these processes can be calculated using the diagrammatic rules introduced in Section~\ref{subsec:diagrammatics}. 

However, there are also other operations not immediately defined on the spanning tree. We describe these processes and explain how we deal with them. First, there are the entanglers and disentanglers. When we apply the entangler, we can effectively convert this into a diagram in which a sector $a$ and $\bar{a}$ fuses into the vacuum, multiplied by a factor of $d_a^{\frac{1}{2}}$. For the disentangler, we instead use  a process in which the vacuum splits into $a$ and $\bar{a}$, with a factor of $d_a^{-\frac{3}{2}}$ [Figure~\ref{fig:edge-disentanglers_action}].\footnote{The factor of $d_a$ and $d_a^{-1}$ come from the edge entangler/disentangler. The additional factor of $d_a^{\frac{1}{2}}$ comes from a conversion of the normalized string operator to a diagrammatic representation of that operator (which differs by a multiplicative factor of $d_a^{\frac{1}{2}}$).} If the sector changes between the two processes, the entangler and disentangler together yield a ratio of the two quantum dimensions.

Second, there are contributions from the string operators acting on the vertical junctions, e.g., the junction between $i_5$ and $\bar{i}_5$ in Figure~\ref{fig:plaquette_setup}(b). While these operators are not supported strictly on the spanning tree, their action on the underlying state can be converted to the action of the operators acting on the spanning tree [Figure~\ref{fig:merge-spanning-tree}].

Therefore, we will organize the overall computation as follows. The contributions from the entangler/disentangler are $\prod_{k=1}^6 (d_{i_k}/d_{i_{k}'}^3)^{\frac{1}{2}}$, where $i_k$ and $i_k'$ are the sector labels we choose for the $k$'th junction.  We will then convert the contribution from the vertical junctions to the contribution on the spanning tree. At this point, all that remains is a process that is amenable to our diagrammatic calculus. The bulk of our analysis will be the simplification of the resulting diagrams.

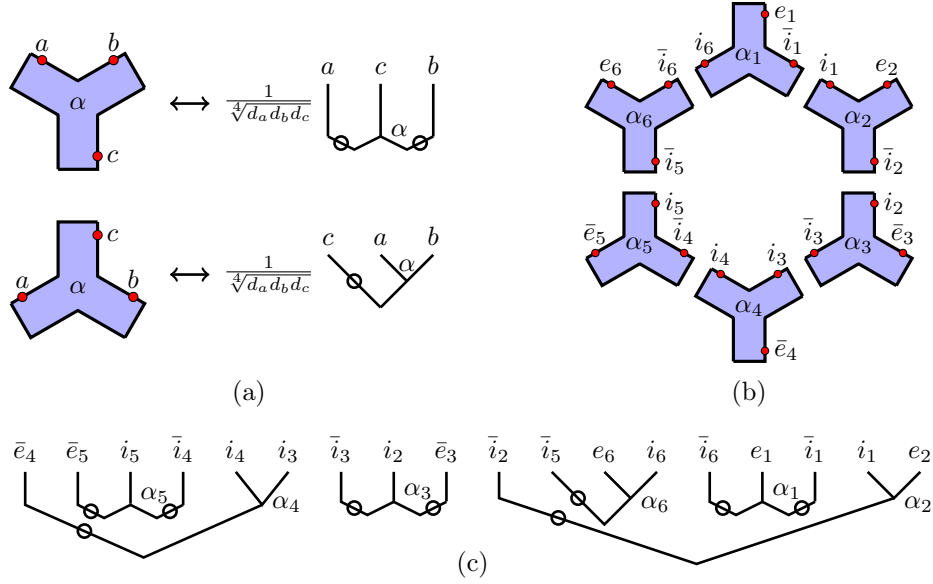
\begin{figure}[htbp]
    \centering
        \begin{tikzpicture}[scale=0.175]
        \vertexone{-6cm}{0}{$\alpha$}{$a$}{$b$}{$c$};
        \begin{scope}[xshift=17cm, yshift=-2.5cm, line width = 1pt]
            \draw[] (0,0) -- ++ (0, 4cm);
            \draw[] (4cm,0) -- ++ (0, 4cm);
            \draw[] (-4cm,0) -- ++ (0, 4cm);
            \draw[] (-4cm,0) -- ++ (2cm, -1cm) -- ++ (2cm, 1cm) -- ++ (2cm, -1cm) -- ++ (2cm, 1cm);
            \node[above] () at (-4cm, 4cm) {$a$};
            \node[above] () at (0cm, 4cm) {$c$};
            \node[above] () at (4cm, 4cm) {$b$};
            \node[right] () at (0, 0.5cm) {$\alpha$};
            \draw[] (3cm, -0.5cm) circle (0.5cm);
            \draw[] (-3cm, -0.5cm) circle (0.5cm);
        \end{scope}
        \node[] () at (8.5cm, 0) {$\frac{1}{\sqrt[4]{d_a d_b d_c}}$};
        \node[] () at (8.5cm, -13cm) {$\frac{1}{\sqrt[4]{d_a d_b d_c}}$};
        \draw[<->, line width=1pt] (1cm, 0) -- ++ (3cm, 0);
        \draw[<->, line width=1pt] (1cm, -13cm) -- ++ (3cm, 0);
        \vertextwo{-6cm}{-14cm}{$\alpha$}{$a$}{$b$}{$c$};
        \begin{scope}[xshift=17cm, yshift=-15.5cm, line width = 1pt]
            \draw[] (-4cm, 4cm) -- (0,0);
            \draw[] (4cm, 4cm) -- (0,0);
            \draw[] (2cm, 2cm) -- (0, 4cm);
            \node[above] () at (-4cm, 4cm) {$c$};
            \node[above] () at (0cm, 4cm) {$a$};
            \node[above] () at (4cm, 4cm) {$b$};
            \node[above] () at (2cm, 2cm) {$\alpha$};
            \draw[] (-2cm, 2cm) circle (0.5cm);
        \end{scope}
        
        \node[] () at (7cm, -22cm) {(a)};

        \begin{scope}[xshift=45cm, yshift=-6cm, scale=0.8]
            \vertexone{-10.392cm}{6cm}{$\alpha_6$}{$e_6$}{$\bar{i}_6$}{$\bar{i}_5$};
            \vertextwo{-10.392cm}{-6cm}{$\alpha_5$}{$\bar{e}_5$}{$\bar{i}_4$}{$i_5$};
            \vertexone{10.392cm}{6cm}{$\alpha_2$}{$i_1$}{$e_2$}{$\bar{i}_2$};
            \vertextwo{10.392cm}{-6cm}{$\alpha_3$}{$\bar{i}_3$}{$\bar{e}_3$}{$i_2$};
            \vertextwo{0cm}{12cm}{$\alpha_1$}{$i_6$}{$\bar{i}_1$}{$e_1$};
            \vertexone{0cm}{-12cm}{$\alpha_4$}{$i_4$}{$i_3$}{$\bar{e}_4$};

            \node[] () at (0, -20cm) {(b)};
        \end{scope}

        \begin{scope}[xshift= -10cm, yshift=-32cm, line width=1pt]
        \node[above] (e4bar) at (0, 4cm) {$\bar{e}_4$};
        \node[above] (e5bar) at (4cm, 4cm) {$\bar{e}_5$};
        \node[above] (i5) at (8cm, 4cm) {$i_5$};
        \node[above] (i4bar) at (12cm, 4cm) {$\bar{i}_4$};
        \node[above] (i4) at (16cm, 4cm) {$i_4$};
        \node[above] (i3) at (20cm, 4cm) {$i_3$};
        \node[above] (i3bar) at (24cm, 4cm) {$\bar{i}_3$};
        \node[above] (i2) at (28cm, 4cm) {$i_2$};
        \node[above] (e3bar) at (32cm, 4cm) {$\bar{e}_3$};
        \node[above] (i2bar) at (36cm, 4cm) {$\bar{i}_2$};
        \node[above] (i5bar) at (40cm, 4cm) {$\bar{i}_5$};
        \node[above] (e6) at (44cm, 4cm) {$e_6$};
        \node[above] (i6) at (48cm, 4cm) {$i_6$};
        \node[above] (i6bar) at (52cm, 4cm) {$\bar{i}_6$};
        \node[above] (e1) at (56cm, 4cm) {$e_1$};
        \node[above] (i1bar) at (60cm, 4cm) {$\bar{i}_1$};
        \node[above] (i1) at (64cm, 4cm) {$i_1$};
        \node[above] (e2) at (68cm, 4cm) {$e_2$};

        \draw[] (e4bar) -- ++ (0, -4cm) -- ++ (9cm, -4cm) -- ++ (9cm, 4cm); 
        \draw[] (e5bar) -- ++ (0, -4cm) -- ++ (2cm, -1cm) -- ++ (2cm, 1cm) -- ++ (2cm, -1cm) -- ++ (2cm, 1cm);
        \draw[] (i5) -- ++ (0, -4cm);
        \draw[] (i4bar) -- ++ (0, -4.25cm);
        \draw[] (16cm, 4cm) -- ++ (2cm, -2.5cm) -- ++ (2cm, 2.5cm);
        \node[right] () at (8cm, 2.25cm) {$\alpha_5$};
        \node[right] () at (18cm, 1.5cm) {$\alpha_4$};
        \draw[] (4.5cm, -0.5cm) circle (0.5cm);
        \draw[] (5cm, 1cm) circle (0.5cm);
        \draw[] (11cm, 1cm) circle (0.5cm);

        \draw[] (i3bar) -- ++ (0, -4cm) -- ++ (2cm, -1cm) -- ++ (2cm, 1cm) -- ++ (2cm, -1cm) -- ++ (2cm, 1cm);
        \draw[] (28cm, 4cm) -- ++ (0, -2.25cm);
        \draw[] (32cm, 4cm) -- ++ (0, -2.25cm);
        \node[right]  () at (28cm, 2.5cm) {$\alpha_3$};
        \draw[] (25cm, 1.25cm) circle (0.5cm);
        \draw[] (31cm, 1.25cm) circle (0.5cm);

        \draw[] (40cm, 4cm) -- ++ (4cm, -4cm) -- ++ (4cm, 4cm);
        \draw[] (44cm, 4cm) -- ++ (2cm, -2cm);
        \node[right] () at (46cm, 1.5cm) {$\alpha_6$};
        \draw[] (42cm, 2cm) circle (0.5cm);

        \draw[] (i6bar) -- ++ (0, -4cm) -- ++ (2cm, -1cm) -- ++ (2cm, 1cm) -- ++ (2cm, -1cm) -- ++ (2cm, 1cm);
        \draw[] (56cm, 4cm) -- ++ (0, -2.25cm);
        \draw[] (60cm, 4cm) -- ++ (0, -2.25cm);
        \node[right]  () at (56cm, 2.5cm) {$\alpha_1$};
        \draw[] (53cm, 1.25cm) circle (0.5cm);
        \draw[] (59cm, 1.25cm) circle (0.5cm);

        \draw[] (64cm, 4cm) -- ++ (2cm, -2cm) -- ++ (2cm, 2cm);
        \draw[] (66cm, 2cm) -- ++ (-15cm, -5cm) -- ++ (-15cm, 5cm)-- ++ (0, 2cm);
        \node[right] () at (66cm, 1.5cm) {$\alpha_2$};
        \draw[] (40.5cm, 0.5cm) circle (0.5cm);

        \node[] () at (34cm, -3cm) {(c)};
            
        \end{scope}
    \end{tikzpicture}
    \caption{(a) Diagrammatic expression for each vertex fragment. (b) A basis set for a plaquette. (c) Applying the unitaries associated with the vertex fragments in (b) on the spanning tree, we obtain a state described by this diagrammatic expression (with $d_i$ factors omitted).}
    \label{fig:plaquette_setup}
\end{figure}

We now discuss how to convert the contributions from the vertical junctions to the contributions on the spanning tree. Recall operations on vertical junctions were specified by Figures~\ref{fig:spanning_tree_with_a_junction} and \ref{fig:edge-disentanglers_action}.  Figure~\ref{fig:edge-disentanglers_action} defines the edge disentangler, and implicitly defines the entangling operation by taking the inverse of the associated unitary operation.  The important point is that the operations that take place straddling the junction can be deformed to an operation   \emph{away} from the junction, on the spanning tree.

To organize this analysis, we insert an operator under which the underlying state remains unchanged. This operator is supported on a region that is slightly larger than the junction. More precisely, consider a schematic description of the union of the junction and the spanning tree below. We will consider the region enclosed by the dashed line.
\begin{equation*}
    \begin{tikzpicture}[baseline={([yshift=-0.5ex]current bounding box.center)}, line width=1pt,scale=0.7]
        \draw[fill=blue!30!white] (0,0) circle (2cm);
        \draw[fill=white] (0,0) circle (1cm);
        \draw[fill=green!30!white, draw= none] (165:1cm) arc (165:195:1cm) -- (195:2cm) arc (195:165:2cm) -- cycle;
        \draw[] (0,0) circle (2cm);
        \draw[] (0,0) circle (1cm);

        \draw[dashed] (210:0.8cm) arc (210:150:0.8cm) -- (150:1.5cm) arc (150:210:1.5cm) -- cycle;
    \end{tikzpicture}
\end{equation*}
We will insert an operator localized in this region, shown below. (The region drawn below is rotated 90 degrees clockwise relative to the diagram above.)
\begin{equation}
    \label{eq:operator_insertion_left}
    \sum_{\alpha, \beta, i_5'} C_{s, i_5}
    \begin{tikzpicture}[baseline={([yshift=-0.5ex]current bounding box.center)}, line width=1pt,scale=0.7]

    \filldraw[blue!30!white] (-2, 0) -- ++ (4, 0) -- ++ (0, 5) -- ++ (-4, 0) -- cycle;
    \filldraw[green!30!white] (-0.25, 0) -- ++ (0.5, 0) -- ++ (0, 5) -- ++ (-0.5, 0) -- cycle;
    \draw[] (-2, 0)  -- ++ (4, 0);

    \begin{scope}[yshift=1.25cm]
    \draw[] (-1, 0) -- ++ (1, 1) -- ++ (1, -1) -- ++ (-1, -1) -- cycle;
    \draw[] (-0.5, 0.5) -- ++ (0.5, -0.5) -- ++ (0.5, 0.5);
    \node[left] () at (-0.5, 0.5) {$\alpha$};
    \node[right] () at (0.5, 0.5) {$\beta$};
    \node[below] () at (-0.5, -0.5) {$s$};
    \node[below] () at (-0.25, 0.25) {$i_5$};
    \end{scope}

    \begin{scope}[yshift=3.75cm]
    \draw[] (-1, 0) -- ++ (1, 1) -- ++ (1, -1) -- ++ (-1, -1) -- cycle;
    \draw[] (-0.5, -0.5) -- ++ (0.5, 0.5) -- ++ (0.5, -0.5);
    \node[left] () at (-0.5, -0.5) {$\alpha$};
    \node[right] () at (0.5, -0.5) {$\beta$};
    \node[below] () at (-0.25, -0.75) {$i_5'$};
    \node[above] () at (-0.5, 0.5) {$s$};
    \node[above] () at (-0.25, -0.25) {$i_5$};
    \end{scope}
    \end{tikzpicture}
\end{equation}
Here $C_{s, i_5} = \frac{1}{d_s^2 d_{i_5}^2}$ is a normalization constant such that the action of this operator on the state of interest is trivial.

Now we can apply Eq.~\eqref{eq:operator_insertion_left} to the state in which $i_5$ and $\bar{i}_5$ are fused into the vacuum followed by the application of the plaquette operator, localized to this region. Diagrammatically, this can be expressed as follows:
\begin{equation}
    \sum_{\alpha, \beta, i_5'} C_{s, i_5}
    \begin{tikzpicture}[baseline={([yshift=-0.5ex]current bounding box.center)}, line width=1pt,scale=0.7]

    \filldraw[blue!30!white] (-2, 0) -- ++ (4, 0) -- ++ (0, 6) -- ++ (-4, 0) -- cycle;
    \filldraw[green!30!white] (-0.25, 0) -- ++ (0.5, 0) -- ++ (0, 6) -- ++ (-0.5, 0) -- cycle;
    \draw[] (-2, 0)  -- ++ (4, 0);

    \begin{scope}[yshift=2.25cm]
    \draw[] (-1, 0) -- ++ (1, 1) -- ++ (1, -1) -- ++ (-1, -1) -- cycle;
    \draw[] (-0.5, 0.5) -- ++ (0.5, -0.5) -- ++ (0.5, 0.5);
    \node[left] () at (-0.5, 0.5) {$\alpha$};
    \node[right] () at (0.5, 0.5) {$\beta$};
    \node[below] () at (-0.5, -0.5) {$s$};
    \node[below] () at (-0.25, 0.25) {$i_5$};
    \end{scope}

    \begin{scope}[yshift=4.75cm]
    \draw[] (-1, 0) -- ++ (1, 1) -- ++ (1, -1) -- ++ (-1, -1) -- cycle;
    \draw[] (-0.5, -0.5) -- ++ (0.5, 0.5) -- ++ (0.5, -0.5);
    \node[left] () at (-0.5, -0.5) {$\alpha$};
    \node[right] () at (0.5, -0.5) {$\beta$};
    \node[below] () at (-0.25, -0.75) {$i_5'$};
    \node[above] () at (-0.5, 0.5) {$s$};
    \node[above] () at (-0.25, -0.25) {$i_5$};
    \end{scope}

    \draw[] (-2, 0.25) -- ++ (2, 0.25) -- ++ (2, -0.25);
    \draw[] (-2, 0.75) -- ++ (2, 0.25) -- ++ (2, -0.25);
    \node[left] () at (-2, 0.25) {$i_5$};
    \node[left] () at (-2, 0.75) {$s$};
    \end{tikzpicture}
    = d_s d_{i_5} C_{s, i_5}
    \sum_{\alpha, \beta, i_5'} 
    \begin{tikzpicture}[baseline={([yshift=-0.5ex]current bounding box.center)}, line width=1pt,scale=0.7]

    \filldraw[blue!30!white] (-2, 0) -- ++ (4, 0) -- ++ (0, 6) -- ++ (-4, 0) -- cycle;
    \filldraw[green!30!white] (-0.25, 0) -- ++ (0.5, 0) -- ++ (0, 6) -- ++ (-0.5, 0) -- cycle;
    \draw[] (-2, 0)  -- ++ (4, 0);

    \begin{scope}[yshift=2.25cm]
    \draw[] (-1, 0) -- ++ (1, 1) -- ++ (1, -1);
    \draw[] (-1, 0) -- (-2, -1.5);
    \draw[] (1, 0) -- (2, -1.5);
    \draw[] (-0.5, 0.5) -- (-1, -1.875);
    \draw[] (0.5, 0.5) -- (1, -1.875);
    \node[left] () at (-0.5, 0.5) {$\alpha$};
    \node[right] () at (0.5, 0.5) {$\beta$};
    \node[below] () at (-0.5, -0.5) {$s$};
    \node[below] () at (-0.25, 0.25) {$i_5$};
    \node[above] () at (-0.25, 0.75) {$i_5'$};
    \end{scope}

    \begin{scope}[xshift= 4cm, yshift=2.25cm]
    \draw[] (-1, 0) -- ++ (1, 1) -- ++ (1, -1) -- ++ (-1, -1) -- cycle;
    \draw[] (-0.5, -0.5) -- ++ (0.5, 0.5) -- ++ (0.5, -0.5);
    \node[left] () at (-0.5, -0.5) {$\alpha$};
    \node[right] () at (0.5, -0.5) {$\beta$};
    \node[below] () at (-0.25, -0.75) {$i_5'$};
    \node[above] () at (-0.5, 0.5) {$i_5$};
    \node[above] () at (-0.25, -0.25) {$s$};
    \end{scope}

    \draw[] (-2, 0.25) -- (-1, 0.375);
    \draw[] (2, 0.25) -- (1, 0.375);
    \node[left] () at (-2, 0.25) {$i_5$};
    \node[left] () at (-2, 0.75) {$s$};
    \end{tikzpicture},
\end{equation}
where the diagram in the white region represents a process occurring on the spanning tree. This identity can be obtained by using the diagrammatic identities in Section~\ref{subsec:diagrammatics} and noting the fact that the action of the string operator acting across the junction (green) is equivalent to the action of the string operator acting on the spanning tree.

A similar argument can be applied to the other vertical junction. After applying the splitting operator on the vertical junctions, we obtain the following state:
\begin{equation}
\begin{aligned}
&\varkappa_s C_N C_J\sum_{\substack{\alpha, \beta, \alpha', \beta' \\ i_2', i_5'}} d_{i_2'} d_{i_5'} d_{i_2} d_{i_5} d_s^2  C_{s, i_5} C_{s, i_2} 
\begin{tikzpicture}[baseline={([yshift=-0.5ex]current bounding box.center)}, line width=1pt, scale=0.7]
\draw[] (-1, 0) -- ++ (1, 1) -- ++ (1, -1) -- ++ (-1, -1) -- cycle;
    \draw[] (-0.5, -0.5) -- ++ (0.5, 0.5) -- ++ (0.5, -0.5);
    \node[left] () at (-0.5, -0.5) {$\alpha$};
    \node[right] () at (0.5, -0.5) {$\beta$};
    \node[below] () at (-0.25, -0.75) {$i_5'$};
    \node[above] () at (-0.5, 0.5) {$i_5$};
    \node[above] () at (-0.25, -0.25) {$s$};

    \begin{scope}[xshift=2.25cm]
    \draw[] (-1, 0) -- ++ (1, 1) -- ++ (1, -1) -- ++ (-1, -1) -- cycle;
    \draw[] (-0.5, -0.5) -- ++ (0.5, 0.5) -- ++ (0.5, -0.5);
    \node[left] () at (-0.5, -0.5) {$\alpha'$};
    \node[right] () at (0.5, -0.5) {$\beta'$};
    \node[below] () at (-0.25, -0.75) {$i_2'$};
    \node[above] () at (-0.5, 0.5) {$\bar{s}$};
    \node[above] () at (-0.25, -0.25) {$i_2$};
    \end{scope}
\end{tikzpicture}
\\
    &\begin{tikzpicture}[baseline={([yshift=-0.5ex]current bounding box.center)}, line width=1pt, scale=0.175]
        \node[above] (e4bar) at (0, 4cm) {$\bar{e}_4$};
        \node[above] (e5bar) at (4cm, 4cm) {$\bar{e}_5$};
        \node[left] (i5) at (8cm, 5.5cm) {$i_5$};
        \node[left] (i4bar) at (12cm, 5.5cm) {$\bar{i}_4$};
        \node[] (i4) at (16.5cm, 5.5cm) {$i_4$};
        \node[] (i3) at (19.5cm, 5.5cm) {$i_3$};
        \node[right] (i3bar) at (24cm, 4cm) {$\bar{i}_3$};
        \node[right] (i2) at (28cm, 5.5cm) {$i_2$};
        \node[above] (e3bar) at (32cm, 4cm) {$\bar{e}_3$};
        \node[left] (i2bar) at (36cm, 5.5cm) {$\bar{i}_2$};
        \node[] (i5bar) at (41.5cm, 5.5cm) {$\bar{i}_5$};
        \node[above] (e6) at (44cm, 4cm) {$e_6$};
        \node[] (i6) at (47.5cm, 5.5cm) {$i_6$};
        \node[right] (i6bar) at (52cm, 5.5cm) {$\bar{i}_6$};
        \node[above] (e1) at (56cm, 4cm) {$e_1$};
        \node[left] (i1bar) at (60cm, 5.5cm) {$\bar{i}_1$};
        \node[above] (i1) at (64.5cm, 4cm) {$i_1$};
        \node[above] (e2) at (68cm, 4cm) {$e_2$};

        \draw[] (e4bar) -- ++ (0, -4cm) -- ++ (9cm, -4cm) -- ++ (9cm, 4cm); 
        \draw[] (e5bar) -- ++ (0, -4cm) -- ++ (2cm, -1cm) -- ++ (2cm, 1cm) -- ++ (2cm, -1cm) -- ++ (2cm, 1cm);
        \draw[] (8cm, 1.25cm) -- ++ (0, 12cm);
        \node[above] () at (8cm, 13.25cm) {$i_5'$};
        \node[left] () at (8cm, 11.25cm) {$\alpha$};
        \node[right] () at (28cm, 11.25cm) {$\alpha'$};
        \draw[red] (8cm, 11.25cm) -- ++ (10cm, -2cm) -- ++ (10cm, 2cm);
        \node[above] () at (13cm, 10.25cm) {$s$};
        \node[above] () at (23cm, 10.25cm) {$\bar{s}$};
        \draw[] (12cm, 1.375cm) -- ++ (0, 7.5cm);
        \draw[] (16cm, 4cm) -- ++ (-4cm, 5cm);
        \draw[] (16cm, 4cm) -- ++ (2cm, -2.5cm) -- ++ (2cm, 2.5cm);
        \node[right] () at (8cm, 2.25cm) {$\alpha_5$};
        \node[right] () at (18cm, 1.5cm) {$\alpha_4$};
        \draw[] (4.5cm, -0.5cm) circle (0.5cm);
        \draw[] (5cm, 1cm) circle (0.5cm);
        \draw[] (11cm, 1cm) circle (0.5cm);

        \draw[] (20cm, 4cm) -- ++ (4cm, 5cm)-- ++ (0, -5cm);

        \draw[] (24cm, 5.75cm) -- ++ (0, -4cm) -- ++ (2cm, -1cm) -- ++ (2cm, 1cm) -- ++ (2cm, -1cm) -- ++ (2cm, 1cm);
        \draw[] (28cm, 4cm) -- ++ (0, -2.25cm);
        \draw[] (28cm, 3.25cm) -- ++ (0, 10cm);
        \node[above] () at (28cm, 13.25cm) {$i_2'$};
        \draw[] (32cm, 4cm) -- ++ (0, -2.25cm);
        \node[right]  () at (28cm, 2.5cm) {$\alpha_3$};
        \draw[] (25cm, 1.25cm) circle (0.5cm);
        \draw[] (31cm, 1.25cm) circle (0.5cm);

        \draw[] (36cm, 3.875cm) -- ++ (0, 6cm);
        \draw[] (40cm, 3.875cm) -- ++ (0, 6cm);
        \node[above] () at (36cm, 9.875cm) {$\bar{i}_2'$};
        \node[above] () at (40cm, 9.875cm) {$\bar{i}_5'$};

        \node[] () at (36.75cm, 5.5cm) {$s$};
        \node[] () at (39.25cm, 5.5cm) {$\bar{s}$};

        \node[left] () at (36cm, 7.875cm) {$\beta'$};
        \node[right] () at (40cm, 7.875cm) {$\beta$};
        \draw[red] (36cm, 7.875cm) -- ++ (2cm, -2cm) -- ++ (2cm, 2cm);

        \draw[] (40cm, 4cm) -- ++ (4cm, -4cm) -- ++ (4cm, 4cm);
        \draw[] (44cm, 4cm) -- ++ (2cm, -2cm);
        \node[right] () at (46cm, 1.5cm) {$\alpha_6$};
        \draw[] (42cm, 2cm) circle (0.5cm);

        \draw[] (48cm, 4cm) -- ++ (4cm, 4cm) -- ++ (0, -4cm);

        \draw[] (52cm, 5.75cm) -- ++ (0, -4cm) -- ++ (2cm, -1cm) -- ++ (2cm, 1cm) -- ++ (2cm, -1cm) -- ++ (2cm, 1cm);
        \draw[] (56cm, 4cm) -- ++ (0, -2.25cm);
        \draw[] (60cm, 4cm) -- ++ (0, -2.25cm);
        \node[right]  () at (56cm, 2.5cm) {$\alpha_1$};
        \draw[] (53cm, 1.25cm) circle (0.5cm);
        \draw[] (59cm, 1.25cm) circle (0.5cm);

        \draw[] (64cm, 4cm) -- ++ (-4cm, 4cm) -- ++ (0, -4cm);

        \draw[] (64cm, 4cm) -- ++ (2cm, -2cm) -- ++ (2cm, 2cm);
        \draw[] (66cm, 2cm) -- ++ (-15cm, -5cm) -- ++ (-15cm, 5cm)-- ++ (0, 2cm);
        \node[right] () at (66cm, 1.5cm) {$\alpha_2$};
        \draw[] (40.5cm, 0.5cm) circle (0.5cm);            
        \end{tikzpicture},
\end{aligned}
        \label{eq:entire_state}
\end{equation}
where $C_{s, i_2} = \frac{1}{d_s^2 d_{i_2}^2}$ is the normalization factor coming from the other junction, $C_N = \prod_{k=1}^6 d_{i_k}^{-\frac{1}{2}} d_{e_k}^{-\frac{1}{4}}$ is the normalization constant of the initial state, and $C_J = \prod_{k=1}^6 (d_{i_k}/d_{i_k'}^3)^{\frac{1}{2}}$ are the aforementioned contributions from the edge entanglers and disentanglers. Now the matrix element can be obtained by applying the disentangling operations on the remaining edges and then re-expressing the resulting state as a linear combination of our basis states [Figure~\ref{fig:plaquette_setup}(c)]. We will proceed with the following steps. First, we will simplify the diagrams in the second line in Eq.~\eqref{eq:entire_state}, treating the left and the right diagram separately. This process will involve the string operators for $s$ and $\bar{s}$, which we colored in red. After this simplification, we obtain a set of diagrams similar to Eq.~\eqref{eq:plaquette_bs}, modulo some additional diagrams. Second, we combine these additional diagrams with the diagrams in the first line of Eq.~\eqref{eq:entire_state}. These diagrams, together with the summation over $\beta$ and $\beta'$, yield delta functions. Combining these delta functions with the rest, we obtain a simplified expression for the resulting state.

To that end, we carry out the following diagrammatic calculation.
\begin{equation}
\begin{aligned}
\begin{tikzpicture}[baseline={([yshift=-0.5ex]current bounding box.center)}, line width=1pt, scale=0.175]
        \node[left] (i2bar) at (36cm, 5.5cm) {$\bar{i}_2$};
        \node[] (i5bar) at (41.5cm, 5.5cm) {$\bar{i}_5$};
        \node[above] (e6) at (44cm, 4cm) {$e_6$};
        \node[] (i6) at (47.5cm, 5.5cm) {$i_6$};
        \node[right] (i6bar) at (52cm, 5.5cm) {$\bar{i}_6$};
        \node[above] (e1) at (56cm, 4cm) {$e_1$};
        \node[left] (i1bar) at (60cm, 5.5cm) {$\bar{i}_1$};
        \node[above] (i1) at (64.5cm, 4cm) {$i_1$};
        \node[above] (e2) at (68cm, 4cm) {$e_2$};

        \draw[] (36cm, 3.875cm) -- ++ (0, 6cm);
        \draw[] (40cm, 3.875cm) -- ++ (0, 6cm);
        \node[above] () at (36cm, 9.875cm) {$\bar{i}_2'$};
        \node[above] () at (40cm, 9.875cm) {$\bar{i}_5'$};

        \node[] () at (36.75cm, 5.5cm) {$s$};
        \node[] () at (39.25cm, 5.5cm) {$\bar{s}$};

        \node[left] () at (36cm, 7.875cm) {$\beta'$};
        \node[right] () at (40cm, 7.875cm) {$\beta$};
        \draw[red] (36cm, 7.875cm) -- ++ (2cm, -2cm) -- ++ (2cm, 2cm);

        \draw[] (40cm, 4cm) -- ++ (4cm, -4cm) -- ++ (4cm, 4cm);
        \draw[] (44cm, 4cm) -- ++ (2cm, -2cm);
        \node[right] () at (46cm, 1.5cm) {$\alpha_6$};
        \draw[] (42cm, 2cm) circle (0.5cm);

        \draw[] (48cm, 4cm) -- ++ (4cm, 4cm) -- ++ (0, -4cm);

        \draw[] (52cm, 5.75cm) -- ++ (0, -4cm) -- ++ (2cm, -1cm) -- ++ (2cm, 1cm) -- ++ (2cm, -1cm) -- ++ (2cm, 1cm);
        \draw[] (56cm, 4cm) -- ++ (0, -2.25cm);
        \draw[] (60cm, 4cm) -- ++ (0, -2.25cm);
        \node[right]  () at (56cm, 2.5cm) {$\alpha_1$};
        \draw[] (53cm, 1.25cm) circle (0.5cm);
        \draw[] (59cm, 1.25cm) circle (0.5cm);

        \draw[] (64cm, 4cm) -- ++ (-4cm, 4cm) -- ++ (0, -4cm);

        \draw[] (64cm, 4cm) -- ++ (2cm, -2cm) -- ++ (2cm, 2cm);
        \draw[] (66cm, 2cm) -- ++ (-15cm, -5cm) -- ++ (-15cm, 5cm)-- ++ (0, 2cm);
        \node[right] () at (66cm, 1.5cm) {$\alpha_2$};
        \draw[] (40.5cm, 0.5cm) circle (0.5cm);            
        \end{tikzpicture}
   &=
    \begin{tikzpicture}[baseline={([yshift=-0.5ex]current bounding box.center)}, line width=1pt, scale=0.175]
        \node[left] (i2bar) at (36cm, 5.5cm) {$\bar{i}_2$};
        \node[] (i5bar) at (41.5cm, 5.5cm) {$\bar{i}_5$};
        \node[above] (e6) at (44cm, 4cm) {$e_6$};
        \node[] (i6) at (47.5cm, 5.5cm) {$i_6$};
        \node[right] (i6bar) at (52cm, 5.5cm) {$\bar{i}_6$};
        \node[above] (e1) at (56cm, 4cm) {$e_1$};
        \node[left] (i1bar) at (60cm, 5.5cm) {$\bar{i}_1$};
        \node[above] (i1) at (64.5cm, 4cm) {$i_1$};
        \node[above] (e2) at (68cm, 4cm) {$e_2$};

        \draw[] (36cm, 3.875cm) -- ++ (0, 6cm);
        \draw[] (40cm, 3.875cm) -- ++ (0, 6cm);
        \node[above] () at (36cm, 9.875cm) {$\bar{i}_2'$};
        \node[above] () at (40cm, 9.875cm) {$\bar{i}_5'$};

        \node[] () at (36.75cm, 5.5cm) {$s$};
        \node[] () at (39.25cm, 5.5cm) {$\bar{s}$};

        \node[left] () at (36cm, 7.875cm) {$\beta'$};
        \node[right] () at (40cm, 7.875cm) {$\beta$};
        \draw[red] (36cm, 7.875cm) -- ++ (2cm, -2cm) -- ++ (2cm, 2cm);

        \draw[] (40cm, 4cm) -- ++ (4cm, -4cm) -- ++ (4cm, 4cm);
        \draw[] (44cm, 4cm) -- ++ (2cm, -2cm);
        \node[right] () at (46cm, 1.5cm) {$\alpha_6$};
        \draw[] (42cm, 2cm) circle (0.5cm);

        \draw[] (48cm, 4cm) -- ++ (4cm, 4cm) -- ++ (0, -4cm);

        \draw[] (52cm, 5.75cm) -- ++ (0, -4cm) -- ++ (2cm, -1cm) -- ++ (2cm, 1cm) -- ++ (2cm, -1cm) -- ++ (2cm, 1cm);
        \draw[] (56cm, 4cm) -- ++ (0, -2.25cm);
        \draw[] (60cm, 4cm) -- ++ (0, -2.25cm);
        \node[right]  () at (56cm, 2.5cm) {$\alpha_1$};
        \draw[] (53cm, 1.25cm) circle (0.5cm);
        \draw[] (59cm, 1.25cm) circle (0.5cm);

        \draw[] (64cm, 4cm) -- ++ (-4cm, 4cm) -- ++ (0, -4cm);

        \draw[] (64cm, 4cm) -- ++ (2cm, -2cm) -- ++ (2cm, 2cm);
        \draw[] (66cm, 2cm)  -- ++ (0, -4cm)   -- ++ (-15cm, -5cm) -- ++ (-15cm, 5cm)-- ++ (0, 6cm);
        \node[right] () at (66cm, 1.5cm) {$\alpha_2$};
        \draw[] (40.5cm, -3.5cm) circle (0.5cm);            
        \end{tikzpicture}
        \\
        &= 
            \begin{tikzpicture}[baseline={([yshift=-0.5ex]current bounding box.center)}, line width=1pt, scale=0.175]
        \node[left] (i2bar) at (36cm, 5.5cm) {$\bar{i}_2$};
        \node[] (i5bar) at (41.5cm, 5.5cm) {$\bar{i}_5$};
        \node[above] (e6) at (44cm, 4cm) {$e_6$};
        \node[] (i6) at (47.5cm, 5.5cm) {$i_6$};
        \node[above] (e1) at (56cm, 15cm) {$e_1$};
        \node[above] (i1) at (64.5cm, 4cm) {$i_1$};
        \node[above] (e2) at (68cm, 4cm) {$e_2$};

        \draw[] (36cm, 3.875cm) -- ++ (0, 6cm);
        \draw[] (40cm, 3.875cm) -- ++ (0, 6cm);
        \node[above] () at (36cm, 9.875cm) {$\bar{i}_2'$};
        \node[above] () at (40cm, 9.875cm) {$\bar{i}_5'$};

        \node[] () at (36.75cm, 5.5cm) {$s$};
        \node[] () at (39.25cm, 5.5cm) {$\bar{s}$};

        \node[left] () at (36cm, 7.875cm) {$\beta'$};
        \node[right] () at (40cm, 7.875cm) {$\beta$};
        \draw[red] (36cm, 7.875cm) -- ++ (2cm, -2cm) -- ++ (2cm, 2cm);

        \draw[] (40cm, 4cm) -- ++ (4cm, -4cm) -- ++ (4cm, 4cm);
        \draw[] (44cm, 4cm) -- ++ (2cm, -2cm);
        \node[right] () at (46cm, 1.5cm) {$\alpha_6$};
        \draw[] (42cm, 2cm) circle (0.5cm);

        \draw[] (48cm, 4cm) -- ++ (8cm, 8cm) -- ++ (4cm, -4cm);
        \draw[] (56cm, 12cm) -- ++ (0, 3cm);

        \node[right]  () at (56cm, 12cm) {$\alpha_1$};

        \draw[] (64cm, 4cm) -- ++ (-4cm, 4cm);

        \draw[] (64cm, 4cm) -- ++ (2cm, -2cm) -- ++ (2cm, 2cm);
        \draw[] (66cm, 2cm) -- ++ (0, -4cm) -- ++ (-15cm, -5cm) -- ++ (-15cm, 5cm)-- ++ (0, 6cm);
        \node[right] () at (66cm, 1.5cm) {$\alpha_2$};
        \draw[] (40.5cm, -3.5cm) circle (0.5cm);
        \end{tikzpicture} \\
        &= 
        \begin{tikzpicture}[baseline={([yshift=-0.5ex]current bounding box.center)}, line width=1pt, scale=0.175]
        \node[left] (i2bar) at (36cm, 5.5cm) {$\bar{i}_2$};
        \node[] (i5bar) at (41.5cm, 4.5cm) {$\bar{i}_5$};
        \node[above] (e6) at (44cm, 4cm) {$e_6$};
        \node[] (i6) at (47.5cm, 5.5cm) {$i_6$};
        \node[above] (e1) at (56cm, 15cm) {$e_1$};
        \node[above] (i1) at (64.5cm, 4cm) {$i_1$};
        \node[above] (e2) at (68cm, 4cm) {$e_2$};

        \draw[] (36cm, 3.875cm) -- ++ (0, 6cm);
        \draw[] (40cm, 3.875cm) -- ++ (0, 6cm);
        \node[above] () at (36cm, 9.875cm) {$\bar{i}_2'$};
        \node[above] () at (40cm, 9.875cm) {$\bar{i}_5'$};

        \node[] () at (36.75cm, 1cm) {$s$};
        \node[] () at (39.25cm, 7cm) {$\bar{s}$};

        \node[left] () at (36cm, 1cm) {$\beta'$};
        \node[right] () at (40cm, 7.475cm) {$\beta$};
        \draw[red] (36cm, 0cm) -- ++ (15cm, -5cm) -- ++ (14cm, 4.5cm) -- ++ (0, 2cm) -- ++ (-6cm, 6cm)-- ++ (-3cm, 1cm) -- ++ (-3cm, -1cm) --++ (-9cm, -9cm) -- ++ (-6cm, 6cm)-- ++ (2cm, 2cm);

        \draw[red] (38cm, 4.5cm) circle (0.5cm);

        \draw[] (40cm, 4cm) -- ++ (4cm, -4cm) -- ++ (4cm, 4cm);
        \draw[] (44cm, 4cm) -- ++ (2cm, -2cm);
        \node[above] () at (46cm, 2.5cm) {$\alpha_6$};
        \draw[] (42cm, 2cm) circle (0.5cm);

        \draw[] (48cm, 4cm) -- ++ (8cm, 8cm) -- ++ (4cm, -4cm);
        \draw[] (56cm, 12cm) -- ++ (0, 3cm);

        \node[right]  () at (56cm, 12cm) {$\alpha_1$};

        \draw[] (64cm, 4cm) -- ++ (-4cm, 4cm);

        \draw[] (64cm, 4cm) -- ++ (2cm, -2cm) -- ++ (2cm, 2cm);
        \draw[] (66cm, 2cm) -- ++ (0, -4cm) -- ++ (-15cm, -5cm) -- ++ (-15cm, 5cm)-- ++ (0, 6cm);
        \node[right] () at (66cm, 1.5cm) {$\alpha_2$};
        \draw[] (40.5cm, -3.5cm) circle (0.5cm);
        \end{tikzpicture},
    \end{aligned}
\end{equation}
where in the second line we bent the edges labeled by $\bar{i}_6$ and $\bar{i}_1$, removing the Frobenius-Schur indicators. Now we can use the completeness relation [Eq.~\eqref{eq:completeness}] on four places: the edges labeled by $i_1, i_2, i_5, i_6$. This entails summing over a set of fusion outcomes and multiplicity labels, which shall be denoted as $\tilde{i}_1, \tilde{i}_2, \tilde{i}_5, \tilde{i}_6$ and $\gamma_1, \gamma_2, \gamma_5, \gamma_6$. After bending the edges back down and applying the vacuum identity [Eq.~\eqref{eq:vacuum_identity}] and the completeness relation [Eq.~\eqref{eq:completeness}], we obtain the following diagram:
\begin{equation}
\begin{aligned}
\begin{tikzpicture}[baseline={([yshift=-0.5ex]current bounding box.center)}, line width=1pt, scale=0.175]
        \node[left] (i2bar) at (36cm, 5.5cm) {$\bar{i}_2$};
        \node[] (i5bar) at (41.5cm, 5.5cm) {$\bar{i}_5$};
        \node[above] (e6) at (44cm, 4cm) {$e_6$};
        \node[] (i6) at (47.5cm, 5.5cm) {$i_6$};
        \node[right] (i6bar) at (52cm, 5.5cm) {$\bar{i}_6$};
        \node[above] (e1) at (56cm, 4cm) {$e_1$};
        \node[left] (i1bar) at (60cm, 5.5cm) {$\bar{i}_1$};
        \node[above] (i1) at (64.5cm, 4cm) {$i_1$};
        \node[above] (e2) at (68cm, 4cm) {$e_2$};

        \draw[] (36cm, 3.875cm) -- ++ (0, 6cm);
        \draw[] (40cm, 3.875cm) -- ++ (0, 6cm);
        \node[above] () at (36cm, 9.875cm) {$\bar{i}_2'$};
        \node[above] () at (40cm, 9.875cm) {$\bar{i}_5'$};

        \node[] () at (36.75cm, 5.5cm) {$s$};
        \node[] () at (39.25cm, 5.5cm) {$\bar{s}$};

        \node[left] () at (36cm, 7.875cm) {$\beta'$};
        \node[right] () at (40cm, 7.875cm) {$\beta$};
        \draw[red] (36cm, 7.875cm) -- ++ (2cm, -2cm) -- ++ (2cm, 2cm);

        \draw[] (40cm, 4cm) -- ++ (4cm, -4cm) -- ++ (4cm, 4cm);
        \draw[] (44cm, 4cm) -- ++ (2cm, -2cm);
        \node[right] () at (46cm, 1.5cm) {$\alpha_6$};
        \draw[] (42cm, 2cm) circle (0.5cm);

        \draw[] (48cm, 4cm) -- ++ (4cm, 4cm) -- ++ (0, -4cm);

        \draw[] (52cm, 5.75cm) -- ++ (0, -4cm) -- ++ (2cm, -1cm) -- ++ (2cm, 1cm) -- ++ (2cm, -1cm) -- ++ (2cm, 1cm);
        \draw[] (56cm, 4cm) -- ++ (0, -2.25cm);
        \draw[] (60cm, 4cm) -- ++ (0, -2.25cm);
        \node[right]  () at (56cm, 2.5cm) {$\alpha_1$};
        \draw[] (53cm, 1.25cm) circle (0.5cm);
        \draw[] (59cm, 1.25cm) circle (0.5cm);

        \draw[] (64cm, 4cm) -- ++ (-4cm, 4cm) -- ++ (0, -4cm);

        \draw[] (64cm, 4cm) -- ++ (2cm, -2cm) -- ++ (2cm, 2cm);
        \draw[] (66cm, 2cm) -- ++ (-15cm, -5cm) -- ++ (-15cm, 5cm)-- ++ (0, 2cm);
        \node[right] () at (66cm, 1.5cm) {$\alpha_2$};
        \draw[] (40.5cm, 0.5cm) circle (0.5cm);            
        \end{tikzpicture}
         =
        \sum_{\substack{\gamma_1, \gamma_2, \gamma_5, \gamma_6 \\\tilde{i}_1, \tilde{i}_2, \tilde{i}_5, \tilde{i}_6}} \sqrt{ \frac{d_{\tilde{i}_1}}{d_sd_{i_1}} \frac{d_{\tilde{i}_2}}{d_s d_{i_2}} \frac{d_{\tilde{i}_5}}{d_s d_{i_5}} \frac{d_{\tilde{i}_6}}{d_s d_{i_6}}} \frac{1}{d_{i_2'} d_{i_5'}}\delta_{\tilde{i}_5, i_5'} \delta_{\tilde{i}_2, i_2'}\\
        \times 
        \begin{tikzpicture}[baseline={([yshift=-0.5ex]current bounding box.center)}, line width=1pt, scale=0.8]
    \draw[] (-1, 0) -- ++ (1, 1) -- ++ (1, -1) -- ++ (-1, -1) -- cycle;
    \draw[] (-0.5, 0.5) -- ++ (0.5, -0.5) -- ++ (0.5, 0.5);
    \node[left] () at (-0.5, 0.5) {$\beta'$};
    \node[right] () at (0.5, 0.5) {$\gamma_2$};
    \node[above] () at (0.25, 0.75) {$\tilde{i}_2$};
    \node[below] () at (0.5, -0.5) {$i_2$};
    \node[below]  () at (0.25, 0.25) {$\bar{s}$};
    \draw[] (-0.25, 0.75) circle (0.1cm);
    \draw[] (-0.5, -0.5) circle (0.1cm);
    \draw[] (-0.25, 0.25) circle (0.1cm);

    \begin{scope}[xshift=2.25cm]
    \draw[] (-1, 0) -- ++ (1, 1) -- ++ (1, -1) -- ++ (-1, -1) -- cycle;
    \draw[] (-0.5, 0.5) -- ++ (0.5, -0.5) -- ++ (0.5, 0.5);
    \node[left] () at (-0.5, 0.5) {$\beta$};
    \node[right] () at (0.5, 0.5) {$\gamma_5$};
    \node[above] () at (0.25, 0.75) {$\tilde{i}_5$};
    \node[below] () at (0.5, -0.5) {$s$};
    \node[below]  () at (0.25, 0.25) {$i_5$};

    \draw[] (-0.25, 0.75) circle (0.1cm);
    \draw[] (-0.25, 0.25) circle (0.1cm);
    \end{scope}
\end{tikzpicture}
        \begin{tikzpicture}[baseline={([yshift=-0.5ex]current bounding box.center)}, line width=1pt, scale=0.175]
        \node[above] (e6) at (44cm, 4cm) {$e_6$};
        \node[] (i6) at (47.5cm, 5.5cm) {$i_6$};
        \node[right] (i6bar) at (52cm, 5.5cm) {$\bar{\tilde{i}}_6$};
        \node[above] (e1) at (56cm, 4cm) {$e_1$};
        \node[left] (i1bar) at (60cm, 5.5cm) {$\bar{\tilde{i}}_1$};
        \node[above] (i1) at (64.5cm, 4cm) {$i_1$};
        \node[above] (e2) at (68cm, 4cm) {$e_2$};

        \draw[] (36cm, 3.875cm) -- ++ (0, 6cm);
        \draw[] (40cm, 3.875cm) -- ++ (0, 6cm);
        \node[above] () at (36cm, 9.875cm) {$\bar{i}_2'$};
        \node[above] () at (40cm, 9.875cm) {$\bar{i}_5'$};

        \draw[] (40cm, 4cm) -- ++ (4cm, -4cm) -- ++ (4cm, 4cm);
        \draw[] (44cm, 4cm) -- ++ (2cm, -2cm);
        \node[] () at (46cm, 3.5cm) {$\alpha_6$};
        \draw[] (42cm, 2cm) circle (0.5cm);

        \draw[] (48cm, 4cm) -- ++ (4cm, 4cm) -- ++ (0, -4cm);

        \draw[] (52cm, 5.75cm) -- ++ (0, -4cm) -- ++ (2cm, -1cm) -- ++ (2cm, 1cm) -- ++ (2cm, -1cm) -- ++ (2cm, 1cm);
        \draw[] (56cm, 4cm) -- ++ (0, -2.25cm);
        \draw[] (60cm, 4cm) -- ++ (0, -2.25cm);
        \node[right]  () at (56cm, 2.5cm) {$\alpha_1$};
        \draw[] (53cm, 1.25cm) circle (0.5cm);
        \draw[] (59cm, 1.25cm) circle (0.5cm);

        \draw[] (64cm, 4cm) -- ++ (-4cm, 4cm) -- ++ (0, -4cm);

        \draw[] (64cm, 4cm) -- ++ (2cm, -2cm) -- ++ (2cm, 2cm);
        \draw[] (66cm, 2cm) -- ++ (-15cm, -5cm) -- ++ (-15cm, 5cm)-- ++ (0, 2cm);
        \node[right] () at (66cm, 1.5cm) {$\alpha_2$};
        \draw[] (40.5cm, 0.5cm) circle (0.5cm);  

        \draw[red] (62cm, 6cm) -- ++ (0, -5.25cm);
        \draw[red] (50cm, 6cm) -- ++ (-1.75cm, -3.5cm) -- ++(-3.5cm, -1.75cm);
        
        \draw[red] (56cm, 1.25cm) -- ++ (1.5cm, -0.25cm);
        \draw[red] (56cm, 1.25cm) -- ++ (-1.5cm, -0.25cm);
        \end{tikzpicture},
        \end{aligned}
\end{equation}
where the indices $\gamma_1, \gamma_2, \gamma_5, \gamma_6$ are suppressed in the right diagram. (Each $\gamma_k$ is a vertex incident on edges $i_k$ and $\tilde{i}_k$.) Carrying out this sum over $\tilde{i}_2$ and $\tilde{i}_5$ and replacing the summation variable $\tilde{i}_1$ and $\tilde{i}_6$ to $i_1'$ and $i_6'$ respectively, we obtain the following. 
\begin{equation}
\begin{aligned}
\begin{tikzpicture}[baseline={([yshift=-0.5ex]current bounding box.center)}, line width=1pt, scale=0.175]
        \node[left] (i2bar) at (36cm, 5.5cm) {$\bar{i}_2$};
        \node[] (i5bar) at (41.5cm, 5.5cm) {$\bar{i}_5$};
        \node[above] (e6) at (44cm, 4cm) {$e_6$};
        \node[] (i6) at (47.5cm, 5.5cm) {$i_6$};
        \node[right] (i6bar) at (52cm, 5.5cm) {$\bar{i}_6$};
        \node[above] (e1) at (56cm, 4cm) {$e_1$};
        \node[left] (i1bar) at (60cm, 5.5cm) {$\bar{i}_1$};
        \node[above] (i1) at (64.5cm, 4cm) {$i_1$};
        \node[above] (e2) at (68cm, 4cm) {$e_2$};

        \draw[] (36cm, 3.875cm) -- ++ (0, 6cm);
        \draw[] (40cm, 3.875cm) -- ++ (0, 6cm);
        \node[above] () at (36cm, 9.875cm) {$\bar{i}_2'$};
        \node[above] () at (40cm, 9.875cm) {$\bar{i}_5'$};

        \node[] () at (36.75cm, 5.5cm) {$s$};
        \node[] () at (39.25cm, 5.5cm) {$\bar{s}$};

        \node[left] () at (36cm, 7.875cm) {$\beta'$};
        \node[right] () at (40cm, 7.875cm) {$\beta$};
        \draw[red] (36cm, 7.875cm) -- ++ (2cm, -2cm) -- ++ (2cm, 2cm);

        \draw[] (40cm, 4cm) -- ++ (4cm, -4cm) -- ++ (4cm, 4cm);
        \draw[] (44cm, 4cm) -- ++ (2cm, -2cm);
        \node[right] () at (46cm, 1.5cm) {$\alpha_6$};
        \draw[] (42cm, 2cm) circle (0.5cm);

        \draw[] (48cm, 4cm) -- ++ (4cm, 4cm) -- ++ (0, -4cm);

        \draw[] (52cm, 5.75cm) -- ++ (0, -4cm) -- ++ (2cm, -1cm) -- ++ (2cm, 1cm) -- ++ (2cm, -1cm) -- ++ (2cm, 1cm);
        \draw[] (56cm, 4cm) -- ++ (0, -2.25cm);
        \draw[] (60cm, 4cm) -- ++ (0, -2.25cm);
        \node[right]  () at (56cm, 2.5cm) {$\alpha_1$};
        \draw[] (53cm, 1.25cm) circle (0.5cm);
        \draw[] (59cm, 1.25cm) circle (0.5cm);

        \draw[] (64cm, 4cm) -- ++ (-4cm, 4cm) -- ++ (0, -4cm);

        \draw[] (64cm, 4cm) -- ++ (2cm, -2cm) -- ++ (2cm, 2cm);
        \draw[] (66cm, 2cm) -- ++ (-15cm, -5cm) -- ++ (-15cm, 5cm)-- ++ (0, 2cm);
        \node[right] () at (66cm, 1.5cm) {$\alpha_2$};
        \draw[] (40.5cm, 0.5cm) circle (0.5cm);            
        \end{tikzpicture}
         = \sum_{\substack{\gamma_1, \gamma_2, \gamma_5, \gamma_6 \\i_1', i_6'}} \sqrt{ \frac{d_{i_1'}}{d_sd_{i_1}} \frac{d_{i_2'}}{d_s d_{i_2}} \frac{d_{i_5'}}{d_s d_{i_5}} \frac{d_{i_6'}}{d_s d_{i_6}}} \frac{1}{d_{i_2'} d_{i_5'}}\\
        \times 
        \begin{tikzpicture}[baseline={([yshift=-0.5ex]current bounding box.center)}, line width=1pt, scale=0.8]
    \draw[] (-1, 0) -- ++ (1, 1) -- ++ (1, -1) -- ++ (-1, -1) -- cycle;
    \draw[] (-0.5, 0.5) -- ++ (0.5, -0.5) -- ++ (0.5, 0.5);
    \node[left] () at (-0.5, 0.5) {$\beta'$};
    \node[right] () at (0.5, 0.5) {$\gamma_2$};
    \node[above] () at (0.25, 0.75) {$i_2'$};
    \node[below] () at (0.5, -0.5) {$i_2$};
    \node[below]  () at (0.25, 0.25) {$\bar{s}$};

    \draw[] (-0.25, 0.75) circle (0.1cm);
    \draw[] (-0.5, -0.5) circle (0.1cm);
    \draw[] (-0.25, 0.25) circle (0.1cm);

    \begin{scope}[xshift=2.25cm]
    \draw[] (-1, 0) -- ++ (1, 1) -- ++ (1, -1) -- ++ (-1, -1) -- cycle;
    \draw[] (-0.5, 0.5) -- ++ (0.5, -0.5) -- ++ (0.5, 0.5);
    \node[left] () at (-0.5, 0.5) {$\beta$};
    \node[right] () at (0.5, 0.5) {$\gamma_5$};
    \node[above] () at (0.25, 0.75) {$i_5'$};
    \node[below] () at (0.5, -0.5) {$s$};
    \node[below]  () at (0.25, 0.25) {$i_5$};

    \draw[] (-0.25, 0.75) circle (0.1cm);
    \draw[] (-0.25, 0.25) circle (0.1cm);
    \end{scope}
\end{tikzpicture}
        \begin{tikzpicture}[baseline={([yshift=-0.5ex]current bounding box.center)}, line width=1pt, scale=0.175]
        \node[above] (e6) at (44cm, 4cm) {$e_6$};
        \node[] (i6) at (47.5cm, 5.5cm) {$i_6$};
        \node[right] (i6bar) at (52cm, 5.5cm) {$\bar{i}_6'$};
        \node[above] (e1) at (56cm, 4cm) {$e_1$};
        \node[left] (i1bar) at (60cm, 5.5cm) {$\bar{i}_1'$};
        \node[above] (i1) at (64.5cm, 4cm) {$i_1$};
        \node[above] (e2) at (68cm, 4cm) {$e_2$};

        \draw[] (36cm, 3.875cm) -- ++ (0, 6cm);
        \draw[] (40cm, 3.875cm) -- ++ (0, 6cm);
        \node[above] () at (36cm, 9.875cm) {$\bar{i}_2'$};
        \node[above] () at (40cm, 9.875cm) {$\bar{i}_5'$};

        \draw[] (40cm, 4cm) -- ++ (4cm, -4cm) -- ++ (4cm, 4cm);
        \draw[] (44cm, 4cm) -- ++ (2cm, -2cm);
        \node[] () at (46cm, 3.5cm) {$\alpha_6$};
        \draw[] (42cm, 2cm) circle (0.5cm);

        \draw[] (48cm, 4cm) -- ++ (4cm, 4cm) -- ++ (0, -4cm);

        \draw[] (52cm, 5.75cm) -- ++ (0, -4cm) -- ++ (2cm, -1cm) -- ++ (2cm, 1cm) -- ++ (2cm, -1cm) -- ++ (2cm, 1cm);
        \draw[] (56cm, 4cm) -- ++ (0, -2.25cm);
        \draw[] (60cm, 4cm) -- ++ (0, -2.25cm);
        \node[right]  () at (56cm, 2.5cm) {$\alpha_1$};
        \draw[] (53cm, 1.25cm) circle (0.5cm);
        \draw[] (59cm, 1.25cm) circle (0.5cm);

        \draw[] (64cm, 4cm) -- ++ (-4cm, 4cm) -- ++ (0, -4cm);

        \draw[] (64cm, 4cm) -- ++ (2cm, -2cm) -- ++ (2cm, 2cm);
        \draw[] (66cm, 2cm) -- ++ (-15cm, -5cm) -- ++ (-15cm, 5cm)-- ++ (0, 2cm);
        \node[right] () at (66cm, 1.5cm) {$\alpha_2$};
        \draw[] (40.5cm, 0.5cm) circle (0.5cm);  

        \draw[red] (62cm, 6cm) -- ++ (0, -5.25cm);
        \draw[red] (50cm, 6cm) -- ++ (-1.75cm, -3.5cm) -- ++(-3.5cm, -1.75cm);
        
        \draw[red] (56cm, 1.25cm) -- ++ (1.5cm, -0.25cm);
        \draw[red] (56cm, 1.25cm) -- ++ (-1.5cm, -0.25cm);

        \end{tikzpicture}.
        \end{aligned}
        \label{eq:right_diagram_final}
\end{equation}

We can carry out a similar diagrammatic calculation for the left diagram on the second line of Eq.~\eqref{eq:entire_state}. This calculation consists of the following steps. We first bend down the edges labeled by $\bar{i}_4$ and $\bar{i}_3$, removing the Frobenius-Schur indicators. Then we use the completeness relation [Eq.~\eqref{eq:completeness}] over edges $i_4$ and $i_3$. Bending the bent-down edges back up, we obtain the following expression.
\begin{equation}
\begin{aligned}
    \begin{tikzpicture}[baseline={([yshift=-0.5ex]current bounding box.center)}, line width=1pt, scale=0.175]
        \node[above] (e4bar) at (0, 4cm) {$\bar{e}_4$};
        \node[above] (e5bar) at (4cm, 4cm) {$\bar{e}_5$};
        \node[left] (i5) at (8cm, 5.5cm) {$i_5$};
        \node[left] (i4bar) at (12cm, 5.5cm) {$\bar{i}_4$};
        \node[] (i4) at (16.5cm, 5.5cm) {$i_4$};
        \node[] (i3) at (19.5cm, 5.5cm) {$i_3$};
        \node[right] (i3bar) at (24cm, 4cm) {$\bar{i}_3$};
        \node[right] (i2) at (28cm, 5.5cm) {$i_2$};
        \node[above] (e3bar) at (32cm, 4cm) {$\bar{e}_3$};

        \draw[] (e4bar) -- ++ (0, -4cm) -- ++ (9cm, -4cm) -- ++ (9cm, 4cm); 
        \draw[] (e5bar) -- ++ (0, -4cm) -- ++ (2cm, -1cm) -- ++ (2cm, 1cm) -- ++ (2cm, -1cm) -- ++ (2cm, 1cm);
        \draw[] (8cm, 1.25cm) -- ++ (0, 12cm);
        \node[above] () at (8cm, 13.25cm) {$i_5'$};
        \node[left] () at (8cm, 11.25cm) {$\alpha$};
        \node[right] () at (28cm, 11.25cm) {$\alpha'$};
        \draw[red] (8cm, 11.25cm) -- ++ (10cm, -2cm) -- ++ (10cm, 2cm);
        \node[above] () at (13cm, 10.25cm) {$s$};
        \node[above] () at (23cm, 10.25cm) {$\bar{s}$};
        \draw[] (12cm, 1.375cm) -- ++ (0, 7.5cm);
        \draw[] (16cm, 4cm) -- ++ (-4cm, 5cm);
        \draw[] (16cm, 4cm) -- ++ (2cm, -2.5cm) -- ++ (2cm, 2.5cm);
        \node[right] () at (8cm, 2.25cm) {$\alpha_5$};
        \node[right] () at (18cm, 1.5cm) {$\alpha_4$};
        \draw[] (4.5cm, -0.5cm) circle (0.5cm);
        \draw[] (5cm, 1cm) circle (0.5cm);
        \draw[] (11cm, 1cm) circle (0.5cm);

        \draw[] (20cm, 4cm) -- ++ (4cm, 5cm)-- ++ (0, -5cm);

        \draw[] (24cm, 5.75cm) -- ++ (0, -4cm) -- ++ (2cm, -1cm) -- ++ (2cm, 1cm) -- ++ (2cm, -1cm) -- ++ (2cm, 1cm);
        \draw[] (28cm, 4cm) -- ++ (0, -2.25cm);
        \draw[] (28cm, 3.25cm) -- ++ (0, 10cm);
        \node[above] () at (28cm, 13.25cm) {$i_2'$};
        \draw[] (32cm, 4cm) -- ++ (0, -2.25cm);
        \node[right]  () at (28cm, 2.5cm) {$\alpha_3$};
        \draw[] (25cm, 1.25cm) circle (0.5cm);
        \draw[] (31cm, 1.25cm) circle (0.5cm);
        \end{tikzpicture} = \sum_{\substack{\gamma_3, \gamma_4, \\ i_3', i_4'}} \sqrt{\frac{d_{i_3'}}{d_sd_{i_3}}\frac{d_{i_4'}}{d_sd_{i_4}}} 
    \begin{tikzpicture}[baseline={([yshift=-0.5ex]current bounding box.center)}, line width=1pt, scale=0.175]
        \node[above] (e4bar) at (0, 4cm) {$\bar{e}_4$};
        \node[above] (e5bar) at (4cm, 4cm) {$\bar{e}_5$};
        \node[left] (i4bar) at (12cm, 5.5cm) {$\bar{i}_4'$};
        \node[] (i4) at (16.5cm, 5.5cm) {$i_4'$};
        \node[] (i3) at (19.5cm, 5.5cm) {$i_3'$};
        \node[right] (i3bar) at (24cm, 4cm) {$\bar{i}_3'$};
        \node[above] (e3bar) at (32cm, 4cm) {$\bar{e}_3$};

        \draw[] (e4bar) -- ++ (0, -4cm) -- ++ (9cm, -4cm) -- ++ (9cm, 4cm); 
        \draw[] (e5bar) -- ++ (0, -4cm) -- ++ (2cm, -1cm) -- ++ (2cm, 1cm) -- ++ (2cm, -1cm) -- ++ (2cm, 1cm);
        \draw[] (8cm, 1.25cm) -- ++ (0, 6cm);
        \node[above] () at (8cm, 7.25cm) {$i_5'$};
        \node[left] () at (8cm, 5cm) {$\alpha$};
        \node[right] () at (28cm, 5cm) {$\alpha'$};
        \draw[red] (8cm, 5cm) -- (9cm, 1cm);
        \draw[red] (28cm, 5cm) -- (27cm, 1.25cm);
        \draw[red] (18cm, 3cm) -- (20cm, 4cm);
        \draw[red] (18cm, 3cm) -- (16cm, 4cm);
        \draw[] (12cm, 1.375cm) -- ++ (0, 7.5cm);
        \draw[] (16cm, 4cm) -- ++ (-4cm, 5cm);
        \draw[] (16cm, 4cm) -- ++ (2cm, -2.5cm) -- ++ (2cm, 2.5cm);
        \node[right] () at (8cm, 2.25cm) {$\alpha_5$};
        \node[right] () at (18cm, 1.5cm) {$\alpha_4$};
        \draw[] (4.5cm, -0.5cm) circle (0.5cm);
        \draw[] (5cm, 1cm) circle (0.5cm);
        \draw[] (11cm, 1cm) circle (0.5cm);

        \draw[] (20cm, 4cm) -- ++ (4cm, 5cm)-- ++ (0, -5cm);

        \draw[] (24cm, 5.75cm) -- ++ (0, -4cm) -- ++ (2cm, -1cm) -- ++ (2cm, 1cm) -- ++ (2cm, -1cm) -- ++ (2cm, 1cm);
        \draw[] (28cm, 4cm) -- ++ (0, -2.25cm);
        \draw[] (28cm, 3.25cm) -- ++ (0, 4cm);
        \node[above] () at (28cm, 7.25cm) {$i_2'$};
        \draw[] (32cm, 4cm) -- ++ (0, -2.25cm);
        \node[right]  () at (28cm, 2.5cm) {$\alpha_3$};
        \draw[] (25cm, 1.25cm) circle (0.5cm);
        \draw[] (31cm, 1.25cm) circle (0.5cm);
        \end{tikzpicture}.
        \end{aligned}
        \label{eq:left_diagram_final}
\end{equation}
Here again we are suppressing the indices $\gamma_3$ and $\gamma_4$; each $\gamma_k$ is located on the vertex incident on edges $i_k$ and $i_k'$.

At this point, the remaining task is to plug in Eq.~\eqref{eq:right_diagram_final} and~\eqref{eq:left_diagram_final} to Eq.~\eqref{eq:entire_state} and simplify the resulting expression. The key observation here is that the sum over $\beta$ and $\beta'$ yields delta functions:

\begin{equation}
\begin{aligned}
    \sum_{\beta'}  
    \begin{tikzpicture}[baseline={([yshift=-0.5ex]current bounding box.center)}, line width=1pt, scale=0.8]
    \draw[] (-1, 0) -- ++ (1, 1) -- ++ (1, -1) -- ++ (-1, -1) -- cycle;
    \draw[] (-0.5, -0.5) -- ++ (0.5, 0.5) -- ++ (0.5, -0.5);
    \node[left] () at (-0.5, -0.5) {$\alpha'$};
    \node[right] () at (0.5, -0.5) {$\beta'$};
    \node[below] () at (-0.25, -0.75) {$i_2'$};
    \node[above] () at (-0.5, 0.5) {$\bar{s}$};
    \node[above] () at (-0.25, -0.25) {$i_2$};

    \begin{scope}[xshift=2.5cm]
    \draw[] (-1, 0) -- ++ (1, 1) -- ++ (1, -1) -- ++ (-1, -1) -- cycle;
    \draw[] (-0.5, 0.5) -- ++ (0.5, -0.5) -- ++ (0.5, 0.5);
    \node[left] () at (-0.5, 0.5) {$\beta'$};
    \node[right] () at (0.5, 0.5) {$\gamma_2$};
    \node[above] () at (0.25, 0.75) {$i_2'$};
    \node[below] () at (0.5, -0.5) {$i_2$};
    \node[below]  () at (0.25, 0.25) {$\bar{s}$};

    \draw[] (-0.25, 0.75) circle (0.1cm);
    \draw[] (-0.5, -0.5) circle (0.1cm);
    \draw[] (-0.25, 0.25) circle (0.1cm);
    \end{scope}
\end{tikzpicture} &=  d_s d_{i_2} d_{i_2'} \delta_{\alpha', \gamma_2}, 
    \\
    \sum_{\beta}
    \begin{tikzpicture}[baseline={([yshift=-0.5ex]current bounding box.center)}, line width=1pt, scale=0.8]
    \draw[] (-1, 0) -- ++ (1, 1) -- ++ (1, -1) -- ++ (-1, -1) -- cycle;
    \draw[] (-0.5, -0.5) -- ++ (0.5, 0.5) -- ++ (0.5, -0.5);
    \node[left] () at (-0.5, -0.5) {$\alpha$};
    \node[right] () at (0.5, -0.5) {$\beta$};
    \node[below] () at (-0.25, -0.75) {$i_5'$};
    \node[above] () at (-0.5, 0.5) {$i_5$};
    \node[above] () at (-0.25, -0.25) {$s$};
    \begin{scope}[xshift=2.5cm]
    \draw[] (-1, 0) -- ++ (1, 1) -- ++ (1, -1) -- ++ (-1, -1) -- cycle;
    \draw[] (-0.5, 0.5) -- ++ (0.5, -0.5) -- ++ (0.5, 0.5);
    \node[left] () at (-0.5, 0.5) {$\beta$};
    \node[right] () at (0.5, 0.5) {$\gamma_5$};
    \node[above] () at (0.25, 0.75) {$i_5'$};
    \node[below] () at (0.5, -0.5) {$s$};
    \node[below]  () at (0.25, 0.25) {$i_5$};

    \draw[] (-0.25, 0.75) circle (0.1cm);
    \draw[] (-0.25, 0.25) circle (0.1cm);
    \end{scope}
\end{tikzpicture} &= d_s d_{i_5} d_{i_5'} \varkappa_s \delta_{\alpha, \gamma_5}. \label{eq:delta_identity}
\end{aligned}
\end{equation}
The derivations of the two identities are similar, and as such, we only explain the first one.
\begin{equation}
\begin{aligned}
     \sum_{\beta'}  
    \begin{tikzpicture}[baseline={([yshift=-0.5ex]current bounding box.center)}, line width=1pt, scale=0.8]
    \draw[] (-1, 0) -- ++ (1, 1) -- ++ (1, -1) -- ++ (-1, -1) -- cycle;
    \draw[] (-0.5, -0.5) -- ++ (0.5, 0.5) -- ++ (0.5, -0.5);
    \node[left] () at (-0.5, -0.5) {$\alpha'$};
    \node[right] () at (0.5, -0.5) {$\beta'$};
    \node[below] () at (-0.25, -0.75) {$i_2'$};
    \node[above] () at (-0.5, 0.5) {$\bar{s}$};
    \node[above] () at (-0.25, -0.25) {$i_2$};

    \begin{scope}[xshift=2.5cm]
    \draw[] (-1, 0) -- ++ (1, 1) -- ++ (1, -1) -- ++ (-1, -1) -- cycle;
    \draw[] (-0.5, 0.5) -- ++ (0.5, -0.5) -- ++ (0.5, 0.5);
    \node[left] () at (-0.5, 0.5) {$\beta'$};
    \node[right] () at (0.5, 0.5) {$\gamma_2$};
    \node[above] () at (0.25, 0.75) {$i_2'$};
    \node[below] () at (0.5, -0.5) {$i_2$};
    \node[below]  () at (0.25, 0.25) {$\bar{s}$};

    \draw[] (-0.25, 0.75) circle (0.1cm);
    \draw[] (-0.5, -0.5) circle (0.1cm);
    \draw[] (-0.25, 0.25) circle (0.1cm);
    \end{scope}
\end{tikzpicture}
&= \sum_{\beta}
\begin{tikzpicture}[baseline={([yshift=-0.5ex]current bounding box.center)}, line width=1pt, scale=0.8]
    \begin{scope}[xshift=-1cm, yshift=2.5cm]
    \draw[] (-1, 0) -- ++ (1, 1) -- ++ (1, -1) -- ++ (-1, -1) -- cycle;
    \draw[] (-0.5, -0.5) -- ++ (0.5, 0.5) -- ++ (0.5, -0.5);
    \node[left] () at (-0.5, -0.5) {$\alpha'$};
    \node[right] () at (0.5, -0.5) {$\beta'$};
    \node[below] () at (-0.25, -0.75) {$i_2'$};
    \node[above] () at (-0.5, 0.5) {$\bar{s}$};
    \node[above] () at (-0.25, -0.25) {$i_2$};
    \end{scope}

    \draw[] (-0.5, 0.5) -- ++ (-0.5, -0.5) -- ++ (1, -1) -- ++ (1, 1) -- ++ (-0.5, 0.5);
    \draw[] (-0.5, 0.5) -- ++ (0.5, -0.5) -- ++ (0.5, 0.5);
    \draw[] (-0.5, 0.5) -- ++ (-0.5, 0.5) -- ++ (-1, -1) -- ++ (2, -2) -- ++ (2, 2) -- ++ (-1, 1) -- ++ (-0.5, -0.5);
    \node[left] () at (-0.5, 0.5) {$\beta'$};
    \node[right] () at (0.5, 0.5) {$\gamma_2$};
    \node[right] () at (-0.75, 1) {$\bar{i}_2'$};
    \node[below] () at (0.5, -0.5) {$i_2$};
    \node[below]  () at (0.25, 0.25) {$\bar{s}$};

    \draw[] (-0.5, -0.5) circle (0.1cm);
    \draw[] (-0.25, 0.25) circle (0.1cm);
\end{tikzpicture}\\
&=d_{i_2'}
\sum_{\beta}
\begin{tikzpicture}[baseline={([yshift=-0.5ex]current bounding box.center)}, line width=1pt, scale=0.8]
    \begin{scope}[xshift=-1cm, yshift=2.5cm]
    \draw[] (-0.5, -0.5) -- ++ (-0.5, 0.5) -- ++ (1, 1) -- ++ (1, -1) --++ (-0.5, -0.5);
    \draw[] (-0.5, -0.5) -- ++ (0, -1.5);
    \draw[] (0.5, -0.5) -- ++ (0, -1.5);
    \draw[] (-0.5, -0.5) -- ++ (0.5, 0.5) -- ++ (0.5, -0.5);
    \node[left] () at (-0.5, -0.5) {$\alpha'$};
    \node[right] () at (0.5, -0.5) {$\beta'$};
    \node[left] () at (-0.5, -1) {$i_2'$};
    \node[above] () at (-0.5, 0.5) {$\bar{s}$};
    \node[above] () at (-0.25, -0.25) {$i_2$};
    \end{scope}

    \draw[] (-0.5, 0.5) -- ++ (-0.5, -0.5) -- ++ (1, -1) -- ++ (1, 1) -- ++ (-0.5, 0.5);
    \draw[] (-0.5, 0.5) -- ++ (0.5, -0.5) -- ++ (0.5, 0.5);
    \draw[] (-1.5, 0.5) -- ++ (-0.5, -0.5) -- ++ (2, -2) -- ++ (2, 2) -- ++ (-1, 1) -- ++ (-0.5, -0.5);
    \node[left] () at (-0.5, 0.5) {$\beta'$};
    \node[right] () at (0.5, 0.5) {$\gamma_2$};
    \node[right] () at (-0.5, 1) {$\bar{i}_2'$};
    \node[below] () at (0.5, -0.5) {$i_2$};
    \node[below]  () at (0.25, 0.25) {$\bar{s}$};

    \draw[] (-0.5, -0.5) circle (0.1cm);
    \draw[] (-0.25, 0.25) circle (0.1cm);
\end{tikzpicture} \\
&= \sqrt{d_{s}d_{i_2} d_{i_2'}}\begin{tikzpicture}[baseline={([yshift=-0.5ex]current bounding box.center)}, line width=1pt, scale=0.8]
    \begin{scope}[xshift=-1cm, yshift=2.5cm]
    \draw[] (-0.5, -0.5) -- ++ (-0.5, 0.5) -- ++ (1, 1) -- ++ (1, -1) --++ (-0.25, -0.25);
    \draw[] (-0.5, -0.5) -- ++ (0, -1.5);
    \draw[] (0.25, -0.25) -- ++ (0, -2);
    \draw[] (0.75, -0.25) -- ++ (0, -2);
    \draw[] (-0.5, -0.5) -- ++ (0.5, 0.5) -- ++ (0.25, -0.25);
    \node[left] () at (-0.5, -0.5) {$\alpha'$};
    \node[left] () at (-0.5, -1) {$i_2'$};
    \node[above] () at (-0.5, 0.5) {$\bar{s}$};
    \node[above] () at (-0.25, -0.25) {$i_2$};
    \end{scope}

    \draw[] (-0.75, 0.25) -- ++ (-0.25, -0.25) -- ++ (1, -1) -- ++ (1, 1) -- ++ (-0.5, 0.5);
    \draw[] (-0.25, 0.25) -- ++ (0.25, -0.25) -- ++ (0.5, 0.5);
    \draw[] (-1.5, 0.5) -- ++ (-0.5, -0.5) -- ++ (2, -2) -- ++ (2, 2) -- ++ (-1, 1) -- ++ (-0.5, -0.5);
    \node[right] () at (0.5, 0.5) {$\gamma_2$};
    \node[below] () at (0.5, -0.5) {$i_2$};
    \node[below]  () at (0.25, 0.25) {$\bar{s}$};

    \draw[] (-0.5, -0.5) circle (0.1cm);
    \draw[] (-0.125, 0.125) circle (0.1cm);
\end{tikzpicture}\\
&=d_s d_{i_2} d_{i_2'} \delta_{\alpha', \gamma_2}.
\end{aligned}
\end{equation}
The first line uses the definition of the Frobenius-Schur indicator $\varkappa_{i_2'}$. The second uses the completeness relation [Eq.~\eqref{eq:completeness}] and the vacuum identity [Eq.~\eqref{eq:vacuum_identity}]. The third line uses the completeness relation [Eq.~\eqref{eq:completeness}]. The last line uses the orthogonality of the states corresponding to different multiplicity labels [Eq.~\eqref{eq:orthogonality}] and the definition of the Frobenius-Schur indicator.

Finally, we can disentangle each edge, obtaining the following state: 
\begin{equation}
    \frac{1}{d_s^3} C_N \sum_{\substack{\gamma_1, \ldots, \gamma_6 \\ i_1',\ldots, i_6'}}  
    \begin{tikzpicture}[baseline={([yshift=-0.5ex]current bounding box.center)}, scale=0.175, line width=1pt]
            \node[above] (e4bar) at (0, 4cm) {$\bar{e}_4$};
        \node[above] (e5bar) at (4cm, 4cm) {$\bar{e}_5$};
        \node[above] (i4bar) at (12cm, 4cm) {$\bar{i}_4'$};
        \node[] (i4) at (16.5cm, 5.5cm) {$i_4'$};
        \node[] (i3) at (19.5cm, 5.5cm) {$i_3'$};
        \node[above] (i3bar) at (24cm, 4cm) {$\bar{i}_3'$};
        \node[above] (e3bar) at (32cm, 4cm) {$\bar{e}_3$};

        \draw[] (e4bar) -- ++ (0, -4cm) -- ++ (9cm, -4cm) -- ++ (9cm, 4cm); 
        \draw[] (e5bar) -- ++ (0, -4cm) -- ++ (2cm, -1cm) -- ++ (2cm, 1cm) -- ++ (2cm, -1cm) -- ++ (2cm, 1cm);
        \draw[] (8cm, 1.25cm) -- ++ (0, 6cm);
        \node[above] () at (8cm, 7.25cm) {$i_5'$};
        \node[left] () at (8cm, 5cm) {$\alpha$};
        \node[right] () at (28cm, 5cm) {$\alpha'$};
        \draw[red] (8cm, 5cm) -- (9cm, 1cm);
        \draw[red] (28cm, 5cm) -- (27cm, 1.25cm);
        \draw[red] (18cm, 3cm) -- (20cm, 4cm);
        \draw[red] (18cm, 3cm) -- (16cm, 4cm);
        \draw[] (12cm, 1.375cm) -- ++ (i4bar);
        \draw[] (16cm, 4cm) -- ++ (-2cm, 2.5cm);
        \draw[] (16cm, 4cm) -- ++ (2cm, -2.5cm) -- ++ (2cm, 2.5cm);
        \node[right] () at (8cm, 2.25cm) {$\alpha_5$};
        \node[right] () at (18cm, 1.5cm) {$\alpha_4$};
        \draw[] (4.5cm, -0.5cm) circle (0.5cm);
        \draw[] (5cm, 1cm) circle (0.5cm);
        \draw[] (11cm, 1cm) circle (0.5cm);

        \draw[] (20cm, 4cm) -- ++ (2cm, 2.5cm);

        \draw[] (24cm, 4cm) -- ++ (0, -2.25cm) -- ++ (2cm, -1cm) -- ++ (2cm, 1cm) -- ++ (2cm, -1cm) -- ++ (2cm, 1cm);
        \draw[] (28cm, 4cm) -- ++ (0, -2.25cm);
        \draw[] (28cm, 3.25cm) -- ++ (0, 4cm);
        \node[above] () at (28cm, 7.25cm) {$i_2'$};
        \draw[] (32cm, 4cm) -- ++ (0, -2.25cm);
        \node[right]  () at (28cm, 2.5cm) {$\alpha_3$};
        \draw[] (25cm, 1.25cm) circle (0.5cm);
        \draw[] (31cm, 1.25cm) circle (0.5cm);
    
        \node[above] (i2bar) at (36cm, 4cm) {$\bar{i}_2'$};
        \node[above] (i5bar) at (40cm, 4cm) {$\bar{i}_5'$};
        \node[above] (e6) at (44cm, 4cm) {$e_6$};
        \node[] (i6) at (47.5cm, 5.5cm) {$i_6'$};
        \node[above] (i6bar) at (52cm, 4cm) {$\bar{i}_6'$};
        \node[above] (e1) at (56cm, 4cm) {$e_1$};
        \node[above] (i1bar) at (60cm, 4cm) {$\bar{i}_1'$};
        \node[above] (i1) at (64.5cm, 4cm) {$i_1'$};
        \node[above] (e2) at (68cm, 4cm) {$e_2$};

        \draw[] (40cm, 4cm) -- ++ (4cm, -4cm) -- ++ (4cm, 4cm);
        \draw[] (44cm, 4cm) -- ++ (2cm, -2cm);
        \node[] () at (46cm, 3.5cm) {$\alpha_6$};
        \draw[] (42cm, 2cm) circle (0.5cm);


        \draw[] (52cm, 4cm) -- ++ (0, -2.25cm) -- ++ (2cm, -1cm) -- ++ (2cm, 1cm) -- ++ (2cm, -1cm) -- ++ (2cm, 1cm);
        \draw[] (56cm, 4cm) -- ++ (0, -2.25cm);
        \draw[] (60cm, 4cm) -- ++ (0, -2.25cm);
        \node[right]  () at (56cm, 2.5cm) {$\alpha_1$};
        \draw[] (53cm, 1.25cm) circle (0.5cm);
        \draw[] (59cm, 1.25cm) circle (0.5cm);

        \draw[] (64cm, 4cm) -- ++ (2cm, -2cm) -- ++ (2cm, 2cm);
        \draw[] (66cm, 2cm) -- ++ (-15cm, -5cm) -- ++ (-15cm, 5cm)-- ++ (0, 2cm);
        \node[right] () at (66cm, 1.5cm) {$\alpha_2$};
        \draw[] (40.5cm, 0.5cm) circle (0.5cm);  

        \draw[red] (65cm, 3cm) -- ++ (0, -1.5cm);
        \draw[red] (44.75cm, 0.75cm) -- ++ (1.5cm, 0.75cm) -- ++ (0.75cm, 1.5cm);
        
        \draw[red] (56cm, 1.25cm) -- ++ (1.5cm, -0.25cm);
        \draw[red] (56cm, 1.25cm) -- ++ (-1.5cm, -0.25cm);
    \end{tikzpicture}
\end{equation}
Note that each of the six diagrams can be converted to a diagram without the $s$-string using the transformation in Eq.~\eqref{eq:plaquette_bs}. This is because the diagrammatic rules we use are exactly the ones used in deriving these coefficients~\cite{Hahn2020}. Accounting for the normalization factor, we obtain
\begin{equation}
    (\tilde{B}_p^s)^{(2)} \,\,
    \begin{tikzpicture}[baseline={([yshift=-0.5ex]current bounding box.center)}, scale=0.15, line width=1pt]
    \draw[] (-17cm, -16cm) -- ++ (0, 32cm);
    \draw[] (17cm, -14cm) -- ++ (3cm, 14cm) -- ++ (-3cm, 14cm);
        \vertexone{-10.392cm}{6cm}{$\alpha_6$}{$e_6$}{$\bar{i}_6$}{$\bar{i}_5$};
        \vertextwo{-10.392cm}{-6cm}{$\alpha_5$}{$\bar{e}_5$}{$\bar{i}_4$}{$i_5$};
        \vertexone{10.392cm}{6cm}{$\alpha_2$}{$i_1$}{$e_2$}{$\bar{i}_2$};
        \vertextwo{10.392cm}{-6cm}{$\alpha_3$}{$\bar{i}_3$}{$\bar{e}_3$}{$i_2$};
        \vertextwo{0cm}{12cm}{$\alpha_1$}{$i_6$}{$\bar{i}_1$}{$e_1$};
        \vertexone{0cm}{-12cm}{$\alpha_4$}{$i_4$}{$i_3$}{$\bar{e}_4$};
    \end{tikzpicture}
    =
    \sum_{\substack{i_1', \ldots, i_6' \\ \alpha_1' \ldots, \alpha_6'}} B_{p, \{i_k\}, \{\alpha_k \}}^{ \{ i_k'\} \{\alpha_k' \}} \left(\{ e_k\}\right)
    \begin{tikzpicture}[baseline={([yshift=-0.5ex]current bounding box.center)}, scale=0.15, line width=1pt]
    \draw[] (-17cm, -16cm) -- ++ (0, 32cm);
    \draw[] (17cm, -14cm) -- ++ (3cm, 14cm) -- ++ (-3cm, 14cm);
        \vertexone{-10.392cm}{6cm}{$\alpha_6'$}{$e_6$}{$\bar{i}_6'$}{$\bar{i}_5'$};
        \vertextwo{-10.392cm}{-6cm}{$\alpha_5'$}{$\bar{e}_5$}{$\bar{i}_4'$}{$i_5'$};
        \vertexone{10.392cm}{6cm}{$\alpha_2'$}{$i_1'$}{$e_2$}{$\bar{i}_2'$};
        \vertextwo{10.392cm}{-6cm}{$\alpha_3'$}{$\bar{i}_3'$}{$\bar{e}_3$}{$i_2'$};
        \vertextwo{0cm}{12cm}{$\alpha_1'$}{$i_6'$}{$\bar{i}_1'$}{$e_1$};
        \vertexone{0cm}{-12cm}{$\alpha_4'$}{$i_4'$}{$i_3'$}{$\bar{e}_4$};
    \end{tikzpicture},
    \label{eq:plaquette_matrix_element_ours}
\end{equation}
where the  shown vector is a tensor product of the six normalized vectors, each defined as follows:
\begin{equation}
\begin{aligned}
\begin{tikzpicture}[baseline={([yshift=-0.5ex]current bounding box.center)}, line width=1pt, scale=0.175]
\draw[] (-6cm, -5cm) -- ++ (0, 10cm);
\draw[] (6cm, -4.5cm) -- ++ (1cm, 4.5cm) -- ++ (-1cm, 4.5cm);
 \vertexone{0}{0}{$\alpha$}{$a$}{$b$}{$c$};
\end{tikzpicture}
&:= \frac{1}{\sqrt[4]{d_a d_b d_c}} 
\begin{tikzpicture}[baseline={([yshift=-0.5ex]current bounding box.center)}, line width=1pt, scale=0.175]
\draw[] (0,0) -- ++ (0, 4cm);
            \draw[] (4cm,0) -- ++ (0, 4cm);
            \draw[] (-4cm,0) -- ++ (0, 4cm);
            \draw[] (-4cm,0) -- ++ (2cm, -1cm) -- ++ (2cm, 1cm) -- ++ (2cm, -1cm) -- ++ (2cm, 1cm);
            \node[above] () at (-4cm, 4cm) {$a$};
            \node[above] () at (0cm, 4cm) {$c$};
            \node[above] () at (4cm, 4cm) {$b$};
            \node[right] () at (0, 0.5cm) {$\alpha$};
            \draw[] (3cm, -0.5cm) circle (0.5cm);
            \draw[] (-3cm, -0.5cm) circle (0.5cm);    
\end{tikzpicture},
\\
\begin{tikzpicture}[baseline={([yshift=-0.5ex]current bounding box.center)}, line width=1pt, scale=0.175]
\draw[] (-6cm, -5cm) -- ++ (0, 10cm);
\draw[] (6cm, -4.5cm) -- ++ (1cm, 4.5cm) -- ++ (-1cm, 4.5cm);
 \vertextwo{0}{0}{$\alpha$}{$a$}{$b$}{$c$};
\end{tikzpicture}
&:= \frac{1}{\sqrt[4]{d_a d_b d_c}}
\begin{tikzpicture}[baseline={([yshift=-0.5ex]current bounding box.center)}, line width=1pt, scale=0.175]
     \draw[] (-4cm, 4cm) -- (0,0);
            \draw[] (4cm, 4cm) -- (0,0);
            \draw[] (2cm, 2cm) -- (0, 4cm);
            \node[above] () at (-4cm, 4cm) {$c$};
            \node[above] () at (0cm, 4cm) {$a$};
            \node[above] () at (4cm, 4cm) {$b$};
            \node[above] () at (2cm, 2cm) {$\alpha$};
            \draw[] (-2cm, 2cm) circle (0.5cm);
\end{tikzpicture}
\end{aligned}.
\end{equation}
The coefficients in Eq.~\eqref{eq:plaquette_matrix_element_ours} are the ones defined precisely for the string-net model [Eq.~\ref{eq:b_definition}]. Thus we conclude that our definition of the plaquette operator coincides exactly with the definition of the plaquette operator of the string-net model.

\section{Bulk anyons and circuit equivalences}
\label{sec:classification}

We elaborate on the implications of Theorem \ref{thm:mapping-to-sn} (mapping states to string-nets) on the classification of topological phases. The key claims were already summarized in Section \ref{sec:setup-results}.
Our main goal is to prove Corollary \ref{cor:classification-main}.  This result characterizes the equivalence classes of entanglement bootstrap states with gappable boundary, modulo constant-depth circuits.

To begin, we codify how to extract the full anyon data from the ground state, assuming the entanglement bootstrap axioms.  Here we
are only concerned with the ``bulk'' anyons, rather than the anyons associated to a gapped boundary.  Accordingly, we only assume the entanglement bootstrap axioms in the bulk (Definition \ref{def:bulk-EB}).  

The bulk anyons are described by a UMTC.  The paper \cite{shi2020fusion}, largely introducing entanglement bootstrap ideas, demonstrated how to extract part of this data.  Essentially, they demonstrated how to extract the underlying fusion ring, which tells you which anyons are produced when two anyons are fused.  However, it does not tell you about the complex phases produced during fusion, encoded by the $F$-symbols, nor does it reveal the braiding properties, encoded by the $R$-symbols.  Both of these are required to define the full UMTC describing the anyons.  

Fortunately, the methods of \cite{shi2020fusion} can be combined with the results of Kawagoe and Levin \cite{kawagoe2020microscopic} to extract the full UMTC from a state satisfying the entanglement bootstrap axioms.  Moreover, this UMTC is circuit-invariant.
\begin{theorem}[Bulk anyon contents]
\label{thm:bulk-anyons}
    For every 2D state satisfying the entanglement bootstrap axioms in the bulk (Definition \ref{def:bulk-EB}), we can assign a UMTC $\mathcal{C}$, following \cite{shi2020fusion,kawagoe2020microscopic}.  For two such states connected by a constant-depth circuit, the associated UMTCs are equivalent.  
\end{theorem}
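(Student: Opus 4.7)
The plan is first to establish existence of the UMTC for a bulk state, then to prove that it is a circuit-invariant.

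For existence, I would combine two prior constructions. The work \cite{shi2020fusion} already extracts the fusion ring: anyon labels correspond to extreme points of $\Sigma(A)$ for $A$ an annular region, and fusion multiplicities $N_{ab}^c$ come from information convex sets of pair-of-pants regions (disks with two punctures), exactly paralleling the $M$-shaped regions used for boundary anyons in Section~\ref{subsec:information_convex_set}. Then, adapting Section~\ref{sec:boundary_anyons} from the boundary to the bulk, one uses Uhlmann's theorem to define movement, splitting, and fusion operators for bulk anyons. Because bulk anyons braid nontrivially, one additionally defines braiding (half-exchange) operators, from which $R$-symbols are extracted. The pentagon equation for $F$ and hexagon equation for $R$ follow from topological invariance of the resulting diagrammatic calculus, yielding a UMTC $(\mathcal{C}, N, F, R)$.

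For circuit invariance, let $\sigma_2 = U \sigma_1 U^{\dagger}$ with $U$ a geometrically local, depth-$d$ circuit, and assume both states satisfy the bulk entanglement bootstrap axioms. Then $U$ has a bounded Lieb--Robinson light cone of some radius $r = O(d)$. I would show that the map
\begin{equation}
    \Phi_X: \rho \mapsto \Tr_{X^{(r)} \setminus X}\bigl(U \rho U^{\dagger}\bigr)
\end{equation}
sends $\Sigma_{\sigma_1}(X^{(r)+})$ into $\Sigma_{\sigma_2}(X)$, where $X^{(r)+}$ is a slight enlargement of the $r$-thickening of $X$, used so that local indistinguishability from $\sigma_1$ on $X^{(r)+}$ guarantees local indistinguishability from $\sigma_2$ on $X$. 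Since conjugation by $U$ is a bijective channel isomorphism and the isomorphism theorem (Theorem~\ref{thm:isomorphism_theorem}) identifies $\Sigma_{\sigma_1}(X^{(r)+})$ with $\Sigma_{\sigma_1}(X)$, the composition yields a simplex-preserving isomorphism $\Sigma_{\sigma_1}(X) \cong \Sigma_{\sigma_2}(X)$. The inverse is constructed symmetrically using $U^{\dagger}$. Applied to an annular region this gives a bijection between anyon labels of the two states; applied to a pair-of-pants region it gives a compatible linear isomorphism of fusion multiplicity spaces $\mathbb{V}^c_{ab}$.

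To conclude that the full UMTCs are equivalent, I would check that the $F$- and $R$-symbols are preserved under $\Phi_X$. The movement/splitting/braiding operators are determined (up to Uhlmann phase ambiguity) by their action on canonical purifications of extreme points. Since $\Phi_X$ sends canonical purifications to canonical purifications (up to a local unitary on the purifier), conjugating the operators defined for $\sigma_1$ by $U$ yields a valid choice of operators for $\sigma_2$. The $F$- and $R$-symbols, being matrix elements in the extreme-point bases of the multiplicity spaces, then coincide up to a gauge transformation on $\{\mathbb{V}^c_{ab}\}$, which is the standard notion of UMTC equivalence.

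The main obstacle is bookkeeping the assorted gauge choices: the movement, splitting, and braiding operators are each only defined up to phases, and the $F$ and $R$ data depend on these choices. The cleanest way around this is to avoid phase tracking altogether and instead verify directly that $\Phi_X$ preserves only the gauge-invariant content of the construction, so that the induced UMTCs agree on the skeletal data that classifies a UMTC up to equivalence. A secondary technical task, not treated in Section~\ref{sec:boundary_anyons} because boundary anyons do not braid, is to extend the diagrammatic calculus of Section~\ref{subsec:diagrammatics} to include braiding and derive the hexagon equation from topological invariance; this is conceptually routine but is the most novel piece relative to what was already developed for the boundary UFC.
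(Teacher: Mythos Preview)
Your existence argument matches the paper's, with one omission: you assert the result is a UMTC but never argue modularity. Having $F$ and $R$ satisfying pentagon and hexagon gives a unitary \emph{braided} fusion category, not yet a modular one; the paper closes this gap by citing the nondegeneracy of the $S$-matrix for entanglement bootstrap states, shown in \cite{shi2020verlinde}. You should add that step.

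For circuit invariance, your approach is correct but more elaborate than necessary. The paper bypasses the explicit information-convex-set isomorphism $\Phi_X$ and the associated gauge bookkeeping. Instead it uses a single black-box fact from Kawagoe--Levin \cite{kawagoe2020microscopic}: for a fixed state, \emph{any} valid choice of movement/splitting/fusion operators yields gauge-equivalent $F$- and $R$-symbols. Given $U\rho_1 U^\dagger = \rho_2$, the conjugated operators $\{U^\dagger S^{(2)} U\}$ are a valid alternative set for $\rho_1$, so the UMTC $\mathcal{C}_1'$ they define satisfies $\mathcal{C}_1' \cong \mathcal{C}_1$. But manifestly $\mathcal{C}_1' = \mathcal{C}_2$, since matrix elements of $U^\dagger S^{(2)} U$ on $U^\dagger \rho_2 U$ equal those of $S^{(2)}$ on $\rho_2$. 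Hence $\mathcal{C}_1 \cong \mathcal{C}_2$ in three lines. Your route through $\Phi_X$ recovers the same conjugation idea but wraps it in extra structure (the simplex isomorphisms on annuli and pairs-of-pants) that is not needed once you invoke the Kawagoe--Levin gauge-independence result directly.
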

\begin{proof}
First we discuss how to define the UMTC. In fact, we already detailed a similar procedure in Section \ref{sec:boundary_anyons}, where we extracted the UFC (describing the boundary anyons) from a state with gapped boundary satisfying the entanglement bootstrap axioms.  Because of the similarity, we will be brief. 

First, we identify the bulk anyon types, fusion rules, and string operators precisely as in \cite{shi2020fusion}. Using these operators, we define the closely related movement, splitting, and fusion operators. These operations for bulk anyons are defined analogously to those for boundary anyons already discussed in Section \ref{sec:boundary_anyons}.  The movement, splitting, and fusion operators are precisely the inputs needed for the procedure developed in \cite{kawagoe2020microscopic}, for calculating both the $F$-symbols and $R$-symbols.  Thus we obtain a full unitary braided tensor category.  We have not yet seen that it is modular, i.e., has nondegenerate braiding. However, the nondegeneracy of the $S$-matrix for entanglement bootstrap states was shown by \cite{shi2020verlinde}.  Thus we the category we obtain is a UMTC.

 Now we can show the circuit-invariance of the associated UMTC. Consider states $\rho_1$ and $\rho_2$ satisfying the bulk entanglement bootstrap axioms.  Let $\{S^{(1)}\}$ and $\{S^{(2)}\}$ schematically denote the corresponding movement operators (together with the fusion and splitting operators).  Using the above procedure, we associate UMTCs $\mathcal{C}_1$ and $\mathcal{C}_2$, describing their bulk anyons.  Assume $U \rho_1 U^\dagger = \rho_2$ for constant-depth circuit $U$.  Then $\{U^\dagger S^{(2)} U \}$ provides an alternative set of operations for $\rho_1$, in addition to the operations $\{S^{(1)}\}$ already prescribed. The new operations $\{U^\dagger S^{(2)} U \}$ are still a valid input to the procedure described in  \cite{kawagoe2020microscopic} for extracting the $F$-symbols and $R$-symbols, so we can compute the corresponding UMTC $\mathcal{C}_1'$.  

 Kawagoe and Levin \cite{kawagoe2020microscopic} demonstrate that all valid sets of movement, fusion, and splitting operators for the same system yield the same $F$-symbols and $R$-symbols, up to gauge transformation of $F$ and $R$. 
 That is, they yield an equivalent UMTC, so $\mathcal{C}_1' \cong \mathcal{C}_1$.
 
 On the other hand, $\mathcal{C}_1' = \mathcal{C}_2$, because using operations $\{U^\dagger S^{(2)} U \}$ to manipulate state $\rho_1 = U^\dagger \rho_2 U$ will manifestly yield the same matrix-elements as using operations $\{ S^{(2)} \}$ to manipulate state $\rho_2$. 
 
 Then $\mathcal{C}_1$ and $\mathcal{C}_2$ are both equivalent to $\mathcal{C}_1'$, and we are done.
\end{proof}

Recall that a string-net state is associated to each UFC $\mathcal{C}$. The bulk anyon contents of the string-net are expected to be given by the ``center'' of $\mathcal{C}$, which is a UMTC denoted $Z(\mathcal{C})$.  In some sense this is shown in \cite{kitaev2012models}.  However, we would like to know that when the bulk anyons are manipulated with string operators, they actually behave according to the $F$-symbols and $R$-symbols of $Z(\mathcal{C})$.  In other words, we would like to show:
\begin{conjecture}
    Consider the Levin-Wen model built on an arbitrary unitary fusion category $\mathcal{C}$. Then there exist string operators associated to the simple objects of $Z(\mathcal{C})$, such that when these are used to define $F$-symbols and $R$-symbols with the procedure in \cite{kawagoe2020microscopic}, one obtains $Z(\mathcal{C})$. 
    \label{conjecture:SN-bulk-anyons}
\end{conjecture}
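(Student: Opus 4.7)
The plan is to construct explicit string operators in the Levin-Wen model, one for each simple object of $Z(\mathcal{C})$, and then to verify that feeding these into the Kawagoe--Levin procedure reproduces the categorical $F$- and $R$-symbols of $Z(\mathcal{C})$ up to gauge. The construction of such string operators is essentially known, but what must be done carefully is matching them to the precise format required by \cite{kawagoe2020microscopic}.

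First I would recall that simple objects of $Z(\mathcal{C})$ are, by definition, pairs $(X, e_X)$ consisting of an object $X \in \mathcal{C}$ and a half-braiding $e_X : X \otimes - \Rightarrow - \otimes X$, and that the irreducible idempotents of Kitaev's tube algebra $\mathrm{Tube}(\mathcal{C})$ are in bijection with these simple objects \cite{kitaev2012models}. Using the tube algebra, for every simple $(X, e_X) \in Z(\mathcal{C})$ there is a standard recipe for a ribbon (string) operator $W^{(X,e_X)}_\gamma$ along any path $\gamma$ in the lattice: the operator acts by inserting an $X$-line through the plaquettes traversed by $\gamma$ and by using the half-braiding data $e_X$ to commute this line past the edge labels it crosses, with normalization fixed by the idempotent. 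These operators create a pair of bulk anyons at the endpoints of $\gamma$ and are known to commute with $Q_I$ and $B_p$ away from the endpoints.

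From these ribbon operators I would extract the Kawagoe--Levin data as follows. Splitting and fusion operators arise by restricting $W^{(X,e_X)}$ to short paths emerging from a reference point, and movement operators come from extending a ribbon to a longer path; the phase conventions in Section~\ref{subsec:basic_operations} can be fixed by choosing a canonical tree of paths in the lattice, exactly as was done for the boundary case. Once these operators are in hand, I would compute the associators $F$ from the action of a sequence of three splittings in two different bracketings, and compute $R$ from the action of exchanging two ribbons via an isotopy of endpoints around each other. The key identity to verify is that the diagrammatic manipulations on the lattice that relate these two bracketings (or realize the braid) reduce to the abstract $F$-moves and $R$-moves of $Z(\mathcal{C})$, acting on fusion spaces $\Hom{(X_1,e_1) \otimes (X_2,e_2)}{(X_3, e_3)}$ in the center. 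This reduction is essentially the content of the tube-algebra identification of ground-state excitations with $Z(\mathcal{C})$.

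The hard part will be the last step: showing that the microscopically computed $F$ and $R$ agree with the categorical ones \emph{up to a single consistent gauge}, rather than only up to some ambiguous per-vertex and per-label rescaling that could in principle fail to assemble into a valid gauge transformation of the tensor categorical data. The main danger is a mismatch in normalization conventions between (i) the tube algebra idempotents, (ii) the Levin-Wen bubble-popping factors $d_s$ appearing in Eq.~\eqref{eq:bubble-pop}, and (iii) the quantum-dimension prefactors in the diagrammatic rules of Eq.~\eqref{eq:diagram_conventions} that Kawagoe--Levin implicitly assume. I would handle this by proving a normalization lemma: that the ribbon operators $W^{(X,e_X)}$ admit a normalization for which the factorization property analogous to Eq.~\eqref{eq:factorization_extreme_point} and the orthogonality relation Eq.~\eqref{eq:orthogonality} hold exactly, with quantum dimensions $d_{(X,e_X)}$ given by those of $Z(\mathcal{C})$. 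Granted such a lemma, the $F$- and $R$-computations become purely graphical identities in $\mathcal{C}$ that are equivalent to the definition of the half-braiding composition in $Z(\mathcal{C})$, and the conjecture follows. A secondary obstacle is handling fusion multiplicities $N^{(X_3,e_3)}_{(X_1,e_1)(X_2,e_2)} > 1$, which requires choosing consistent bases in the center's fusion spaces; this I would address by pulling back the canonical bases from $\mathrm{Tube}(\mathcal{C})$ and verifying the basis is preserved by the movement operators along the reference tree.
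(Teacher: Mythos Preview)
The paper does not prove this statement; it is explicitly left as a conjecture. Immediately after stating it, the authors write that it ``is a formalization of a folk theorem widely believed by experts, but to our knowledge there is no proof in the literature,'' and that while string operators corresponding to simple objects of $Z(\mathcal{C})$ have been constructed in several references, ``their precise behavior under fusion and braiding does not appear to be shown.'' There is therefore no paper proof to compare your proposal against; the conjecture is used as a hypothesis in Corollary~\ref{cor:classification-main} rather than established.

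Your outline is a reasonable sketch of how such a proof would be expected to go, and you correctly locate the crux: the existence of the ribbon operators is essentially known (tube algebra, Kitaev--Kong), but verifying that the Kawagoe--Levin extraction applied to them reproduces the $F$- and $R$-symbols of $Z(\mathcal{C})$ up to a single coherent gauge is precisely the gap the paper flags as open. Your proposed ``normalization lemma'' and the basis-consistency check for fusion multiplicities are the right pressure points, but note that the authors, who set up exactly this framework for the boundary case in Section~\ref{sec:boundary_anyons}, declined to claim the bulk analogue as done. If you intend to pursue this, the nontrivial content is not the high-level strategy you describe but the detailed graphical calculation showing that the half-braiding data assembles into the center's associator and braiding without residual anomalies from the lattice embedding.
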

This is a formalization of a folk theorem widely believed by experts, but to our knowledge there is no proof in the literature.  There are string operators described in \cite{kitaev2012models,lan2014topological,hu2018full,christian2023lattice} that correspond to simple objects in $Z(\mathcal{C})$, but their precise behavior under fusion and braiding does not appear to be shown.

Suppose Conjecture \ref{conjecture:SN-bulk-anyons} holds.  Then Corollary \ref{cor:classification-main} states that two gapped ground states with gappable boundary satisfying the axioms of Definition \ref{def:gappable-boundary-EB} are connected by a constant-depth circuit if and only if they have the same bulk anyon contents, as defined in Theorem \ref{thm:bulk-anyons}. 
We are ready to prove this.
\begin{proof}
Consider two states $\rho_1$, $\rho_2$  satisfying the axioms of Definition \ref{def:gappable-boundary-EB}. By Theorem \ref{thm:mapping-to-sn}, there are constant-depth circuits mapping them to string-net states labeled by UFCs $\mathcal{C}_1$ and $\mathcal{C}_2$, respectively.  

To prove one direction, assume $\rho_1$ are $\rho_2$ are related by a constant-depth circuit.  Then by Theorem \ref{thm:bulk-anyons}, they have the same bulk anyon contents.

To prove the other direction, assume $\rho_1$ and $\rho_2$ have the same bulk anyon contents.  That is, they are assigned equivalent UMTCs. Then the associated string-nets also have the same bulk anyon contents, again by Theorem \ref{thm:bulk-anyons}.  By Conjecture \ref{conjecture:SN-bulk-anyons}, the UMTCs are given by $Z(\mathcal{C}_1)$ and $Z(\mathcal{C}_2)$, so  $Z(\mathcal{C}_1) \cong Z(\mathcal{C}_2)$, i.e., $\mathcal{C}_1$ and $\mathcal{C}_2$ are Morita-equivalent.  By Ref.~\cite{lootens2022mapping}, Morita-equivalent string-nets are connected by constant-depth circuits, as desired.
\end{proof}
We conclude that the equivalence classes of states satisfying the criteria of Definition \ref{def:gappable-boundary-EB} under constant-depth circuits are precisely labeled by doubled UMTCs, by which we mean UMTCs of the form $Z(\mathcal{C})$ for some UFC $\mathcal{C}$.

Note that for a system with boundary, two states with the same bulk anyon content may have distinct boundaries. Meanwhile, the circuits we construct in Theorem \ref{thm:mapping-to-sn} only connect the two states in the sense of making them identical in the bulk, not necessarily on the boundary. Thus it may be the case that the boundary regions of the two states cannot be connected with a constant-depth circuit.  A natural conjecture is that there is a constant-depth circuit that maps the boundary regions as well if and only if the boundary UFCs are equivalent.  In fact we expect this may be proven using the techniques in this paper.

\section{Doubled states and gapped boundaries}
\label{sec:double}

In this section, we sketch how to prove a variant of Theorem \ref{thm:mapping-to-sn} for a class of ``doubled'' systems, without explicitly assuming gappable boundary.

Because there exist 2D topological phases with ungappable boundary that cannot be mapped to string-nets (though perhaps not shown rigorously), we expect our assumption of a gappable boundary in Theorem \ref{thm:mapping-to-sn} is necessary.  However, gapped boundaries do arise automatically for doubled systems.  This may be explained through the so-called folding trick.  Consider a 2D system living on a flat sheet of paper.  Now fold the sheet in half.  Then we have two half-sheets of paper that meet at the fold.  We can consider the two stacked half-sheets as a single 2D system, with twice as many local degrees of freedom.   Moreover, the boundary where the sheet was folded is naturally a \textit{gapped} boundary; the gapped parent Hamiltonian is just the parent Hamiltonian of the original sheet, re-arranged in the new geometry. 

If we study the folded system in the bulk (away from the fold), it looks like two stacked copies of the original system, but one copy is spatially reflected.  To describe the bulk of the doubled system, we start with a gapped ground state $\sigma$ on the 2D plane, and we stack it with its reflection to form the state
\begin{align} \label{eq:double}
    \sigma_{\text{double}} =  \sigma \boxtimes  \sigma^{\text{flip}}.
\end{align}
Here $\sigma_{\text{double}}$ is the two-layer ``doubled'' state, $\sigma^{\text{flip}}$ denotes the a copy of the 2D state $\sigma$ but with the qudits spatially reflected across a chosen axis, and $\boxtimes$ denotes the ordinary tensor product, but specially marked to indicate that we imagine stacking one sheet on top of the other to form a single 2D geometry.  The state $\sigma_{\text{double}}$ need not have a physical boundary (if we imagine it on the infinite 2D plane), but it does have a $\textit{gappable}$ boundary, demonstrated by the above folding trick. 

Because doubled states of the form \eqref{eq:double} appear to have gappable boundary, we should be able to map them to string-nets with constant-depth circuits. Here we sketch how to prove this using Theorem \ref{thm:mapping-to-sn}, but without explicitly assuming a gappable boundary in the sense of Definition \ref{def:gappable-boundary-EB}. 

That is, we can show:
\begin{theorem}[Doubled states; proof only sketched.]
\label{thm:doubled}
    Let $\sigma$ be a 2D state that is translation-invariant and satisfies the bulk entanglement bootstrap axioms (Definition \ref{def:bulk-EB}), and consider the doubled state $\sigma_{\text{double}}$ of Eq.~\eqref{eq:double}.
    Then there exists a constant-depth unitary circuit that maps $\sigma_{\text{double}}$ to a string-net state.
\end{theorem}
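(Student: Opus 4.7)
The plan is to reduce Theorem \ref{thm:doubled} to Theorem \ref{thm:mapping-to-sn} by verifying that $\sigma_{\text{double}}$ satisfies Definition \ref{def:gappable-boundary-EB}. The bulk entanglement bootstrap axioms transfer trivially from $\sigma$ to $\sigma_{\text{double}}$: for any region $X$ in the doubled plane, entanglement entropies are additive, $S(X)_{\sigma_{\text{double}}} = S(X_T)_{\sigma} + S(X_B)_{\sigma^{\text{flip}}}$, so each linear combination appearing in axioms \textbf{A0} and \textbf{A1} vanishes because it vanishes on each layer separately. Translation-invariance of $\sigma$ (together with the reflection) ensures homogeneity. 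The nontrivial step is constructing, for every $O(1)$-size disk $A$ in the doubled plane, a pure fragment $\sigma'_A$ matching $\sigma_{\text{double}}$ on the interior $A^-$ and satisfying the boundary EB axioms of Figure \ref{fig:eb_axioms_boundary} along $\partial A$.

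The fragment $\sigma'_A$ is produced by a state-level folding trick. The restriction $(\sigma \boxtimes \sigma^{\text{flip}})_A = \sigma_A \otimes \sigma_A^{\text{flip}}$ is generically mixed, carrying entanglement with the exterior in both layers. Because $\sigma^{\text{flip}}$ is the spatial reflection of $\sigma$, the Schmidt spectra of $\sigma_A$ and $\sigma_A^{\text{flip}}$ across $\partial A$ are canonically identified by reflection, and their respective purifying systems (which by the EB area law can be taken to live within a constant-width annulus adjacent to $\partial A$ outside $A$) are isomorphic. This reflection-matching permits a local unitary $U_{\partial A}$ supported in a constant-width annulus straddling $\partial A$ and acting across both layers that ``sews'' the two exteriors together, so that the entanglement each layer has with its own exterior becomes entanglement between the two layers localized at $\partial A$. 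Concretely, one applies Uhlmann's theorem to the two possible purifications of the annular reduction--the physical one (purified via the exterior of $A$) and the folded one (purified within the doubled annulus itself)--which are indistinguishable on the interior of the annulus by the reflection symmetry, and which are therefore related by a unitary supported on the annulus. The result $\sigma'_A$ is pure on $A$ and agrees with $\sigma_{\text{double}}$ on $A^-$.

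Verifying the boundary EB axioms for $\sigma'_A$ is a direct entropy computation. For any region $X$ meeting $\partial A$, the ``missing'' boundary contribution $|\partial X \cap \partial A|$ is absent because the fold identifies top- and bottom-layer degrees of freedom along $\partial A$ in a maximally correlated way; away from $\partial A$, the relevant entropic linear combinations in \textbf{A0} and \textbf{A1} reduce to sums over the two layers of the bulk EB linear combinations, which vanish by assumption on $\sigma$. Consistency of $\sigma'_A$ and $\sigma'_B$ on overlapping disks (in the sense of Figure \ref{fig:fragment_assumption}) follows because the local unitary $U_{\partial A}$ depends only on the geometry of $\partial A$ in a constant-width strip; along portions of the boundary shared by overlapping fragments, both constructions apply the same sewing, and elsewhere the fragments are manifestly consistent with $\sigma_{\text{double}}$. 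With Definition \ref{def:gappable-boundary-EB} established for $\sigma_{\text{double}}$, Theorem \ref{thm:mapping-to-sn} provides the constant-depth unitary circuit mapping $\sigma_{\text{double}}$ to a string-net state.

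The hard part is making the Uhlmann construction of $U_{\partial A}$ rigorous and genuinely local--in particular, verifying that the reflection-matching of Schmidt data at $\partial A$ holds exactly, rather than just up to an isomorphism of abstract entanglement spectra, so that the sewing can be effected by a strictly local (as opposed to quasi-local) unitary. A careful proof would proceed by iteratively applying the isomorphism theorem (Theorem \ref{thm:isomorphism_theorem}) and the merging theorem (Theorem \ref{thm:merging_theorem}) to half-annular regions straddling $\partial A$, analogously to the hole-punching construction in Section \ref{subsec:punching_holes}, but with the ``gapped boundary fragment'' now supplied automatically by the reflected copy rather than assumed as input. Additional care is needed at corners of $\partial A$ where the local folding geometry is more intricate, but translation-invariance ensures the corner structure is uniform and $O(1)$-size.
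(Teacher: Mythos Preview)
Your high-level reduction matches the paper's: both verify that $\sigma_{\text{double}}$ satisfies Definition~\ref{def:gappable-boundary-EB} and then invoke Theorem~\ref{thm:mapping-to-sn}. The bulk axioms transfer additively, as you say. The construction of the fragments $\sigma'_A$, however, is genuinely different.

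The paper does not build $\sigma'_A$ by sewing the two layers of $\sigma_{\text{double}}$ at $\partial A$. Instead it works entirely in the single layer $\sigma$: it takes a large disk containing two \emph{adjacent} copies of the desired vertex region (Figure~\ref{fig:vertex-fold-1}), uses bulk axiom~\textbf{A0} to identify the outer boundary of that disk to a point (obtaining a sphere), and then presses the two vertex sub-regions together to form the doubled fragment, with the remaining single-layer material coarse-grained into the gapped boundary. Translation-invariance is used so that the second sub-region, after reflection, matches $\sigma^{\text{flip}}$ at the first location. The payoff is that the boundary EB axioms come for free: a half-disk touching $\partial A$ in the folded fragment \emph{is} a full disk in the unfolded single-layer $\sigma$, so boundary~\textbf{A0}/\textbf{A1} reduce literally to bulk~\textbf{A0}/\textbf{A1}. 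The paper also only constructs the elementary vertex fragments of Appendix~\ref{sec:elementary-fragments}, not fragments for arbitrary disks.

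Your Uhlmann-sewing approach is plausible, but your verification of the boundary EB axioms is the weak point. Saying the fold ``identifies top- and bottom-layer degrees of freedom along $\partial A$ in a maximally correlated way'' is a heuristic about area-law contributions, not a proof that the specific linear combinations $(S(BC)+S(C)-S(B))$ and $(S(BC)+S(CD)-S(B)-S(D))$ vanish for half-disks $B,C,D$ anchored on $\partial A$. To close this, you would need to show that the sewn state near $\partial A$ is locally indistinguishable from an actual geometric fold of $\sigma$---which essentially brings you back to the paper's construction. The paper's route trades the generality of arbitrary disks for a construction where the entropy identities are automatic.
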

More precisely, for any disk subregion $A$ of the doubled system, there exists a constant-depth circuit $U$ such that $\Tr_{A \backslash A^-} U \sigma_{\text{double}} U^\dagger = \Tr_{A \backslash A^-} \sigma_{SN}$, where $\sigma_{SN}$ is a canonical string-net state, and $A^-$ is the interior of $A$. 

To prove Theorem \ref{thm:doubled}, we want to apply Theorem \ref{thm:mapping-to-sn}.  The only hard part is to show that $\sigma_{\text{double}}$ genuinely satisfies the notion of gappable boundary in Definition \ref{def:gappable-boundary-EB}.  This requires explicitly demonstrating that various regions can be given a gapped boundary. Note we require $\sigma$ is translation-invariant, unlike in Theorem \ref{thm:mapping-to-sn}; we expect this is an artefact of our current proof.

The key fact is that if a state satisfies the bulk entanglement bootstrap axioms (Definition \ref{def:bulk-EB}), then the folded system obeys the boundary entanglement bootstrap axioms (Definition \ref{def:boundary-EB}).  This is straightforward to check, because the boundary regions in Figure~\ref{fig:eb_axioms_boundary} for the folded system are precisely the bulk regions in Figure ~\ref{fig:eb_axioms_bulk} for the unfolded system.

Recall Definition \ref{def:gappable-boundary-EB} requires $\sigma$ have the property that disks $A$ can be given a gapped boundary, i.e., there exists some state $\sigma'_A$ satisfying the entanglement bootstrap axioms for gapped boundary and which matches $\sigma$ on the interior of $A$.  We call these states and associated regions ``fragments.'' As explained in Appendix~\ref{sec:elementary-fragments}, in fact the proof of Theorem~\ref{thm:mapping-to-sn} only requires a certain set of ``elementary fragments.''

\begin{figure}
    \centering
\tikzset{every picture/.style={line width=0.75pt}} 

\begin{tikzpicture}[x=0.75pt,y=0.75pt,yscale=-1,xscale=1]

\draw  [color={rgb, 255:red, 245; green, 166; blue, 35 }  ,draw opacity=0.49 ][fill={rgb, 255:red, 0; green, 0; blue, 0 }  ,fill opacity=0.3 ][line width=2.25]  (10.67,248.48) .. controls (10.67,236.86) and (20.08,227.45) .. (31.69,227.45) -- (210.84,227.45) .. controls (222.45,227.45) and (231.87,236.86) .. (231.87,248.48) -- (231.87,311.56) .. controls (231.87,323.17) and (222.45,332.58) .. (210.84,332.58) -- (31.69,332.58) .. controls (20.08,332.58) and (10.67,323.17) .. (10.67,311.56) -- cycle ;
\draw  [fill={rgb, 255:red, 178; green, 178; blue, 255 }  ,fill opacity=0.57 ] (153.17,287.83) -- (163.78,303.92) -- (127.5,327.88) -- (128.18,369.3) -- (107.98,369.65) -- (108.56,327.62) -- (76.83,304.26) -- (87.44,288.51) -- (118.84,310.71) -- cycle ;
\draw  [fill={rgb, 255:red, 178; green, 255; blue, 189 }  ,fill opacity=0.57 ] (27.76,317.51) -- (17.19,301.39) -- (53.55,277.53) -- (52.99,236.11) -- (73.19,235.83) -- (72.49,277.85) -- (105.52,300.97) -- (93.49,317.02) -- (62.16,294.73) -- cycle ;
\draw  [fill={rgb, 255:red, 208; green, 2; blue, 27 }  ,fill opacity=0.57 ] (142.42,318.18) -- (131.86,302.06) -- (168.22,278.2) -- (167.66,236.78) -- (187.85,236.5) -- (187.16,278.52) -- (220.18,301.63) -- (208.15,317.69) -- (176.82,295.4) -- cycle ;
\end{tikzpicture}
\caption{We illustrate the construction of a vertex fragment of the doubled state with gapped boundary. The above depicts the non-doubled state, highlighting two vertex regions (green, red). The gray shaded disk is considered a sphere by identifying its orange boundary as a single point.  Then the green and red regions of the sphere are pressed together, and the remainder of the gray is coarse-grained to form a gapped boundary of the resulting doubled vertex region.
}
\label{fig:vertex-fold-1}
\end{figure}
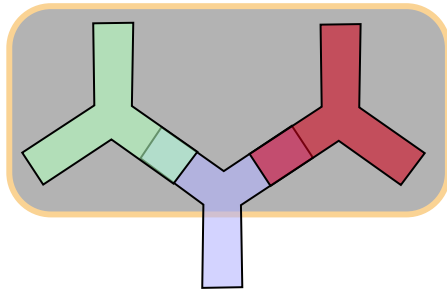

To prove Theorem~\ref{thm:doubled}, we must construct these fragments with gapped boundary, rather than assume their existence. We will construct elementary fragments as in Figure~\ref{fig:long-vertex-fragments}.  These are all vertex fragments, and they come in two types, corresponding to the two vertex types on the hexagonal lattice. Consider these vertex regions on the non-doubled state $\sigma$; they are simply subregions of the bulk state, and they do not have gapped boundary.  However, we can construct a fragment with gapped boundary for the vertex region of the \textit{doubled} state.  

Consider the shaded region in Figure~\ref{fig:vertex-fold-1} for the non-doubled state.  It contains two adjacent (non-overlapping) vertex regions of the same type, green and red.  Take the shaded region and  imagine identifying the orange boundary as a single point.  Then we obtain a sphere, where the green and red vertex regions are subregions of the sphere on opposite sides.  Then press the red and green regions together to form a doubled vertex region, and locally coarse-grain the remaining shaded region to form the boundary of this doubled region. 

Through the above manipulations, we obtain a doubled vertex region with gapped boundary, i.e., satisfying Definition~\ref{def:boundary-EB}. Call this state $\tau_A$, where $A$ is the vertex region. This will serve as an ordinary fragment with boundary for the doubled state $\sigma_{\text{double}}$. In our current convention, we should imagine $\sigma_{\text{double}}$ was doubled by reflecting across the vertical axis of Figure~\ref{fig:vertex-fold-1}.  Then $\tau_A$ matches $\sigma_{\text{double}}$ on the interior of $A$ due to the assumed translation-invariance of $\sigma$.  We repeat an analogous process to define a fragment for the second vertex type.

In this way, we obtain elementary fragments with gapped boundary for $\sigma_{\text{double}}$.  To complete the proof, all that remains is to check the consistency of these fragments with each other where they have overlapping boundaries.  This requires a specific procedure for coarse-graining the gray region of Figure~\ref{fig:vertex-fold-1} into the boundary of the doubled vertex region. That final procedure is hard to illustrate; we hope to provide further explanation in a future edition. For that reason, we refer to this proof of Theorem~\ref{thm:doubled} as a sketch.

\section{Discussion}
\label{sec:discussion}
We considered 2D topological phases with gappable boundary, built on lattices of qudits (rather than using fermionic degrees of freedom), and without imposing symmetries.  One could attempt generalizations to chiral phases, higher dimensions, symmetry-protected topological phases, fermions, and so on.  But first, restricted to our current setting, we can ask: how restrictive were our assumptions, and should our results be expected?

More precisely, we considered a class of quantum states satisfying certain conditions [Definition~\ref{def:gappable-boundary-EB}] that stringently codify the notion of being a gapped ground state with gappable boundary.   We showed these states can be mapped to string-nets using constant-depth circuits.  So if there is some undiscovered exotic phase beyond the Levin-Wen models, it must somehow fail to have any representative satisfying our assumptions.  

If there is a phase with no representative satisfying our assumptions, perhaps the assumptions can still be approximately satisfied, in which case the question becomes whether our proof is robust to small errors.  On the other hand, could there exist exotic phases, for which no representative state of the phase satisfies our assumptions, even approximately?  The assumptions essentially amount to ``zero mutual information at long distance’’ (\textbf{A0}) and ``zero conditional mutual information at long distance’’ (\textbf{A1}). But whereas the former is expected to hold true for general gapped systems (with a suitable approximation error), the latter generally may not.
It is a logical possibility that phases exist wherein no representative satisfies \textbf{A1}, even approximately. In fact, there are examples in higher dimensions: fracton models \cite{pretko2017higher} in three dimensions can fail axiom \textbf{A1} \cite{huang2023knots}, and one can show that all states connected by a constant-depth circuit also fail this property.  

On the other hand, fracton models exhibit pathologies when considered as a ``topological phases,’’ even though on can still consider their equivalence classes under circuits.  Rather than lament the failure of \textbf{A1}, one may instead interpret it as guaranteeing a genuinely topological or ``liquid’’ phase: the class of phases one might expect to be classified by topological quantum field theory.  It is plausible that in two dimensions, all translation-invariant gapped ground states do have a representative satisfying \textbf{A1}, so whether our results actually require \textbf{A1} in the translation-invariant case remains to be shown.  But apparently in three dimensions, axiom \textbf{A1} (or some alternative) is important for distinguishing liquid phases from fracton phases and beyond; the latter likely constitute a wilder variety of equivalence classes under circuits, which may be less natural to classify.

Finally, recall that states satisfying \textbf{A1} may severely violate it after a constant-depth circuit, in a phenomenon related to spurious topological entanglement entropy \cite{Zou2016,Williamson2019,Kato2020,kim2023universal}.  This is consistent with the above discussion, since we have only been discussing the criterion that \text{some} representative of the phase satisfies the axiom.  Nonetheless, it suggests \textbf{A1} is not the most natural axiom to consider.  We hope to remedy this in future work.

String-net states may be defined for more general fusion categories \cite{freedman2012galois, runkel2020string, geer2022non}, beyond the unitary fusion categories.  Perhaps some of these may be understood as defining distinct topological phases. Yet in our accounting of gapped phases with gappable boundary, we always produce a string-net built on a \textit{unitary} fusion category.  Have we somehow missed a larger class of phases, perhaps artificially ruled out by our assumptions?  For a string-net built on a non-unitary fusion category, the canonical parent Hamiltonian defined by Levin-Wen \cite{levin2005string} may be non-Hermitian.  Moreover, at least for some non-unitary string-net states, it has been shown \cite{freedman2012galois} there exists no Hermitian (and gapped, local) parent Hamiltonian. Meanwhile, the states we classify do have Hermitian parent Hamiltonians.  All these results appear consistent, but it would be good to more carefully consider these generalizations, including the sense in which they may correspond to physically realizable topological phases.

\section{Acknowledgments}

We especially thank Michael Levin and Xiao-Gang Wen for helpful intuition on string-nets and gapped boundaries, and Bowen Shi for related collaboration and early discussion.  We also thank Andi Bauer, Laurens Lootens, Alexei Kitaev, and Brandon Rayhaun. DR thanks Edward Mazenc for a figure.
We are grateful to the long term workshop YITP-T-23-01 held at YITP, Kyoto University, where a part of this work was done. IK acknowledges support from NSF under award number PHY-2337931.  DR acknowledges support from from NTT Grant AGMT DTD 9/24/20.

\bibliographystyle{unsrt}
\bibliography{references}

\appendix

\section{Elementary fragments}
\label{sec:elementary-fragments}

When we assume that the reference state $\sigma$ has gappable boundary in Definition~\ref{def:gappable-boundary-EB}, we require that all $O(1)$-size disks $A$ can be given a gapped boundary.  That is, we assume that there exists a state $\sigma'_A$ with gapped boundary in the sense of Definition~\ref{def:boundary-EB}, and which matches $\sigma$ on the interior of $A$.  Sometimes we call these disks and associated regions ``fragments.''

For the sake of parsimony, it would be nice to make this assumption about only certain $O(1)$-sized regions, explicitly specified. 
 We refer to these as ``elementary fragments.'' While specifying these is optional in the context of Theorem~\ref{thm:mapping-to-sn}, we will use them to argue the corollary in Section~\ref{sec:double}.

Figure~\ref{fig:fragment_assumption_elementary} illustrates one particular choice of elementary fragments.  They come in two types: vertex regions and edge regions.  A vertex region overlaps its associated edge regions, and we demand the states are consistent on their overlap. 

\begin{figure}
    \centering
    \input{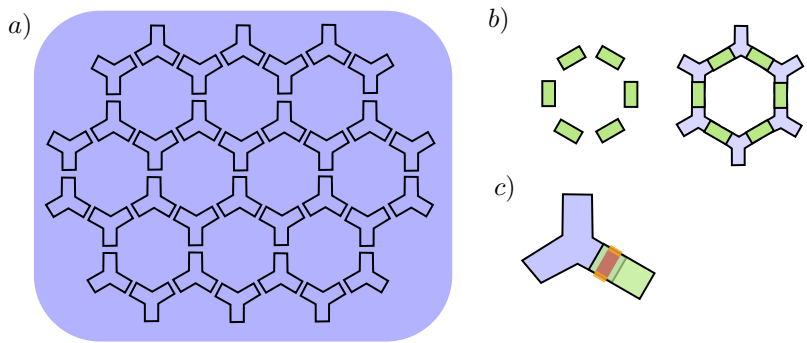}
    \caption{We specify our assumption about the existence of ``fragments'' more explicitly.
    \textbf{(a)} We assume the existence of states on the disk-like regions outlined in black.  These are assumed to have gapped boundary and match the reference state on their interiors. \textbf{(b)} We also assume the existence of states with gapped boundary on the green regions (which overlap the regions from (a)).  \textbf{(c)} An example of two overlapping regions above.  The states are assumed to match on the interior of their overlap (red) including along their shared boundary (orange).}
    \label{fig:fragment_assumption_elementary}
\end{figure}
\begin{figure}
    \centering
     \tikzset{every picture/.style={line width=0.75pt}} 

\begin{tikzpicture}[x=0.75pt,y=0.75pt,yscale=-1,xscale=1]

\draw  [fill={rgb, 255:red, 178; green, 178; blue, 255 }  ,fill opacity=1 ] (76.37,42.19) -- (102.48,42.56) -- (103.07,81.22) -- (133.33,99.23) -- (119.94,122.66) -- (89.5,104.69) -- (59.4,122.23) -- (43.95,97.5) -- (75.93,79.16) -- cycle ;
\draw  [fill={rgb, 255:red, 178; green, 178; blue, 255 }  ,fill opacity=1 ] (207.17,102.27) -- (220.18,124.91) -- (187.24,145.15) -- (187.21,180.37) -- (160.22,180.82) -- (160.14,145.47) -- (129.68,128.54) -- (143.06,102.63) -- (175.16,120.77) -- cycle ;
\draw  [fill={rgb, 255:red, 184; green, 233; blue, 134 }  ,fill opacity=1 ] (374.22,101.44) -- (388.17,77.28) -- (340.41,49.71) -- (326.46,73.87) -- cycle ;
\draw  [fill={rgb, 255:red, 184; green, 233; blue, 134 }  ,fill opacity=0.55 ] (149.83,139.48) -- (163.78,115.31) -- (116.02,87.74) -- (102.07,111.9) -- cycle ;
\draw  [fill={rgb, 255:red, 178; green, 178; blue, 255 }  ,fill opacity=1 ] (280.42,43.96) -- (306.53,44.33) -- (307.11,82.98) -- (338.85,102.24) -- (323.99,124.43) -- (293.55,106.46) -- (263.44,124) -- (248,99.27) -- (279.98,80.93) -- cycle ;
\draw  [fill={rgb, 255:red, 178; green, 178; blue, 255 }  ,fill opacity=1 ] (411.21,104.04) -- (424.23,126.68) -- (391.29,146.92) -- (391.25,182.14) -- (364.27,182.59) -- (364.18,147.24) -- (332.83,129.43) -- (347.11,104.4) -- (379.2,122.54) -- cycle ;
\draw  [fill={rgb, 255:red, 178; green, 178; blue, 255 }  ,fill opacity=1 ] (523.5,55.64) -- (524.09,94.29) -- (564.08,115.71) -- (596.17,133.85) -- (628.18,115.35) -- (641.2,137.99) -- (608.26,158.23) -- (608.22,193.45) -- (581.24,193.89) -- (581.15,158.54) -- (550.7,141.62) -- (510.52,117.76) -- (480.42,135.31) -- (464.97,110.57) -- (496.95,92.24) -- (497.39,55.27) -- (523.5,55.64) -- cycle ;
\draw  [dash pattern={on 1.5pt off 1.5pt}]  (320.8,90.6) -- (306.16,114.65) ;
\draw  [dash pattern={on 1.5pt off 1.5pt}]  (329.6,96.2) -- (316.16,119.45) ;
\draw  [dash pattern={on 1.5pt off 1.5pt}]  (358.4,110.6) -- (343.76,134.65) ;
\draw  [dash pattern={on 1.5pt off 1.5pt}]  (367.2,116.2) -- (352.56,141.05) ;
\draw  [dash pattern={on 1.5pt off 1.5pt}]  (349.8,55.6) -- (335.16,79.65) ;
\draw  [dash pattern={on 1.5pt off 1.5pt}]  (359.8,61.1) -- (345.16,85.15) ;
\draw  [dash pattern={on 1.5pt off 1.5pt}]  (368.8+1,65.1+1) -- (354.16+1,89.15+1) ;
\draw  [dash pattern={on 1.5pt off 1.5pt}]  (379.3,71.6) -- (364.66,95.65) ;

\draw (29,15+2) node [anchor=north west][inner sep=0.75pt]    {$a)$};
\draw (241,15+2) node [anchor=north west][inner sep=0.75pt]    {$b)$};
\draw (449,17+2) node [anchor=north west][inner sep=0.75pt]    {$c)$};
\draw (282+2,80+2) node [anchor=north west][inner sep=0.75pt]    {$A$};
\draw (368.8+3,126.4+3) node [anchor=north west][inner sep=0.75pt]    {$A$};
\draw (341.53+3,34.94+5) node [anchor=north west][inner sep=0.75pt]  [font=\small]  {$B$};
\draw (350.93+5,38.94+5) node [anchor=north west][inner sep=0.75pt]  [font=\small]  {$C$};
\draw (373.93+5,51.14+5) node [anchor=north west][inner sep=0.75pt]  [font=\small]  {$C$};
\draw (385.73+3,56.14+5) node [anchor=north west][inner sep=0.75pt]  [font=\small]  {$B$};
\draw (361.93+5,45.34+5) node [anchor=north west][inner sep=0.75pt]  [font=\small]  {$D$};
\draw (296.33+4,112.34+4) node [anchor=north west][inner sep=0.75pt]  [font=\small]  {$B$};
\draw (305.73+5,118.34+5) node [anchor=north west][inner sep=0.75pt]  [font=\small]  {$C$};
\draw (324.93+5,130.14+5) node [anchor=north west][inner sep=0.75pt]  [font=\small]  {$C$};
\draw (335.93+5,136.14+5) node [anchor=north west][inner sep=0.75pt]  [font=\small]  {$B$};

\end{tikzpicture}
    \caption{A simple example of merging fragments, i.e.\ states on disk-like regions with physical boundaries. \textbf{(a)} We consider two purple and one green fragment that overlap as shown, with existence given by assumption shown in Fig.~\ref{fig:fragment_assumption_elementary}.  \textbf{(b)} We draw the green fragment displaced upward for visual clarity.  The union of the two purple fragments is divided into regions $ABC$, and the green fragment is divided into regions $BCD$. \textbf{(c)} We use Lemma~\ref{lemma:merging_lemma} to create a state on $ABCD$. We obtain single disk-like region with physical boundary.}
    \label{fig:merging_two_disks}
\end{figure}
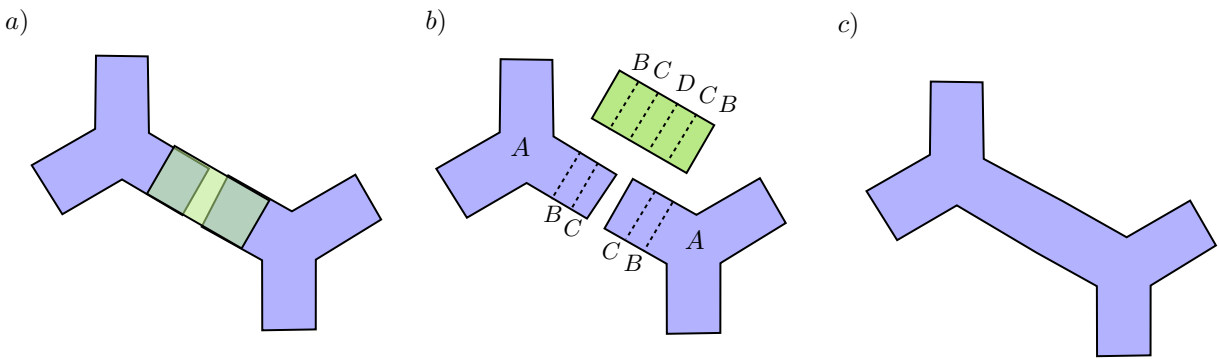

\begin{figure}[htb]
    \centering
     \input{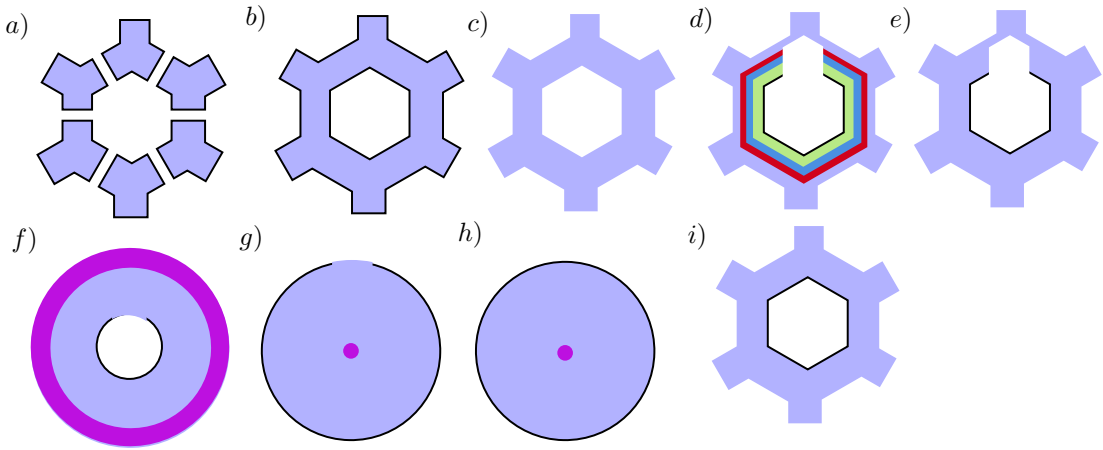}
    \caption{We illustrate the construction of a state on an annulus, which looks like the reference state on the interior, but which has  gapped inner boundary, as in (i).
    \textbf{(a)} We begin with six fragments forming a plaquette as described in Figure \ref{fig:fragment_assumption_elementary}(a).  \textbf{(b)} We merge these into an annulus with gapped boundary. While this state matches the reference state \textit{locally} on the interior, in general it does not match on the full interior. \textbf{(c)} Consider the reference state on the interior of the annulus region. \textbf{(d)} We merge the states from (b) and (c).  In particular, let $A,B,C,D$ denote the regions shown here from outermost to innermost (purple, red, blue, green). Note the states in (b) and (c) match on $BC$. We use the merging lemma with region $ABC$ from state (c) and region $BCD$ from state (b) to obtain a state on $ABCD$.  \textbf{(e)}  We obtain a state with a partial gapped boundary, which matches the reference state on the interior. \textbf{(f)} We re-draw (e) in a topologically equivalent fashion, for simplicity. Let $B$ denote the pink region, and $C$ the complement.  Note $S(B)=S(C)+S(BC)$, so there exists factorization  $B=B_1 \otimes B_2$ with $B_1$ purifying $C$.   \textbf{(g)} We use the inversion trick: we re-draw (f) in an inverted fashion, with the pink region viewed as a single coarse-grained lattice site. If we ignore the (unentangled) factor $B_2$, then we have a state on a disk that satisfies the entanglement bootstrap axioms (in the bulk, and on the partial boundary). \textbf{(h)} We restore the missing piece of boundary with the merging lemma. \textbf{(i)} Returning to the perspective of the drawing (e), we have filled in the missing piece of boundary, and the state matches the reference on the interior, as desired. }
    \label{fig:create-annulus}
\end{figure}

We can merge the elementary fragments together into larger regions, just as we merge fragments in the main text.  Below we explain how assuming only the existnece of these elementary fragments, we can replicate all the constructions used in our Proof of Theorem \ref{thm:mapping-to-sn}.  In particular, we need to construct the regions used for punching holes [Section~\ref{subsec:punching_holes}] and the spanning tree [Section \ref{subsec:choosing-phases}].

First we explain a simple example in Fig.~\ref{fig:merging_two_disks}.  There we merge two vertex regions and one edge region into a single larger disk-like region. This follows from the fact that $I(A:C|B)_{\sigma}=0$ and $I(B:D|C)_{\sigma}=0$, both of which follow from the boundary \textbf{A0} [Figure~\ref{fig:eb_axioms_boundary}]. (Alternatively, one can use the factorization of the extreme points [Section~\ref{subsec:extreme_points}].) This process illustrates the elementary step involved in forming larger regions. The state we obtain this way satisfies both bulk and boundary entanglement bootstrap axioms on every small enough disks, because of the consistency of the merged state with the fragments we started with. This implies that the axioms hold everywhere on the enlarged system, even at a larger scale. Through a sequence of similar merging operations, we can build the spanning tree region of Section~\ref{subsec:choosing-phases}.

Next we illustrate how to build the annulus with gapped boundary used for punching holes in Section~\ref{subsec:punching_holes}.  We cannot simply merge elementary fragments into an annulus, because we need the annulus to match the reference state on the interior.  Instead, we merge a combination of elementary fragments and a portion of the reference state. We begin with the fragments in Figure~\ref{fig:create-annulus}(a), along with additional fragments on edge regions (as in Figure~ \ref{fig:fragment_assumption_elementary}(b)) to obtain a state on an annulus with physical boundary. The remaining process is explained in Figure~\ref{fig:create-annulus}.

For the sake of the argument in Section~\ref{sec:double}, it will also be helpful to define elementary fragments that correspond only to vertex regions.  We can do this by elongating the vertex regions so that they overlap.  Such a choice is shown in Figure~\ref{fig:long-vertex-fragments}.  Using these vertex fragments alone, one can reproduce the edge fragments of Figure~\ref{fig:fragment_assumption_elementary}(b), and then proceed with the development of the remaining constructions required for the proof of Theorem~\ref{thm:mapping-to-sn}.

\begin{figure}[htb]
    \centering      
\begin{tikzpicture}[x=0.75pt,y=0.75pt,yscale=-1,xscale=1]

\draw  [fill={rgb, 255:red, 178; green, 178; blue, 255 }  ,fill opacity=0.57 ] (163.17,91.16) -- (173.78,107.25) -- (137.5,131.21) -- (138.18,172.64) -- (117.98,172.98) -- (118.56,130.95) -- (86.83,107.59) -- (97.44,91.85) -- (128.84,114.04) -- cycle ;
\draw  [fill={rgb, 255:red, 178; green, 178; blue, 255 }  ,fill opacity=0.57 ] (37.76,120.84) -- (27.19,104.72) -- (63.55,80.87) -- (62.99,39.44) -- (83.19,39.16) -- (82.49,81.19) -- (115.52,104.3) -- (103.49,120.36) -- (72.16,98.06) -- cycle ;
\draw  [fill={rgb, 255:red, 178; green, 178; blue, 255 }  ,fill opacity=0.57 ] (203.45,110.92) -- (192.88,94.8) -- (229.24,70.94) -- (228.68,29.52) -- (248.88,29.23) -- (248.18,71.26) -- (281.21,94.37) -- (269.18,110.43) -- (237.85,88.14) -- cycle ;
\draw  [fill={rgb, 255:red, 178; green, 178; blue, 255 }  ,fill opacity=0.57 ] (315.92,99.38) -- (326.53,115.47) -- (290.25,139.43) -- (290.93,180.85) -- (270.73,181.19) -- (271.3,139.17) -- (239.58,115.81) -- (250.19,100.06) -- (281.59,122.26) -- cycle ;

\draw (20.33,21) node [anchor=north west][inner sep=0.75pt]    {$a)$};
\draw (188.33,22.52) node [anchor=north west][inner sep=0.75pt]    {$b)$};
\end{tikzpicture}
    \caption{We illustrate a choice of elementary fragments when only vertex fragments are used.  \textbf{(a)} Adjacent vertex fragments overlap along an edge region . \textbf{(b)} Two overlapping fragments are drawn slightly displaced, for visual clarity.}
    \label{fig:long-vertex-fragments}
\end{figure}
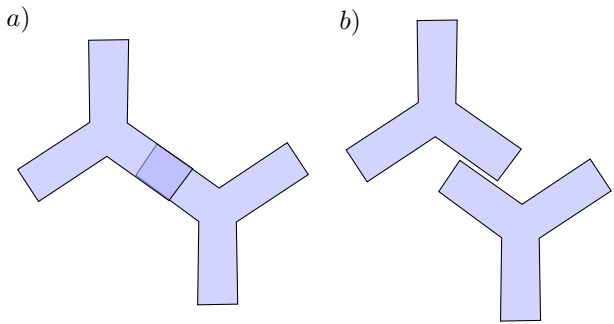

\section{Tensor categories, briefly}
\label{sec:tensor_category}

We make some terse comments on tensor categories.  We refer the reader to \cite{kitaev2006anyons,kitaev2012models} for a comprehensive physical perspective, \cite{delaney2019categorical} for accessible mathematical discussion, and \cite{etingof2017fusion} for a comprehensive textbook. 

First, some comments for the completely uninitiated.  Category theory is a sprawling subject, while its use in condensed matter theory is generally restricted to a few types of tensor  categories.  Moreover, in category theory, one often considers a category of mathematical objects (such as the category of groups, or algebras).  Here, we are more concerned with treating the tensor category as an algebraic object itself -- like a group or an algebra, but specified by different data.

We will only mention unitary fusion categories (UFCs) and unitary modular tensor categories (UMTCs). Note that these may also take other names and acronyms, depending on which adjectives one chooses to emphasize. The UFC characterizes the boundary anyon theory, and the UMTC characterizes the bulk anyon theory. A UMTC is in particular a UFC, but it is additionally equipped with a braiding that satisfies certain properties. We will take the "skeletal" perspective on tensor categories \cite{delaney2019categorical}, in which the category is specified by certain concrete numerical data, up to some gauge equivalence. 

The follow descriptions are \textit{not} comprehensive. A UFC $\mathcal{C}$ is specified by an abstract list of simple objects $\{a\}_{a \in \mathcal{C}}$ (corresponding to anyon types), fusion rules encoded by non-negative integers $N^{ab}_c$, quantum dimensions $d_a \geq 1$, and $F$-symbols $F^{abc; \mu \nu}_{def;\rho \sigma}$ with complex entries.  (In fact the fusion rules and quantum dimensions may be recovered from the $F$-symbols.)  These data are required to satisfy certain properties.  There exists a unit object (representing the trivial anyon, or vacuum sector), which we denote $1 \in \mathcal{C}$.  For every simple object $a$, there is also a dual object $\bar{a}$, possibly with $a = \bar{a}$.  This corresponds to the ``anti-particle,'' and the bar operation is an involution.  The $F$-symbols are required to satisfy the pentagon equation.  They are also required to satisfy a unitarity condition. Two specifications of a UFC by this concrete skeletal data are considered equivalent if they differ by a permutation of the object labels that preserves the above structures, along with a gauge-transformation of the $F$-symbols.

A UMTC is a UFC that additionally has a braiding structure, specified by some $R^{ab; \mu}_{c;\nu}$.  The $R$-symbol must satisfy the hexagon equation relating $F, R$, and must also satisfy a unitarity condition. From $R$ one can define the ``modular data'' denoted $S$ and $T$.  The ``modularity'' of the UMTC requires the $S$ matrix to be unitary. All this data is specified up to gauge-equivalence under a joint gauge transformation of $F$ and $R$.  The modular data $S, T$ is gauge-invariant.  In many cases it uniquely determines the category, but not in all cases, due to the phenomenon of ``modular isotopes'' \cite{mignard2021modular,delaney2021zesting}.

Given a UFC $\mathcal{C}$, one way to produce a UMTC is to take the categorical center, or ``Drinfeld center,'' denoted $Z(\mathcal{C})$. For a boundary anyon theory described by $\mathcal{C}$, the UMTC $Z(\mathcal{C})$ describes the bulk anyon theory. Two UFCs $\mathcal{C},\mathcal{D}$ are  ``Morita-equivalent'' if and only if $Z(\mathcal{C})=Z(\mathcal{D})$. Two distinct UFCs may be Morita-equivalent, corresponding to a bulk anyon theory with distinct gapped boundary conditions that are described by distinct UFCs.

\end{document}